\documentclass[format=acmsmall, screen, nonacm]{acmart}

\AtBeginDocument{%
  }

\usepackage[nameinlink]{cleveref}
\usepackage{enumitem}[shortlabels]

\makeatletter
\AtEndPreamble{%
  \theoremstyle{acmdefinition}
  \@ifundefined{rem}{%
    \newtheorem{rem}[theorem]{Remark}
  }{}%
  \crefname{enumi}{}{}
  \Crefname{enumi}{}{}
  \crefname{lemma}{Lemma}{Lemmas}
  \Crefname{lemma}{Lemma}{Lemmas}
  \crefname{definition}{Definition}{Definitions}
  \Crefname{definition}{Definition}{Definitions}
}
\makeatother

\usepackage{dsfont}
\usepackage{adjustbox}
\usepackage{dashbox}
\usepackage{xspace}

\usepackage{amsmath}
\usepackage{cleveref}

\usepackage{tikz}
\usetikzlibrary{cd, angles, shapes}
\usetikzlibrary{decorations.pathreplacing,angles,quotes}

\newenvironment{varitemize}
{
	\begin{list}{\labelitemi}
		{\setlength{\itemsep}{0pt}
			\setlength{\topsep}{0pt}
			\setlength{\parsep}{0pt}
			\setlength{\partopsep}{0pt}
			\setlength{\leftmargin}{15pt}
			\setlength{\rightmargin}{0pt}
			\setlength{\itemindent}{0pt}
			\setlength{\labelsep}{5pt}
			\setlength{\labelwidth}{10pt}
	}}
	{
	\end{list} 
}

\newcounter{numberone}

\newcommand{\Quipper}{\texttt{Quipper}}
\newcommand{\Cirq}{\texttt{Cirq}}
\newcommand{\Qiskit}{\texttt{Qiskit}}

\newcommand{\shortv}[1]{}

\newcommand{\qlambda}{\ensuremath{\lambda^Q}\xspace}
\newcommand{\Qlambda}{\ensuremath{\Lambda}^Q\xspace}
\newcommand{\qcirc}{\ensuremath{\mathcal{QC}}\xspace}
\newcommand{\pv}[2]{\langle #1,#2 \rangle}

\newcommand{\letv}[3]{{\tt let}\,{#1}={#2}~{\tt in}~{#3}}
\newcommand{\starc}{*}
\newcommand{\letstar}[2]{{\tt let}\,{\starc}={#1}~{\tt in}~{#2}}

\newcommand{\tc}{\mathtt{t}\!\mathtt{t}}
\newcommand{\termZero}{\ensuremath{\mathtt{zero}}}
\newcommand{\termOne}{\ensuremath{\mathtt{one}}}
\newcommand{\fc}{\mathtt{f}\!\mathtt{f}}
\newcommand{\ite}[3]{\ensuremath{\mathtt{ if }~#1~\mathtt{ then }~#2~\mathtt{ else }~#3}}

\newcommand{\meas}{\ensuremath{\mathtt{meas}}\xspace}
\newcommand{\discard}{\ensuremath{\mathtt{discard}}}
\newcommand{\new}{\ensuremath{\mathtt{new}}\xspace}
\newcommand{\qgate}[1]{\ensuremath{\mathtt{#1}}}
\newcommand{\evalCtx}{\ensuremath{\mathtt{E}}}

\newcommand{\DIST}[1]{\mathrm{eval}_{\lambda}[#1]}
\newcommand{\DISTM}[1]{\mathrm{eval}_{\mathrm{QMSIAM}}[#1]}

\newcommand{\LITE}{\mathcal L^{\mathrm{ite}}}
\newcommand{\LITES}{\mathcal L^{\mathrm{ite}^*}}
\newcommand{\LMEAS}{\mathcal L^{\mathrm{meas}}}
\newcommand{\ADD}{\mathcal G}
\newcommand{\ADDH}{\mathcal H}
\newcommand{\EXT}{\mathrm{Ext}}
\newcommand{\DISC}[2]{\discard^{#1}(#2)}

\newcommand{\init}{\ensuremath{\mathtt{Init}}\xspace}
\newcommand{\fin}{\ensuremath{\mathtt{Fin}}\xspace}
\newcommand{\exec}{\ensuremath{\mathtt{Exec}}\xspace}
\newcommand{\config}{\ensuremath{\mathcal{C}}\xspace}
\newcommand{\donfig}{\ensuremath{\mathcal{D}}\xspace}
\newcommand{\cat}[1]{\ensuremath{\mathbf{#1}}\xspace}
\newcommand{\circuittkm}{\ensuremath{\textsf{QCSIAM}^!}\xspace}
\newcommand{\circuittkmz}{\ensuremath{\textsf{QCSIAM}}\xspace}
\newcommand{\evaltkm}{\ensuremath{\textsf{QMSIAM}}\xspace}

\newcommand{\baseT}{\ensuremath{\mathbb{B}}}
\newcommand{\baseTalt}{\ensuremath{\mathbb{A}}}
\newcommand{\qbit}{\ensuremath{\mathtt{qbit}}}
\newcommand{\bit}{\ensuremath{\mathtt{bit}}}

\newcommand{\inputType}[1]{\ensuremath{\mathtt{in}(#1)}}
\newcommand{\outputType}[1]{\ensuremath{\mathtt{out}(#1)}}

\newcommand{\positive}{\ensuremath{\mathrm{PDATA}}}
\newcommand{\posones}{\ensuremath{\mathrm{PONES}}}
\newcommand{\negative}{\ensuremath{\mathrm{NDATA}}}
\newcommand{\negones}{\ensuremath{\mathrm{NONES}}}
\newcommand{\ones}{\ensuremath{\mathrm{ONES}}}
\newcommand{\guard}{\ensuremath{\mathrm{GUARD}}}

\newcommand{\one}{\ensuremath{\mathds{1}}}

\newcommand{\MM}{\ensuremath{\mathcal{M}}}
\newcommand{\NN}{\ensuremath{\mathcal{N}}}
\newcommand{\GG}{\ensuremath{\mathcal G}}
\newcommand{\complex}{\ensuremath{\mathbb{C}}}

\newcommand{\boolT}{\ensuremath{\mathbb{B}}}

\usepackage{mathtools}
\usepackage{etoolbox}
\usepackage{stmaryrd}
\DeclarePairedDelimiter\bbracket{\llbracket}{\rrbracket}
\DeclarePairedDelimiter\bparenthesis{\llparenthesis}{\rrparenthesis}
\newcommand{\opn}[1]{\ensuremath{\operatorname{#1}}}
\newcommand{\interp}[1]{\ensuremath{\bbracket{#1}}}

\newcommand{\pinterp}[1]{\ensuremath{\bparenthesis{#1}}}
\newcommand{\translate}{\ensuremath{\tau}}

\newcommand{\totkm}{\ensuremath{\to_{\mathtt{tkm}}}}
\newcommand{\set}[1]{\ensuremath{\{#1\}}}

\newcommand{\ket}[1]{\ensuremath{\left|  #1 \right\rangle}}
\newcommand{\bra}[1]{\ensuremath{\left\langle  #1 \right|}}

\newcommand{\QO}{\mathbb{Q}}

\newcommand{\QOrule}{\mathtt{Q}}

\newcommand{\Id}{\ensuremath{\mathtt{Id}}}

\newcommand{\alt}{~\mid~}

\newcommand{\eqdef}{\coloneqq}
\newcommand{\myparagraph}[1]{\emph{#1}. }
\newcommand{\tocpm}{\mathtt{mix}}

\newcommand{\initc}{\ensuremath{\init_{\evaltkm}}}
\newcommand{\initm}{\initc}
\newcommand{\finc}{\ensuremath{\fin_{\evaltkm}}}

\newcommand{\initcc}{\ensuremath{\init_{\circuittkm}}}
\newcommand{\initcz}{\ensuremath{\init_{\circuittkmz}}}
\newcommand{\fincc}{\ensuremath{\fin_{\circuittkm}}}

\newcommand{\mtocl}{\ensuremath{\rightrightarrows}}

\usepackage{proof}
\usepackage{todonotes}
\setuptodonotes{inline}
\usepackage{longtable}
\usepackage{multirow}
\usepackage{nicematrix}
\usepackage{quantikz2}
\usepackage{tikz}
\usetikzlibrary{circuits.ee.IEC}
\usepackage{subcaption}

\usetikzlibrary{calc}
\usetikzlibrary{arrows.meta}
\usetikzlibrary{shapes, patterns, decorations.pathreplacing, decorations.text}
\usetikzlibrary{trees}
\usetikzlibrary{positioning}
\usetikzlibrary{tikzmark, cd}

\tikzset{
  node rotated/.style = {rotate=180},
  border rotated/.style = {shape border rotate=180},
  downtriangle/.style = {fill=white, draw=black, regular polygon, regular polygon sides=3, border rotated},
  triangle/.style = {fill=white, draw=black, regular polygon, regular polygon sides=3}
}

\usetikzlibrary{decorations.markings,arrows,decorations.pathmorphing}
\usetikzlibrary{shapes.misc,matrix,backgrounds,folding,positioning}
\usetikzlibrary{shapes.geometric}
\usetikzlibrary{shapes.multipart}
\pgfdeclarelayer{edgelayer}
\pgfdeclarelayer{nodelayer}
\pgfsetlayers{background,edgelayer,nodelayer,main}
\tikzset{every path/.style={draw=black!80, line width=0.6pt}}
\tikzstyle{every picture}=[baseline=-0.25em]
\tikzstyle{none}=[inner sep=0mm]
\tikzstyle{box}=[fill=white, draw=black, shape=rectangle]
\tikzstyle{arrow}=[decoration={markings,mark=at position 1 with
    {\arrow[scale=1.2,>=stealth]{>}}},postaction={decorate}]

\tikzstyle{box}=[rectangle, draw=black, fill=white, inner sep=1pt, font=\scriptsize]
\tikzstyle{very thick}=[-, line width=1pt]

\pgfdeclarelayer{edgelayer}
\pgfdeclarelayer{nodelayer}
\pgfsetlayers{background,edgelayer,nodelayer,main}
\tikzstyle{every loop}=[]

\newcommand{\tikzfigpathValue}{./figs}

\newcommand{\tikzfig}[1]{%
\IfFileExists{#1.tikz}
  {\input{#1.tikz}}
  {%
    \IfFileExists{\tikzfigpathValue/#1.tikz}
      {\input{\tikzfigpathValue/#1.tikz}}
      {\tikz[baseline=-0.5em]{\node[draw=red,font=\color{red},fill=red!10!white] {\textit{\tikzfigpathValue/#1}};}}%
  }%
}

\tikzset{mymeter/.append style={draw, inner sep=10, rectangle,
 font=\vphantom{A}, minimum width=30, line width=.8, path
 picture={\draw[black] ([shift={(.1,.3)}]path picture bounding
 box.south west) to[bend left=50] ([shift={(-.1,.3)}]path picture
 bounding box.south east);\draw[black,-latex] ([shift={(0,.1)}]path
 picture bounding box.south) -- ([shift={(.3,-.1)}]path picture
 bounding box.north);}}}

\newcommand{\myground}
{
\begin{tikzpicture}[circuit ee IEC,yscale=0.9,xscale=0.8]	 	 
\draw[solid,arrows=-] (-1ex,0) to (0,0) node[anchor=center,ground,rotate=0,xshift=.80ex] {};
\end{tikzpicture}}

\usepackage{ebproof}

\usepackage{mathrsfs}

\newcommand{\leaf}[1]{\ensuremath{\opn{leaf}(#1)}}
\newcommand{\node}[2]{\ensuremath{\opn{node}(#1, #2)}}

\newcommand{\seq}{\mathbin{;}}

\usepackage[normalem]{ulem}

\usepackage[toc,page,header]{appendix}

\usepackage{titletoc}

\makeatletter
  \def\ttl@Hy@steplink#1{%
    \Hy@MakeCurrentHrefAuto{#1*}%
    \edef\ttl@Hy@saveanchor{%
      \noexpand\Hy@raisedlink{%
        \noexpand\hyper@anchorstart{\@currentHref}%
        \noexpand\hyper@anchorend
        \def\noexpand\ttl@Hy@SavedCurrentHref{\@currentHref}%
        \noexpand\ttl@Hy@PatchSaveWrite
      }%
    }%
  }%
  \def\ttl@Hy@PatchSaveWrite{%
    \begingroup
      \toks@\expandafter{\ttl@savewrite}%
      \edef\x{\endgroup
        \def\noexpand\ttl@savewrite{%
          \let\noexpand\@currentHref
              \noexpand\ttl@Hy@SavedCurrentHref
          \the\toks@
        }%
      }%
    \x
  }%
  \def\ttl@Hy@refstepcounter#1{%
    \let\ttl@b\Hy@raisedlink
    \def\Hy@raisedlink##1{%
      \def\ttl@Hy@saveanchor{\Hy@raisedlink{##1}}%
    }%
    \refstepcounter{#1}%
    \let\Hy@raisedlink\ttl@b
  }%
\def\ttl@gobblecontents#1#2#3#4{\ignorespaces}%
\makeatother

\newcommand\DoToC{%
\startcontents[sections]
\printcontents[sections]{l}{1}{\setcounter{tocdepth}{3}}
}

\begin{document}

\title{Compiling Quantum $\lambda$-Terms into Circuits\\ via the Geometry of Interaction}

\author{Kostia Chardonnet}
\email{kostia.chardonnet@pm.me}
\orcid{0009-0000-0671-6390}
\affiliation{%
  \institution{Université de Lorraine, CNRS, Inria, LORIA, F-54000}
  \city{Nancy}
  \country{France}
}

\author{Ugo Dal Lago}
\email{ugo.dallago@unibo.it}
\orcid{0000-0001-9200-070X}
\email{}
\affiliation{%
  \institution{University of Bologna, Italy and INRIA Sophia Antipolis, France}
  \city{Bologna}
  \country{Italy}
}

\author{Naohiko Hoshino}
\email{nhoshino@cis.sojo-u.ac.jp}
\orcid{0000-0003-2647-0310}
\affiliation{%
  \institution{Sojo University}
  \city{Sojo}
  \country{Japan}
}

\author{Paolo Pistone}
\orcid{0000-0003-4250-9051}
\email{paolo.pistone@ens-lyon.fr}
\affiliation{%
  \institution{Université Claude Bernard Lyon 1}
  \city{Lyon}
  \country{France}
}

\renewcommand{\shortauthors}{Chardonnet et al.}

\begin{abstract}
  We present an algorithm turning any term of a linear
	quantum $\lambda$-calculus into a quantum circuit.
	The essential ingredient behind the proposed algorithm is
	Girard's geometry of interaction, which, differently from its
	well-known uses from the literature, is here leveraged to
	perform as much of the \emph{classical} computation as possible,
	at the same time producing a circuit that, when evaluated, performs all the
	\emph{quantum} operations in the underlying $\lambda$-term.
	We identify higher-order control flow as the primary obstacle 
	towards efficient solutions to the problem at hand. Notably, geometry
	of interaction proves sufficiently flexible to enable efficient compilation
	in many cases, while still supporting a total compilation procedure.
	Finally, we characterize through a type system those $\lambda$-terms
	for which compilation can be performed efficiently.
\end{abstract}

\keywords{Quantum computation, lambda-calculus, compilation, geometry of interaction}

\settopmatter{printacmref=false}
\setcopyright{none}
\renewcommand\footnotetextcopyrightpermission[1]{}
\pagestyle{plain}

\maketitle

\section{Introduction}
\label{sect:intro}
Quantum computing holds immense potential to revolutionize various
areas of computer science by solving complex problems much faster than
classical computing~\cite{grover1996fast,shor1999polynomial}. This is
due to its ability to leverage the power of quantum bits, thanks to
\emph{superposition} and \emph{entanglement}. One of
the fascinating aspects of quantum computing is the diversity of model
architectures being explored. These include gate-based quantum
computing, but also quantum annealing~\cite{quantumannealing}, topological
quantum computing~\cite{topological}, and others.

On the side of high-level quantum programming languages, the QRAM
model of quantum computation~\cite{Knill96,WY2023} is among the
most successful ones. There, a classical computing machine interacts
with a quantum processor by instructing the latter to create new
qubits, apply some unitary transformations to the existing qubits, or
measure (some of) them. In other words, programs have classical 
control flow, but can also communicate with a quantum device that acts on quantum data by applying unitary
operations or measurements. This is conceptually similar  to what happens 
when programming in presence of an external storage device. 
Thanks to measurements, however, the
overall evolution (and the computation's final result) can be
\emph{probabilistic}. Moreover, the classical computer and the quantum
processor do not have the same nature: while the former is a purely
classical device, the latter's internal state consists of a finite
number of qubits, each of them manipulated following the postulates of
quantum mechanics. In particular, quantum bits cannot be erased nor
duplicated, and the quantum processor can only perform operations 
of a very specific shape.

Based on this model, several \emph{quantum programming languages} have
been developed, from assembly
code~\cite{cross2017openquantumassemblylanguage,Cross_2022}, to
imperative programming languages with loops and classical
tests~\cite{feng2021quantum}, to functional programming languages and
$\lambda$-calculi~\cite{selinger2009quantum,selinger2008fully,DLMZ2009}.
All these languages follow the QRAM model: programs can 
perform classical operations internally, but also interact with an external 
quantum device, instructing it to apply quantum operations to one or
more of the stored qubits. In particular, between any two interaction points, the underlying
classical machine can perform an arbitrary amount of classical work.  
A positive aspect of this programming style is that the results 
of measurements are made available in the form of ``first-class'' 
Booleans and can therefore influence the control flow. This arguably helps 
to make programs simple and readable.

However, a fracture exists between the aforementioned programming
paradigm, and the reality of quantum computing. Most quantum
architectures are currently very hard to be accessed interactively,
and instead require \emph{a whole quantum circuit} to be sent to them. The
latter is first created in its entirety by a classical algorithm, and
then processed and optimized for the specific architecture through
\emph{transpilation}. As a result, quantum programs are, in
practice, often written down either as programs representing
\emph{one} quantum
circuit~\cite{cross2017openquantumassemblylanguage}, or as programs in
so-called \emph{circuit description languages}, like
\Qiskit~\cite{qiskit2024}, \Cirq~\cite{cirq}, or
\Quipper~\cite{Quipper1,Quipper2}. These languages manipulate circuits
as ordinary data structures, often taking the form of libraries for
mainstream programming languages. This separates circuit
\emph{construction} from circuit \emph{execution}, enabling offline
optimization and transpilation. Programming is thus seen as the direct
construction of quantum circuits and the model is no longer the one of
QRAMs: the program's task consists in building a circuit, and not of
interacting with a quantum device.  This way, however, the possibility
of making partial results of a quantum computation available to the
programmer is lost: the only way to run a circuit is to execute it
\emph{in its entirety}, even if the so-called \emph{dynamic lifting}
techniques can partially alleviate this
burden~\cite{protoQuipperDynamicLifting}.

It is thus natural to ask whether it is possible to reconcile the QRAM model
with the need to produce complete circuits, thereby targeting existing hardware
architectures. In fact, this work provides an answer to the following question:
would it be possible to \emph{compile} programs written in QRAM languages, in
particular functional languages with higher-order
functions~\cite{selinger2009quantum,selinger2008fully,DLMZ2009}, down to quantum
circuits, along the way \emph{performing} as much of the classical work as
possible, but \emph{delegating} all the quantum operations to the target
circuit? Moreover, would it be possible to do that \emph{efficiently} in
presence of both measurements and conditionals? As detailed in
Section~\ref{sect:informal} below, this is not trivial and cannot be easily
solved by relying on the usual, rewriting based operational semantics of the
underlying language. Indeed, a conditional whose branches have a higher-order
type cannot be easily and efficiently compiled into a conditional of the target
circuit language: as soon as a branching is \emph{opened}, it is difficult to
unify the two branches, effectively \emph{closing} the branching. This can give
rise to an exponential blow-up in the size of the underlying circuit, a
well-known phenomenon which shows up, e.g., in the related problem of
dynamically
lifting~\cite{protoQuipperDynamicLifting,colledan_et_al:LIPIcs.TYPES.2022.3,dongho21}
the value of a Boolean in circuit description languages.

\myparagraph{Contributions}
In this work we propose a new compilation procedure for the linear
fragment of a quantum $\lambda$-calculus in the style of Selinger \&
Valiron's quantum $\lambda$-calculus~\cite{selinger2009quantum,
10.1007/11417170_26} onto a paradigmatic quantum circuit model
corresponding to a fragment of the QASM
language~\cite{cross2017openquantumassemblylanguage}. Our
procedure gets rid of {\em both} higher-order operations \emph{and}
classical control, providing an effective solution to the
following problem: given a well-typed term of first-order type, in
which higher-order functions and conditionals can possibly occur,
extract the underlying circuit, effectively anticipating the classical
work as much as possible.

The fundamental
ingredient of our compilation scheme is Girard's \emph{Geometry of
Interaction} (GoI for short)~\cite{goi0,goi1}, a semantic framework
for linear logic which is well-known to induce 
\emph{abstract machines}
\cite{Mackie1995,danos1999reversible,Accattoli2020} and circuit
\emph{synthesis} algorithms~\cite{Ghica2007}. The basic idea behind the
GoI translation can be described as follows: given a type derivation
$\pi$ with conclusion $\Gamma \vdash M:A$, one introduces a finite set
of tokens which can  travel along occurrences of base types  
throughout $\pi$; the paths followed by
such tokens then give rise to a circuit relating \emph{positive} and
\emph{negative} occurrences of base types in $\Gamma$ and $A$,
which is constructed incrementally, following the paths of tokens 
inside $\pi$. For instance, a
type derivation with conclusion
$x:\qbit, y:\qbit \vdash M: (\qbit\multimap\qbit)\multimap \qbit$
(where $M$ might indeed contain higher-order operations as well as
conditionals) will, after compilation, give rise to a quantum
circuit with three input qubits and two output qubits.

Our compilation procedure works in two steps: first, the typing
derivation is translated via GoI onto a language for quantum circuits
\emph{with} control flow. Then, a second compilation step takes
place, which translates the circuit onto a proper quantum circuit, notably
\emph{eliminating} the use of classical control flow. 
While the second sub-problem, a solution to which
is given in Section \ref{sect:circuits}, can be solved
efficiently regardless of the underlying circuit, the first phase is 
challenging, computationally speaking. In particular, what makes 
it non-trivial is the presence, in the source $\lambda$-calculus, 
of \emph{higher-order control flow} in the form of conditional constructs 
whose branches have higher-order types. As explained in the 
following section, this constitutes the conceptual and computational 
bottleneck of the problem. We address it as follows.
On the one hand, we design a somehow naïve, but correct, approach that  
compiles the two branches, in an \emph{asynchronous} way, onto two distinct circuits.
This approach, although sound, is potentially inefficient, as it may produce a circuit of size \emph{exponential} in that of the original program. On the other hand, 
Geometry of Interaction 
naturally suggests an alternative, \emph{synchronous}, way of handling higher-order 
control flow, which enables the production of compact circuits; 
however, this approach, although efficient, is not always applicable, as it may 
lead to deadlocks. For this reason, the strategy employed by our machine is to 
apply the synchronous method
whenever possible, and resort to the asynchronous, inefficient, method only 
when necessary, that is, in the presence of deadlocks.

After presenting the compilation procedure, we establish its soundness via a simulation result with respect to a different GoI interpretation
of quantum $\lambda$-calculi \cite{DalLago2017}, relating the two underlying machines via
the well-known  
 \emph{completely positive maps} semantics of quantum circuits
\cite{Selinger2004}.
 
Finally, we show that there is a way to isolate precisely those terms of the 
source calculus which can be compiled through the optimized machine, thus 
\emph{never} incurring in an exponential blow-up. For that, we introduce a type 
system ensuring that well-typed terms may not, by construction, give rise to 
deadlocks. Remarkably, the type system is not only sound but complete.

\paragraph{Outline}

In Section \ref{sect:informal}, we provide an informal overview over the 
challenge of compiling higher-order quantum programs with higher-order control 
flow onto quantum circuits. 
In Section \ref{sect:calculus} we introduce a linear quantum $\lambda$-calculus 
that will be the source language of our compilation procedure, together with an 
operational semantics for it. 
In Section \ref{sect:circuits} we introduce quantum circuits with conditionals, 
and we show that the latter can be efficiently eliminated from quantum circuits.
In Section \ref{sect:tkm} we present our machine, called the \circuittkm, first 
in its naïve, inefficient, form, and, then, in its more optimized form, proving 
its soundness and termination properties.
In Section \ref{sect:types} we present a type system capturing those programs 
which can be compiled efficiently without deadlocks. In the final two sections 
we discuss related and potential future work.

Due to space constraints, most proofs had to be elided, but are available in 
the Appendix.

\section{A Bird's Eye View on the Problem}
\label{sect:informal}
In this section we provide an overview on the problem addressed 
in this paper, focusing on the challenges it poses from a computational 
viewpoint.

Our goal is that of translating a term $M$ written in a linear quantum
$\lambda$-calculus~\cite{selinger2009quantum} onto a quantum circuit
$C$~\cite{Nielsen_Chuang_2010}, that is, a finite sequence of operations for
creating, modifying and measuring qubits, possibly carrying around the result of
previous measurements for the sake of conditioning the applications of
subsequent quantum operations. Observe that $M$ can manipulate higher-order
functions, while $C$ is a sequential composition of quantum gates, and has thus a genuinely
first-order nature.

At least for what concerns the
classical part of the underlying computation, a first obvious idea would be to try to compile $M$ via rewriting or abstract
machines~\cite{landin1964mechanical,Krivine2007,Accattoli2020},
adapting such approaches so that evaluating a quantum instruction
occurring in $M$ has the effect of modifying an incrementally-built circuit,
rather than explicitly performing some quantum operations. For example,
suppose that $M$ is the term
$\letv{x}{S}{Q}$, where $S$ is $\ite{N}{L}{P}$.
Suppose further that the branches $L$ and $P$ have type $\qbit\multimap\qbit$,
and that the guard $N$ is a term of Boolean type whose value is produced through
some form of quantum computation, e.g., $N$ is
$\meas(\qgate{H}~(\new\ \fc))$, while $L$ and $P$ are $\lambda x.x$ and $\lambda
x.\qgate{H}(x)$, respectively. (Here, $\qgate{H}$ stands for 
the so-called
\emph{Hadamard} gate, while \new\ and \meas\ initializes and measure a qubit,
respectively.) Now, suppose we want to compile $M$ into an 
equivalent quantum
circuit $C_M$. When evaluating $N$, such an abstract machine would (correctly) not perform any
quantum operations, but rather produce a circuit of the following form:
\begin{center}
	\begin{quantikz}[scale=.9]
		\lstick{\ket{0}} & \gate{H} & \meter{} & \setwiretype{c} &
	\end{quantikz}
\end{center}
As the Boolean produced by $N$ depends on the result of a measurement, its value
would not be known at circuit-\emph{building} time, but only at
circuit-\emph{execution} time. Therefore, \emph{both} branches
of $S=\ite{N}{L}{P}$ should be executed and compiled separately. Yet, the two
branches have \emph{functional} type, and compiling the continuation $Q$ only 
once would be non-trivial, precisely because of the presence
of a variable $x$ of function type. The situation would be even more complicated
if the type of $L$ and $P$ were of an even higher order. Since we want to keep
the abstract machine we are describing compliant with the underlying reduction
semantics, it is thus natural to proceed by compiling $Q[x\leftarrow L]$ and
$Q[x\leftarrow P]$ into two circuits $C_{Q,L}$ and $C_{Q,P}$. The overall
circuit constructed would then have the following form:
\begin{center}
	\begin{tikzpicture}
	\begin{pgfonlayer}{nodelayer}
		\node [style=none] (0) at (-1.375, -1) {$\ket{0}$};
		\node [style=none] (1) at (-1, -1) {};
		\node [style=none] (2) at (-0.5, -0.75) {};
		\node [style=none] (3) at (-0.5, -1.25) {};
		\node [style=none] (4) at (0, -1.25) {};
		\node [style=none] (5) at (0, -0.75) {};
		\node [style=none] (6) at (-0.25, -1) {$H$};
		\node [style=none] (11) at (1, -1) {
			\begin{quantikz}
				\meter{}
			\end{quantikz}
		};
		\node [style=none] (12) at (2, 0.25) {};
		\node [style=none] (13) at (2, -0.25) {};
		\node [style=none] (14) at (3.5, -0.25) {};
		\node [style=none] (15) at (3.5, 0.25) {};
		\node [style=none] (16) at (2.75, 0.25) {};
		\node [style=none] (17) at (2.75, -0.25) {};
		\node [style=none, font={\scriptsize}] (18) at (2.4, 0) {$C_{Q,L}$};
		\node [style=none, font={\scriptsize}] (19) at (3.125, 0) {$C_{Q,P}$};
		\node [style=none] (20) at (2.75, 0.2) {};
		\node [style=none] (21) at (2.75, -0.2) {};
		\node [style=none] (24) at (4, 0.175) {};
		\node [style=none] (25) at (4, -0.175) {};
		\node [style=none] (26) at (3.5, 0.175) {};
		\node [style=none] (27) at (3.5, -0.175) {};
		\node [style=none, font={\scriptsize}] (28) at (3.75, 0) {$\dots$};
		\node [style=none] (29) at (2, 0.175) {};
		\node [style=none] (30) at (2, -0.175) {};
		\node [style=none] (31) at (1.5, 0.175) {};
		\node [style=none] (32) at (1.5, -0.175) {};
		\node [style=none, font={\scriptsize}] (33) at (1.75, 0) {$\dots$};
		\node [style=none] (34) at (-0.5, -1) {};
		\node [style=none] (35) at (0.7, -1) {};
		\node [style=none] (36) at (0, -1) {};
		\node [style=none] (37) at (1.3, -0.975) {};
		\node [style=none] (38) at (1.3, -1.025) {};
		\node [style=none] (39) at (2.775, -0.975) {};
		\node [style=none] (40) at (2.725, -1.025) {};
		\node [style=none] (41) at (2.775, -0.25) {};
		\node [style=none] (42) at (2.7, -0.25) {};
	\end{pgfonlayer}
	\begin{pgfonlayer}{edgelayer}
		\draw (2.center) to (5.center);
		\draw (4.center) to (5.center);
		\draw (4.center) to (3.center);
		\draw (3.center) to (2.center);
		\draw (13.center) to (14.center);
		\draw (14.center) to (15.center);
		\draw (15.center) to (12.center);
		\draw (12.center) to (13.center);
		\draw [style=dashed] (20.center) to (21.center);
		\draw (26.center) to (24.center);
		\draw (27.center) to (25.center);
		\draw (31.center) to (29.center);
		\draw (32.center) to (30.center);
		\draw (1.center) to (34.center);
		\draw (36.center) to (35.center);
		\draw (39.center) to (41.center);
		\draw (40.center) to (42.center);
		\draw (37.center) to (39.center);
		\draw (40.center) to (38.center);
	\end{pgfonlayer}
\end{tikzpicture}
\end{center}
where the box at the top represents a conditional gate whose branches are 
$C_{Q,L}$ and $C_{Q,P}$.

By proceeding this way, however, we eventually run into a problem of
efficiency: sequentially composed programs with higher-order control flow
could possibly give rise to an exponential blow-up in the size of the produced
global circuit. Indeed, suppose we generalize the above example to a family
of terms $M_n=R_n^n$, where
\begin{align*}
R_{n+1}^m&\eqdef\letv{x_n}{S}{R_n^m};\\
R_0^m&\eqdef\lambda x.x_1(x_2(\ldots (x_m x))).
\end{align*}
It is clear that, by proceeding as above while compiling $M_n$, we would obtain a
circuit of size \emph{exponential} in $n$, consisting of $n$ levels of nested
conditionals. Still, a much simpler circuit that
captures the quantum operations performed by $M_n$ can easily be constructed in this case, namely the following:
\begin{center}
	\begin{quantikz}[scale=.9]
		\qw              &  \qw     & \qw      & \qw & \gate{H}   & \ \ldots\  &
		\gate{H} &  \\
		\lstick{\ket{0}} & \gate{H} & \meter{} & \setwiretype{c} & \vcw{-1} \\
		\lstick{\ket{0}} & \gate{H} & \meter{} & \setwiretype{c} &  & & \vcw{-2}
	\end{quantikz}
\end{center}
The compilation technique presented in this paper produces, on input
$M_n$, precisely the circuit described above. 
The key idea is to view compilation as a form of \emph{parallel data-flow 
analysis}, supported by the geometry of interaction. This 
proceeds by letting tokens follow occurrences of base types in 
the underlining type derivation. When applied to the term
$M_n$, this gives rise to a data-flow graph similar to the 
following one
\begin{center}
	\includegraphics[scale=0.65]{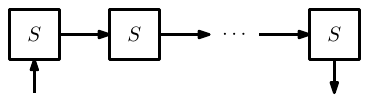}
\end{center}
There are here $n$ nodes, each corresponding to an occurrence 
of the conditional $S$. Edges here model data dependencies. 
Please observe that there are no 
circularities here. The compilation of
$M_n$ involves $n+1$ tokens. The first $n$ tokens each 
correspond to the $n$ occurrences of $N$, are local to one
of the $S$ nodes, and have a simple and short-lived 
behavior. The last token, instead, traverses the $n$ 
occurrences of $S$. The handling of conditionals is here done 
as follows: when \emph{all} the incoming tokens are available, 
the two 
branches are compiled in a recursive 
way, resulting in two circuits which then becomes the branches of a conditional 
at the level of the original circuit. We will call this way of handling 
conditionals the \emph{synchronous} rule. By tracing token paths 
\emph{hierarchically}, starting from the conditional branches placed at highest 
depth, and then moving upwards, the synchronous rule can turn such 
\emph{higher-order control flow} onto plain, \emph{circuit-level} conditionals, 
yielding in the end compact circuits like the one above, once circuit-level
conditionals have been compiled away.

Is this the end of the story? Unfortunately, the answer is negative: as anticipated, this synchronous method is not \emph{always} applicable.
Consider, for instance, the following term:
$$
R \eqdef \ite{N}{(\lambda f,g,x. f(gx))}{(\lambda f,g,x. g(fx))}.
$$
The two branches of the conditional are here of \emph{second-order} type 
$(\qbit \multimap \qbit)
\multimap (\qbit \multimap \qbit ) \multimap (\qbit \multimap \qbit )$.
Let $U$ and $W$ be quantum gates. If one were to apply the aforementioned
compilation scheme to the term $R~U~W$, one would obtain a data dependency 
graph like the following
one:
\begin{center}
	\includegraphics[scale=0.65]{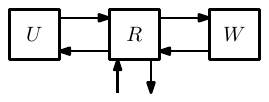}
\end{center}
A form of \emph{deadlock} arises, due to the presence of a circular dependency 
between the conditional $R$ and the two gates $U$ and $W$. Observe that 
deciding whether compiling $U$ \emph{before} $W$ or viceversa depends on 
how the conditional in $R$ evaluates. Moreover, compiling the circuits 
underlying the two branches of $R$ requires knowing, \emph{in advance}, to 
which gates and in which order the variables $f$ and $g$ will be applied, and 
thus to have already compiled both $U$ and $W$. 

It should be noted that this limitation is not as much due to 
our chosen architecture, as
it is, in fact, \emph{unavoidable}: it can be shown that no quantum circuit
containing only a single occurrence of both $U$ and $W$ is equivalent to the
term above. To be a bit more precise, there is no quantum circuit $C$ with three
holes $[-]$, $\{-\}$ and $\lfloor - \rfloor$ where each of $[-]$, $\{-\}$ and
$\lfloor - \rfloor$ appears exactly once such that for any Boolean value $x$ and
quantum gates $U$ and $W$, both $C[x]\{U\}\lfloor W \rfloor$ and
$\ite{x}{W;U}{U;W}$ produce the same quantum states. (For more detail and proof
of impossibility, see \cite{PhysRevA.88.022318}). In
Appendix~\ref{asec:swapp-unit-gates}, we give an elementary proof for quantum 
gates $U$ and $W$ acting on two qubits.

How can we overcome this \emph{impasse}? Is it possible to compile the family of terms
$M_n$ efficiently, while at the same time handling satisfactorily
deadlock phenomena such as those occurring in the term $R$? The answer
is affirmative, and consists in 
complementing the synchronous rule
mentioned above with another, \emph{asynchronous}, rule that essentially 
follows the natural (but inefficient) idea of breaking 
deadlocks through duplication.
The resulting
machine, described in Section~\ref{sect:tkm}, is on the one
hand capable of compiling any term of the source $\lambda$-calculus,
and, on the other hand, efficient in those cases in which circular dependencies like those just highlighted do not occur.

\section{The $\qlambda$-Calculus}
\label{sect:calculus}
In this section we describe the syntax and operational semantics of $\qlambda$,
the calculus which we take as source language, and which can be seen as a linear
version of Selinger and Valiron's quantum
$\lambda$-calculus,~\cite{selinger2009quantum, 10.1007/11417170_26}, almost
identical to the one considered in~\cite{lago2013wavestyletokenmachinesquantum}.
\paragraph{Syntax}

The language of \emph{terms} consists of the standard $\lambda$-calculus
constructions, together with pairs, a conditional operator, and basic 
operators on bits and qubits.

\begin{definition}  We define the set of $\qlambda$ terms, noted $\Qlambda$ by:
  {
    \begin{align*}
      \Qlambda \ni M, N, P & ::= x \mid \lambda x. M \mid M~N &\text{ (Functions)} \\
                           & \mid \starc \mid \letstar{M}{N}\ & \text{(Unit)}\\
                           & \mid \pv{M}{N} \mid \letv{\pv{x}{y}}{M}{N} & \text{(Pairs)} \\
                           & \mid \ite{M}{N}{P} &\text{(Conditionals)} \\
                           & \mid c & \text{(Operations)}
    \end{align*}} Here, $x$ ranges over an infinite set of variables, while $c$ ranges over a
    finite set $\QO$ of operators (see Definition~\ref{def:qo} below).
    We write $\letv{x}{M}{N}$ for $(\lambda x. N)~M$ and $\lambda
    x_1, \dots, x_n. M$ for $\lambda x_1. \lambda x_2. \dots \lambda
    x_n. M$.
\end{definition}

\paragraph{Type System}

The \qlambda-calculus comes equipped with a linear type 
system~\cite{girard1987linear}: every piece of
data has to be used \emph{exactly} once. This means that nothing can be
duplicated or erased, hence respecting the laws of quantum physics
regarding the non-duplication of qubits \cite{Nielsen_Chuang_2010}. The type system
contains two base types, namely the one of classical bits (denoted $\bit$)
and the one of quantum bits (denoted $\qbit$), which can be combined through
tensors or function types:

\begin{definition}
  Types are defined as follows:
  \[A, B ::= \bit \alt \qbit \alt 1 \alt A\multimap B \alt A \otimes
    B\] The meta-variable $\baseT$ ranges over base types namely
    $\bit$ and $\qbit$, while $\baseTalt$ stands for $\baseT$ or $1$.
\end{definition}

We are now ready to give a few more details about the set of possible quantum
operations. The set $\QO$ includes quantum operations acting on quantum
data. They are taken over a (universal) set of quantum gates, along with an
operation for creating and measuring qubits as well as creating and discarding
classical bits.
\begin{definition}  \label{def:qo}
  The set $\QO$ of predefined operations includes:
  \begin{itemize}
   \item Classical bit initialization $\termZero$ and $\termOne$, and classical bit discarding
   $\discard$.
    \item An operator $\new$ which initializes a new
    qubit based on the value of a classical bit. 
    \item The measurement operation $\meas$, which takes a
    qubit and returns a bit.
    \item A universal set of quantum gates such as Clifford + T set: $\set{\opn{H},
    \opn{S}, \opn{T}, \opn{CNOT}}$.
  \end{itemize}
Each operator $c \in \QO$ comes equipped with a type
$\inputType{c}\rightarrow\outputType{c}$ where $\inputType{c}$ 
and $\outputType{c}$ are possibly empty sequences of base types:
{
\begin{align*}
  \termZero,\termOne &:()\rightarrow(\bit)&
    \meas &:(\qbit)\rightarrow(\bit)
  \\
  \discard &:(\bit)\rightarrow()&
   \opn{H}, \opn{S}, \opn{T} &: (\qbit)\rightarrow(\qbit)
  \\
  \new &: (\bit)\rightarrow (\qbit)&
\opn{CNOT} &:(\qbit, \qbit)\rightarrow(\qbit,\qbit)
\end{align*}}
Given a sequence $s = (\baseT_1, \dots, \baseT_n)$ we denote by $|s|$
its length $n$.
\end{definition}

\begin{definition}
  Given an operator $c \in \QO$ such that $\inputType{c} = (\baseT_1, \dots,
  \baseT_n)$ and $\outputType{c} = (\baseT'_1, \dots, \baseT'_m)$, we define
  $\mathcal{T}(c)$ as $\baseT_1 \otimes \dots \otimes \baseT_n \multimap \baseT'_1
  \otimes \dots \otimes \baseT'_m$, where 
  $\baseT_1 \otimes \dots \otimes \baseT_p$ is $1$ when $p=0$.
\end{definition}

\begin{definition}	\emph{Typing contexts} $\Gamma,\Delta$ are sets of pairs of variables
	and their corresponding types.
    Typing rules are defined over typing judgments of the form
    $\Delta \vdash M : A$, which reads as ``the term $M$, under
    context $\Delta$, has type $A$''. The typing system is
    generated inductively from the following rules:
  
  {
  \[ \infer{x : A \vdash x : A}{} 
  \qquad
  \infer{\Delta \vdash \lambda x. M : A\multimap B}
  {\Delta, x : A\vdash M : B}
  \qquad
  \infer{\Delta, \Gamma \vdash M~N : B}
  {\Delta \vdash M : A\multimap B\qquad\Gamma\vdash N : A}
  \]
  \[
  \infer{\vdash \starc : 1}{}
  \qquad
   \infer{\Delta, \Gamma \vdash \letstar{M}{N} : A}
  {\Delta \vdash M : 1 \qquad\Gamma\vdash N : A}
  \]  
  \[
   \infer{\Delta, \Gamma\vdash \pv{M}{N} : A\otimes B}
   {\Delta\vdash M : A\qquad \Gamma\vdash N : B}
   \qquad
   \infer{\Delta, \Gamma \vdash \letv{\pv{x}{y}}{M}{N} : C}
   {\Delta \vdash M : A \otimes B\qquad\Gamma, x : A, y : B \vdash N : C}
  \]
  \[
  \infer{\Delta, \Gamma \vdash \ite{M}{N}{P} : A}
  {\Delta \vdash M : \bit \qquad \Gamma \vdash N : A \qquad \Gamma \vdash P : A}
  \qquad
  \infer{\vdash c : \mathcal{T}(c)}{}
  \]}
\end{definition}

\begin{example}
	\label{ex:coin-flip-term}
	The term $\mathit{COIN}$ implementing a perfectly fair \emph{coin-flip} can be
	written down as $\meas \, (\opn{H} \, (\new \, (\termOne \, \starc)))$ and
	correctly receives the type $\bit$ in the empty context.
\end{example}

\paragraph{Operational Semantics}

$\qlambda$ can be naturally endowed with a \emph{call-by-value} reduction
semantics. While linear functions and pairs can be treated in a completely
standard way, basic operations requires some care. First, it is convenient to
endow the calculus with two terms $\tc$ and $\fc$ of type $\bit$, to which
$\termOne \, \starc$ and $\termZero \, \starc$ evaluate, respectively. Creating
qubits is more delicate. In order to manipulate them, $\qlambda$ makes use of an
external quantum register in which the qubits the underlying program manipulates
are stored. The term can refer to the quantum registers through variables, each
of them referring to a qubit in the quantum memory. When applying a quantum
operation to these variables, the qubits corresponding to those variables is
modified accordingly. This is formalized through the notion of a \emph{quantum
closure}:

\newcommand{\qclosures}{\mathscr{Q}}
\newcommand{\distr}[1]{\mathcal{D}(#1)}
\begin{definition}[Quantum Closure~\cite{selinger2009quantum}]
  A quantum closure (or simply closure) is a triple $[Q, L, M]$ where
  $L = [x_1, \dots, x_n]$ is a list of variables, 
  $Q$ is a normalized vector of $\mathbb{C}^{2^n}$, and $M$ is a term.
  The set of quantum closures is indicated as $\qclosures$.
\end{definition}

The operational semantics of {\qlambda}\!,
following~\cite{selinger2009quantum}, is probabilistic, and defining it requires the notion of pseudo-distribution.
\begin{definition}[Pseudo-Distributions]
  Let $X$ be a set. A \emph{pseudo-distribution} over $X$ is a finite
  set of pairs, that we note $\{x_1^{\mu_1},\ldots,x_n^{\mu_n}\}$, 
  where $x_1,\ldots,x_n$ are pairwise distinct elements of $X$,
  and $\mu_1,\ldots,\mu_n$ are elements of $\mathbb{R}_{[0,1]}$
  such that $\sum_{i = 1}^{n} \mu_{i} = 1$.  
The set of
  pseudo-distributions on $X$ is denoted as $\distr{X}$. Given
  $\mu_{1}, \ldots,\mu_{n} \in \mathbb{R}_{[0,1]}$ and
  $\nu_{1},\ldots,\nu_{n} \in \distr{X}$ such that
  $\sum_{1 \leq i \leq n} \mu_{i} = 1$, we denote
  their convex combination by $\sum_{1 \leq i \leq n} \mu_{i} \nu_{i}$.
\end{definition}
We spell out the operational semantics of $\qlambda$ through
a relation $\to\subseteq\qclosures\times\distr{\qclosures}$. First, we need some auxiliary definitions.

\begin{definition}[Values, Evaluation Contexts, and Substitution]~
  The evaluation contexts $\evalCtx$ are defined as:
  \begin{align*}
  \evalCtx\; ::=\;& [\cdot] \mid \letv{\pv{x}{y}}{\evalCtx}{M} \mid
  \letstar{\evalCtx}{M} \mid \evalCtx~M \mid V~\evalCtx \mid\\
  &\pv{\evalCtx}{M} \mid \pv{V}{\evalCtx} \mid \ite{\evalCtx}{M}{N}
  \end{align*}
  As usual, $\evalCtx[M]$ is
  $\evalCtx$ where the hole $[\cdot]$ has been filled by the term $M$.
  Values are generated by the following grammar:
  \[
  V, W ::= x \mid \starc \mid c \mid \tc  \mid \fc \mid
  \pv{V}{W}\mid \lambda x. M
  \]
  We write $M[x \leftarrow N]$ for the capture-avoiding substitution of $x$ in $M$
  by the term $N$.
\end{definition}

\begin{definition}
	There are three kinds of reduction rules:
	\begin{itemize}
	\item
          \textbf{Classical Reduction Rules}. First,
          we have some (pretty standard) reduction
          rules which only influence the third component of the underlying
          quantum closures and which do not produce any probabilistic
          effect:
        {
            \begin{align*}
        [Q, L, (\lambda x. M) V] &\to \{[Q, L, M[x\leftarrow V]]^1\} \\
        [Q, L, \letv{\starc}{\starc}{M}] &\to \{[Q, L, M]^1\} \\
        [Q, L, \letv{\pv{x}{y}}{\pv{V}{W}}{M}] &\to \{[Q, L, M[x\leftarrow V, y\leftarrow W]]^1\} \\
        [Q, L, \ite{\tc }{M}{N}] &\to \{[Q, L, M]^1\} \\
        [Q, L, \ite{\fc}{M}{N}] &\to \{[Q, L, N]^1\} \\
    \end{align*}}
    \item \textbf{Operator Reduction Rules}.
    The operators, on the other hand, can very much influence the content of the 
    quantum register, possibly producing probabilistic effects:
    {
      \begin{align*}
        [Q, L, \termOne\;\starc] &\to \{[Q, L,\tc]^1\} \\    	
        [Q, L, \termZero\;\starc] &\to \{[Q, L,\fc]^1\} \\    	
        [Q, L, \discard\;b] &\to \{[Q, L,\starc]^1\} \text{ for } b \in \set{\tc, \fc} \\   	
        [Q, [x_{1},\ldots,x_{n}], U\langle x_{j_1}, \dots, x_{j_n}\rangle ]
                                 &\to \{[R, [x_{1},\ldots,x_{n}], \langle x_{j_1}, \dots, x_{j_n}\rangle ]^1\} \\
        [\alpha \ket{Q_0} + \beta\ket{Q_1}, L, \meas\ x] &\to  \{[\ket{R_0}, L \setminus \{x\}, \fc]^{\mid\alpha\mid^2}, \\
        &\qquad [\ket{R_1}, L \setminus \{x\}, \tc ]^{\mid\beta\mid^2}\} \\
        [Q, [x_1, \dots, x_n], \new\ \fc] &\to \{[Q \otimes \ket{0}, [x_1, \dots, x_n, y], y]^1\} \\
        [Q, [x_1, \dots, x_n], \new\ \tc ] &\to \{[Q \otimes \ket{1}, [x_1, \dots, x_n, y], y]^1\} 
      \end{align*}}Here $R$ is the state $Q$ where the gate $U$ has been applied to
    qubits $j_1, \dots, j_n$, and $\ket{Q_i}$ are
    normalized states of the form $\sum_j \alpha_j \ket{\psi_j^i}
    \otimes \ket{i} \otimes \ket{\phi_j^i}$ and $R_{i} = \sum_j \alpha_j \ket{\psi_j^i}
    \otimes \ket{\phi_j^i}$ for $i\in\set{0, 1}$
    respectively. Finally, in the rule for $\new$, the variable $y$ is
    fresh.
   \item \textbf{Congruence Rule}. As usual, reduction rules can be applied anywhere
    in an evaluation context.
    {
    \[
      \infer
      {[Q, L, \evalCtx[M]] \to \{[R_1, L_1,\evalCtx[N_1]]^{\mu_1},\ldots,[R_n, L_n,\evalCtx[N_n]]^{\mu_n}\}}
      {[Q, L, M] \to \{[R_, L_1,N_1]^{\mu_1}, \ldots, [R_n, L_n,N_n]^{\mu_n}\}}
    \]}
    \end{itemize}
    Whenever $C\to\{D^1\}$, we take the liberty of writing $C\to D$.
\end{definition}

\begin{example}
  \label{ex:bell}
  Consider the term $\vdash M : \qbit \otimes \qbit$ given by
  \[
    (\lambda f. \lambda x. \opn{CNOT} \, \pv{f \, x}{\new \, (\termZero \, \starc)}) \, H \, (\new \, (\termZero \, \starc)).
  \]
  This term corresponds to the circuit:
   \begin{center}
            \begin{quantikz}
              \lstick{\ket0} & \gate{H} & \ctrl{1} & \qw \\
              \lstick{\ket0} & \qw & \targ & \qw & \qw \\
            \end{quantikz}      
  \end{center}
 which computes the Bell's State (also call EPR pair). Notice that we used
higher-order functions to change the state of the second qubit by feeding
another argument instead of $\new\ \fc$. We illustrate the reduction of $M$ in
\Cref{fig:eval}, where $Q$ and $L$ are initially empty (as there are no free
variables in $M$). In \Cref{sect:tkm}, we will show how the compiling
procedure produces the circuit above.
  
  \begin{figure}
      \fbox{
            \begin{minipage}{.99\textwidth}
            {\small
              \begin{align*}
                &[-, -, (\lambda f. \lambda x. \opn{CNOT} \, \pv{f \, x}{\new \, (\termZero \, \starc)}) \, H \, (\new \, (\termZero \, \starc))] & \\
                &\to [-, -, (\lambda x. \opn{CNOT}\,\pv{H \, x}{\new \, (\termZero \, \starc)}) \, (\new \, (\termZero \, \starc))] & \beta\text{-reduction} \\
                &\to [-, -, (\lambda x. \opn{CNOT}\,\pv{H \, x}{\new \, (\termZero \, \starc)}) \, (\new \, \fc)] & \text{Evaluate } \termZero \, \starc \\
                &\to [\ket{0}, [y], (\lambda x. \opn{CNOT} \, \pv{H \, x}{\new \, (\termZero \, \starc)}) \, y] & \text{Evaluate } \new \, \fc \\
                &\to [\ket{0}, [y], \opn{CNOT} \, \pv{H \, y}{\new \, (\termZero \, \starc)}] & \beta\text{-reduction} \\
                &\to [\frac{\ket{0} + \ket{1}}{\sqrt{2}}, [y], \opn{CNOT}\,\pv{y}{\new \, (\termZero \, \starc)}] & \text{Evaluate } H\,y \\
                &\to [\frac{\ket{0} + \ket{1}}{\sqrt{2}}, [y], \opn{CNOT}\,\pv{y}{\new \, \fc}] & \text{Evaluate } \termZero\,\starc \\
                &\to [\frac{\ket{0} + \ket{1}}{\sqrt{2}} \otimes \ket{0}, [y, z], \opn{CNOT}\,\pv{y}{z}] & \text{Evaluate } \new \, \fc \\
                &\to [\frac{\ket{00} + \ket{11}}{\sqrt{2}}, [y, z], \pv{y}{z})] & \text{Evaluate } \opn{CNOT}
              \end{align*}
              }
            \end{minipage}
      }
      \caption{Example of evaluation of a closure.}
      \label{fig:eval}
  \end{figure}
\end{example}

$\qlambda$ enjoys the usual properties of typed programming languages,
namely, Subject Reduction, Progress and Termination. For more details,
we refer to either~\Cref{app:lambda} or~\cite{selinger2009quantum,
10.1007/11417170_26}.

\section{Quantum Circuits}
\label{sect:circuits}
In this section, we introduce a language for quantum circuits with
classical control flow, called \qcirc, and we describe its
interpretation in the symmetric monoidal category of \emph{completely
positive maps} \cite{Selinger2004,selinger2008fully}. Then, we show
that any circuit is equivalent to a plain one, i.e. without
if-then-else, by describing a procedure to eliminate the use of
classical control flow.

\paragraph{Quantum Circuits with Classical Control Flow}

Let $\mathbb{L}$ be an infinite set of \emph{labels}, indicated as
$l,r,s,\ldots$, to be used as names for the wires of circuits. 
Quantum circuits are defined over $\QO$, the same set of quantum gates
we used in the last section. 

\begin{definition}
We define \emph{types} and \emph{circuits}
as follows:
\begin{align*}
  \boolT & ::= \bit \mid \qbit &\text{(Base Types)} \\
  C, D   & ::= c_{\overline{r}}^{\overline{l}}
           \mid C \seq D
           \mid \ite{l}{D}{E} 
         & \text{(Circuits)} 
\end{align*}
where $c$ ranges over $\QO$, and $\overline{l}$, $\overline{r}$ are finite
sequences of labels. The expression $c_{\overline{r}}^{\overline{l}}$ indicates
the quantum operation $c$ that should be applied \emph{and} the wires it acts
on.
\end{definition}

A \emph{$\qcirc$-environment} (or, simply, an
environment) is a set of pairs of a label and its corresponding type. We require
that any label occurs at most once in an environment. 
A \emph{type judgment} for a circuit $C$ is an expression of the form $\Gamma
\rhd C \rhd \Delta$. Type judgments are inferred inductively via the rules in
Fig.~\ref{fig:qcircrules}, where $\Gamma, \Delta$ indicates the union of
environments $\Gamma$ and $\Delta$, with the proviso that they
have no label in common. 
The term $c_{\overline{r}}^{\overline{l}}$ can be typed with an additional
environment $\Theta$, which means that the corresponding circuit is the
tensorized gate $\mathrm{Id} \otimes c$. Notice that $\overline{l}$ (resp.
$\overline{r}$) is omitted when $\inputType c$ (resp. $\outputType c$) is empty.
This allows us not to consider an operator
for parallel composition in the spirit of premonoidal categories \cite{POWER_ROBINSON_1997}.
The rule $\mathtt{ite}$ also deserves some discussion:
the conditional constructor only accepts labels as its condition,
but standard conditional constructor can be recovered by
means of sequential composition. In fact, given $\Theta \rhd E \rhd l:\bit$, we
can always derive $\Gamma, \Theta \rhd E \rhd \Gamma, l:\bit$, and therefore, we
obtain $\Gamma, \Theta \rhd E \mathbin{;} \ite{l}{C}{D} \rhd \Delta$.

\begin{figure}
  \centering \fbox{
    \begin{minipage}{\textwidth}
      \begin{center}
        {\small
        \begin{math}
          \begin{array}{c}
            \infer[\QOrule]{
            \Theta,l_1:\baseT_1,\ldots,l_n:\baseT_n
            \rhd c^{\overline{l}}_{\overline{r}} \rhd
            \Theta, r_1:\baseT'_1,\ldots,r_m:\baseT'_m
            }{
            c \in \QO 
            \qquad
            \begin{array}{cc}
              \inputType{c} = (\baseT_1, \ldots, \baseT_n)
              &
                 \overline{l} = (l_{1},\ldots, l_{n})
              \\
              \outputType{c} = (\baseT'_1, \ldots, \baseT'_m)
              &
                 \overline{r} = (r_{1},\ldots, r_{m})
            \end{array}
            }
            \\[3.5mm]
            \infer[;]{
            \Gamma \rhd C ; D \rhd \Delta
            }{
            \Gamma \rhd C \rhd \Psi \qquad
            \Psi \rhd D \rhd \Delta}
            \qquad
            \infer[\mathtt{ite}]{
            \Gamma, l : \bit \rhd
            \ite{l}{C}{D} \rhd \Delta
            }{
            \Gamma \rhd C \rhd \Delta
            \qquad \Gamma \rhd D \rhd \Delta
            }
          \end{array}
        \end{math}
        }
      \end{center}
      \end{minipage}
    }
  \caption{Circuit Typing Rules.}
  \label{fig:qcircrules}
\end{figure}

\subsection{Semantics of Circuits}

We interpret circuits in terms of
completely positive maps, as
in~\cite{selinger2008fully}. The category
\cat{CPM} has for objects finite tuples of
positive integers $\sigma = (n_1, \dots, n_k)$ and
as morphisms completely positive maps
$V_\sigma \to V_{\sigma'}$ where
$V_{(n_1, \dots, n_k)} = \mathbb{C}^{n_1 \times
  n_1} \times \dots \times \mathbb{C}^{n_k\times
  n_k}$.
\cat{CPM} is symmetric monoidal closed (in fact,
dagger compact closed,
cf.~\cite{selinger2009quantum}), with tensor
product
$(n_1, \dots, n_k) \otimes (m_1, \dots, m_k) =
(n_1m_1,n_{2}m_{2}, \ldots, n_km_k)$ and a unit
$\one = (1)$.

We interpret each type of \qcirc as an object of \cat{CPM} via
$\interp{\bit} = (1, 1)$, $\interp{\qbit} = (2)$ and $\interp{A
\otimes B} = \interp{A} \otimes \interp{B}$. For the interpretation of
environments, we fix a linear order $\leq$ on $\mathbb{L}$. The
interpretation of an environment $\Gamma$ is given by
$\interp{\emptyset} = \one$ and
$\interp{\Gamma\cup\{l:\boolT\}}=\interp{\Gamma}\otimes
\interp{\boolT}$, where $l$ is the greatest label in $\Gamma \cup
\{l:\baseT\}$. For each element of $\QO$ we associate a
\cat{CPM}-morphisms defined as follows:
\begin{itemize}
\item for each quantum gate $U$ acting on
  $n$-qubits, $\pinterp{U}:\pinterp{\qbit^{n}}\to \pinterp{\qbit^{n}}$ is given by 
  \begin{math}
    \pinterp{U}(A) = U A U^{\dagger};
  \end{math}
\item 
  $\pinterp{\termOne},\pinterp{\termZero} \colon (1) \to \pinterp{\bit}$ are
given by $\pinterp{\termOne}(x)= (x,0)$ and $\pinterp{\termZero}(x)=
(0,x)$;
\item
  $\pinterp{\new} \colon \pinterp{\bit} \to
  \pinterp{\qbit}$ and
  $\pinterp{\meas} \colon \pinterp{\qbit} \to
  \pinterp{\bit}$ are given by 
$\pinterp{\new}(x,0) = x\ket{1}\bra{1}$,
$\pinterp{\new}(0,x) = x\ket{0}\bra{0}$, and 
      \begin{align*}
      \pinterp{\meas}
      \begin{pmatrix}
        a & b \\
        c & d \\
      \end{pmatrix}
      = (a, d);
  \end{align*}
\item $\pinterp{\discard} \colon \pinterp{\bit} \to (1)$ is given by $
  \pinterp{\discard}(a, b) = a + b$. 

\end{itemize}

Finally, for any typing judgment $\Gamma \rhd C \rhd \Delta$
we define a completely positive map $\interp{C} : \interp{\Gamma} \to \interp{\Delta}$ by induction on
$C$, as illustrated in Fig.~\ref{fig:circden}. For simplicity, in the definition of the interpretation of
$c^{\overline{l}}_{\overline{r}}$, we assume that labels in $\overline{l} = (l_{1},\ldots,l_{n})$
are larger than any labels in $\Gamma$ and $l_{1} < \cdots < l_{n}$. We also assume similar conditions on
$\overline{r}$ and the label $l$ in the definition of the interpretation of if-then-else. For general cases,
we need to compose $\interp{C}$ with appropriate symmetries indicated by labels.

\begin{figure}
  \centering
  \fbox{
    \begin{minipage}{\textwidth}
      {\small
      \begin{align*}
        \interp{\Gamma \rhd C ; D \rhd \Delta}
        &= \interp{\Psi \rhd D \rhd \Delta}
          \circ \interp{\Gamma \rhd C \rhd \Psi}
        \\
        \interp{\Theta, l_{1}:\baseT_{1},\ldots,l_{n}:\baseT_{n}
        \rhd c^{\overline{l}}_{\overline{l'}}
        \rhd \Theta, l'_{1}:\baseT'_{1},\ldots,l'_{m}:\baseT'_{m}}
        &= \interp{\Theta} \otimes \pinterp{c}
        \\
        \interp{\Gamma,l:\bit \rhd \ite{l}{C}{D}
        \rhd \Delta}
          (\vec A,\vec B)&=
        \interp{\Gamma \rhd C \rhd \Delta}
        (\vec A)
        +
        \interp{\Gamma \rhd D \rhd \Delta}
        (\vec B)
      \end{align*}
      }
    \end{minipage}
  }
  \caption{Interpretation of Circuits.}
  \label{fig:circden}
\end{figure}

\subsection{Eliminating Classical Control Flow}

We now show how to transform a \qcirc-circuit $C$ into a circuit without
conditionals, which we call \emph{plain circuit}, that computes the same completely positive map. 

\begin{theorem}
  \label{thm:if-then-else-removal}
  For any circuit $\Gamma\rhd C\rhd \Delta$ of size $n$, there exists a plain
  circuit $\Gamma\rhd D\rhd \Delta$ of size $\mathcal O(n)$ such that
  $\interp{C}=\interp{D}$.
\end{theorem}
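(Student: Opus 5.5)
The plan is to eliminate conditionals one at a time, innermost first, replacing each $\ite{l}{C}{D}$ (with $C,D$ already plain) by a plain gadget with the same denotation. The gate set $\QO$ has no classically controlled gates, so the gadget has to route data rather than control gates. The key observation is that a bit wire, once turned into a qubit by $\new$, carries a \emph{diagonal} state. A controlled swap (Fredkin gate) on that qubit is therefore a classical multiplexer. Fredkin is exactly implementable in Clifford+T: it is CNOT–Toffoli–CNOT, and Toffoli has an exact Clifford+T decomposition. So the gadget uses only a constant number of gates from $\QO$ per routed wire.

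\emph{Gadget.} Fix $\Gamma, l:\bit \rhd \ite{l}{C}{D}\rhd\Delta$ with $C,D$ plain. We may assume $\Gamma$ contains only the labels that $C$ or $D$ actually consume. Any other label is common to both typings as a passive $\Theta$ and can be pulled out.
\begin{enumerate}
\item Apply $\new$ to $l$, obtaining a control qubit $q$. Convert every $\bit$ wire of $\Gamma$ to a qubit with $\new$ as well.
\item Create a second copy $\Gamma'$ of these wires initialised to $\ket 0$, via $\termZero$ then $\new$. Controlled on $q$, swap $\Gamma$ with $\Gamma'$, so that the real data sits in $\Gamma$ when $q=\ket1$ and in $\Gamma'$ when $q=\ket0$.
\item Restore the bit wires on both sides with $\meas$. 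This is harmless because they hold basis states.
\item Run $C$ on $\Gamma$ and a relabelled copy of $D$ on $\Gamma'$, producing $\Delta$ and $\Delta'$.
\item Controlled on $q$, swap $\Delta$ and $\Delta'$ (converting bit wires around the swap as before), so the genuine outputs land in $\Delta$.
\item Trace out $\Delta'$ and $q$: apply $\meas$ and then $\discard$ to every qubit, and $\discard$ to every bit.
\end{enumerate}

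\emph{Correctness.} Both sides of the claimed equation are linear, so it suffices to check the two basis inputs $(\vec A,0)$ and $(0,\vec B)$ of $\interp{\bit}$. For $l=\tc$ the state factors as the real data in $\Gamma$ tensored with a fixed ancilla state in $\Gamma'$. $D$ acts only on that ancilla factor, so the garbage in $\Delta'$ stays uncorrelated with $\Delta$. Every primitive in $\QO$ is trace preserving, so every plain circuit is, and hence tracing out $\Delta'$ contributes a factor of exactly $1$. This leaves $\interp{C}(\vec A)$. The case $l=\fc$ is symmetric and gives $\interp{D}(\vec B)$, so the sum matches Fig.~\ref{fig:circden}. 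The full statement then follows by structural induction on $C$, using that sequential composition is interpreted as composition of maps.

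\emph{Main obstacle: the size bound.} Each gadget costs $O(|\Gamma|+|\Delta|)$, and that interface need not be bounded by the size of the conditional's own node. For deeply nested conditionals, naively charging the interface at every enclosing level gives a quadratic bound. My first attempt is an amortisation that assigns the cost of each routed wire to the gate inside the branch that consumes or produces it. If that does not yield $O(n)$, I would switch to a flattened translation. In it, every gate under nested guards $l_1,\dots,l_k$ would be routed once under a single control $q_1\wedge\dots\wedge q_k$. That conjunction would be built incrementally with Toffolis into ancillas, one per conditional, so each wire is routed only at the conditional where it enters or leaves and the total cost is linear.
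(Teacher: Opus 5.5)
\paragraph{The gadget and its correctness.} Your gadget is exactly the paper's $\tau_{\Gamma,\Xi,l}$ construction: the guard becomes a control qubit, the used wires are Fredkin-swapped into a $\ket0$-initialised copy, both branches run, the outputs are swapped back and the copy is traced out. Your correctness argument, via linearity, the two basis inputs of $\interp{\bit}$ and trace preservation, is sound and more careful than the paper's.

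\paragraph{The gap: the size bound.} You leave the $\mathcal O(n)$ bound open. The paper settles it only by remarking that $\tau$ has size proportional to $|\Gamma|+|\Delta|$. That is exactly the inference you distrust, and your distrust is justified.

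Let $C_0=H^{a_1}_{a_1};\dots;H^{a_w}_{a_w}$ and $C_k=\termZero_{m_k};\ite{m_k}{C_{k-1}}{H^{a_1}_{a_1}}$, all of type $\{a_1,\dots,a_w\}\rhd C_k\rhd\{a_1,\dots,a_w\}$. Then $|C_d|=w+\mathcal O(d)$. Yet each of the $d$ conditionals has all of $a_1,\dots,a_w$ genuinely used by its then-branch. Eliminating them with the gadget therefore emits $\Theta(wd)$ gates, in any order and with or without pruning passive wires; for $w=d$ this is $\Theta(n^2)$.

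The same family refutes your amortisation. Every $H^{a_i}_{a_i}$ in $C_0$ consumes a wire that is routed at all $d$ levels, so it is charged $d$ times. No accounting makes the gadget construction linear, unless ``size'' is redefined to charge each conditional for its whole interface.

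\paragraph{Your fallback.} The flattened translation is the right idea, but it is a different construction that must replace the gadget rather than refine it. Your phrase ``each wire is routed only at the conditional where it enters or leaves'' still describes routing at conditional boundaries, and that is what has to disappear. A version that works:
\begin{itemize}
\item Fix one physical qubit wire per label, encoding bits as diagonal qubits via $\new$.
\item For each branch, compute a control $q=q_{\mathrm{parent}}\wedge l$ or $q_{\mathrm{parent}}\wedge\neg l$ with one Toffoli (and $X=HSSH$) into a fresh ancilla.
\item Implement each gate $c^{\overline l}_{\overline r}$ of the branch as follows: Fredkin-swap (control $q$) the wires of $\overline l$ with fresh $\ket0$ workspace; apply $c$ there, with $\meas$/$\new$ conversions as needed; Fredkin-swap the results into the wires of $\overline r$; discard the workspace.
\item Consume the guard $l$ the same way, under $q_{\mathrm{parent}}$.
\end{itemize}
Both branches then deposit their outputs on the same wires, so nothing is merged when a conditional closes. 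The cost is $\mathcal O(1)$ per gate and per conditional, hence $\mathcal O(n)$, and untouched passive wires cost nothing.

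Two invariants remain to check. First, gates with control $0$ never modify label wires. Second, on the executed path every wire about to receive an output holds $\ket0$, because it is either fresh or was emptied when its previous content was swapped out. With all controls diagonal, the workspace junk is then a fixed state depending only on the classical controls. Tracing it out contributes a factor $1$, exactly as in your correctness argument.
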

\begin{proof}[Proof Sketch]
  For environments
  $\Gamma = (l_{1}:\baseT_{1},\ldots,l_{n}:\baseT_{n})$ and
  $\Delta = (r_{1}:\baseT_{1},\ldots,r_{n}:\baseT_{n})$, and for a
  label $l$, we can construct a plain circuit
  $\Gamma,\Delta, l:\bit \rhd \tau_{\Gamma,\Delta,l} \rhd
  \Gamma,\Delta,l:\bit$ that
  conditionally swaps environments:
  $\interp{\termOne_{l};\tau_{\Gamma,\Delta,l}}$ is the swapping on
  $X \otimes X$, and $\interp{\termZero_{l};\tau_{\Gamma,\Delta,l}}$
  is the identity on $X \otimes X$ where
  $X = \interp{\Gamma} = \interp{\Delta}$. We then replace every
  $\ite{l}{D}{E}$ in a given circuit $C$ with a circuit of the following
  form:
  \begin{equation*}
    \begin{tikzpicture}[baseline=(current bounding box.south)]
  \node[draw, rectangle, minimum width=0.5cm, minimum height=0.5cm] (0) {$0$};
  \node[draw, rectangle, minimum width=0.5cm, minimum height=0.5cm, right=0.5cm of 0] (new) {$\new$};
  \node[draw, rectangle, minimum width=0.5cm, minimum height=1.7cm, right=0.5cm of new, yshift=-0.6cm] (tau) {$\tau$};
  \node[draw, rectangle, minimum width=0.5cm, minimum height=0.5cm, right= 0.5cm of tau, yshift=0.6cm] (D) {$D'$};
  \node[draw, rectangle, minimum width=0.5cm, minimum height=0.5cm, right= 0.5cm of tau] (E) {$E$};
  \node[draw, rectangle, minimum width=0.5cm, minimum height=1.7cm, right=0.5cm of E] (tau2) {$\tau$};
  \node[draw, rectangle, minimum width=0.5cm, minimum height=0.5cm, right=0.5cm of tau2, yshift=0.6cm] (meas) {$\meas$};

  \draw (new.east) -- node[above]{$\scriptstyle \Xi$} (new.east-|tau.west);
  \draw (tau.west) -- (0.west |- tau.west) node[left] {$\Gamma$};
  \draw (0.east) -- (new.west);
  \draw ([yshift=-0.6cm]tau.west) -- node[above] {$\scriptstyle \bit$}
  ([yshift=-0.6cm]tau.west -| 0.west);
  \draw ([xshift=0.5cm]meas.east) -- ++(0,0.2);
  \draw ([xshift=0.5cm]meas.east) -- ++(0,-0.2);
  \draw ([xshift=0.55cm]meas.east) -- ++(0,0.15);
  \draw ([xshift=0.55cm]meas.east) -- ++(0,-0.15);
  \draw ([xshift=0.6cm]meas.east) -- ++(0,0.1);
  \draw ([xshift=0.6cm]meas.east) -- ++(0,-0.1);
  \draw (meas.east) -- ([xshift=0.5cm]meas.east);
  \draw (tau2.east) -- (tau2.east-|meas.east) -- ++(0.5,0)
  node[right] {$\Xi$};
  \draw ([xshift=0.5cm,yshift=-1.2cm]meas.east) -- ++(0,0.2);
  \draw ([xshift=0.5cm,yshift=-1.2cm]meas.east) -- ++(0,-0.2);
  \draw ([xshift=0.55cm,yshift=-1.2cm]meas.east) -- ++(0,0.15);
  \draw ([xshift=0.55cm,yshift=-1.2cm]meas.east) -- ++(0,-0.15);
  \draw ([xshift=0.6cm,yshift=-1.2cm]meas.east) -- ++(0,0.1);
  \draw ([xshift=0.6cm,yshift=-1.2cm]meas.east) -- ++(0,-0.1);
  \draw ([yshift=0.6cm]tau.east) -- (D.west);
  \draw (tau.east) -- (E.west);
  \draw ([yshift=0.6cm]tau2.west) -- (D.east);
  \draw (tau2.west) -- (E.east);
  \draw ([yshift=-0.6cm]tau.east) --
  node[below] {$\scriptstyle \bit$}
  ([yshift=-0.6cm]tau2.west);
  \draw ([yshift=0.6cm]tau2.east) -- (meas.west);
  \draw ([yshift=-0.6cm]tau2.east) --
  ([yshift=-0.6cm]tau2.east-|meas.east) --
  ++(0.5,0);
  ;

\end{tikzpicture}
  \end{equation*}
  where each box indicates a circuit, and
  computation proceeds from left to right:
  wires on the left (right) hand side of a box receive inputs (return
  outputs). The gate $\myground$ indicates a discarded bit. 
  Observe that, if we
  feed the bit $1$ to $l:\bit$, then the circuit above will compute
  $D$ and ``kill'' $E$, that is, feed it with $\ket 0$ and measure and
  discard its result; conversely, if we feed the bit $0$ to $l:\bit$,
  then the circuit above will similarly compute $E$ and ``kill'' $D$.
  In this way, the behavior of the above circuit is equivalent to
  that of $\ite{l}{D}{E}$.
  Because the size of $\tau_{\Gamma,\Delta,l}$ is proportional to the
  size of $\Gamma$ and $\Delta$, we see that the size of the result of
  if-then-else elimination is proportional to the size of the original circuit $C$.
\end{proof}

\section{The Quantum Circuit Token Machine}
\label{sect:tkm}

In this section we introduce the quantum circuit machine \circuittkm \
(Quantum Circuit Interaction Abstract Machine), a machine capable of
turning any quantum $\lambda$-term into a circuit in \qcirc. We
proceed in two steps: we first introduce a simpler, but less efficient, machine \circuittkmz\ which is proved sound; then we introduce a more efficient machine \circuittkm{}, along the lines of
Section~\ref{sect:informal}, to which we extend the soundness result.

\subsection{Extended Circuits}

Before embarking in the detailed description of the \circuittkmz, we need
to introduce a generalization of the concept of a circuit, whose purpose is to conveniently
represent a tree of ordinary circuits. These are called \emph{extended
circuit}, and are defined as follows:
$$
E,F    ::= C \mid  C \xmapsto l (E,F) 
$$
where $C$ ranges over ordinary circuits and $l$ over labels.
For any extended circuit $E$, let us fix an indexing $\alpha_1,\dots, \alpha_n$ of its $\meas$-gates. We define the following notions:
\begin{itemize}
\item an \emph{address for $E$}, noted $\ADD,\ADDH$, is a partial function from labels to $\{0,1\}$ that describes a unique path from the root of $E$ to one of its leaves (a more precise definition is contained in the Appendix);
\item a \emph{super-address for $E$}, noted $s,t$, is a partial function from labels and indexes to $\{0,1\}$ that describes a unique path from the root of $E$ to one of its leaves, together with a Boolean choice for each $\meas$-gate occurring along this path. Any super-address $s$ induces an address $\ADD_s$ by restricting it to labels.

\end{itemize}
We indicate as $@E$ the set of addresses of $E$, and as $@^{\mathsf s}E$ the set of its super-addresses.

Since an extended circuit $E$ represents a branching of
possible executions, when we add new gates to it, we need to know \emph{where} to add such
gates.
\begin{definition}[Substitution ``at address $\GG$'']
Given an extended circuit $E$ and an address $\GG\in @E$, we denote by $E@\GG$ the circuit $C$ of $E$ ``at address $\GG$''.
It is defined inductively on $E$ : 
\begin{itemize}
  \item $C@\emptyset = C$.
  \item $(C\xmapsto l (E, F))@\GG = \left\{ \begin{array}{l}
    E@(\GG\backslash \set{l \to 1}) \hfill \text{ when } (l \to 1) \in \GG \\
    F@(\GG\backslash \set{l \to 0}) \hfill \text{ when } (l \to 0) \in \GG \\
    \textbf{undefined otherwise}  \\
  \end{array}\right.$
\end{itemize}
We write $E@\GG[C]$ for the substitution of the circuit $E@\GG$ by $C$ in $E$.
\end{definition}

 The typing rules for extended circuits use
\emph{extended typing environments}, defined by
\begin{equation*}
  \mathcal J, \mathcal K ::= \leaf{\Gamma}
  \alt \node{\mathcal J}{\mathcal K}.
\end{equation*}
An extended typing environment is, intuitively, a binary tree whose leaves are standard circuit environments. 
$\mathcal J$ is said \emph{uniform} when all its leaves are of the form $\leaf{\Gamma}$, for some fixed $\Gamma$. 
A \emph{type judgment} for an extended circuit $E$ is an
expression of the form $\Gamma \rhd E \rhd \mathcal{J}$, and type
judgments are inferred inductively via the rules in
Fig.~\ref{fig:extended-circuit-typing}. The typing rule of $C
\xmapsto{l} (E,F)$ means that extended typing environments capture
branching information of extended circuits. Notice that the typing of
extended circuits is non-uniform, for instance we can build the
circuit $\Gamma \rhd C \xmapsto{l} (\opn{H}, \meas) \rhd \node{\leaf{\qbit}}{\leaf{\bit}}$.
A well-typed circuit $\Gamma\rhd E\rhd \mathcal J$ is  \emph{uniform} when $\mathcal J$ is uniform.
In this case, when all leaves of $\mathcal J$ are of the form $\leaf{\Delta}$, we may simply write 
$\Gamma\rhd E\rhd \Delta$.
\begin{figure}
  \centering \fbox{
	    \begin{minipage}{\textwidth}
	    {
		      $$
				          \infer{\Gamma \rhd C \rhd
					            \opn{leaf}(\Delta)}{\Gamma \rhd C \rhd
					            \Delta} 
					        \qquad\qquad\qquad
				          \infer{\Gamma \rhd C
					            \xmapsto{l}(F, G) \rhd
					            \node{\mathcal{J}}{\mathcal{V}}} 
					            {
					            \begin{array}{c}
					            	\Gamma\rhd C \rhd l : \bit, \Delta
					            	\\
                        \Delta \rhd F \rhd \mathcal{J}  \qquad
					            	\Delta \rhd G \rhd \mathcal{V}
					            \end{array}
					            }
			      $$
			      }
		    \end{minipage}}
  \caption{Extended Circuit Typing Rules.}
  \label{fig:extended-circuit-typing}
\end{figure}

It is always possible to recover a circuit from an extended circuit through the
map $\translate(-)$ defined by $\translate(C) = C$ and
$ \translate(C \xmapsto l (F, G))
 = C \mathbin{;} \ite l {\translate(F)} {\translate(G)}$. When $E$ is uniform, we obtain:
 
\begin{lemma}
 \label{lem:translation-EC-well-typed}
 For any uniform extended circuit $\Gamma \rhd E \rhd \mathcal{J}$, where all leaves of $\mathcal J$ are $\leaf{\Delta}$, 
 $\Gamma \rhd \translate(E) \rhd \Delta$.
\end{lemma}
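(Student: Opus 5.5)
We argue by structural induction on the derivation of $\Gamma \rhd E \rhd \mathcal{J}$, proving the slightly more general statement: for every (not necessarily uniform at the outset, but) well-typed extended circuit $\Gamma \rhd E \rhd \mathcal J$ all of whose leaves are $\leaf{\Delta}$ for a fixed $\Delta$, we have $\Gamma \rhd \translate(E) \rhd \Delta$. Note that uniformity is inherited by subtrees: if $\mathcal J = \node{\mathcal J_1}{\mathcal J_2}$ has all leaves equal to $\leaf{\Delta}$, then so do $\mathcal J_1$ and $\mathcal J_2$. This is what makes the induction go through.

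In the base case $E = C$ is an ordinary circuit, and the only applicable rule of Fig.~\ref{fig:extended-circuit-typing} gives $\mathcal J = \leaf{\Delta'}$ from $\Gamma \rhd C \rhd \Delta'$. Uniformity forces $\Delta' = \Delta$, and $\translate(C) = C$, so we are done.

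In the inductive case $E = C \xmapsto{l} (F,G)$, the rule yields $\Gamma \rhd C \rhd l:\bit, \Delta'$, together with $\Delta' \rhd F \rhd \mathcal J_1$ and $\Delta' \rhd G \rhd \mathcal J_2$, where $\mathcal J = \node{\mathcal J_1}{\mathcal J_2}$. Both subtrees are uniform with leaves $\leaf{\Delta}$, so the induction hypothesis gives $\Delta' \rhd \translate(F) \rhd \Delta$ and $\Delta' \rhd \translate(G) \rhd \Delta$. The $\mathtt{ite}$ rule of Fig.~\ref{fig:qcircrules} then gives $\Delta', l:\bit \rhd \ite{l}{\translate(F)}{\translate(G)} \rhd \Delta$. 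Finally, the sequential composition rule, applied with the intermediate environment $\Psi = l:\bit,\Delta'$, yields $\Gamma \rhd C \seq \ite{l}{\translate(F)}{\translate(G)} \rhd \Delta$, which is exactly $\translate(E)$.

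The only point requiring care is the well-formedness side condition: the environment $l:\bit, \Delta'$ must be a legitimate environment, meaning $l$ does not occur in $\Delta'$. This is already guaranteed by the premise $\Gamma \rhd C \rhd l:\bit,\Delta'$, since environments are disjoint unions. So no genuine obstacle arises. The main thing to get right is stating the induction for uniform subtrees, rather than relying on the lemma's hypothesis only at the root.
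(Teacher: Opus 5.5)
Your induction on the typing derivation is correct and is the routine argument the paper leaves implicit; the paper states this lemma without proof. The node case follows from the two induction hypotheses via the $\mathtt{ite}$ rule and sequential composition through $\Psi = l:\bit,\Delta'$. One minor point: the ``slightly more general statement'' you announce is just the lemma itself, since having all leaves equal to $\leaf{\Delta}$ is the definition of uniformity; the induction only needs that this property passes to $\mathcal J_1$ and $\mathcal J_2$.
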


We can define the semantics of a {uniform well-typed extended circuit} as the
semantics of its translation: $\interp{E} = \interp{\translate(E)}$. For an
arbitrary well-typed (non-uniform) extended circuit $\Gamma\rhd E\rhd \mathcal
J$, notice that any address $\ADD\in @^{\mathsf s}E$ induces a circuit
environment $\Gamma\rhd -\rhd \mathcal J_{\ADD}$ (where $\mathcal J_\ADD$
indicates the leaf of $\mathcal J$ at address $\ADD$). In fact, we can define
the semantics of $E$ \emph{per slice}, as $\interp{E}^{\sharp
s}:\interp{\Gamma}\to\interp{\mathcal J_{\ADD_s}}$, for any choice of a
super-address $s\in @^{\mathsf s}E$. This is given in details in the Appendix.

\subsection{The \circuittkmz{} Machine}

The \circuittkmz\ is a \emph{token machine}, in the style of Girard's geometry
of interaction~\cite{goi0,goi1,danos1999reversible,asperti1995paths}. The idea
is that, starting from a type derivation $\pi$ in \qlambda, we will let a finite
set of \emph{tokens} move through $\pi$, following the occurrences of \emph{base
types} $1$, $\bit$ or $\qbit$. Each token starts from a \emph{negative}
occurrence of some base type in the conclusion of $\pi$, corresponding to an
input, and eventually reaches a \emph{positive} occurrence, corresponding to an
output. As they travel along the derivation, the tokens perform parallel
data-flow analysis, producing along the way the circuit corresponding to the
underlying $\qlambda$-term. At every moment along the execution of the
\circuittkmz\ on $\pi$, each token lies in a \emph{position} inside $\pi$,
namely at an occurrence of some base type $\boolT$ in some type judgment
occurring in $\pi$. To be more precise, let us first introduce the notion of
\emph{position within a type $A$}. Any occurrence of some base type $\boolT$ in
a type $A$ is determined by a unique \emph{occurrence context} $\mathtt C$,
corresponding intuitively to a type containing a unique occurrence of the hole
$[-]$, so that $A=\mathtt C[\boolT]$.
More precisely, \emph{positive} and \emph{negative} occurrence contexts
are defined as follows, in a mutually recursive way:
  \begin{align*}
  \mathtt{P} &::= [-] \alt \mathtt{N} \multimap A \alt A \multimap
  \mathtt{P} \alt \mathtt{P} \otimes A \alt A \otimes \mathtt{P} \\
  \mathtt{N} &::= \mathtt{P} \multimap A \alt A \multimap
  \mathtt{N} \alt \mathtt{N} \otimes A \alt A \otimes \mathtt{N}
  \end{align*}
an occurrence context $\mathtt C$ is either a positive or a negative occurrence
context, and a  $\boolT$-\emph{position} within a type $A$ is defined as a pair 
$(\mathtt C,\baseT)$ such that $A=\mathtt C[\baseT]$.
A \emph{position within a judgment $\Gamma\vdash A$} is either a position in
$A$ or a position in some type $B$ occurring in $\Gamma$. If the position is
within $A$, then it is positive (resp.~negative) precisely when it is
positive (resp.~negative) as a position within $A$; if the position is within
$B$, then it is positive (resp.~negative) when it is negative
(resp.~positive) as a position within $B$. Finally, a \emph{position within a
type derivation $\pi$} is a position within some judgment occurring in
$\pi$, and its polarity corresponds to its polarity as a position within the
judgment.

There are certain sets of positions that will be crucial in the following, and
deserve to be given a formal status.
   
\begin{definition}[Position Sets]
	Given a derivation $\pi$, we define the following
	sets of positions:
	\begin{itemize}
	\item the set of negative positions of datatypes and 
	of the unit type $1$ in the conclusion
	of $\pi$, noted $\negative_\pi$ and $\negones_\pi$, 
	respectively;
	\item the set of positive positions of datatypes and
	of the unit type $1$ in the conclusion
	of $\pi$, noted $\positive_\pi$ and $\posones_\pi$,
	respectively;
	\item the set of positions lying at occurrences of
	the unit typing rule $\vdash * : 1$, noted
	$\ones_\pi$;
	\item the set of $\bit$-positions lying at the guards of conditionals,
	noted $\guard_\pi$.
\end{itemize}
\end{definition}
For every set of positions $\mathtt{X}$, we write $\mathtt{X}^\downarrow$ for
the subset of $\mathtt{X}$ consisting of all the positions in $\mathtt{X}$
occurring outside the scope of any conditional branches, i.e., such that the
path from the position to the conclusion of $\pi$ does not go through a then or
else branch in an instance of the conditional typing rule.

Let $\mathsf{PSN}_\pi$ indicate the set of positions in $\pi$ and $\mathsf{LAB}_\pi\subseteq \mathbb L$ a finite set of labels. 
We suppose given a bijection $\mathsf{lab} :\mathsf{PSN}_\pi \to \mathsf{LAB}_\pi$
associating each position in $\pi$ with a unique label.
For any set $\mathtt X=\{\sigma_1,\dots, \sigma_n\}$ of positions, with $\sigma_i$ of type $\mathbb B_i$, we define $\mathbf{ENV}(\mathtt X)=\{\mathsf{lab}(\sigma_1):\mathbb B_1,\dots, \mathsf{lab}(\sigma_n):\mathbb B_n\}$.

Tokens carry an \emph{address}, denoted $\GG$, which, as in the case of extended circuits, is 
a partial function from labels $\mathsf{LAB}_\pi$ to $\{0,1\}$. The address tells us
in which branch of a certain conditional construct the token should
go. Indeed, tokens might enter and exit the conditional, and have to
somehow \emph{remember} where they should go. 

We are finally ready to define what tokens are:

\begin{definition}[Token]
  \label{def:token}
  A \emph{token} in a type derivation $\pi$ is a pair $\mathfrak a=(\sigma, \ADD)$ formed by a position $\sigma$ in $\pi$ and an address $\ADD$.
  A token $\mathfrak a=(\sigma,\ADD)$ is called a \emph{unit} token if $\sigma $ is a $1$-position in $\pi$, otherwise it is called a \emph{data token}.
\end{definition}

The \circuittkmz\ is a \emph{multitoken} machine, and finite sets
of tokens (which are ranged over by $\MM$) will be essential components
of their configurations.
For any set of positions $\mathtt X$, let $\mathbf{TKN}(\mathtt X)$ be the set of tokens $(\sigma,\emptyset)$, such that $\sigma\in \mathtt X$. 
Given a set of tokens $\MM$, let $\mathbf{PSN}(\MM)$ indicate the set of positions occurring in $\MM$. 
Moreover, for any address $\ADD$, let $\MM|_\ADD\subseteq \MM$ be the set of tokens in $\MM$ at address $\ADD$, i.e.~of the form $\mathfrak a=(\sigma, \ADD)$, and $\GG(\MM)$ be the set
$\set{( \sigma, \GG) \alt ( \sigma, \GG') \in \MM}$, i.e. the set $\MM$
where all the addresses have been set to $\GG$.

It is convenient to isolate sets of tokens
$\MM$ covering every positive or negative position of a type occurrence in $\pi$: given a type occurrence $A$ and a set of tokens 
$\MM$, we say that:
  \begin{varitemize}
    \item $\MM$ \emph{positively saturates} $A$ if there is a token at position $\sigma$ in 
    $\MM$ for each positive position $\sigma$ in $A$.
    \item $\MM$ \emph{negatively saturate} $A$ if there is a token at position $\sigma$ in
    $\MM$ for each negative position $\sigma$ in $A$.
  \end{varitemize}

For any set of tokens $\MM$ there is a naturally defined (possibly non-uniform) extended type environment $\mathbf{ENV}(\MM)$ obtained by analysis of the addresses of the tokens: for any address $\ADD$ occurring in $\MM$, the circuit environment $\mathbf{ENV}(\MM)_\ADD$ is $\mathbf{ENV}(\mathbf{PSN}(\MM|_\ADD)-\mathcal A)$, where $\mathcal A=\negones_\pi\cup\posones_\pi\cup \ones_\pi\cup\guard_\pi$.
In other words, $\mathbf{ENV}(\MM)_\ADD$ contains assignments $l:\mathbb B$, where $l=\mathsf{lab}(\sigma)$, for each \emph{data} token $a=(\sigma,\ADD)$ of type $\mathbb B$ in $\MM|_\ADD$, with the sole exception of the tokens occurring in the guard of an if-then-else.

A \emph{configuration} of the \circuittkmz is given by the positions of the
tokens together with the circuit produced so far. More precisely, a
configuration     $\config$ is a tuple $(\pi,  \MM, E)$ where:
    \begin{varitemize}
        \item $\pi:\Gamma\vdash M:A$ is a type derivation of \qlambda; 
        \item $\MM$ is a set of tokens;
        \item  
        $\mathbf{ENV}(\negative_\pi)\rhd E \rhd \mathbf{ENV}(\MM)$ is an extended
        circuit.
    \end{varitemize}

  Notice that the definition above implies that the addresses appearing in the tokens $\MM$ coincide with those which are well-defined for $E$, i.e.~with those in $@E$. Moreover, each label $l=\mathsf{lab}(\sigma)$ of an input wire of $E$ is associated with some initial position of the tokens, i.e.~with some $\sigma\in\negative_\pi$; on the other hand, each label $l=\mathsf{lab}(\sigma)$ of an output wire of $E$ at address $\ADD$ is associated with a unique token $(\sigma,\ADD)\in \MM$. This is illustrated in Fig.~\ref{fig:tokenwires1}.
  
  \begin{figure}
  \fbox{
  \begin{minipage}{.99\textwidth}
  \begin{subfigure}{\textwidth}
    \dbox{
  \begin{minipage}{.97\textwidth}
   \begin{center}
  \resizebox{.325\textwidth}{!}{
\begin{tikzpicture}
\node(d) at (0,4) {Tokens traveling in the derivation $\pi$};
  
  \draw[dotted] (0,0.3) to (-1.5,3) to (1.5,3) to (0,0.3);
  \node(pi) at (0,1.5) {$\pi$};

  \node(a) at (0,0) { $x_1:\qbit,x_2,\qbit\vdash M:\qbit\otimes\qbit$};
  \node[red](tok1) at (-1.8,-0.4) {\tiny${ (\sigma_1,\emptyset)}$};
  \node[red](tok1) at (-0.4,-0.4) {\tiny${ (\sigma_2,\emptyset)}$};
  \node[red](tok) at (-1,-0.7) {\tiny initial positions $\negative_\pi$};

    \node[blue](tok1) at (-0.5,2) {\tiny$\stackrel{\bullet}{ (\sigma_3,\{l\to 0\})}$};
    \node[blue](tok1) at (0.7,2.5) {\tiny$\stackrel{\bullet}{ (\sigma_4,\{l\to 0\})}$};
    \node[blue](tok1) at (0.2,1) {\tiny$\stackrel{\bullet}{ (\sigma_5,\{l\to 1\})}$};
    \node[blue](tok1) at (-1,2.7) {\tiny$\stackrel{\bullet}{ (\sigma_6,\{l\to 1\})}$};
  \node[blue](tok) at (-2.1,1.7) {\tiny 
  \begin{tabular}{c}
 tokens in $\MM$\\
 may have addresses \\
 $\{l\to 0\}$ or $\{l\to 1\}$
 \end{tabular}
 };

 \end{tikzpicture}
 }
\hspace{1cm}
 \resizebox{.52\textwidth}{!}{
 \begin{tikzpicture}
 \node(d) at (2.5,3.8) {\small Circuit $\mathbf{ENV}(\negative_\pi)\rhd E\rhd \mathbf{ENV}(\MM)$};

  \draw (0,0) to (4.5,0) to (4.5,3) to (0,3) to (0,0);
    \node(e) at (-0.3, 3.2) {$E$};
 \draw (0,2.3) to (-1,2.3);
 \node(l1) at (-0.5,2.5) {\tiny$\mathsf{lab}({\color{red}\sigma_1})$};
  \draw (0,0.7) to (-1,0.7);
   \node(l2) at (-0.5,1) {\tiny$\mathsf{lab}({\color{red}\sigma_2})$};
   \draw[red, decoration={brace,mirror,raise=5pt},decorate]
  (-1,2.5) -- node[left=6pt] {\tiny$\mathbf{ENV}(\negative_\pi)$} (-1,0.4);

\draw[->, dotted] (0.3,1.5) to [bend left] node[above] {\tiny${\color{blue}l\to 0}$} (1.4,2.15);
\draw[->, dotted] (0.3,1.5) to [bend right] node[below] {\tiny${\color{blue}l\to 1}$} (1.4,0.65);

  \draw (2,1.6) to (3.6,1.6) to (3.6,2.9) to (2,2.9) to (2,1.6);
    \node(e1) at (2.8,2.25) {$C_1$};
       \draw (2,2.6) to (1.8,2.6);
   \draw (2,1.9) to (1.8,1.9);
       \draw (3.6,2.6) to (4.3,2.6);
        \node(l1) at (4.1,2.8) {\tiny$\mathsf{lab}({\color{blue}\sigma_3})$};
   \draw (3.6,1.9) to (4.3,1.9);
        \node(l1) at (4.1,2.1) {\tiny$\mathsf{lab}({\color{blue}\sigma_4})$};
   \draw[blue, decoration={brace,raise=5pt},decorate]
  (4.5,2.8) -- node[right=6pt] {\tiny$\mathbf{ENV}(\MM|_{l\to 0})$} (4.5,1.8);

  \draw (2,1.4) to (3.6,1.4) to (3.6,0.1) to (2,0.1) to (2,1.4);
    \node(e2) at (2.8,0.75) {$C_2$};
           \draw (2,1.1) to (1.8,1.1);
   \draw (2,0.4) to (1.8,0.4);
       \draw (3.6,1.1) to (4.3,1.1);
        \node(l1) at (4.1,1.3) {\tiny$\mathsf{lab}({\color{blue}\sigma_5})$};
   \draw (3.6,0.4) to (4.3,0.4);
        \node(l1) at (4.1,0.6) {\tiny$\mathsf{lab}({\color{blue}\sigma_6})$};
   \draw[blue, decoration={brace,raise=5pt},decorate]
  (4.5,1.3) -- node[right=6pt] {\tiny$\mathbf{ENV}(\MM|_{l\to 1})$} (4.5,0.3);

  \end{tikzpicture}
}
\end{center}
 
      \caption{Configuration $(\pi,\MM,E)$: in $\pi$ there are two initial positions $\negative_\pi$ (in red); there are four tokens in $\MM$ (in blue), two at address $l\to 0$, two at address $l\to 1$.
      In the extended circuit $\mathbf{ENV}(\negative_\pi)\rhd E\rhd \mathbf{ENV}(\MM)$ the input wires are labeled from $\sigma_1,\sigma_2\in\negative_\pi$; the circuit has two addresses $l\to 0,l\to1$ leading to circuits $C_1,C_2$, where the output wires of $C_b$ are labeled from the tokens in $\MM|_{l\to b}$.
      }
    \label{fig:tokenwires1}
  \end{minipage}
  }
  \end{subfigure}
  
  \medskip
  
  \begin{subfigure}{\textwidth}
    \dbox{
  \begin{minipage}{.97\textwidth}
   \begin{center}
 \resizebox{.325\textwidth}{!}{
\begin{tikzpicture}
\node(d) at (0,4) {Tokens traveling in the derivation $\pi$};
  
  \draw[dotted] (0,0.3) to (-1.5,3) to (1.5,3) to (0,0.3);
  \node(pi) at (0,2.2) {$\pi$};

  \node(a) at (0,0) { $x_1:\qbit,x_2,\qbit\vdash M:\qbit\otimes\qbit$};
  \node[red](tok1) at (-1.8,-0.4) {\tiny${ (\sigma_1,\emptyset)}$};
  \node[red](tok1) at (-0.4,-0.4) {\tiny${ (\sigma_2,\emptyset)}$};
  \node[red](tok) at (-1,-0.7) {\tiny initial positions $\negative_\pi$};

  \node[blue](tok1) at (2.4,1) {\tiny${ (\tau_1,\{l\to0\})}$};
  \node[blue](tok1) at (2.4,0.7) {\tiny${ (\tau_1,\{l\to1\})}$};
  \node[blue](tok1) at (2.4,0.4) {\tiny$\downarrow$};

  \node[blue](tok1) at (0.9,1) {\tiny${ (\tau_2,\{l\to 0\}}$};
    \node[blue](tok1) at (0.9,0.7) {\tiny${ (\tau_2,\{l\to1\})}$};
      \node[blue](tok1) at (1.2,0.4) {\tiny$\downarrow$};

  \node[blue](tok) at (1.5,1.4) {\tiny 
  \begin{tabular}{c}tokens in the\\
  final positions $\positive_\pi$
  \end{tabular}
  };

 \end{tikzpicture}
 }
\hspace{1cm}
 \resizebox{.52\textwidth}{!}{
 \begin{tikzpicture}
 \node(d) at (2.5,3.95) {\small Circuit $\mathbf{ENV}(\negative_\pi)\rhd E\rhd \mathbf{ENV}(\positive_\pi)$};

  \draw (0,0) to (4.5,0) to (4.5,3) to (0,3) to (0,0);
    \node(e) at (-0.3, 3.2) {$E$};
 \draw (0,2.3) to (-1,2.3);
 \node(l1) at (-0.5,2.5) {\tiny$\mathsf{lab}({\color{red}\sigma_1})$};
  \draw (0,0.7) to (-1,0.7);
   \node(l2) at (-0.5,1) {\tiny$\mathsf{lab}({\color{red}\sigma_2})$};
   \draw[red, decoration={brace,mirror,raise=5pt},decorate]
  (-1,2.5) -- node[left=6pt] {\tiny$\mathbf{ENV}(\negative_\pi)$} (-1,0.4);

\draw[->, dotted] (0.3,1.5) to [bend left] node[above] {\tiny${\color{blue}l\to 0}$} (1.4,2.15);
\draw[->, dotted] (0.3,1.5) to [bend right] node[below] {\tiny${\color{blue}l\to 1}$} (1.4,0.65);

  \draw (2,1.6) to (3.6,1.6) to (3.6,2.9) to (2,2.9) to (2,1.6);
    \node(e1) at (2.8,2.25) {$C_1$};
       \draw (2,2.6) to (1.8,2.6);
   \draw (2,1.9) to (1.8,1.9);
       \draw (3.6,2.6) to (4.3,2.6);
        \node(l1) at (4.1,2.8) {\tiny$\mathsf{lab}({\color{blue}\tau_1})$};
   \draw (3.6,1.9) to (4.3,1.9);
        \node(l1) at (4.1,2.1) {\tiny$\mathsf{lab}({\color{blue}\tau_2})$};
   \draw[blue, decoration={brace,raise=5pt},decorate]
  (4.5,2.8) -- node[right=6pt] {\tiny$\mathbf{ENV}(\positive_\pi)$} (4.5,1.8);

  \draw (2,1.4) to (3.6,1.4) to (3.6,0.1) to (2,0.1) to (2,1.4);
    \node(e2) at (2.8,0.75) {$C_2$};
           \draw (2,1.1) to (1.8,1.1);
   \draw (2,0.4) to (1.8,0.4);
       \draw (3.6,1.1) to (4.3,1.1);
        \node(l1) at (4.1,1.3) {\tiny$\mathsf{lab}({\color{blue}\tau_1})$};
   \draw (3.6,0.4) to (4.3,0.4);
        \node(l1) at (4.1,0.6) {\tiny$\mathsf{lab}({\color{blue}\tau_2})$};
   \draw[blue, decoration={brace,raise=5pt},decorate]
  (4.5,1.3) -- node[right=6pt] {\tiny$\mathbf{ENV}(\positive_\pi)$} (4.5,0.3);

  \end{tikzpicture}
}
\end{center}
\caption{Final configuration $(\pi,\MM,E)$: all tokens have reached a position in $\positive_\pi$; 
the extended circuit $\mathbf{ENV}(\negative_\pi)\rhd E\rhd \mathbf{ENV}(\positive_\pi)$ is uniform, the output wires in all its leaves are labeled from the tokens at the positions $\positive_\pi$.
}
    \label{fig:tokenwires2}
        \end{minipage}
  }  
  \end{subfigure}  
      \end{minipage}
    }
    \caption{Illustration of the configurations of the \circuittkmz.}
    \label{fig:tokenwires}
   \end{figure}
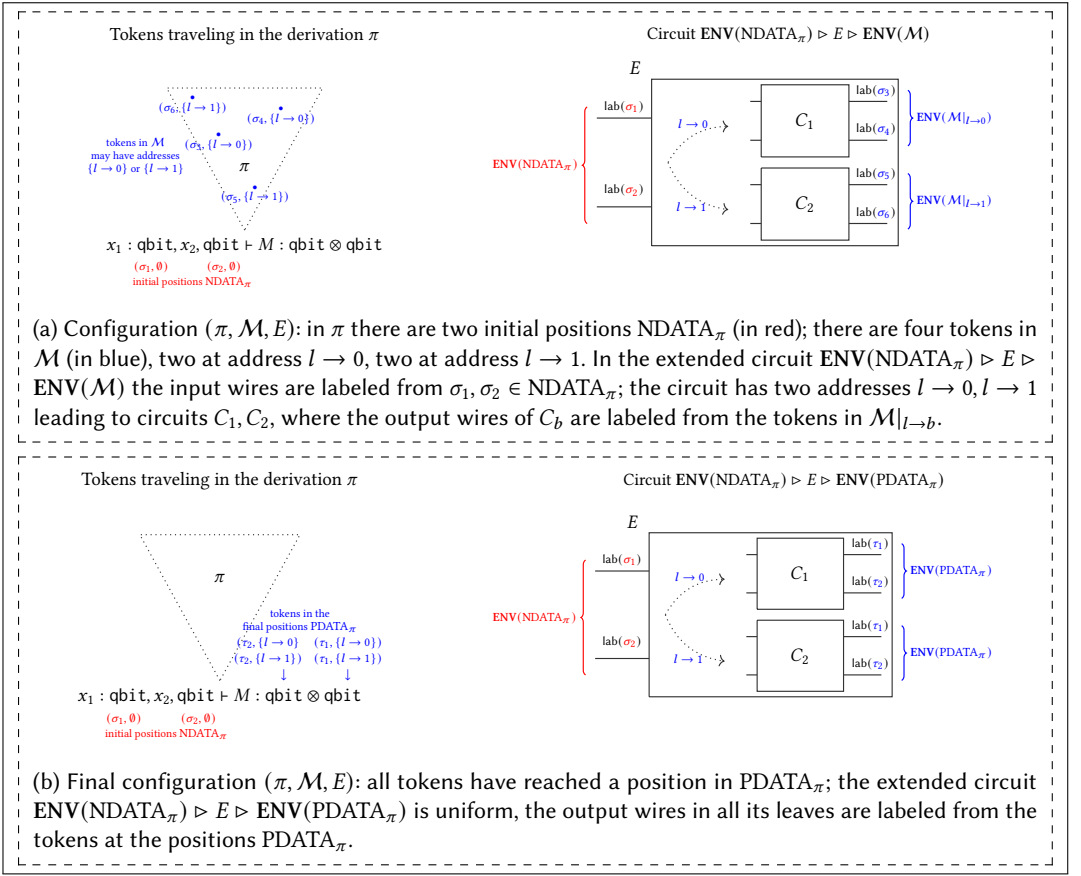

Let us define the \emph{initial} and \emph{final} configurations:
\begin{itemize}
\item 
The initial configuration for $\pi$ has the form 
$(\pi, \mathbf{TKN}(\negative_\pi\cup\negones_\pi\cup\ones_\pi^\downarrow), \mathrm{Id} )$ 
where $\mathrm{Id}$ is a circuit composed of an identity gate on each wire.
\item
A final configuration for $\pi$ has, instead the form
$(\pi,\MM,E)$ where $\mathbf{PSN}(\MM)=\positive_\pi\cup\posones_\pi\cup\guard_\pi$.
\end{itemize}

Observe that in a final configuration all tokens have reached some positive position in the conclusion of $\pi$ (but recall that tokens may still have different addresses). The circuit $E$ is then uniform with type environment $\mathbf{ENV}(\negative_\pi)\rhd E\rhd \mathbf{ENV}(\positive_\pi)$. This is illustrated in Fig.~\ref{fig:tokenwires2}.

\begin{figure*}

  \begin{subfigure}{\textwidth}
  \begin{minipage}{\textwidth}
  \footnotesize
  \begin{minipage}{.48\textwidth}

  \dbox{
  \begin{minipage}{.99\textwidth}
  \centering
          $
          \infer[\otimes]{\Gamma_2, \Delta_2 \vdash \pv{t_1}{t_2} :
          A_2 \otimes B_2}{\Gamma_1 \vdash t_1 : A_1 \qquad \Delta_1
          \vdash t_2 : B_1}
          $
          \resizebox{0.9\textwidth}{!}{
          \begin{minipage}{\textwidth}
          \begin{align*}
           (\set{\mathfrak a_1:  \Gamma_1^{+}} \cup \MM, E)& \totkm (\set{\mathfrak a_2 :  \Gamma_2^{+}} \cup \MM, E') \\
           (\set{\mathfrak a_1: \Delta_1^{+}} \cup \MM, E) &\totkm (\set{\mathfrak a_2: \Delta_2^{+}} \cup \MM, E') \\
           (\set{\mathfrak a_1 : A_1^{+}} \cup \MM, E)& \totkm (\set{\mathfrak a_2 : A_2^{+} \otimes B_2} \cup \MM, E') \\
           (\set{\mathfrak a_1 : B_1^{+}} \cup \MM, E)& \totkm (\set{\mathfrak a_2 : A_2 \otimes B_2^{+}} \cup \MM, E') 
         \end{align*}
         \end{minipage}
         }
  \end{minipage}
  }

  \vskip1mm

  \dbox{
    \begin{minipage}{.99\textwidth}
      \centering
      $   
       \infer{\Delta_2, \Gamma_2 \vdash \letstar{M}{N} : A_2}
      {\Delta_1 \vdash M : 1 \qquad\Gamma_1\vdash N : A_1}
      $
    \resizebox{0.9\textwidth}{!}{
          \begin{minipage}{\textwidth}
          \begin{align*}
           (\set{\mathfrak a_1: \Gamma_1^{+}} \cup \MM, E)& \totkm (\set{\mathfrak a_2: \Gamma_2^{+}} \cup \MM, E') \\
           (\set{\mathfrak a_1: \Delta_1^{+}} \cup \MM, E) &\totkm (\set{\mathfrak a_2: \Delta_2^{+}} \cup \MM, E') \\
           (\set{\mathfrak a_1: A_1^{+}} \cup \MM, E)& \totkm (\set{\mathfrak a_2 : A_2^{+}} \cup \MM, E') 
         \end{align*}
        \end{minipage}
    }
    \end{minipage}
  }

  \vskip1mm

  \dbox{
    \begin{minipage}{\textwidth}
      \vskip5mm
    \centering
    {$\infer{\Gamma_2, \Delta_2\vdash \letv{(x, y)}{M}{N} : C_2}{
            \Gamma_1\vdash M : A_1\otimes B_1 \qquad \Delta_2, x : A_2, y :
            B_2\vdash N : C_1}
            $}
            \resizebox{0.9\textwidth}{!}{
            \begin{minipage}{\textwidth}
           \begin{align*}
           (\set{\mathfrak a_1: \Gamma_1^{+}} \cup \MM, E)& \totkm (\set{\mathfrak a_2: \in \Gamma_2^{+}} \cup \MM, E') \\
           (\set{\mathfrak a_1 :  \Delta_1^{+}} \cup \MM, E) &\totkm (\set{\mathfrak a_2 : \Delta_2^{+}}\cup \MM, E') \\
           (\set{\mathfrak a_1 : C_1^{+}}\cup\MM, E) &\totkm (\set{\mathfrak a_2 : C_2^{+}}\cup\MM, E') \\
           (\set{\mathfrak a_1: A_1^{+} \otimes B_1}\cup\MM, E) &\totkm (\set{\mathfrak a_2 : A_2^{+}}\cup\MM, E') \\
           (\set{\mathfrak a_1 : A_1 \otimes B_1^{+}}\cup\MM, E) &\totkm (\set{\mathfrak a_2 : B_2^{+}}\cup\MM, E')
           \end{align*}
           \end{minipage}
           }
           \end{minipage}
    }

  \end{minipage}
  \ \ \ \ \ \ \ \ \ \ \
  \begin{minipage}{.48\textwidth}
  
  \dbox{
  \begin{minipage}{.99\textwidth}
  \vskip2mm
  \centering       
        $   
           \infer{\Gamma_2 \vdash \lambda x. M : A_2 \multimap B_2}
           {\Gamma_1, x : A_1 \vdash M : B_1}
        $
          \resizebox{0.89\textwidth}{!}{
          \begin{minipage}{\textwidth}
          \begin{align*}
           (\set{\mathfrak a_1: \Gamma_1^{+}}\cup\MM, E) & \totkm (\set{\mathfrak a_2 :  \Gamma_2^{+}} \cup \MM, E') \\
           (\set{\mathfrak a_1 : A_1^{+} \multimap B_2} \cup \MM, E)& \totkm (\set{\mathfrak a_2 : A_2^{+}} \cup \MM, E') \\
           (\set{\mathfrak a_1 : B_1^{+}} \cup \MM, E) &\totkm (\set{\mathfrak a_2 : A_2 \multimap B_2^{+}} \cup \MM, E') 
          \end{align*}
          \end{minipage}
  }
  \end{minipage}
  }

  \vskip 1mm

  \dbox{
  \begin{minipage}{\textwidth}
  \centering
        $
           \infer{\Gamma_2, \Delta_2\vdash MN : B_2}{\Gamma_1 \vdash M : A_1 \multimap B_1 \qquad \Delta_1 \vdash N : A_2}
           $
          \resizebox{0.9\textwidth}{!}{
          \begin{minipage}{\textwidth}
          \begin{align*}
           (\set{\mathfrak a_1 : \Gamma_1^{+}} \cup \MM, E)& \totkm (\set{\mathfrak a_2 :\Gamma_2^{+} }\cup\MM, E')\\
           (\set{\mathfrak a_1 : \Delta_1^{+}} \cup \MM, E)& \totkm (\set{\mathfrak a_2 :\Delta_2^{+}}\cup\MM, E')\\
           (\set{\mathfrak a_1 : A_2^{+}} \cup \MM, E) &\totkm (\set{\mathfrak a : A_1^{-} \multimap B_1} \cup \MM, E') \\
           (\set{\mathfrak a_1 : A_1 \multimap B_1^{+}} \cup \MM, E)& \totkm (\set{\mathfrak a_2 : B_2^{+}} \cup \MM, E')
          \end{align*}
          \end{minipage}
          }
        \end{minipage}
  }

  \vskip 1mm

  \dbox{
    \begin{minipage}{\textwidth}
    \centering
         $
         \infer{x : A_1 \vdash x : A_2}{}
         $
            \resizebox{0.8\textwidth}{!}{
            \begin{minipage}{\textwidth}
    \begin{align*}
              (\set{\mathfrak a_1 : A_1^{+}} \cup \MM, E) &\totkm (\set{\mathfrak a_2 : A_2^{-}} \cup \MM, E')
            \end{align*}
              \vskip4mm
            \end{minipage}
    }
    \end{minipage}
    }

  \end{minipage}
  
  \end{minipage}
    \caption{Structural rules. In all rules, the tokens $\mathfrak a_1,\mathfrak a_2$ are at some fixed address $\ADD$ and $E'=E[l_{\mathfrak a_1}\mapsto l_{\mathfrak a_2}]_\ADD$.}\label{fig:global-tkm-structural-rules}
  \end{subfigure}

   \medskip

  \begin{subfigure}{\textwidth}
   \dbox{
     \begin{minipage}{\textwidth}
    \begin{center}
  \footnotesize
            $ \infer{\vdash U : \baseT_1 \otimes \dots \otimes
            \baseT_n \multimap \baseT_{n+1} \otimes \dots \otimes \baseT_{2n}}{}$
            { 
            \resizebox{.99\textwidth}{!}{
              \begin{minipage}{\textwidth}
              \begin{align*}
          \begin{array}{c}
            (\set{\mathfrak a_1 : \baseT_1, \dots, \mathfrak a_n : \baseT_n} \cup \MM, 
             \Gamma \rhd E \rhd \mathcal{V})
          \end{array}
            \totkm
          \begin{array}{c}
            (\set{\mathfrak a_{n+1} : \baseT_{n+1}, \dots, \mathfrak a_{2n} : \baseT_{2n}} \cup \MM), 
            \Gamma \rhd E@\GG\left [E@\GG ; U^{l_{\mathfrak a_1},\dots, l_{\mathfrak a_n}}_{l_{\mathfrak a_{n+1}},\dots, l_{\mathfrak a_{2n}}}\right ] \rhd \mathcal{V'}
          \end{array}
          \end{align*}
        \end{minipage}
          }
          }
     \end{center}
   \end{minipage}} \caption{Circuit rule. We assume that all tokens $\mathfrak a_i$ are at address $\ADD$.}
   \label{fig:global-tkm-unitary-rule}
   \end{subfigure}

   \medskip

   \begin{subfigure}{\textwidth}
     \dbox{
     \begin{minipage}{.99\textwidth}
    \begin{center}
  \footnotesize
  $
  \infer{\Gamma_2, \Delta_3 \vdash \ite{M}{N}{P} : A_3}   
  { 
    \deduce{\Gamma_1 \vdash M : \bit}{}
    \qquad
    \deduce{\Delta_1 \vdash N : A_1}{\pi_2}
    \qquad
    \deduce{\Delta_2 \vdash P : A_2}{\pi_3}
         }  
   $
   \end{center}
   
  { 
  \tiny
  $\begin{array}{l}
  \textbf{Guard Rules} \\
  (\set{\mathfrak a_1 : \Gamma^+_1} \cup \MM, E) \totkm (\set{ \mathfrak a_2: 
  \Gamma^+_2} \cup \MM, E[l_{\mathfrak a_1}\mapsto l_{\mathfrak a_2}]_\ADD) \\
  (\set{\mathfrak a_2 : \Gamma^-_2} \cup \MM, E) \totkm (\set{\mathfrak a_1 : 
  \Gamma^-_1} \cup \MM, E[l_{\mathfrak a_2}\mapsto l_{\mathfrak a_1}]_\ADD) \\[1.5em]
  \textbf{Asynchronous Rules} \\
  (\set{(\sigma : \bit,\GG)} \cup \MM, E)   \totkm
  \left(\set{(\sigma : \bit,\GG_0),(\sigma : \bit,\GG_1)} \cup \MM', E@\GG \left[ E@\GG \xmapsto{l}(\Id, \Id)\right]\right), \\
 \ \\
    \text{provided that, for all tokens } (\sigma' : A, \GG') \in \MM, (l, 0),(l,1) \not\in \GG', \text{ and where }\\
  [0.3em]
  \MM' \text{ is } (\MM\backslash \MM\downarrow_\GG)
   \cup (\GG_1(\MM) \cup \GG_1(\ones_{\pi_2\downarrow}))
   \cup (\GG_0(\MM) \cup \GG_0((\ones_{\pi_3\downarrow}))), \   \GG_1 \text{ (resp. } \GG_0 \text{) is } \GG \cup (l \mapsto 1) \text{ (resp. } \GG\cup (l \mapsto 0)).
   \end{array}$
   }
   
   \medskip
   
   {
     \tiny
     
     In all of the following rules, the tokens $\mathfrak a_1,\mathfrak a_2,\mathfrak a_3$ are at address $\ADD$: 
     
    \medskip
    
  $\begin{array}{l}
  (\set{( \bit, \ADD)} \cup \set{\mathfrak a_3 :  \Delta_3^-} \cup \MM, E)
  \totkm
  (\set{( \bit,\ADD)} \cup \set{\mathfrak a_1: \Delta_1^+} \cup \MM, E[l_{\mathfrak a_3}\mapsto l_{\mathfrak a_1}]_\ADD), \text{provided }(l\to 0)\in \ADD \\
  (\set{( \bit, \ADD)} \cup \set{\mathfrak a_3 : \Delta_3^+} \cup \MM, E)
  \totkm
  (\set{( \bit, \ADD)} \cup \set{\mathfrak a_2 :  \Delta_2^+} \cup \MM, E[l_{\mathfrak a_3}\mapsto l_{\mathfrak a_2}]_\ADD),\text{provided }(l\to 1)\in \ADD \\
  (\set{(\bit, \ADD)} \cup \set{ \mathfrak a_3: A_3^-)} \cup \MM, E)
  \totkm
  (\set{(\bit, \ADD)} \cup \set{ \mathfrak a_1 : A_1^-)} \cup \MM, E[l_{\mathfrak a_3}\mapsto l_{\mathfrak a_1}]_\ADD),\text{provided }(l\to 0)\in \ADD \\
  (\set{(\bit, \ADD)} \cup \set{ \mathfrak a_3 : A_3^-} \cup \MM, E) 
  \totkm
  (\set{( \bit, \ADD)} \cup \set{ \mathfrak a_2 : A_2^-} \cup \MM, E[l_{\mathfrak a_3}\mapsto l_{\mathfrak a_2}]_\ADD),\text{provided }(l\to 1)\in \ADD.
  \end{array}$ 
  }
  \end{minipage}
  }

   \caption{Rule for the if-then-else.}
   \label{fig:global-tkm-ite-rule}
   \end{subfigure}

   \caption{Operational description of the rules of \circuittkmz.}
   \label{fig:global-tkm}
\end{figure*}

\begin{rem}
The situation we are interested in is that of a derivation $\pi:\ \Gamma \vdash
M:A$ where $\Gamma, A$ are \emph{first-order}, that is, they do not contain the linear arrow $\multimap$. In this case the initial positions
coincide with the base types in $\Gamma$, and the final positions coincide with
the base types in $A$. We call this a \emph{Boolean typing}. Observe that a
circuit typing is, roughly speaking, the type of a \qcirc-term with no inputs. The token
machine will then eventually produce a circuit $\Gamma \rhd E\rhd A$.
\end{rem}

To describe the token dynamics we use the following notations: 
first, we let $\mathfrak a: A^{\epsilon}$, with $\epsilon\in\{+,-\}$ indicate a token $\mathfrak a$ occurring in some either positive or negative position (depending on $\epsilon$) within the type $A$; moreover, 
given two distinct occurrences of the same type $A$ 
at different points of the derivation $\pi$, that we indicate with different indices, say $A_1,A_2$,
and a token $\mathfrak a_1: A_1^{\epsilon}$, we indicate as $\mathfrak a_2: A_2^{\epsilon'}$ a token having the same position, within $A_2$, that $\mathfrak a_1$ has within $A_1$, and with same address as $\mathfrak a_1$. Intuitively, when $\mathfrak a_1$ moves from $A_1$ to $A_2$, it ``becomes'' $\mathfrak a_2$. Notice that the polarity might change, e.g.~$A_1$ might be on the right-hand side of a sequent, and $A_2$ in the left-hand side.
Finally, for any token $\mathfrak a=(\sigma,\GG)$, we let $l_{\mathfrak a}$ be a shorthand for the associated label, i.e.~$\mathsf{lab}(\sigma)$, and for an extended circuit $E$, an address $\ADD\in @E$ and tokens $\mathfrak a,\mathfrak b$, we let $E[l_{\mathfrak{a}}\mapsto l_{\mathfrak b}]_{\ADD}$ indicate the circuit obtained by renaming $l_{\mathfrak{a}}$ as $l_{\mathfrak{b}}$ in the circuit $E@\ADD$.

The token dynamics is described by two kinds of rules. In the \emph{structural
rules}, a \emph{single} token moves upwards or downwards across the type
derivation, while the circuit is left essentially unchanged (actually, since wire labels are associated with positions, we rename one wire accordingly). These rules are essentially as in standard
GoI interpretations of the linear $\lambda$-calculus. Each transition rule is of the form
\[
(\pi,\set{\mathsf a_1:A_1^{\epsilon}}\cup\MM,E)\mapsto (\pi,\set{\mathsf a_2:A_2^{\epsilon'}}\cup\MM,E[l_{\mathfrak{a}_1}\mapsto l_{\mathfrak a_2}]_{\ADD}),
\]
corresponding to one token at address $\GG$ moving from one occurrence to another of the same formula in $\pi$, with the corresponding circuit wire being renamed accordingly.
We give half of the game
in~\Cref{fig:global-tkm-structural-rules}, while the other half is given by
symmetry: for each rule in which a positive token moves down from $A_1^{+}$ to $A_2^{+}$ (resp.~left from $A_2^+$ to $A_1^-$), there is a corresponding one in which a negative token moves up from $A_2^-$ to $A_1^-$ (resp.~right from $A_1^+$ to $A_2^-$). Notice that
there is no rule for $*$, this is because its rule (the initialization of a
token) is already taken care of when we initialize the token machine for the
first time. In the \emph{circuit rule}, 
given in \Cref{fig:global-tkm-unitary-rule}, the tokens move in a
\emph{synchronized} way: after \emph{all} necessary tokens reach the negative
occurrences of $\boolT_1,\dots, \boolT_n$, \emph{all} such tokens  move onto the
positive occurrences $\boolT_{n+1},\dots, \boolT_{2n}$; when this happens, the
gate $U$ on the corresponding wires is composed with the circuit $C$ at address
$\GG$ constructed so far, while keeping the other wires unchanged. Notice that
if $U = \termZero, \termOne$ or $\discard$ then one of the token have no label
since it is of type $1$ and does not correspond to a physical wire inside the
quantum circuit. Finally, the if-then-else rules, given in
\Cref{fig:global-tkm-ite-rule}, require a form of ``weak synchronization'': while
the tokens can enter the guard freely, this is not the case for the other
tokens. Once the guard has been evaluated and a token is present on the output
wire $l : \bit$ at address $\GG$, we create a new branching in the quantum
circuit being build. All the previous tokens at address $\GG$ are thus
duplicated, in one branch their address-choice becomes $\GG\cup\set{(l, 0)}$ while in the
other it becomes $\GG\cup\set{(l, 1)}$. Then, tokens can move up the and down the two
branches of the if-then-else, depending on the value of their adress on the
label of the guard. The side condition of that rule is to make sure we do not
apply this rule multiple times to the same if-then-else.

\begin{example}
  \label{ex:circuit1}
  Consider $M = x : \qbit, y : \qbit \vdash \opn{CNOT}(Hx, y) :
  \qbit\otimes \qbit$. We illustrate the run of the token machine over its type
  derivation in Fig.~\ref{fig:td}. If we replace the two variables $x,y$ by
  $\new\ \fc$ one can see that the produced circuit coincides with the one from
  Example \ref{ex:bell}.

   \begin{figure}
  \fbox{
\begin{minipage}{\textwidth}
  \centering
\begin{tikzpicture}
\node(a) at (0,0) {\tiny
 \begin{prooftree}

    \hypo{\vdash \opn{CNOT} : (\qbit \otimes {\qbit}) \multimap \qbit \otimes
    {\qbit}}

    \hypo{}\infer1{\vdash\opn H :\qbit\multimap \qbit}
    \hypo{x : \qbit \vdash x : \qbit}
\infer2{x:\qbit\vdash\opn Hx : \qbit}

    \hypo{y : {\qbit} \vdash y : {\qbit}}

    \infer2{x : \qbit, y : {\qbit }\vdash (\opn Hx, y) : \qbit \otimes {\qbit}}

    \infer2{x : \qbit, y : {\qbit}\vdash \opn{CNOT}(\opn H x, y) : \qbit \otimes
    {\qbit}}

  \end{prooftree}
};

\draw[->,red, rounded corners] (-2.2,-0.8) to (-2.2,-0.55) to (1.2,-0.55) to
(1.2, -0.15) to (0.55,-0.1) to (0.55,0.3) to (1.8, 0.3) to (1.8,0.7) to
(2.9,0.7) to (2.9, 0.4) to (-0.3,0.4) to (-0.3,1) to (0,1);

\draw[->,red, rounded corners] (0.4,1) to (0.6,1) to (0.6,0.55) to (1.8,0.3) to
(1.8,0) to (3.7, -0.1) to (3.7,-0.45) to (-4,-0.45) to (-4,0.35) to
(-3.25,0.35);

\draw[->,red, rounded corners] (-2.3,0.35) to (-2.05,0.35) to (-2.4,-0.4) to
(1.1,-0.4) to (1.1,-0.8);

\draw[->,blue, rounded corners]  (-1.3,-0.8) to (-1.3,-0.6) to (2.1,-0.6) to
(2.1,-0.2) to (3.5,-0.1) to (3.5,0.3) to (4.3,0.3) to (4.3, -0.12) to (4.5,
-0.12) to (4.5, -0.35) to (-3.4,-0.35) to (-3.4,0.15) to (-3.23,0.15);

\draw[->,blue, rounded corners]  (-2.3,0.15) to (-1.7,0.15) to (-1.7,-0.25) to
(1.8, -0.25) to (1.8,-0.8);

\node[rectangle, red, rounded corners, draw, thick, minimum height=0.5cm] (cnot)
at (0.2,1) {\tiny $\mathrm{H}$};

\node[rectangle, red, rounded corners, draw, thick, minimum height=0.5cm] (cnot)
at (-2.77,0.2) {\tiny $\mathrm{CNOT}$};

\end{tikzpicture}

  \end{minipage}
  }
  \caption{Example of execution of the token machine.}
  \label{fig:td}
  \end{figure}
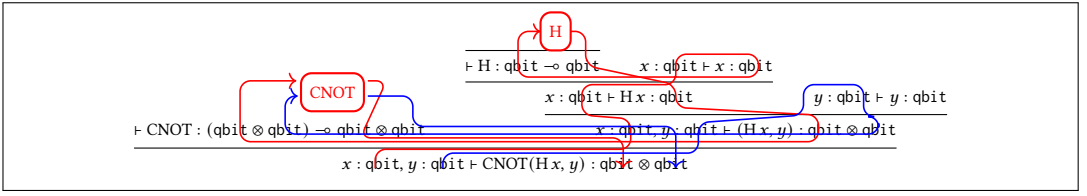
\end{example}

\subsection{Soundness}
\label{sect:soundness}

In this section we prove that the \circuittkmz is sound with respect to the
operational semantics of \qlambda\!, that is, that the circuits
produced by the machine perfectly match the quantum protocols described by the
corresponding \qlambda-terms. More details are given in the Appendix.

The situation we are interested in is the one in which we are given a
circuit typing $\Gamma\vdash M:A$ for some \qlambda-term $M$. Let us
first highlight an important difference between \qlambda\ and the
\circuittkmz. On the one hand, the evaluation of \qlambda-terms relies
on a quantum register and is intrinsically
probabilistic, due to the measure operator; in other words, the
different evaluations of a closure $m=[Q,L,M]$ produce a
\emph{distribution} $\DIST{m}$ of quantum closures. On the other
hand, the execution of the token machine does not rely on quantum
registers and is entirely deterministic: the measure operator results
in the introduction in the circuit of a measure gate that is
\emph{not} evaluated at compile time. In particular, the evaluation of
the \circuittkmz\ over $M$ does not produce a distribution of quantum
states, but a quantum circuit $C_M = \translate(E_M)$.

The soundness of the token machine is formulated as a result relating
the association $m\mapsto \DIST{m}$ between a quantum closure and
the distribution of quantum closures produced by
the evaluation of $m$, with the action of the quantum
circuit $C_M$ (or rather, of its $\cat{CPM}$-interpretation) over
mixed states. 

Let $\pi:\Gamma\vdash M:A$ be a Boolean typing derivation. 
Supposing $\Gamma=\{x_1:\mathbb B_1,\dots, x_m:\mathbb B_m\}$, we can identify
$\Gamma$ with the circuit environment $\mathbf{ENV}(\negative_\pi)=\{l_1:\mathbb B_1,\dots, l_n:\mathbb B_n\}$: notice that each term variable $x_i:\mathbb B_i$ is associated with the wire label $l_i$ corresponding to the position of $\mathbb B_i$ in $\pi$. 
Similarly, in any quantum closure $m=[Q,L,M]$, the term variable assignments in $L$ can be identified with $\mathbf{ENV}(\negative_\pi)$ (more precisely, 
 $|L|=\mathbf{ENV}(\negative_\pi)$, where for a list $L$ over some set $S$, we indicate as $|L|\subseteq S$ its underlying set).
 In this way, $m$ naturally induces a corresponding (pure) state $\tocpm(m)\in \interp{\mathbf{ENV}(\negative_\pi)}$ (notice that this state only depends on $Q$ and $L$). 
Analogously, at the end of the computation of $m$, in the produced distribution $ \DIST{m}=\{m_1^{\mu_1},\dots, m_k^{\mu_k}\}$, where $m_i=[Q_i,L_i,V_i]$, each $L_i$ can be identified with the circuit environment $\mathbf{ENV}(\positive_\pi)$, thus yielding a (mixed) state $\tocpm(\DIST{m})=\sum_i \mu_i\cdot \tocpm(m_i)\in \interp{\positive_\pi}$.

This leads to the following:

\begin{theorem}[Soundness of the
\circuittkmz]\label{thm:soundness_tkm}
Let $\pi :\ \Gamma \vdash M :A$ be a Boolean typing derivation and suppose
that the \circuittkmz{} produces, over $\pi$, the final configuration
$(\pi,\MM, E )$. Then, for any quantum closure $m=[Q,L,M]$, $\interp{E}(\tocpm(m)) =
\tocpm(\DIST{m})$.
\end{theorem}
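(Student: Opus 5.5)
The plan is to obtain soundness by simulation: relate the \circuittkmz{} to the operational semantics of \qlambda{} through an intermediate \emph{evaluating} token machine (the \evaltkm{} of~\cite{DalLago2017}). The \evaltkm{} has the same tokens and transitions, but acts directly on a quantum register. When it applies a gate, it acts on the state. When it hits a measurement, it branches probabilistically. At a conditional, it knows the value of the guard and commits to one branch. The proof then splits into two comparisons: the \evaltkm{} against the \qlambda{} reduction, and the \circuittkmz{} against the \evaltkm{}.

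\textbf{Step 1: \evaltkm{} agrees with \qlambda{}.} Show that for every closure $m=[Q,L,M]$, the distribution of final states of the \evaltkm{} on $\pi$, started from the register $\tocpm(m)$, equals $\tocpm(\DIST{m})$. This is the GoI-style adequacy result. I would either import it from~\cite{DalLago2017}, or prove it in the standard way: show that each reduction step $m\to\{m_i^{\mu_i}\}$ is matched by the \evaltkm{} runs on the reduced derivations, via subject reduction. The $\beta$, pair and $\letstar{}{}$ steps only shorten token paths, so their runs are equivalent up to structural moves. Operator steps correspond exactly to firing the circuit rule. The conditional step $\ite{\tc}{N}{P}\to N$ corresponds to the tokens being routed into $\pi_2$ only. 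Termination of both systems (Progress and Termination of \qlambda{}) closes the induction.

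\textbf{Step 2: \circuittkmz{} simulates the \evaltkm{} slice by slice.} Here I would state an invariant on reachable \circuittkmz{} configurations $(\pi,\MM,E)$. For every super-address $s\in @^{\mathsf s}E$, the slice semantics $\interp{E}^{\sharp s}$ applied to $\tocpm(m)$ must equal the (unnormalised) register of the \evaltkm{} run that takes exactly the branches recorded by $s$. That run has the guard values $\ADD_s$ and the measurement outcomes in $s$, and its tokens are $\MM|_{\ADD_s}$ (with addresses erased). Show this invariant is preserved by each rule.
\begin{itemize}
\item Structural rules only rename wires, and renaming preserves the interpretation.
\item The circuit rule composes $E@\ADD$ with $\Id\otimes\pinterp{U}$, which is exactly what the \evaltkm{} does to the register. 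For $\meas$, the two outcomes of the CPM $\pinterp{\meas}$ match the two new super-addresses.
\item The asynchronous rule for conditionals splits $E@\ADD$ into $E@\ADD\xmapsto{l}(\Id,\Id)$ and duplicates the tokens. By the interpretation of $\mathtt{ite}$ in Fig.~\ref{fig:circden}, the $(1,1)$-component of the bit wire $l$ is what goes to address $l\to1$. Each slice therefore corresponds to the \evaltkm{} run that reads that guard value and then only uses the matching then/else rules.
\end{itemize}
Along the way one must also check confluence up to interleaving, since tokens at different addresses are independent. This ensures that the order in which the \circuittkmz{} fires rules does not matter, so any final configuration can be compared with the \evaltkm{} run.

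\textbf{Step 3: combine.} At a final configuration, $E$ is uniform, so Lemma~\ref{lem:translation-EC-well-typed} applies. Summing the slices over all super-addresses, the additivity of the $\mathtt{ite}$ and $\meas$ clauses gives $\interp{E}(\tocpm(m))=\sum_s\interp{E}^{\sharp s}(\tocpm(m))$. By Step 2, this is the mixture of the final registers of the \evaltkm{}, which by Step 1 equals $\tocpm(\DIST{m})$.

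I expect the main obstacle to be Step 2 at the conditional. One has to show that the \circuittkmz{} does not get stuck. In particular, every token that must enter a branch at address $\ADD$ waits until the guard token has split $\ADD$; the side condition on the asynchronous rule guarantees the split happens only once. One also has to show that the unit tokens from $\ones_{\pi_2}$ and $\ones_{\pi_3}$, injected at the split, correspond exactly to the initial unit tokens that the \evaltkm{} places in the chosen branch. I would handle this with an explicit invariant: at each address, the tokens inside a conditional branch are exactly those the \evaltkm{} would have in that branch. I would prove the invariant by induction on the length of the run.
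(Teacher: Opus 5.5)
Your proposal is correct and follows essentially the paper's route. The paper imports the adequacy of the \evaltkm{} with respect to \qlambda{} from \cite{DalLago2017}. It then matches each \circuittkmz-step with parallel \evaltkm-steps on pseudo-distributions indexed by super-addresses $s\in @^{\mathsf s}E$. The relation it maintains is exactly your invariant: the tokens are $\MM|_{\ADD_s}$, $\mu_s\cdot\tocpm(m_s)=E^{\sharp s}(\tocpm(m))$, and slices whose guard choice disagrees with the stored bit get weight $0$. It concludes via $\interp{E}=\sum_{s}E^{\sharp s}$ (Proposition~\ref{prop:sumext}).

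One adjustment: this statement needs neither confluence nor non-stuckness of the \circuittkmz{}. The invariant is preserved by every rule in any order, and a final configuration is assumed. What is really used at the end is that the \evaltkm's output distribution does not depend on the scheduling induced by your particular run, and that comes with the imported results.
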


We provide a sketch of the proof of Theorem \ref{thm:soundness_tkm}. The
fundamental ingredient is the introduction of yet another token machine, called
the \evaltkm\ (Quantum Memory-based Synchronous Interaction Abstract Machine)
whose execution is closer to the evaluation of \qlambda-terms, being
probabilistic. This machine is essentially the same as the $\textsf{MSIAM}$ from
\cite{lago2013wavestyletokenmachinesquantum,DalLago2017}. Similarly to the
\circuittkmz{}, the \evaltkm{} has multiple tokens that move from initial
positions to final positions inside the type derivation of $M$; while the tokens
are defined as for the \circuittkmz, the address information is now redundant
(although we keep it for uniformity): as a reduction of the machine only
traverses \emph{one} slice of $\pi$, all tokens have \emph{the same} address
$\GG$. Moreover, unlike the \circuittkmz, the machine does \emph{not} produce a
circuit; instead, the machine has access to a quantum register $[L,Q,c]$ that is
updated during execution. This register contains the current states of the
qubits (via $Q$) and bits (via $c$) in the positions occupied by the tokens
(accounted for by a list of label assignments $L$).

A configuration of the \evaltkm\ is thus of the form $\config=(\pi,\MM,m)$,
where $m=[L,Q,c]$, all tokens in $\MM$ are at the same address $\config_{\GG}$
and $|L|=\mathbf{ENV}(\MM)$. Initial and final configurations can be defined
similarly to the \circuittkmz (see the Appendix for more details).

The structural rules of the \evaltkm\ are as for the \circuittkmz: a single
token moves and the quantum and classical states $Q,c$ are not updated (yet, similarly to the \circuittkmz, the labels in $L$ are possibly renamed); instead, the behavior of
the \evaltkm\ differs when the tokens travel through a quantum gate $U\in \QO$: in this case the register is updated by directly applying the gate $U$ to the
quantum state. Observe that, when $U=\meas$, the update is inherently
probabilistic: the machine actually performs a measurement on its quantum
register.
 In the case of an if-then-else $\Gamma,\Delta\vdash\ite{N}{P_0}{P_1}:A$, the machine behaves as follows: 
 on the one hand, 
we have structural rules allowing tokens to
move upwards through the derivation of $N$; in this way a token may end up in
the positive position $N:\bit$; when this happens, the classical part of the register will contain a
value $b\in \{0,1\}$, which is added to the current address. We then have a second class of rules that allows a token
in a negative position in either $\Delta$ or $A$ to move upwards within the
sub-derivation of $P_b$: this may only happen once the Boolean $b$ has been
determined. 
Observe that, in any execution, only one among the sub-derivations of $P_0$ and
$P_1$ are actually visited by the tokens. Which one may depend on the execution
itself (as the Boolean $b$ produced might depend on previous measurements).

In an initial state for the \evaltkm\  one consider a quantum register
$m=[Q,L,c]$. By considering \emph{all} possible executions of the machine over
$m$, we obtain a distribution $\DISTM{m}$ of quantum states produced by the
machine. Similarly to what is done with the $\textsf{MSIAM}$
machine~\cite{lago2013wavestyletokenmachinesquantum,DalLago2017}, it is not
difficult to show that the distribution $\DISTM{m}$ perfectly matches the
distribution $\DIST{m'}$ produced by the \qlambda-evaluation of the closure
$m':=[Q,L,M]$. In this way, we are reduced to the problem of defining a
\emph{(bi)simulation} between the \evaltkm\ and  the \circuittkmz. 

Let now $\config=(\pi,\MM,E)$ be a configuration
of the \circuittkmz.
Rather than relating $\config$ with one similar configuration of the \evaltkm, we 
will relate it with a \emph{distribution} of such configurations $\mu=\{ \config_{s}^{\mu_{s}}\}_{{s}\in @^{\mathsf s}E}$, 
where $\config_s=(\pi,\MM_s,m_s)$, indexed by all super-addresses $s\in @^{\mathsf s}E$: the intuition is that any $s$, which provides choices for both the if-then-else and the $\meas$-gates of the underlying term, captures \emph{one possible evolution} of the \evaltkm. The distribution $\mu$ accounts then for all such possible evolutions, each with its own probability.

Let $m=[L,Q,c]$ be a register, to be used to initialize the \evaltkm. 
The relation $\mu \ \mathscr R^{m}_\pi\ \config$ holds when, for every $s\in @^{\mathsf s}E$, the following hold:
\begin{varitemize}
  \item the tokens in $\config_{s}$ coincide with those in $\config$ at address $\ADD_s$, i.e.~$\MM_{s}=\MM\vert_{\ADD_s}$;
  \item the application of the completely positive map $\interp{E}^{\sharp s}:\interp{\mathbf{ENV}(\negative_\pi)}\to\interp{\mathbf{ENV}(\MM|_{\ADD_s})}$ to the register $m$ produces \emph{the same} state as the one in $m_s$, taken with its own probability, i.e. $  \interp{E}^{\sharp s}( \tocpm(m))=\mu_{s}\cdot\tocpm(m_s)\in \interp{\mathbf{ENV}(\MM_s)}=\interp{\mathbf{ENV}(\MM_{\ADD_s})}$.
\end{varitemize}

\begin{figure}
\fbox{
\begin{minipage}{.99\textwidth}
\begin{center}
\medskip

\resizebox{.5\textwidth}{!}{
\begin{tikzpicture}

\draw (-3,-1.5) to (-2.4,-1.5) to (-2.4,1.5) to (-3,1.5) to (-3,-1.5);
\node(m) at (-2.7,0) {$m$};

\draw (-1.5,-1.5) to (0.5,-1.5) to (0.5,1.5) to (-1.5,1.5) to (-1.5,-1.5);
\node(m) at (-0.5,0) {$E$};

\draw (-2.4,1) to (-1.5,1);
\draw (-2.4,-1) to (-1.5,-1);
\node(d) at (-1.9,0.1) {$\vdots$};
\node[rotate=90](nd) at (-2.1,0) {\tiny$\negative_\pi$};

\draw (0.5,1) to (1.4,1);
\draw (0.5,-1) to (1.4,-1);
\node(d) at (0.8,0.1) {$\vdots$};
\node[rotate=90](nd) at (1.1,0) {\tiny$\positive_\pi$};

\node(eq) at (2,0) {\LARGE$=$};
\node(eq) at (3.7,0) {
\begin{minipage}{.2\textwidth}
\LARGE
$$
\sum_{s\in @^{\mathsf s}E} \mu_s\ \ \cdot
$$
\end{minipage}
};

\draw (5.8,-1.5) to (5.2,-1.5) to (5.2,1.5) to (5.8,1.5) to (5.8,-1.5);
\node(m) at (5.5,0) {$m_s$};

\draw (5.8,1) to (6.7,1);
\draw (5.8,-1) to (6.7,-1);
\node(d) at (6.1,0.1) {$\vdots$};
\node[rotate=90](nd) at (6.4,0) {\tiny$\positive_\pi$};

\end{tikzpicture}
}

\medskip

\end{center}
\end{minipage}
}
\caption{Illustration of the (bi)simulation equation $\interp{E}( \tocpm(m))=\sum_s\mu_s\cdot\tocpm(m_s)=\tocpm(\mu)$.
}
\label{fig:bisimulation}
\end{figure}

\begin{theorem}[the \evaltkm\ simulates the \circuittkmz]
  \label{thm:bisimulation}
Let $\config,\config'$ be configurations of the \circuittkmz\ and $\mu$
be a distribution of configurations for the \evaltkm. 
For all registers $m$, 
If $\mu \
\mathscr R^{m}_\pi \ \config $ and $\config\to \config'$ then there
exists a distribution $\mu'$ such that $\mu' \ \mathscr R^{m}_\pi \ \config'$
and such that multiple parallel executions of the \evaltkm \ lead from $\mu$ to
$\mu'$.
\end{theorem}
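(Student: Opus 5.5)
The proof goes by case analysis on the rule of the \circuittkmz{} used to derive $\config\to\config'$. In each case we build $\mu'$ from $\mu$ by letting, for every super-address $s\in @^{\mathsf s}E$, the \evaltkm{} configuration $\config_s$ perform the ``same'' move as $\config$, whenever that move concerns tokens at address $\ADD_s$. The distribution $\mu'$ is then indexed by $@^{\mathsf s}E'$. We then check the two clauses of $\mathscr R^{m}_\pi$: agreement of token sets at each address, and the equation $\interp{E'}^{\sharp s'}(\tocpm(m))=\mu'_{s'}\cdot\tocpm(m'_{s'})$. Throughout, we use the per-slice semantics of extended circuits: $\interp{E}^{\sharp s}$ is obtained by composing the maps of the gates along the path $\ADD_s$, and each $\meas$-gate is replaced by the projection onto the outcome chosen by $s$. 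The key fact is that $\interp{E}$ restricted to slices sums up correctly, i.e.\ $\interp{E}=\sum_s \interp{E}^{\sharp s}$ on uniform circuits.

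\emph{Structural and guard rules.} A single token $\mathfrak a_1$ at address $\ADD$ moves to $\mathfrak a_2$, and $E'=E[l_{\mathfrak a_1}\mapsto l_{\mathfrak a_2}]_\ADD$, so $@^{\mathsf s}E'=@^{\mathsf s}E$. For the slices $s$ with $\ADD_s=\ADD$, the \evaltkm{} applies the corresponding structural rule, which only renames the label in $L$. For the other slices nothing moves. Probabilities are unchanged, and the semantic equation is preserved because renaming a wire commutes with $\interp{-}^{\sharp s}$ and with $\tocpm$. The branch rules, which move tokens into $\pi_2$ or $\pi_3$ according to the value of $l$ in $\ADD$, are handled the same way. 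We only have to check that the \evaltkm{} is allowed to enter $P_b$: the value $b$ stored in its classical register for slice $s$ coincides with $\ADD_s(l)$. This holds because, by the semantic clause, $\interp{E}^{\sharp s}$ followed by the conditional on $l$ is nonzero only on the slice where the stored bit equals $\ADD_s(l)$. We will state this as an auxiliary lemma (slices whose weight $\mu_s$ is zero can be treated arbitrarily).

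\emph{Circuit rule.} For gates $U\neq\meas$, slices $s$ with $\ADD_s=\GG$ apply $U$ to the register, and $\interp{E'}^{\sharp s}=(\mathrm{Id}\otimes\pinterp{U})\circ\interp{E}^{\sharp s}$ gives the equation directly. The same argument covers $\new$, $\termZero$, $\termOne$ and $\discard$, using the \cat{CPM} maps of Section~\ref{sect:circuits}. For $\meas$, each super-address $s$ at address $\GG$ splits into two super-addresses $s_0,s_1$ of $E'$ that extend $s$ by the choice for the new $\meas$ index. Correspondingly, $\config_s$ evolves probabilistically into two \evaltkm{} configurations, with probabilities $p_0,p_1$ given by the Born rule. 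We set $\mu'_{s_b}=\mu_s p_b$, and the equation follows from $\pinterp{\meas}(\rho)=(\rho_{00},\rho_{11})$, since the projected unnormalised state has trace $p_b$ times the normalised post-measurement state.

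\emph{Asynchronous rule.} Here $E'$ replaces the leaf at $\GG$ by $E@\GG\xmapsto{l}(\Id,\Id)$, so each $s$ at $\GG$ becomes $s\cup\{l\mapsto 0\}$ and $s\cup\{l\mapsto 1\}$. Tokens are duplicated, and unit tokens of $\ones$ in the two branches are added. On the \evaltkm{} side, the configuration for $s$ adds the bit value of $l$ to its address and initialises the branch tokens. We map it to whichever of the two new super-addresses agrees with the stored bit, and give weight $0$ to the other. The semantic equation holds because composing $\interp{E}^{\sharp s}$ with the projection onto $l=b$ gives exactly $\mu_s\tocpm(m_s)$ when $b$ is the stored value, and $0$ otherwise.

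This step is the main obstacle: the super-address bookkeeping, including zero-weight slices, and justifying that the classical bit on wire $l$ in slice $s$ is deterministic. We will prove the latter as an invariant of $\mathscr R$: classical wires carry diagonal, \emph{pure} classical states in every slice. This holds because every bit-producing gate either is classical or is a $\meas$ already resolved by $s$. Finally, ``multiple parallel executions'' is justified by noting that the moves applied to the distinct $\config_s$ are independent.
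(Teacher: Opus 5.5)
Your proposal is correct and follows essentially the same route as the paper. The paper also does a case analysis on the reduction type: structural rules, unitary and $\new$ gates, $\meas$ with $\mu'_{s\cup\{(\alpha,b)\}}=\mu_s|\alpha_b|^2$, and the asynchronous guard rule, where each super-address is extended by $(l,b)$ and the slice disagreeing with the stored bit gets weight $0$; in each case it checks the (token) and (mixed state) clauses using the per-slice composition lemmas. One simplification: the determinacy of the guard bit in slice $s$ needs no separate invariant, because the classical part $v_s$ of $m_s$ is a definite Boolean vector, so the (mixed state) clause $\mu_s\cdot\tocpm(m_s)=E^{\sharp s}(\tocpm(m))$ already forces $\pi^l_b(E^{\sharp s}(\tocpm(m)))=0$ whenever $b$ differs from the stored value, which is exactly how the paper argues.
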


We can in fact show that the opposite direction is also true, i.e. we
have a \emph{bisimulation} between the two machines.
The bisimulation allows us to
transport rewrite results (confluence, termination and deadlock
freeness) of the
\evaltkm{}~\cite{lago2013wavestyletokenmachinesquantum,DalLago2017}
onto the \circuittkmz{}.

\begin{corollary}[Termination]
  Given a typing derivation $\pi : \Gamma \vdash M : A$, any run of the \circuittkmz{}
  over the initial configuration for $\pi$ terminates over a final
  configuration.
\end{corollary}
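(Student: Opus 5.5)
The plan is to transport termination and deadlock-freeness of the \evaltkm{} to the \circuittkmz{} through the (bi)simulation of Theorem~\ref{thm:bisimulation}, complemented by a direct finiteness argument.

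\emph{Step 1: initial configurations are related.} Let $\config_0=(\pi,\mathbf{TKN}(\negative_\pi\cup\negones_\pi\cup\ones_\pi^\downarrow),\mathrm{Id})$ be the initial configuration. For an arbitrary register $m$, I take $\mu_0$ to be the Dirac distribution on the initial \evaltkm{} configuration $(\pi,\MM_0,m)$. Since $\mathrm{Id}$ has the single (empty) address and no $\meas$-gates, $@^{\mathsf s}\mathrm{Id}=\{\emptyset\}$, and then $\mu_0\ \mathscr R^m_\pi\ \config_0$ is immediate.

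\emph{Step 2: every run is finite.} I argue by a measure that strictly decreases. Each structural rule moves a single token along a path of the derivation $\pi$. By the standard GoI argument for the linear (box-free) $\lambda$-calculus, such a path visits each position at most once per address: the \evaltkm{} terminates, and its token paths are acyclic. Along a run $\config_0\to\config_1\to\cdots$, the bisimulation yields parallel \evaltkm{} executions, one for each super-address. Each $\to$ step of the \circuittkmz{} at address $\ADD$ corresponds to a step of every \evaltkm{} slice $s$ with $\ADD_s$ extending $\ADD$. There is one exception: the asynchronous rule only splits addresses, and it fires at most once per conditional per address because of its side condition. The set of addresses is bounded, since they are partial functions on the finite set $\mathsf{LAB}_\pi$, so the depth of branching is bounded by the number of conditionals. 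Hence an infinite \circuittkmz{} run would yield an infinite \evaltkm{} run on some slice, contradicting termination of the \evaltkm{}~\cite{DalLago2017}. Concretely, I would use as measure the multiset of the remaining path lengths of the tokens, summed over the finitely many addresses, together with the number of conditionals not yet split.

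\emph{Step 3: the terminal configuration is final.} Let $\config$ be a configuration with no further step, and let $\mu\ \mathscr R\ \config$. Using the opposite direction of the bisimulation, if some \evaltkm{} configuration $\config_s$ in $\mu$ could still move, then $\config$ could move too. So every $\config_s$ is terminal. By deadlock-freeness of the \evaltkm{}, each $\config_s$ is final, that is, $\mathbf{PSN}(\MM_s)=\positive_\pi\cup\posones_\pi\cup\guard_\pi$. Since $\MM_s=\MM|_{\ADD_s}$ and every address of $E$ arises as some $\ADD_s$, we get $\mathbf{PSN}(\MM)$ equal to that set, so $\config$ is final.

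The main obstacle is Step 3, namely the backward simulation. An \evaltkm{} slice may be enabled while the \circuittkmz{} step requires synchronization across tokens at the same address; for example, a gate rule needs all input tokens at address $\ADD$. I would handle this by noting that the tokens at $\ADD$ are exactly those of the slice. The only delicate case is the guard, where the \evaltkm{} reads a bit while the \circuittkmz{} must apply the asynchronous split. That split is always enabled when the guard token arrives and no split for $l$ has yet occurred at $\ADD$.
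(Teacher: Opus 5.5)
Your argument is essentially the paper's: you relate the initial configurations (Lemma~\ref{lemma:init-config-related}), then use the forward direction of Theorem~\ref{thm:noif-then-else-simulation} to turn an infinite \circuittkmz{} run into an infinite chain of parallel \evaltkm{} reductions, contradicting termination of the \evaltkm{} (the paper packages your ``infinite run on some slice'' step as Proposition~\ref{thm:paraterm}, via K\"onig's lemma), and you get finality from the backward direction of the bisimulation together with the fact that the \evaltkm{} always ends in a final configuration. The decreasing measure you sketch is not used in the paper and is not needed once the simulation argument is in place.
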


\begin{corollary}[Confluence up to circuit equivalence]
Given configurations $\config, \config_1,\config_2$ such that $\config\to\config_1$ and $\config\to\config_2$, there exists configurations $\donfig_1=(\pi,\mathcal M,E_1)$ and
$\donfig_2=(\pi,\mathcal M,E_2)$, such that $\config_1\to^*\donfig_1$, $\config_2\to^*\donfig_2$ and $\interp{E_1}=\interp{E_2}$.
\end{corollary}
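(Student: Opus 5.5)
The plan is to combine a local diamond property of the \circuittkmz{} with the termination corollary and the bisimulation of Theorem~\ref{thm:bisimulation}. Given $\config\to\config_1$ and $\config\to\config_2$, I distinguish two cases: the two steps move disjoint sets of tokens, or they overlap.

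\textbf{Disjoint steps.} If the two steps act on disjoint sets of tokens, they commute up to circuit equivalence. I check this by cases on the rule kinds.
\begin{itemize}
\item Two structural rules only rename distinct wires, possibly at different addresses, so they commute syntactically.
\item Two circuit rules at the same address $\ADD$ append gates $U$ and $W$ on disjoint wires to $E@\ADD$. Applying the other rule afterwards gives $E@\ADD;U;W$ and $E@\ADD;W;U$, whose interpretations coincide by functoriality of $\otimes$ in $\cat{CPM}$ (the interchange law).
\item At different addresses the two rules modify different leaves, so they commute syntactically.
\item An asynchronous branching rule against another rule at the same address $\GG$: firing the other rule first and then branching gives the same result as branching first and then firing the rule in both copies $\GG_0,\GG_1$. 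The resulting extended circuits differ only by whether a gate is placed before the branching node or duplicated in both branches. By the interpretation of $\ite{l}{-}{-}$ (Fig.~\ref{fig:circden}), a gate not touching wire $l$ distributes over the conditional.
\end{itemize}
Since the token sets are the same in both cases, the rule needed to close the diagram is still enabled after the first step. The side conditions (all tokens present, the address constraints) only require more tokens to be at some position, and the other rule never removes them.

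\textbf{Overlapping steps.} If the two steps share a token, I argue that they are the same step. Each token position admits at most one applicable rule: the rules are deterministic in the position of the moving token, and the duplicate-free side condition on the branching rule fixes $l$. So $\config_1=\config_2$.

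\textbf{Globalising.} From local confluence up to $\simeq$, where $\simeq$ means same $\pi$, same $\MM$ and equal $\interp{-}$, I would derive the statement using termination. By the termination corollary, $\config_1$ and $\config_2$ both reach final configurations $\donfig_1,\donfig_2$, and a Newman-style induction along the terminating relation shows that these are $\simeq$-related. This needs $\to$ to be compatible with $\simeq$, i.e.\ rules applied to equivalent extended circuits yield equivalent ones; this holds because each rule acts by precomposition or branching, which are congruences for $\interp{-}$. The token sets agree because final positions are fixed ($\positive_\pi\cup\posones_\pi\cup\guard_\pi$ with the addresses of $@E$), and those addresses are determined by which conditionals fired, which the diamond preserves.

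An alternative route, if the Newman argument becomes heavy, goes through Theorem~\ref{thm:bisimulation} and its converse. Both final circuits are related to the final distributions of the \evaltkm\ from any register $m$. For all $m$ these produce the same output $\tocpm(\DISTM{m})=\tocpm(\DIST{m})$ by Theorem~\ref{thm:soundness_tkm}. Since $\interp{E_i}$ is linear and such states span the input space, $\interp{E_1}=\interp{E_2}$.

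\textbf{Main obstacle.} I expect the hardest part to be the commutation of the asynchronous rule with other rules. Addresses on tokens and in $@E$ change shape, so one must exhibit an explicit equivalence of extended circuits that are non-uniform in intermediate states. This requires the slice semantics $\interp{E}^{\sharp s}$ rather than $\interp{\translate(E)}$, and I would state and compare equivalence slice-wise for each super-address.
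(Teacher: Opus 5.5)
Your argument is correct in outline, but your main route differs from the paper's.

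\textbf{How the paper proceeds.} The paper does no critical-pair analysis. Assuming $\mu\,\mathscr R^m_\pi\,\config$, it argues as follows.
\begin{itemize}
\item It matches the two one-step successors by parallel \evaltkm-steps (Theorem~\ref{thm:bisimulation}).
\item It joins these in a common $\tilde\mu$ using the confluence of the \evaltkm imported from \cite{DalLago2017}.
\item It pulls the joining reductions back, along the converse direction of the bisimulation, to \circuittkmz-runs ending in configurations related to $\tilde\mu$.
\item It concludes with Lemma~\ref{lemma:config-related-same-distri-same-sem}, which turns the (mixed state) invariant into equality of $\interp{-}$.
\end{itemize}
All interactions between rules are thus delegated to the \evaltkm. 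The cost is using both directions of the bisimulation and an external confluence theorem.

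\textbf{Your main route.} Your diamond-plus-Newman argument is self-contained and syntactic. Because your equivalence fixes the token set, it actually yields the common $\MM$ required by the statement, which the paper's relation $\equiv$ does not record. The price is the congruence on non-uniform extended circuits that you flag. Per-address semantics $\sum_{s:\,\ADD_s=\ADD}E^{\sharp s}$ is a convenient choice. It absorbs the fresh $\meas$-indices created when a gate is duplicated into both branches. It is also preserved by appending a gate or by branching, since both act by post-composition with a linear map.

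\textbf{Your alternative route.} This is closer to the paper in its ingredients but more direct; it is essentially the argument the paper itself uses later for uniqueness of normal forms of the \circuittkm. It needs only the forward simulation, termination, and the uniqueness of $\DIST{m}$. It also makes explicit the spanning step that the paper's lemma leaves implicit. However, it gives equal semantics but no common $\MM$.

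\textbf{Repairs needed in your case analysis.}
\begin{itemize}
\item Your claim that steps sharing a token coincide is false for the asynchronous rule, which re-addresses every token at $\GG$. Treat that rule as consuming only its guard token.
\item You then need the missing critical pair: two asynchronous rules at the same address with different guards $l,l'$. Its two sides differ by swapping nested branchings, not by moving a gate across a branching node.
\item Closing that pair requires firing $l'$ in both $\GG_0$ and $\GG_1$. So you must read the rule's side condition as ranging only over tokens at address $\GG$. Read literally (over all of $\MM$), firing at $\GG_0$ blocks firing at $\GG_1$.
\item The termination corollary covers only runs from the initial configuration. So your proof, like the paper's (which presupposes some $\mu$ related to $\config$), holds only for reachable $\config$.
\end{itemize}
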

Theorem~\ref{thm:soundness_tkm} follows from Theorem \ref{thm:bisimulation} and its corollaries: if both $\config$ and $\mu$ are final configurations, so that $\mathbf{ENV}(\negative_\pi)\rhd E\rhd \mathbf{ENV}(\positive_\pi)$ is uniform and all registers $m_s$ contain a state for $\mathbf{ENV}(\positive_\pi)$, 
 from the validity of  $\interp{E}^{\sharp s}( \tocpm(m))=\mu_{s}\cdot\tocpm(m_s)$ for all super-address $s\in @^{\mathsf s}E$, we deduce the validity of the following equation (illustrated in Fig.~\ref{fig:bisimulation}):
 \[
 \interp{E}( \tocpm(m))=\sum_s\mu_s\cdot\tocpm(m_s)=\tocpm(\mu).
 \] 
Observing that, when $m$ is constructed from a closure $m'=[Q,L,M]$, $\tocpm(\mu)$ must coincide with $\DISTM{m}=\DIST{m'}$, the equation above leads then immediately to Theorem \ref{thm:soundness_tkm}.

\subsection{The Full Machine}
\label{sect:full-machine}

The machine we have presented thus far follows the naïve approach that
we sketched in Section~\ref{sect:informal} and is thus not very
efficient. For instance, take the program $\pv{\lambda x.
\ite{M}{U_1~x}{U_2~x}}{V~y}$ where $U_1, U_2$ and $V$ are quantum
gates. Assume that the \circuittkmz first compiles the if-then-else. We
end up with an extended circuit whose two branches both contain a copy of the gate
$V$. However, this duplication of $V$ should not to happen, as we
could instead try to compile as if it were one big unitary gate to
which we then compose $V$ at the end, only once. This session is
dedicated to implementing this observation via a different, \emph{synchronous}, compilation rule.

The basic idea is to treat if-then-elses like unitary gates: the
tokens will synchronize at the negative positions in the conclusion of the if-then-else
and, once they are all present, two subroutines will be executed,
compiling the two branches of the if-then-else into two circuits $C_1$
and $C_2$. The resulting tokens and circuits will then be \emph{merged},
without having to change and duplicate the tokens in the rest of the
derivation. As discussed in Section 2, this approach is, however, not always possible, since we cannot assume that the relevant tokens will reach, in all situations, the negative conclusion of an if-then-else: a deadlock may arise as soon as different synchronization points (conditionals or quantum gates) wait each for tokens that should exit from the other.
We therefore have two
situations: either all the tokens are present in the negative
conclusion of an if-then-else, in which case we trigger the
synchronous rule, or we do not and the machine is stuck somewhere, in
which case we trigger an asynchronous-rule as described in the previous subsection. 

The machine \circuittkm is defined as follows: the structural rules 
(Figure~\ref{fig:global-tkm-structural-rules}) as well as the unitary rule
(Figure~\ref{fig:global-tkm-unitary-rule}) do not change; for
if-then-else instead we now have the following rules:
\begin{itemize}
  \item The \emph{synchronous rule} asks that the conclusion of the if-then-else
  is negatively saturated and that the guard has been evaluated with some label
  $l$ on the output bit of the guard. In that case, the two branches are run as
  subroutines over their sub-derivations $\pi_1,\pi_2$, until termination. This way, two extended circuit $E_1$ and $E_2$ are
  produced, \emph{with the same environment}, since $\mathbf{ENV}(\negative_{\pi_1})=\mathbf{ENV}(\negative_{\pi_2})$ and $\mathbf{ENV}(\positive_{\pi_1})=\mathbf{ENV}(\positive_{\pi_2})$. Then, the plain circuits $\tau(E_1)$ and $\tau(E_2)$ are put inside
  a plain if-then-else and the tokens are merged back on the positive
  occurrences of the conclusion of the if-then-else. 
  \item The \emph{guard rule} and the \emph{asynchronous} rules are left
  unchanged, aside from the side condition that \emph{no other rewriting rule
  applies}: i.e. the asynchronous rule only triggers when we don't have the
  opportunity to apply any synchronous rules.  
\end{itemize}

\begin{figure}
    \dbox{
      \begin{minipage}{\textwidth}
     \begin{center}
   \footnotesize
   $
   \infer{\Gamma_2, \Delta_3 \vdash \ite{M}{N}{P} : A_3}   
   { \deduce{\Gamma_1 \vdash M : \bit}{}
   \qquad
   \deduce{ \Delta_1 \vdash N : A_1}{\pi_2}
   \qquad
   \deduce{ \Delta_2 \vdash P : A_2}{\pi_3}
          }  
    $
    \end{center}
    
   { \tiny $\begin{array}{l}
         \textbf{Synchronous Rule} \\
         (\set{\bit,\GG} \cup \MM, E) \totkm \left (\MM \cup \GG(\MM_1), E@\GG\Big[E@\GG ;
         \ite{l}{\tau(E^*_1)}{\tau(E^*_2)}\Big]\right) \text{ with } l=\mathsf{lab}(\bit), \text{ and where: }\\
         - \MM|_{\GG} \textbf{ negatively saturates } \Delta_3, A_3, \\
         - \mathtt{Exec}(\pi_2) = (\pi_2, \MM_1, E_1), \\
         - \mathtt{Exec}(\pi_3) = (\pi_3, \MM_2, E_2), \\
         - E^*_b:= E_b\otimes\mathrm{id}_\Gamma.
         \end{array}$ 
         }
    \end{minipage}
   }
 
     \caption{Synchronous rule of the \circuittkm.}
    \label{fig:tkm-rules-ite-optimized}
 \end{figure}

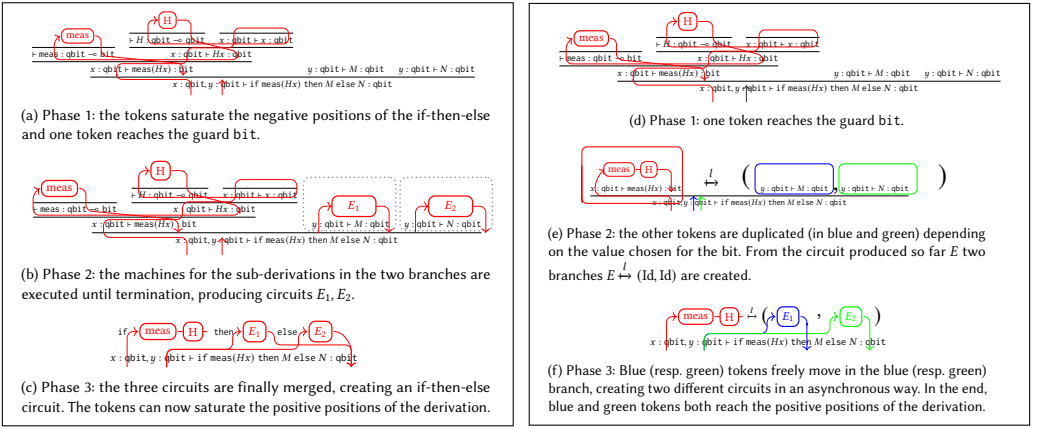
\begin{figure*}

\fbox{
  \begin{minipage}{.46\textwidth}
    \adjustbox{scale=0.65}{
      \begin{minipage}{\textwidth} 
      \begin{subfigure}{1.5\textwidth}
      \adjustbox{scale=0.8}{
      \begin{tikzpicture}
      \node(a) at (0,0) { \adjustbox{scale=0.7}{
      \begin{prooftree}
      \infer0{\vdash \mathtt{meas}: \qbit\multimap \bit} \infer0{\vdash H:
        \qbit\multimap \qbit} \infer0{x:\qbit\vdash x: \qbit} \infer2{x:\qbit\vdash
        Hx:\qbit} \infer2{x:\qbit\vdash \mathtt{meas}(Hx):\bit}
    
    \hypo{y : \qbit \vdash M : \qbit}
    \hypo{y : \qbit \vdash N : \qbit}
        \infer3{x : \qbit, y:\qbit \vdash \ite{\mathtt{meas}(Hx)}{M}{N} : \qbit}
      \end{prooftree}
    } };
    
    \draw[->,red, rounded corners] (-3.5,-0.4) to (-3.5,-0.1) to (-1.7,-0.1) to
    (-1.7,0.3) to (-0.5,0.3) to (-0.5,0.8) to (0.9,0.8) to (0.9,0.4) to (-2.7,0.4)
    to (-2.7,1) to (-2.45,1);
    
    \node[red, rectangle, rounded corners, draw, thick] at (-2.2,1)
    {\footnotesize$\mathrm{H}$};

    \draw[->,red, rounded corners] (-1.95,1) to (-1.6,1) to (-1.6,0.5) to (-0.4,0.2)
    to (-0.4,0) to (-5.1,0) to (-5.1,0.5) to (-4.95,0.6);
    
    \node[red, rectangle, rounded corners, draw, thick] at (-4.5,0.6)
    {\footnotesize$\mathrm{meas}$};
    
    \draw[->,red, rounded corners] (-4.05,0.6) to (-3.9,0.6) to (-3.9,0.1) to
    (-1.8,-0.1) to (-1.8,-0.4);
    
    \draw[red, rounded corners] (-1.6,-0.9) to (-1.6,-0.5) to (-1.95, -0.5) to (-3.5, -0.5) to (-3.5, -0.4);
    
    \draw[red, ->] (-0.8,-0.9) to (-0.8,-0.5);
    
    \end{tikzpicture}
    }
    
    \caption{Phase 1: the tokens saturate the negative positions of the if-then-else and one token reaches the guard $\bit$.}
    \end{subfigure}
    
    \bigskip

    \begin{subfigure}{1.5\textwidth}
      \adjustbox{scale=0.9}{
      \begin{tikzpicture}
      \node(a) at (0,0) { \adjustbox{scale=0.63}{
      \begin{prooftree}
      \infer0{\vdash \mathtt{meas}: \qbit\multimap \bit} \infer0{\vdash H:
        \qbit\multimap \qbit} \infer0{x:\qbit\vdash x: \qbit} \infer2{x:\qbit\vdash
        Hx:\qbit} \infer2{x:\qbit\vdash \mathtt{meas}(Hx):\bit}
    
    \hypo{y : \qbit \vdash M : \qbit}
    \hypo{y : \qbit \vdash N : \qbit}
        \infer3{x : \qbit, y:\qbit \vdash \ite{\mathtt{meas}(Hx)}{M}{N} : \qbit}
      \end{prooftree}
    } };
    
    \draw[->,red, rounded corners] (-3.5,-0.4) to (-3.5,-0.1) to (-1.7,-0.1) to
    (-1.7,0.3) to (-0.5,0.3) to (-0.5,0.8) to (0.9,0.8) to (0.9,0.4) to (-2.7,0.4)
    to (-2.7,1) to (-2.45,1);
    
    \node[red, rectangle, rounded corners, draw, thick] at (-2.2,1)
    {\footnotesize$\mathrm{H}$};

    \draw[->,red, rounded corners] (-1.95,1) to (-1.6,1) to (-1.6,0.5) to (-0.4,0.2)
    to (-0.4,0) to (-5.1,0) to (-5.1,0.5) to (-4.95,0.6);
    
    \node[red, rectangle, rounded corners, draw, thick] at (-4.5,0.6)
    {\footnotesize$\mathrm{meas}$};
    
    \draw[->,red, rounded corners] (-4.05,0.6) to (-3.9,0.6) to (-3.9,0.1) to
    (-1.8,-0.1) to (-1.8,-0.4);
    
    \draw[red, rounded corners] (-1.6,-0.9) to (-1.6,-0.5) to (-1.95, -0.5) to (-3.5, -0.5) to (-3.5, -0.4);
    
    
    \node[rectangle, draw, rounded corners, dotted, thick, gray,minimum width=2.1cm,
    minimum height=1.2cm] (rect) at (2.1,0.25) {};
    
    \node[red, rectangle, rounded corners, draw, minimum width=1cm, thick] at (2.2,0.2) {\footnotesize$E_1$};
    
    \draw[->,red, rounded corners] (1.4,-0.4) to (1.4,-0.2) to (1.4,0.2) to (1.7,0.2);
    \draw[<-,red, rounded corners] (3,-0.4) to (3,-0.2) to (3,0.2) to (2.7,0.2);

    \node[rectangle, draw, rounded corners, dotted, thick,gray, minimum width=2.1cm,
    minimum height=1.2cm] (rect) at (4.3,0.25) {}; 
    
    \node[red, rectangle, rounded corners, draw, minimum width=1cm, thick] at (4.4,0.2) {\footnotesize$E_2$};
    
    \draw[->,red, rounded corners] (3.6,-0.4) to (3.6,-0.2) to (3.6,0.2)
    to (3.9,0.2); 
    \draw[<-,red, rounded corners] (5.2,-0.4) to (5.2,-0.2) to (5.2,0.2)
    to (4.9,0.2);
    
    \draw[red, ->] (-0.8,-0.9) to (-0.8,-0.5);
    
    \end{tikzpicture}
    }
    
    \caption{Phase 2: the machines for the sub-derivations in the two branches are executed until termination, producing circuits $E_1,E_2$.}
    \end{subfigure}
    
    \bigskip
    
    \begin{subfigure}{1.5\textwidth}
      \adjustbox{center, scale=0.9}{
     \begin{tikzpicture}
      \node(a) at (-0.3,-0.1) { \adjustbox{scale=0.7}{ $x : \qbit, y:\qbit \vdash
      \ite{\mathtt{meas}(Hx)}{M}{N} : \qbit$} };

    \node(if) at (-2.8,0.5) {\tiny$\mathtt{if}$};
    
    \node[rectangle, thick, red, draw, rounded corners](meas) at (-2,0.5)
    {\footnotesize$\mathrm{meas}$}; \node[rectangle, thick, red, draw, rounded
    corners](meas) at (-1.2,0.5) {\footnotesize$\mathrm{H}$}; \draw[red] (-1.55,
    0.5) to (-1.42,0.5); \draw[red] (-2.45, 0.5) to (-2.55,0.5); \draw[red] (-0.98,
    0.5) to (-0.88,0.5);
    
    \node(then) at (-0.5,0.5) {\tiny$\mathtt{then}$};
    
    \node[rectangle, thick, red, draw, rounded corners](meas) at (0.2,0.5)
    {\footnotesize$E_1$};
    
    \node(then) at (0.95,0.5) {\tiny$\mathtt{else}$};
    
    \node[rectangle, thick, red, draw, rounded corners](meas) at (1.7,0.5)
    {\footnotesize$E_2$};

    \draw[->, rectangle, rounded corners, red] (-2.7,-0.3) to (-2.7,0.5) to
    (-2.45,0.5);

    \draw[->, rectangle, rounded corners, red] (-1.8,-0.3) to (-1.8,0.1) to
    (-0.3,0.1) to (-0.3,0.5) to (-0.1,0.5); \draw[->, rectangle, rounded corners,
    red] (-1.8,-0.3) to (-1.8,0.1) to (1.2,0.1) to (1.2,0.5) to (1.4,0.5);
    
    \draw[->, rectangle, rounded corners, red] (0.5,0.5) to (0.65,0.5) to (0.65,0.2)
    to (2.4,0.2) to (2.4,-0.3); \draw[->, rectangle, rounded corners, red] (2,0.5)
    to (2.15,0.5) to (2.4,0.2) to (2.4,-0.3);

    \end{tikzpicture}

    }
    
    \caption{Phase 3: the three circuits are finally merged, creating an if-then-else circuit. The tokens can now saturate the positive positions of the derivation.     }
    \end{subfigure}
    \end{minipage}
    }
\end{minipage}
}
\hfil
\fbox{
  \begin{minipage}{.45\textwidth}
    \adjustbox{scale=0.62}{
      \begin{minipage}{\textwidth}
      \begin{subfigure}{1.5\textwidth}
      \adjustbox{scale=0.9}{
      \begin{tikzpicture}
      \node(a) at (0,0) { \adjustbox{scale=0.65}{
      \begin{prooftree}
      \infer0{\vdash \mathtt{meas}: \qbit\multimap \bit} \infer0{\vdash H:
        \qbit\multimap \qbit} \infer0{x:\qbit\vdash x: \qbit} \infer2{x:\qbit\vdash
        Hx:\qbit} \infer2{x:\qbit\vdash \mathtt{meas}(Hx):\bit}
    
    \hypo{y : \qbit \vdash M : \qbit}
    \hypo{y : \qbit \vdash N : \qbit}
        \infer3{x : \qbit, y:\qbit \vdash \ite{\mathtt{meas}(Hx)}{M}{N} : \qbit}
      \end{prooftree}
    } };
    
    \draw[->,red, rounded corners] (-3.5,-0.4) to (-3.5,-0.1) to (-1.7,-0.1) to
    (-1.7,0.3) to (-0.5,0.3) to (-0.5,0.8) to (0.9,0.8) to (0.9,0.4) to (-2.7,0.4)
    to (-2.7,1) to (-2.45,1);
    
    \node[red, rectangle, rounded corners, draw, thick] at (-2.2,1)
    {\footnotesize$\mathrm{H}$};

    \draw[->,red, rounded corners] (-1.95,1) to (-1.6,1) to (-1.6,0.5) to (-0.4,0.2)
    to (-0.4,0) to (-5.1,0) to (-5.1,0.5) to (-4.95,0.6);
    
    \node[red, rectangle, rounded corners, draw, thick] at (-4.5,0.6)
    {\footnotesize$\mathrm{meas}$};
    
    \draw[->,red, rounded corners] (-4.05,0.6) to (-3.9,0.6) to (-3.9,0.1) to
    (-1.8,-0.1) to (-1.8,-0.4);
    
    \draw[red, rounded corners] (-1.6,-0.9) to (-1.6,-0.5) to (-1.95, -0.5) to (-3.5, -0.5) to (-3.5, -0.4);
    
    \draw[->] (-0.8,-0.9) to (-0.8,-0.5);

    \end{tikzpicture}
    }
    
    \caption{Phase 1: one token reaches the guard $\bit$.}
    \end{subfigure}
    
    \bigskip

    \begin{subfigure}{1.5\textwidth}
    \begin{center}
      \adjustbox{scale=0.8}{
      \begin{tikzpicture}
      \node(a) at (0.5,-0.5) { 
      \adjustbox{scale=0.65}{
      \begin{prooftree}
    \hypo{x:\qbit\vdash \mathtt{meas}(Hx):\bit}
    \hypo{\ \ \ \ \ \ \ \ \ \ \ \ \ \ \  \ \ \ \ \ \ \ \ \ \ }
    \hypo{y : \qbit \vdash M : \qbit}
    \hypo{y : \qbit \vdash N : \qbit}
        \infer4{x : \qbit, y:\qbit \vdash \ite{\mathtt{meas}(Hx)}{M}{N} : \qbit}
      \end{prooftree}
    } };
    
      \node[rectangle, thick, red, draw, rounded corners](meas) at (-3,0.2)
    {\footnotesize$\mathrm{meas}$}; 
    \node[rectangle, thick, red, draw, rounded
    corners](meas) at (-2.2,0.2) {\footnotesize$\mathrm{H}$}; 
    \draw[red] (-2.55,  0.2) to (-2.42,0.2); 
    \draw[red, rounded corners, ->] (-1.98,  0.2) to (-1.6,0.2) to (-1.6,-0.4);
    \draw[red, rounded corners, thick] (-4,-0.7) to (-4,0.8) to (-1.3,0.8) to (-1.3,-0.7) to (-4,-0.7) to (-4,0.3);

    \draw[red, rounded corners, ->] (-1.6,-0.9) to (-1.6,-0.5) to (-1.95, -0.5) to (-3.6, -0.5) to [bend left]
    (-3.4,0.2);
    
    \node(x) at (-0.5,0) {\Large $\xmapsto{l}$};

    
\node(par) at (0.4,0) {\Huge $($};
\node(par) at (5.7,0) {\Huge $)$};
\node(par) at (2.83,-0.35) {\Huge $,$};

    \node[rectangle, draw, rounded corners,  thick, blue,minimum width=2.1cm,
    minimum height=0.8cm] (rect) at (1.7,-0.05) {};
       
    \node[rectangle, draw, rounded corners,  thick,green, minimum width=2.1cm,
    minimum height=0.8cm] (rect) at (3.95,-0.05) {}; 
    
    \draw[blue, ->] (-1,-0.9) to (-1,-0.5);
    
    \draw[green, ->] (-0.8,-0.9) to (-0.8,-0.5);

    \end{tikzpicture}
    }
    \end{center}
    \caption{Phase 2: the other tokens are duplicated (in blue and green) depending on the value chosen for the bit. From the circuit produced so far $E$ two branches $E\xmapsto{l}(\mathrm{Id},\mathrm{Id})$ are created.}
    \end{subfigure}
    
    \bigskip
    
    \begin{subfigure}{1.5\textwidth}
      \adjustbox{scale=0.9, center}{
    
     \begin{tikzpicture}
      \node(a) at (-0.3,-0.1) { \adjustbox{scale=0.7}{ $x : \qbit, y:\qbit \vdash
      \ite{\mathtt{meas}(Hx)}{M}{N} : \qbit$} };

    \node[rectangle, thick, red, draw, rounded corners](meas) at (-2,0.5)
    {\footnotesize$\mathrm{meas}$}; \node[rectangle, thick, red, draw, rounded
    corners](meas) at (-1.2,0.5) {\footnotesize$\mathrm{H}$}; \draw[red] (-1.55,
    0.5) to (-1.42,0.5); \draw[red] (-2.45, 0.5) to (-2.55,0.5); \draw[red] (-0.98,
    0.5) to (-0.88,0.5);
    
    \node(then) at (-0.65,0.6) {\tiny$\xmapsto{l}$};
    \node(then) at (-0.35,0.5) {\LARGE$($};
    
    \node[rectangle, thick, blue, draw, rounded corners](meas) at (0.2,0.5)
    {\footnotesize$E_1$};
    
    \node(then) at (0.95,0.5) {\LARGE$,$};
    
    \node[rectangle, thick, green, draw, rounded corners](meas) at (1.7,0.5)
    {\footnotesize$E_2$};
        \node(then) at (2.35,0.5) {\LARGE$)$};

    \draw[->, rectangle, rounded corners, red] (-2.7,-0.3) to (-2.7,0.5) to
    (-2.45,0.5);
    
    \draw[->, rectangle, rounded corners, blue] (-1.8,-0.3) to (-1.8,0.1) to
    (-0.3,0.1) to (-0.3,0.5) to (-0.1,0.5); 
    
    \draw[->, rectangle, rounded corners,
    green] (-1.8,-0.3) to (-1.8,0.1) to (1.2,0.1) to (1.2,0.5) to (1.4,0.5);
    
    \draw[->, rectangle, rounded corners, blue] (0.5,0.5) to (0.65,0.5) to (0.65,0.2)
    to (0.65,0.2) to (0.65,-0.3);
    
    \draw[->, rectangle, rounded corners, green] (2,0.5)
    to (2.15,0.5) to (2.15,0.2) to (2.15,-0.3);

    \end{tikzpicture}
    }
    
    \caption{Phase 3: Blue (resp.~green) tokens freely move in the blue (resp.~green) branch, creating two different circuits in an asynchronous way. In the end, blue and green tokens both reach the positive positions of the derivation.}
    \end{subfigure}
    \end{minipage}
    }
  \end{minipage}
}

\caption{Synchronous (left) vs Asynchronous (right) execution of the machine.}
\label{fig:tkm_ex2}
\end{figure*}

To define the synchronous rule in a more precise way let us introduce the
following notion:

\begin{definition}[Execution]
  \label{def:execution}
  For any type derivation $\pi:\Gamma\vdash M:A$, if the \circuittkm
   terminates on a final configuration $\config$, let
   $\exec(\pi)=\config$; otherwise, let $\exec(\pi)$ be undefined.

\end{definition}

The synchronous rule is defined in Figure~\ref{fig:tkm-rules-ite-optimized} and illustrated in the following example.

\begin{example}
  \label{ex:circuit2}
  Let $x:\qbit\vdash M,N:\qbit$ be two terms and consider the term $T
  = \ite{P}{M}{N}$, where $P=\mathtt{meas}(\mathtt{H}x)$,
  corresponding to a fair flip coin. The difference between compiling $T$ with the
  synchronous or the asynchronous rule is illustrated
  in~\Cref{fig:tkm_ex2}. On the left, the synchronous rule waits for
  all the tokens to be present before compiling the two terms and
  merging the resulting tokens. On the right, the asynchronous
  rule, which does not allow for tokens to be merged, duplicates the tokens via a new address choice (in the figure, one black token yields one blue and one green token). Notice that every position, outside of the two branches, that is visited afterwards will be actually visited \emph{twice}: once in blue, once in green. This is the main source of inefficiency of the asynchronous rule.

\end{example}

For the machine to be well-defined, we need to ensure that
$\exec(-)$ is always well-defined in all application of the synchronous rule. As before, this can be
shown by a simulation between the \circuittkm{} machine and the \evaltkm{}
machine. This new simulation is a bit more subtle since the \circuittkm{} now
has two kinds of rules for the if-then-else, but it can be done following the same
methodology as described for \circuittkmz{}.

\begin{theorem}[the \evaltkm\ simulates the \circuittkmz]
  \label{thm:simulation-opti}
Let $\config,\config'$ be configurations of the \circuittkmz\ and $\mu$
be a distribution of configurations for the \evaltkm. For all registers $m$, if $\mu \
\mathscr R^{m}_\pi \ \config $ and $\config\to \config'$ then there
exists a distribution $\mu'$ such that $\mu' \ \mathscr R^{m}_\pi \ \config'$
and such that multiple parallel executions of the \evaltkm \ lead from $\mu$ to
$\mu'$.
\end{theorem}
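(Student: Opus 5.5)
We argue by case analysis on the rule used in $\config\to\config'$, where $\config'$ is now a step of the \circuittkm{}. Every rule other than the synchronous one (structural, circuit, guard and asynchronous rules) is a rule of the \circuittkmz{}. For these we reuse Theorem~\ref{thm:bisimulation} verbatim: the side condition ``no other rule applies'' only restricts when a step may fire, not what it does. So the relation $\mathscr R^m_\pi$ is preserved exactly as before. The only new case is the synchronous rule of Figure~\ref{fig:tkm-rules-ite-optimized}.

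For the synchronous case, fix the if-then-else with guard label $l$ and an address $\GG$ at which $\MM|_\GG$ negatively saturates $\Delta_3,A_3$ and the guard token sits on $l$. We proceed by induction on the size of $\pi$, the branches $\pi_2,\pi_3$ being strict subderivations. The induction hypothesis, together with the simulation for $\pi_2$ and $\pi_3$ iterated up to termination, gives the following: $\exec(\pi_b)$ is defined (termination is transported from the \evaltkm{} as in the corollaries), and for every register $m'$ on $\mathbf{ENV}(\negative_{\pi_b})$, $\interp{E_b}(\tocpm(m'))$ equals the mixed state of the final distribution of \evaltkm{} runs on $\pi_b$.

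Now take any super-address $s$ of $\config$ extending $\GG$. The configuration $\config_s$ of the \evaltkm{} has its register's classical part fixing the value $b$ of $l$, with mass $\mu_s$. From $\config_s$ we run the \evaltkm{} inside the branch $\pi_{b}$ only, tokens entering through the negative conclusion. Every run terminates with all tokens on the positive conclusion of the if-then-else. This gives a sub-distribution replacing $\config_s$.

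The super-addresses of $\config'$ below $\GG$ are exactly $s$ extended with super-addresses of $E_1^*$ or $E_2^*$ (selected by $l$). So we must check
\[
\interp{E'}^{\sharp s'}(\tocpm(m))=\mu_{s'}\cdot\tocpm(m_{s'}).
\]
This follows from three facts. First, the slice semantics is compositional, i.e. $\interp{E@\GG;C}^{\sharp s'}=\interp{C}^{\sharp}\circ\interp{E}^{\sharp s}$. Second, the semantics of the if-then-else (Fig.~\ref{fig:circden}) selects the $b$-component of the $\bit$ wire $l$. Third, $E_b^*=E_b\otimes\mathrm{id}_\Gamma$ acts as the identity on wires not entering the branch, matching the fact that \evaltkm{} tokens outside the branch do not move. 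Tokens at other addresses are untouched, so their clauses are inherited from $\mu\ \mathscr R^m_\pi\ \config$.

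The main obstacle is the mismatch in granularity. The \circuittkm{} takes one macro-step that summarises a whole sub-execution, and the branch circuits $\tau(E_b)$ are flattened: their internal measurement and branching choices become part of the super-addresses of $E'$. We therefore first prove an auxiliary lemma stating that, for an extended circuit $F$, $\interp{\tau(F)}$ on a slice equals $\sum$ over super-addresses $t\in@^{\mathsf s}F$ of $\interp{F}^{\sharp t}$. This lemma, proved by induction on $F$ using Lemma~\ref{lem:translation-EC-well-typed}, reindexes the probabilities consistently. A second delicate point is that the inductive use of simulation inside $\pi_b$ needs the register restricted to $\mathbf{ENV}(\negative_{\pi_b})$, while the full state may be entangled with the $\Gamma$-wires. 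We handle this by linearity and complete positivity: the claim for product states extends to $\interp{\Gamma}\otimes\interp{\negative_{\pi_b}}$ by tensoring with $\mathrm{id}_\Gamma$.
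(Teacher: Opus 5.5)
\paragraph{The gap: token addresses after the synchronous step.}
For the synchronous step you only verify the (mixed state) clause, and the relation you claim to re-establish, $\mu'\ \mathscr R^m_\pi\ \config'$, cannot actually hold.

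In the \evaltkm{}, tokens can enter the branch $\pi_b$ only after the guard transition has added $(l,b)$ to the address of \emph{every} token at $\GG$. Each nested conditional inside $\pi_b$ then adds its own label $(l'',b'')$ as well. So after your runs, the \evaltkm{} tokens sit at addresses $\GG\cup\{(l,b)\}\cup\cdots$.

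The synchronous rule does something different: it merges the \circuittkm{} tokens back at the bare address $\GG$ (this is what $\GG(\MM_1)$ means). Moreover, $l$ and all labels of $\tau(E_1),\tau(E_2)$ are $\LITE$-labels, not $\LITES$-labels, so every new super-address $s'$ has $\ADD_{s'}=\GG$. Three requirements therefore fail:
\begin{itemize}
\item the (token) clause $\MM_{s'}=\MM'|_{\ADD_{s'}}$;
\item the coherence requirement $@\config_{s'}=\ADD_{s'}$;
\item coherence across indices: distinct $s'$ lying over the same $\GG$ now carry differently addressed tokens.
\end{itemize}
For the same reason, you cannot ``reuse Theorem~\ref{thm:bisimulation} verbatim'' for the other rules once a synchronous step has occurred. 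The invariant in force is then no longer $\mathscr R^m_\pi$, and the circuits now contain $\ite{l}{-}{-}$ subcircuits, which the slicing analysis behind that theorem explicitly excludes.

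\paragraph{What the paper does instead.}
This mismatch is the heart of the proof, and the paper's extra machinery exists to handle it:
\begin{itemize}
\item a machine $\evaltkm^*$ whose tokens carry a left address (matching the \circuittkm{}) and a right address (recording the choices hidden by the synchronous rule);
\item a relation whose token clause is $\mathrm{erase}(\MM_s)=\MM|_{\ADD_s}$;
\item a right-execution lemma that pushes all choices made inside $\pi_b$ to the right address;
\item a pruned pseudo-deterministic step, so that reading $(l,b_s)$ does not split the index;
\item indexing by $\emptyset$-super-addresses, with the pruning lemma reindexing the branch's distribution by measurement outcomes only.
\end{itemize}
You need some equivalent device. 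For example, you could compare addresses only after discarding the labels of $\LITE(F)$, and then recheck every rule under that weaker invariant.

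\paragraph{Two smaller points.}
First, if you index by full super-addresses, the entries $s\cup\{(l,1-b_s)\}\cup t$ must also be populated, with weight $0$, by parallel steps. Running only $\pi_{b_s}$ leaves them empty.

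Second, your linearity argument for entanglement is a legitimate alternative to the paper's inner configurations, where $\otimes I^{\mathrm{outer}}$ is built into $\mathscr S^\rho_{\pi'\leq\pi}$. However, it needs more than complete positivity. You must show that the \evaltkm{} run inside $\pi_b$ acts on the global state through Kraus operators of the form $K\otimes I$, so that its input--output map is linear and trivial on the outer wires.
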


This time we only have simulation and not a bisimulation. This is however enough
to argue for the soundness of the machine as we did for Theorem \ref{thm:soundness_tkm}, and in particular to show that the \circuittkm{} terminates:

\begin{corollary}[Termination]
  Given a typing derivation $\pi : \Gamma \vdash M : A$, 
   any run of the \circuittkm{} over the initial
  configuration of $\pi$ terminates in a final configuration.
\end{corollary}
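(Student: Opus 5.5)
We must prove termination of \circuittkm{} runs from the initial configuration, ending in a final configuration. The plan has three parts: well-definedness of the synchronous rule, absence of infinite runs, and absence of stuck non-final configurations. Throughout we proceed by induction on the height of $\pi$, strengthened to the statement for every derivation, and we use Theorem~\ref{thm:simulation-opti} together with termination and deadlock-freeness of the \evaltkm{} (inherited from the \textsf{MSIAM}).

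\emph{Well-definedness.} The synchronous rule invokes $\exec(\pi_2)$ and $\exec(\pi_3)$ on the branch sub-derivations. These are strictly smaller than $\pi$, so the induction hypothesis makes both executions terminate in final configurations. Hence every application of the synchronous rule is well defined and produces finite extended circuits $E_1,E_2$ with matching environments.

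\emph{No infinite runs.} Suppose, towards a contradiction, that there is an infinite run $\config_0\to\config_1\to\cdots$ with $\config_0$ initial. The initial configuration is related by $\mathscr R^m_\pi$ to the Dirac distribution on the initial \evaltkm{} configuration, for any register $m$. Iterating Theorem~\ref{thm:simulation-opti} gives distributions $\mu_0,\mu_1,\dots$ with $\mu_i\ \mathscr R^m_\pi\ \config_i$, each reached from the previous one by parallel \evaltkm{} executions.

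To turn this into an infinite \evaltkm{} run we need a measure that decreases. For a configuration with tokens $\MM$, take the multiset, indexed by super-addresses $s$, of the distances still to be travelled by the tokens in $\MM|_{\ADD_s}$, where distances are read off the corresponding \evaltkm{} run. Each kind of step should lower this measure:
\begin{itemize}
\item a structural or circuit step moves at least one token forward on at least one slice $s$, hence advances some \evaltkm{} execution;
\item an asynchronous step replaces the tokens at $\GG$ by copies at $\GG_0$ and $\GG_1$; it does not increase the measure of any slice, and it is followed by a guard advancement;
\item a synchronous step advances every token through the conditional at once.
\end{itemize}
The asynchronous rule can fire at most once per conditional occurrence and address, by its side condition on $l$. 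The \evaltkm{} on the register $m$ terminates and visits every position a bounded number of times. So the number of slices is bounded (at most $2^{\#\text{conditionals}+\#\meas}$), and each slice carries a bounded amount of progress. Together these bound the length of any \circuittkm{} run.

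\emph{Final configuration.} It remains to show that a configuration with no applicable rule is final. The \circuittkm{} reduces to the rules of \circuittkmz{} whenever the synchronous rule does not apply, because the guard and asynchronous rules are then enabled. The \circuittkmz{} is deadlock-free by the bisimulation (Theorem~\ref{thm:bisimulation}) and the corollaries that follow it. So a stuck \circuittkm{} configuration is also stuck for \circuittkmz{}, hence final.

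The main obstacle is the second step: making the progress measure precise. Simulation alone does not rule out \circuittkm{} steps that are matched by zero \evaltkm{} steps. I would first try to prove that every \circuittkm{} step strictly increases the total number of \evaltkm{} steps performed across the slices, counting a synchronous step as the entire branch execution. The only delicate case is the asynchronous duplication, which I would handle by showing that it cannot occur twice at the same conditional and address.
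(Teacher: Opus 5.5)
Your first step is sound, and tidier than the paper. Inducting on the height of $\pi$ makes $\exec$ defined on the branch sub-derivations by the induction hypothesis, which is exactly what the proof of Theorem~\ref{thm:simulation-optimized-rule} needs; the paper instead assumes outright that all runs over $\pi$ terminate and argues by induction on if-then-else depth.

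For the second step, the paper implicitly takes the route of your last paragraph rather than a measure. Every \circuittkm{} step is matched by at least one parallel step of the \evaltkm$^*$:
a structural or gate step by one deterministic or choice step;
the asynchronous rule by one pseudo-deterministic guard step;
the synchronous rule by a pruned pseudo-deterministic step followed by the branch runs.
An infinite run therefore gives an infinite chain of parallel reductions, contradicting Proposition~\ref{thm:paraterm} (K\"onig's lemma plus termination of the \evaltkm).

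Drop the multiset measure. The ``distances read off the corresponding \evaltkm{} run'' are undefined on the zero-probability slices $s\cup\{(l,b)\}$ created by the pseudo-deterministic split when $b$ differs from the value stored in the register. Your asynchronous case also yields no strict decrease. Your count ``boundedly many slices times bounded progress per slice'' is a fine substitute for K\"onig's lemma. The per-slice bound must, however, be argued from the token dynamics at a fixed address (once branch choices are fixed, this is the dynamics of a linear derivation, hence of bounded length), not from actual \evaltkm{} runs.

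The genuine gap is the third step. Deadlock-freedom of the \circuittkmz{} is only established for configurations reached by \circuittkmz{} runs. It rests on the bisimulation of Theorem~\ref{thm:noif-then-else-simulation}, including its backward clause, for configurations related by $\mathscr R^m_\pi$ to an \evaltkm{} distribution.

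A configuration reached by the \circuittkm{} after a synchronous step is not of that kind. Its circuit contains $\ite{l}{\tau(E_1)}{\tau(E_2)}$. Its tokens past the conditional sit at address $\GG$, not at refinements $\GG\cup\{(l,b)\}\cup\ADDH$ as in any \circuittkmz{} run. So it is not $\mathscr R^m_\pi$-related to any \evaltkm{} distribution; this is exactly why the paper needs the \evaltkm$^*$ and the relation $\mathscr S$. Hence ``no \circuittkmz{} rule applies'' does not yield ``final'' by any available result.

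What is missing is a progress property for the \circuittkm{} itself. Two ways to supply it:
\emph{Backward clause.} Show that if $\mu\ \mathscr S^{\rho}_{\pi'\leq\pi}\ \config$ and some slice of $\mu$ at address $\GG$ can still move, then $\config$ can move at $\GG$, and then invoke deadlock-freedom of the \evaltkm. Structural and gate moves transfer because the erased slice tokens coincide with $\MM|_{\GG}$. A guard move on a not-yet-processed conditional yields the synchronous or the asynchronous rule.
\emph{Unfolding map.} Send each \circuittkm{} configuration to a \circuittkmz{}-reachable one by replacing every synchronous step with the asynchronous split followed by the branch runs, and prove that this map reflects stuckness.

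The paper's own justification of finality is a one-line reuse of the \circuittkmz{} argument with only a forward simulation, so it lacks the same clause. Even so, your shortcut through \circuittkmz{} deadlock-freedom is not valid as stated.
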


Due to the lack of bisimulation, we cannot prove confluence as we did for the \circuittkm{}. Still, we can prove the
unicity, modulo circuit semantics, of the produced normal form:

\begin{corollary}
    Given a derivation $\pi : \Gamma \vdash M : A$ and two runs from
    the initial configuration that terminate in the two final
    configurations with extended circuits $E_1$ and $E_2$, then
    $\interp{E_1} = \interp{E_2}$.
\end{corollary}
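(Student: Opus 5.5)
The plan is to derive the statement from the simulation result (Theorem~\ref{thm:simulation-opti}), together with the fact that the \evaltkm{} computes exactly the \qlambda-semantics. Both final circuits will be compared to a single, run-independent object, namely the distribution produced by the \evaltkm.

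First, fix an arbitrary register $m$ for $\mathbf{ENV}(\negative_\pi)$. Let $\mu_0$ be the Dirac distribution on the initial \evaltkm-configuration built from $m$ (all tokens at address $\emptyset$). I will check that $\mu_0\ \mathscr R^{m}_\pi\ \config_0$, where $\config_0$ is the initial \circuittkm{} configuration with circuit $\mathrm{Id}$. This is immediate: the only super-address is empty, and $\interp{\mathrm{Id}}(\tocpm(m))=\tocpm(m)$.

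Second, given a run $\config_0\to\cdots\to\config_k=(\pi,\MM_i,E_i)$ ending in a final configuration, I will apply Theorem~\ref{thm:simulation-opti} by induction on $k$. This yields a distribution $\mu_k$ with $\mu_k\ \mathscr R^m_\pi\ \config_k$, reached from $\mu_0$ by parallel \evaltkm{} executions. The synchronous rule is the delicate case, since its premise calls $\exec$ on sub-derivations. However, Theorem~\ref{thm:simulation-opti} is stated for all transitions, and the termination corollary guarantees that $\exec$ is defined, so the induction only needs these two results.

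Third, I will show that $\mu_k$ consists of final \evaltkm{} configurations. Since $\mathbf{PSN}(\MM_i)=\positive_\pi\cup\posones_\pi\cup\guard_\pi$ and the first clause of $\mathscr R$ gives $\MM_s=\MM_i|_{\ADD_s}$, every configuration in $\mu_k$ has its tokens in final positions. Then, exactly as in the derivation of Theorem~\ref{thm:soundness_tkm}, summing the second clause of $\mathscr R$ over super-addresses gives $\interp{E_i}(\tocpm(m))=\sum_s\mu_s\,\tocpm(m_s)=\tocpm(\mu_k)$. The main obstacle lies here: the right-hand side must be independent of the run. The argument is that $\mu_k$ is a distribution of final configurations reachable from $\mu_0$, and final \evaltkm{} configurations are normal forms. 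Because the \evaltkm{} is confluent in distribution (as for the MSIAM, and as used to obtain $\DISTM{m}$), $\tocpm(\mu_k)=\tocpm(\DISTM{m})$. A cleaner route, if it is available, is to note that $\DISTM{m}=\DIST{m'}$ whenever $m$ comes from a closure. One possible subtlety is that the parallel executions might weight branches differently across runs; this is handled by the fact that the probabilities $\mu_s$ are pinned down by the CPM equation itself.

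Finally, both runs give $\interp{E_1}(\tocpm(m))=\tocpm(\DISTM{m})=\interp{E_2}(\tocpm(m))$ for every $m$. The states $\tocpm(m)$ range over all pure states and classical bit assignments of $\interp{\mathbf{ENV}(\negative_\pi)}$, which span the whole space. Since $\interp{E_1}$ and $\interp{E_2}$ are linear, they therefore coincide, and $\interp{E_1}=\interp{E_2}$.
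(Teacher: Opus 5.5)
Your proposal is correct and follows essentially the paper's own argument: relate the initial configurations, transport each run through the simulation theorem, use termination and confluence of the \evaltkm{} to obtain a run-independent final distribution, and conclude equality of the semantics. The only difference is one of presentation. The paper invokes Lemma~\ref{lemma:config-related-same-distri-same-sem-optimized}, which compares two circuit configurations related to one and the same indexed distribution. You instead compare the aggregated states $\tocpm(\mu_k)$, summing the mixed-state clause over super-addresses as in Proposition~\ref{prop:sumextbis}, and finish with an explicit linearity/spanning argument; this has the small advantage of sidestepping the fact that the two runs' distributions are indexed by super-addresses of different circuits.
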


\section{On Efficient Compilation and Types}
\label{sect:types}

In this section we introduce a type system $C\qlambda$ that captures the terms for which synchronous compilation always terminates producing a quantum circuit of linear size.

Going back to the term $RUW$ at the end of Section \ref{sect:informal}, where
$$
R \eqdef \ite{N}{\lambda f,g,x. f(gx)}{\lambda f,g,x. g(fx)},
$$
it is clear that the \circuittkmz cannot compile this term synchronously: 
this term contains an if-then-else of type $A=(\qbit \multimap \qbit) \multimap (\qbit \multimap \qbit) \multimap (\qbit \multimap \qbit)$; as illustrated in Fig.~\ref{fig:deadlock1}, 
and discussed in Section 2, the three synchronization points constituted by the if-then-else and the two gates $U,W$ are engaged in a \emph{circular} dependency, and the tokens are correspondingly stuck.

Another example is the term $PQ$, where $P:(\qbit \multimap \qbit) \multimap (\qbit \multimap \qbit) $ and $Q: (\qbit \multimap \qbit)$ are the terms 
\begin{align*}
P&\eqdef \ite{N}{\lambda f. f}{\lambda f,x.H(fx)}\\ 
Q&\eqdef\ite{N}{U}{V},
\end{align*}
where $N=\meas(\qgate{H}~(\new\ \fc))$ and $U,V:\qbit\multimap \qbit$ are two arbitrary gates. Here the two if-then-else are engaged in a similar circular dependency, and so the synchronous rule cannot be applied.

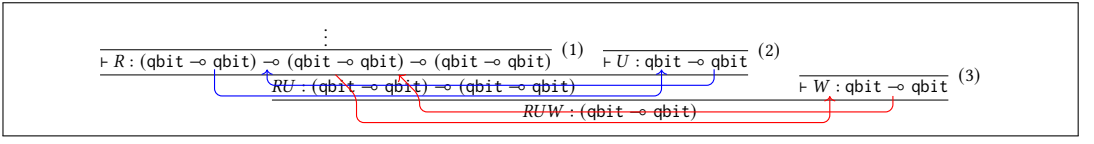
\begin{figure}
\fbox{
\begin{minipage}{\textwidth}
\adjustbox{scale=0.7, center}{
$
\begin{tikzpicture}

\node(proof) at (-1,0.3) {
\infer{RUW:(\qbit \multimap \qbit)}
	{
	\infer{RU:(\qbit \multimap \qbit) \multimap (\qbit \multimap \qbit)}
		{
		\infer[(1)]{\vdash R: (\qbit \multimap \qbit) \multimap (\qbit \multimap \qbit) \multimap (\qbit \multimap \qbit)}
			{\vdots
			}
		&\infer[(2)]{\vdash U:\qbit\multimap \qbit}{}
		}
	& \infer[(3)]{\vdash W:\qbit\multimap \qbit}{}
	}
};

\draw[<-, blue, rounded corners] (1.3,0.3) to (1.3,-0.2) to (-7.2,-0.2) to (-7.2,0.3);

\draw[->, blue, rounded corners] (2.3,0.3) to (2.3,0) to (-6.2,0) to (-6.2,0.3);

\draw[<-, red, rounded corners] (4.5,-0.2) to (4.5,-0.7) to (-4.5,-0.7) to (-4.5,-0.2) to (-4.9,0.2);

\draw[->, red, rounded corners] (5.7,-0.2) to (5.7,-0.5) to (-3.3,-0.5) to (-3.3,-0.2) to (-3.7,0.2);

\end{tikzpicture}
$
}
\end{minipage}
}
\caption{Deadlock in the term $RUW: \qbit\multimap\qbit$.
 }
\label{fig:deadlock1}
\end{figure}

To capture such cyclic dependencies, we associate any type derivation $\pi:\Gamma\vdash M:A$ with a directed graph $\mathscr G_\pi$ defined as follows. First, let us define the \emph{synchronization points} of $\pi$: these are either the gate axioms or the if-then-else rules, or any axiom of the form $\mathbb A_i\vdash \mathbb A_i$, for $i=1,\dots, p$, such
that $\pi$ contains an identity axiom of conclusion $x:B\vdash x:B$ and the list of atoms in $B$, read from left to right, is $\mathbb A_1,\dots,\mathbb A_p$.
Recall that tokens may traverse the identity axiom $x:B\vdash x:B$ in precisely $p$ ways, moving from the negatively occurring atom $\mathbb A_i$ to its positively occurring copy. For example, the identity axiom $x:\qbit_1\multimap \qbit_2\vdash x:\qbit_1\multimap\qbit_2$ yields the two synchronization points $\qbit_1\vdash \qbit_1$ and $\qbit_2\vdash \qbit_2$, corresponding to the two token paths through the axiom. 

Now, assign a distinct index $i\in \mathbb N$ to any {synchronization point} of $\pi$; these indices form the vertices of $\mathscr G_\pi$; an edge $i\to j$ exists whenever there exists a \emph{simple} token path (i.e.~a token path that only passes through structural rules) in $\pi$ going from a \emph{positive} position of the conclusion of $i$ to a \emph{negative} position of the conclusion of $j$. 

In the two examples considered we can easily see that the corresponding graphs contain a directed cycle relating the considered synchronization points. For instance, in the example from Fig.~\ref{fig:deadlock1} the graph $\mathscr G_\pi$ contains the cycles $1\to 2\to 1$ and $1\to 3\to 1$. 
By contrast, if we consider the terms $M_n$ from Section \ref{sect:informal}, one can check that the corresponding graphs are all acyclic, and indeed no deadlock actually occurs. 
In fact, the following holds:
\begin{proposition}\label{prop:graph1}
For any derivation $\pi:\Gamma\vdash M:A$, the synchronous compilation terminates on $\pi$ if and only if $\mathscr G_\pi$ is acyclic.
\end{proposition}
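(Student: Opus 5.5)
I read ``synchronous compilation'' as the run of the \circuittkm{} in which the asynchronous rule is never fired. It \emph{terminates on $\pi$} when, starting from the initial configuration and using only structural rules, circuit rules, guard rules and the synchronous rule, we reach a final configuration. This must hold recursively for every sub-execution $\exec(\pi_i)$ invoked by the synchronous rule. I will prove the two directions separately. Both rest on one observation: in this restricted machine a token leaves a synchronization point $i$ only after $i$ has \emph{fired}, i.e.\ all tokens required at the negative positions of its conclusion are present. For a gate, this means its inputs. For an if-then-else, it means negative saturation of $\Delta_3, A_3$ together with the guard token. For an atomic identity axiom $\mathbb A_i\vdash\mathbb A_i$, it is the single token on the negative copy. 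Between synchronization points, tokens move only through structural rules, i.e.\ along simple paths, which by definition are the edges of $\mathscr G_\pi$.

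\emph{Acyclic implies termination.} I argue by induction on the nesting depth of conditionals in $\pi$, with an inner induction along a topological order of $\mathscr G_\pi$. The sub-derivations $\pi_2,\pi_3$ of any conditional have graphs that embed into $\mathscr G_\pi$, because their simple paths are simple paths of $\pi$ (the branch is entered through the conclusion of the rule). So these graphs are acyclic and, by the outer induction, $\exec(\pi_2)$ and $\exec(\pi_3)$ are defined. Now fix a topological order $i_1,\dots,i_k$ and show by induction on $j$ that $i_j$ eventually fires. Each negative position of the conclusion of $i_j$ is reached by a unique maximal simple path traced backwards, using determinism and bijectivity of structural GoI paths. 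That path starts either at a position of $\negative_\pi\cup\negones_\pi\cup\ones^\downarrow_\pi$, where a token is present initially, or at a positive position of some $i_{j'}$. In the latter case there is an edge $i_{j'}\to i_j$, hence $j'<j$, so $i_{j'}$ has already fired and emitted a token there. Structural moves are finite because each simple path is finite. Once all points have fired, every token sits at a positive conclusion position, giving a final configuration.

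\emph{Termination implies acyclic.} I argue by contraposition. Suppose $i_1\to\cdots\to i_r\to i_1$ is a cycle. I claim by induction on the run length that no $i_t$ ever fires. The first of them to fire, say $i_t$, would need a token on the negative position targeted by the edge $i_{t-1}\to i_t$. By path uniqueness, that token can only arrive along the corresponding simple path, coming out of $i_{t-1}$, which has not yet fired. This is a contradiction. The cycle's points therefore never fire, and some token never reaches $\positive_\pi$. If the cycle lies inside a branch, the argument is applied to the sub-execution, so $\exec$ is undefined there and the synchronous rule never applies. Either way, synchronous compilation gets stuck in a non-final configuration.

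The main obstacle is the path-uniqueness claim: each negative position of a synchronization point is fed by exactly one maximal simple path, and no token can reach it by another route. Addresses and the if-then-else structure make this delicate. Tokens inside a branch during a sub-execution live in $\pi_2$ or $\pi_3$, and the graph must be shown to account for paths that cross the conditional's conclusion. I would first establish, as a lemma, that structural rules define a partial injective function on positions (standard GoI reversibility). I would then check that the synchronous rule exposes exactly the positive conclusion positions of the conditional, matching the edges out of that vertex.
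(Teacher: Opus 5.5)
Your proposal follows the paper's own proof. For acyclicity $\Rightarrow$ termination, the paper also uses an outer induction on the nesting depth of conditionals and an inner induction on $K(i)$, the length of the longest path of $\mathscr G_\pi$ ending at $i$; this is equivalent to your topological order. It likewise traces each negative position of $i$ backwards to an initial position, a unit axiom, or a predecessor of $i$, and it uses your path-uniqueness fact implicitly. For the converse, the paper first argues that a cycle must contain an if-then-else, and then that this node ends up waiting on itself. Your ``first node of the cycle to fire'' argument is the same idea, but it is uniform in the kind of node. So it does not need that preliminary observation, which the paper justifies only loosely. It also treats cycles lying inside a branch explicitly, through the failure of $\exec$.

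One step is missing, in your write-up as in the paper's. You rightly say that an if-then-else fires only when $\Delta_3,A_3$ are negatively saturated \emph{and} the guard token has arrived. Your topological induction, however, only delivers the first condition. The guard bit is emitted by a synchronization point $g$ inside the guard sub-derivation, e.g.\ a $\meas$ gate, a nested conditional, or an atomic identity axiom $\bit\vdash\bit$. Its simple path ends at a positive position of a premise, not at a negative position of the conclusion of the conditional $i$. Hence $\mathscr G_\pi$ has no edge $g\to i$, and nothing in your order forces $g$ to fire before $i$.

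To close the gap you need a lemma: adding all such guard edges to $\mathscr G_\pi$ preserves acyclicity. Equivalently, no chain of $\mathscr G_\pi$-edges and guard edges leads from a conditional back into its own guard. This is where linearity must be invoked. The conditional rule splits the guard context and the branch side $\Delta_3,A_3$ multiplicatively, so feedback from the conditional's outputs into its guard cannot be built. Contrast the paper's $PQ$ example, where the feedback runs through a function argument inside $A_3$. With that lemma in place, your induction runs unchanged on the augmented graph.
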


The graph $\mathscr G_\pi$ can be constructed directly and easily by induction on $\pi$.
We do this by introducing a \emph{colored} calculus $C\qlambda$.

\begin{definition}
\emph{Colored formulas} are defined by:
$$
A\eqdef  \baseTalt\mid\baseTalt^i \mid A\otimes A\mid A\multimap A,
$$
were $i\in \mathbb N$. For each colored formula $A$, let $|A|$ indicate the formula obtained by deleting labels.
For each uncolored formula $A$ containing the atoms $\mathbb A_1,\dots,\mathbb A_p$, and 
given indexes $i\in \mathbb N$ and $\vec i=i_1,\dots, i_p\in \mathbb N^p$, let $A^i$ (resp.~$A^{\vec i}$) be the colored formula in which all atoms have label $i$ (resp.~the colored formula with atoms $\mathbb A^{i_1}_1,\dots,\mathbb A^{i_p}_p$).
\end{definition}

We now show how type derivations can be labelled, inductively, with indices and a graph $\mathscr G$ describing their dependencies. For this we need the following construction:

\begin{definition}[match graph]
Given two colored formulas $A,B$, with $|A|=|B|$, their \emph{match graph}, noted $\mathscr M_{A,B}$, has the labels of $A$ and $B$ as vertices, and $i\to j\in \mathscr M_{A,B}$ if one of the following holds:
\begin{itemize}
\item for some negative atom $\mathbb A^i$ of $A$, the corresponding negative atom of $B$ is $\mathbb A^j$;
\item for some positive atom $\mathbb A^i$ of $B$, the corresponding positive atom of $A$ is $\mathbb A^j$.
\end{itemize}
\end{definition}

For example, take $A=(\qbit^1\multimap \qbit^2)\multimap \bit$ and $B=
(\qbit^2\multimap \qbit^3)\multimap \bit$, so that $|A|=|B|$. Their match graph has then vertices $1,2,3$ and edges $2\to 1$ (pos.~atom of $B$ to pos.~atom of $A$) and $2\to 3$ (neg.~atom of $A$ to neg.~atom of $B$).

The typing system $C\qlambda$ is generated from the following rules:

{\footnotesize
 \[ \infer{x : A^{\vec i} \vdash^{(\{i_1,\dots, i_p\},\emptyset)} x : A^{\vec i}}{} 
   \qquad
    \infer[(i)]{\vdash^{(\{i\},\emptyset)} c : A^i}{\mathcal T(c)=A}
 \qquad
   \infer{\Delta \vdash^{\mathscr G} \lambda x. M : A\multimap B}
  {\Delta, x : A\vdash^{\mathscr G} M : B}
  \]
  \[
  \infer{\Delta, \Gamma \vdash^{\mathscr G_1\cup\mathscr G_2\cup\mathscr M_{A,A'}} M~N : B}
  {\Delta \vdash^{\mathscr G_1} M : A\multimap B\qquad\Gamma\vdash^{\mathscr G_2} N : A'\qquad |A|=|A'|}
\]
\[  
   \infer{\vdash^{(\emptyset,\emptyset)} \starc  :1 }{}
  \qquad
 \infer{\Delta, \Gamma \vdash^{\mathscr G_1\cup\mathscr G_2} \letstar{M}{N} :A}
  {\Delta \vdash^{\mathscr G_1} M : 1\qquad\Gamma\vdash^{\mathscr G_2} N : A}
\]
\[
  \infer[(i)]{\Delta, |\Gamma|^i \vdash^{\mathscr G_1\cup\mathscr G_2\cup\mathscr G_3\cup\{ i\}} \ite{M}{N}{P} : |A|^i}
  {\Delta \vdash^{\mathscr G_1} M : \bit^j \qquad \Gamma^{\mathscr G_2} \vdash N : A \qquad \Gamma' \vdash^{\mathscr G_3} P : A'\qquad |\Gamma|=|\Gamma'|,|A|=|A'|}
    \]
  }

Each rule is a colored version of a corresponding rule of \qlambda: formulas are colored, and the judgments are labelled with a directed graph $\mathscr G$.
Gate and if-then-else rules are marked with a (unique) index, while identity axioms are marked with $p$ pairwise distinct indices; these indexes are used to inductively construct the graph $\mathscr G$. 
The application and let rules add the corresponding match graphs to $\mathscr G$.

For a colored derivation $\pi:\Gamma\vdash^{\mathscr G}M:A$, let $|\pi|:|\Gamma|\vdash M:|A|$ be its uncoloring, i.e.~the corresponding $\qlambda$-derivation obtained by deleting all colors and graphs. 
By inspecting, by induction on a derivation, all possible token paths we obtain the following:

\begin{theorem}\label{thm:graph2}
For any $C\qlambda$-derivation $\pi:\Gamma\vdash^{\mathscr G}M:A$, $\mathscr G\simeq\mathscr{G}_{|\pi|}$. As a consequence, $|\pi|$ can be compiled through the synchronous rule if and only if the produced graph $\mathscr G$ is acyclic.
\end{theorem}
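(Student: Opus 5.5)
The plan is to prove the isomorphism $\mathscr G\simeq\mathscr G_{|\pi|}$ by induction on the $C\qlambda$-derivation $\pi$, with a strengthened invariant, and then obtain the second claim directly from Proposition~\ref{prop:graph1}. The isomorphism on vertices is immediate. The indices assigned in $\pi$ are exactly one per gate axiom, one per if-then-else rule, and $p$ pairwise distinct ones per identity axiom on a type with $p$ atoms. These are precisely the synchronization points of $|\pi|$, so we fix this bijection once and for all.

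The strengthened invariant concerns a judgment $\Gamma\vdash^{\mathscr G}M:A$. \emph{(i)} Every atom occurrence $\mathbb A^i$ in the conclusion carries the index $i$ of the synchronization point reached by the unique simple (structural-only) token path starting or ending at that position: for a negative position of the conclusion, the point where the path entering there first stops; for a positive one, the point from which the path exiting there last departed. \emph{(ii)} $\mathscr G$ consists exactly of the edges $i\to j$ witnessed by simple paths lying entirely inside $\pi$. The key lemma behind (i) is that simple paths are deterministic and reversible (standard GoI), so each conclusion position is connected by a simple path to exactly one synchronization point. Atoms of type $1$ coming from $\starc$ with no synchronization point are handled by the uncolored $\baseTalt$ case.

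The inductive cases are as follows. For axioms and gates, the colorings $A^{\vec i}$ and $A^i$ satisfy (i) trivially, and the graph has no edges since no path goes from a synchronization point's positive conclusion back to one inside a leaf. For $\lambda$, tensor, and $\letstar{}{}$, simple paths across the rule just move between corresponding atoms, so colors are transported unchanged and no new edges arise. For application $M\,N$, new simple paths are created only by the cut between $A$ in $M$'s type and $A'$ in $N$. A path exits a synchronization point $i$ at a positive atom of $A'$ (colored $i$ by (i)) and enters the matching negative atom of $A$ in $M$, continuing to the point $j$ that colors it. Symmetrically for the positive atoms of $A$. These are exactly the edges of $\mathscr M_{A,A'}$, so it remains to check the polarities in the definition of the match graph. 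Paths that cross the cut several times (bouncing through identity axioms) break at each identity axiom, since those are synchronization points. Hence every new edge is a single crossing, and the edges not crossing the cut come from $\mathscr G_1\cup\mathscr G_2$ by induction. For if-then-else, the rule itself is a synchronization point $i$. Tokens in the conclusion $\Delta_3,A_3$ stop at it, which justifies recoloring with $|\Gamma|^i,|A|^i$. Inside the branches, paths reaching the branch conclusion stop at $i$, so the branch graphs $\mathscr G_2,\mathscr G_3$ are internal, and adding the vertex $i$ completes the case. The guard cut contributes edges via the color $j$ on $\bit^j$, to be checked analogously to application.

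The main obstacle will be the application case, specifically showing that the invariant (i) survives the cut, i.e.\ that conclusion colors of $MN$ still name the correct synchronization points after composition. To handle this, I will first argue that no conclusion atom of $M\,N$ is linked by a simple path through the cut without passing through an axiom point. This holds because every atom of $A$ in $M$ is itself colored, hence already terminates at some synchronization point. Finally, the consequence statement follows by combining the isomorphism with Proposition~\ref{prop:graph1}.
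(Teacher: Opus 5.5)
Your strategy is the paper's. You induct on the colored derivation, maintaining two invariants: conclusion colors name the synchronization point at the other end of the unique simple path, and $\mathscr G$ is exactly the set of simple-path edges. You handle application via $\mathscr M_{A,A'}$, since a simple path crosses the cut only once (a return trip needs an axiom, i.e.\ a synchronization point), and you read the second claim off Proposition~\ref{prop:graph1}. Your list of rules omits $\letv{\pv{x}{y}}{M}{N}$; it is also a cut and needs the same match-graph argument as application.

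The step that does not go through is the guard. An edge of $\mathscr G_{|\pi|}$ must end at a negative position of the \emph{conclusion} of a synchronization point. For the if-then-else rule $i$, the guard $\bit^j$ lies in the premise $\Delta\vdash M:\bit^j$, not in the conclusion. A token reaching it just stops, like the $1$ consumed by $\letstar{M}{N}$, which you rightly say adds nothing. So there is no edge $j\to i$ in $\mathscr G_{|\pi|}$. There is none in $\mathscr G$ either, since the colored rule adds only the vertex $i$ and simply forgets $j$.

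As written, your if-then-else case therefore asserts edges that belong to neither graph. The correct conclusion, which is the paper's, is that every simple path of $\pi$ lies inside $\pi_1$, $\pi_2$ or $\pi_3$ or is trivial. Hence $\mathscr G_{|\pi|}=\mathscr G_1\cup\mathscr G_2\cup\mathscr G_3\cup\{i\}$.

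Your intuition is not baseless, though: the guard is a genuine input of the synchronization point. Take $R'=\ite{N}{\lambda g.\,g\,Q}{\lambda g.\,\opn{H}\,(g\,Q)}$ and $T=R'\,(\lambda z.\ite{\meas\,z}{P_0}{P_1})$, with $N,Q,P_0,P_1$ closed. Synchronous-only compilation of $T$ deadlocks:
\begin{itemize}
\item $R'$ waits for its argument's output;
\item the inner conditional waits for its guard;
\item the guard waits for $z$, which only $R'$ emits.
\end{itemize}
Yet the literal graph, with edges $R'\to z\to\meas$ and inner conditional $\to R'$, is acyclic. Capturing this requires adding $j\to i$ both to the definition of $\mathscr G_\pi$ and to the colored rule. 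That amends the statement (and Proposition~\ref{prop:graph1}) rather than proving it as given.
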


\section{Beyond Linear Typing}\label{sect:discussion}

In this section, we discuss potential extensions of the compilation scheme we introduced and proved correct in
this paper.

On the one hand, it should certainly be noted that the type system
considered in this work is purely linear, not allowing for any form of
duplication. Furthermore, the types are multiplicative, namely
$\otimes$ and $\multimap$, and additives are present in disguise only
through the type $\bit$. Considering a more general type system with
additive and exponential connectives can be done, but requires great
care. On the one hand, it is in fact well known that Girard's Geometry
of Interaction is robust enough to allow for a faithful interpretation
of additives and exponentials~\cite{Girard1995,Laurent2001}, even in the presence of
multiple tokens~\cite{DalLago2017}. On the other hand, the way
conditionals are managed here and the special role they have means
that the typing of both branches of any conditional needs to be
restricted. Any such type should uniquely determine the (finite)
\emph{number} and the \emph{shape} of the tokens entering and exiting
the conditional. This of course cannot hold in presence of additives,
and requires exponentials to be \emph{bounded}, in the style of graded
comonads~\cite{GSS1992}. Similarly, a feature that can be added to our
type system without too much trouble is a form of \emph{structural
recursion} or \emph{iteration}. However, this would require the
introduction of an inductive type, such as that of natural numbers.
The question then becomes the following: should such a type be treated
similarly to $\bit$, that is, does it indicate the value that can
travel on a wire of the constructed circuit, or must its value be
known at compile time? In the second case, which we think corresponds
to the use of natural numbers in quantum algorithms, it is clear that
such an inductive type should itself not occur in the type of the
branches of a conditional construct. Summing up, while for the sake of
simplicity we have considered a very simple type system here, the approach can be adapted to more expressive
type systems. The only proviso is that of taking good care of how
conditionals can be typed, in particular guaranteeing finiteness and
determinacy.

\section{Related Work}
\label{sect:relatedwork}

The literature offers an extensive body of work on quantum
compilation, including optimization and so-called transpilation
techniques. Techniques specifically tailored to the compilation of
higher-order functional languages to quantum circuits are much
sparser. One notable contribution in this direction is the language
Qunity~\cite{voichick2023qunity}, which offers a higher-order quantum
programming language with classical control together with a full
compilation scheme towards
OpenQASM~\cite{cross2017openquantumassemblylanguage}. However, Qunity
does not feature a rewriting system and hence it cannot be executed.
This greatly limits the higher-level reasoning that can be done on this
language.

Girard's Geometry of Interaction~\cite{goi0,goi1} has already been
applied to quantum computing \cite{HasuoHoshino,DalLago2017}. In all
the aforementioned work, however, the underlying machine follows the
QRAM model, i.e., the token(s) perform some classical computation
while moving around the program, from time to time interacting with a
quantum register. The token trajectories are \emph{probabilistic}, and
provide an alternative way to fully execute the term on a quantum
input. By contrast, our approach is fully deterministic, and, instead
of executing the term, it produces a (yet to be executed) quantum
circuit, thus anticipating the quantum work.

Quantum $\lambda$-calculi come in many
flavours~\cite{selinger2009quantum,selinger2008fully,AG2005,DLMZ2009},
and semantics frameworks modelling them can themselves be built
following distinct lines~\cite{selinger2008fully,PaganiApplying,ClairambaultFullAbs}. 
Many programming languages following the QRAM model  have been introduced
since the mid nineties.
More recently, a somewhat different family of programming languages
\cite{lemonnier2024semanticseffectscentralityquantum,linearLambda,Arrighi_2017,chardonnet:tel-03959403}
have been designed focusing on what happens \emph{inside} the quantum
memory, this way allowing the user to write custom-made quantum operations
through so-called \emph{quantum conditionals}. This line of work is however 
outside the scope of this paper.
One should also mention the extensive body of literature on quantum circuit description languages~\cite{qwire,quipper}. The kind of challenges we faced here when compiling
conditionals are in fact similar to those encountered when endowing
Quipper with so-called dynamic lifting~\cite{dongho21,ColledanDalLago}.

\section{Conclusion}

We introduced a compilation scheme turning any term in a linear quantum
$\lambda$-calculus into an equivalent quantum circuit. Beyond proving the correctness of the translation, we have defined a type system which characterizes the
terms that can be compiled efficiently. Topics for future work include the
generalization of this compilation procedure to more expressive
calculi and its implementation in concrete programming languages, e.g., Linear
Haskell.
\bibliographystyle{ACM-Reference-Format}
\bibliography{biblio.bib}

\appendix
\appendixpage
\DoToC

\section{Additional Details of $\qlambda$}
\label{app:lambda}

In this section we give the formal results of $\qlambda$. As $\qlambda$ is a
sub-language of the original calculus by Selinger \& Valiron. The proofs can be
found in the original work~\cite{selinger2009quantum, 10.1007/11417170_26}.

\begin{definition}[Typed Quantum Closure]
    \label{def:typed-quantum-closure}
  We say that a quantum closure $[Q, L, M]$ is \emph{of type $A$ under
    context $\Gamma$}, denoted $\Gamma \vdash [Q, L, M] : A$ if
  $\Gamma, x_1 : \qbit, \dots, x_n : \qbit \vdash M : A$, where
  $x_1, \dots, x_n$ are in $L$.
\end{definition}

\begin{definition}[Value State]
  A quantum closure $[Q, L, M]$ is a \emph{value state} if $M$ is a
  value.
\end{definition}

\begin{lemma}[Substitution]
  \label{lemma:substitution}
  If $V$ is a value such that $\Delta \vdash V : A$ and $\Gamma, x : A
  \vdash M : B$ then $\Gamma, \Delta \vdash M[x\leftarrow V] : B$
\end{lemma}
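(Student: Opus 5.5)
We prove the Substitution Lemma by induction on the derivation of $\Gamma, x : A \vdash M : B$, keeping $V$ and $\Delta \vdash V : A$ fixed throughout. Since the type system is purely linear, every variable in the context occurs free exactly once in the term. Consequently, in every rule that splits the context ($M~N$, $\letstar{M}{N}$, $\pv{M}{N}$, $\letv{\pv{y}{z}}{M}{N}$, and the guard/branches split of the conditional), the variable $x$ belongs to exactly one of the two sub-contexts. This is the fact that drives the case analysis.

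The key steps, in order, are as follows.
\begin{enumerate}
\item \textbf{Axioms.} For the variable axiom, the context $\Gamma, x:A$ must be the singleton $x:A$. Hence $M = x$, $\Gamma$ is empty and $B = A$, so $M[x\leftarrow V] = V$ and the hypothesis $\Delta \vdash V : A$ is exactly what we need. The axioms for $\starc$ and for constants $c$ have empty contexts, so they cannot have $x$ in the context and are vacuous.
\item \textbf{Abstraction.} Here $M = \lambda y.M'$, and by $\alpha$-conversion we choose $y \neq x$ and $y$ not free in $V$. The induction hypothesis applies to the premise $\Gamma, y:C, x:A \vdash M' : D$ and gives $\Gamma, y:C, \Delta \vdash M'[x\leftarrow V] : D$. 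We then reapply the abstraction rule.
\item \textbf{Binary splitting rules.} Suppose, for instance, $x$ lies in the context of the left premise. We apply the induction hypothesis there. On the other premise, substitution is the identity, because $x$ is not free in that subterm; this uses the standard fact that free variables of a typed term are contained in its context. We then rebuild the rule, with $\Delta$ merged into the appropriate side. The case of $\letv{\pv{y}{z}}{M}{N}$ also requires choosing the bound $y,z$ fresh for $V$.
\item \textbf{Conditional.} If $x$ lies in the guard's context, we substitute only in the guard. If instead $x$ lies in the shared context $\Gamma$ of the branches, we apply the induction hypothesis to \emph{both} branch derivations, which are typed in the same context. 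This yields $\Gamma', \Delta \vdash N[x\leftarrow V] : C$ and $\Gamma', \Delta \vdash P[x\leftarrow V] : C$, and the rule can be reapplied because both branches again share the context $\Gamma', \Delta$.
\end{enumerate}

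The main obstacle is the conditional case: it is the only rule in which a context is shared rather than split, so $V$ is textually duplicated. We need to check that this remains well-typed, which holds since the branches are mutually exclusive, and that the induction hypothesis applies to both premises independently. We also need an auxiliary lemma that contexts are disjoint in the split rules and that $x \notin \mathrm{FV}(N)$ when $x$ is not in $N$'s context. Both are proven by a routine induction on typing derivations. Fresh naming for $\Delta$ versus bound variables is handled by Barendregt's convention.
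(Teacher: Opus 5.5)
Your proof is correct and is the standard induction on the typing derivation; the paper gives no proof of its own and defers to Selinger and Valiron, who establish the lemma in exactly this way. One small inaccuracy: your opening claim that every context variable occurs free exactly once fails for the conditional, since variables of the shared branch context occur in both branches. Your argument never relies on that claim, though. It uses only the disjointness of split contexts and the inclusion of free variables in the context, and both of these do hold.
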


\begin{proposition}[Subject Reduction]
  \label{prop:subject-reduction}
  Given a typed quantum closure $\Gamma \vdash [Q, L, M] : A$ and a reduction
  $[Q, L, M] \to \mu$ for a 
  distribution $\mu = \{[Q_{1}, L_{1}, M_{1}]^{\mu_{1}}, \ldots, [Q_{n}, L_{n},
  M_{n}]^{\mu_{n}}\}$, then
  $\Gamma \vdash [Q_{i},L_{i},M_{i}] : A$ for all $i \in \{1,\ldots,n\}$.
\end{proposition}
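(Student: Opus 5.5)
The plan is a simultaneous induction on the derivation of the reduction $[Q,L,M]\to\mu$, with a case analysis on the last rule used: classical rule, operator rule, or congruence rule. Throughout, I use the typing of closures from Definition~\ref{def:typed-quantum-closure}. The hypothesis gives $\Gamma, x_1:\qbit,\dots,x_n:\qbit \vdash M : A$ with $L=[x_1,\dots,x_n]$. For each result $[Q_i,L_i,M_i]$ I must show $\Gamma, L_i:\qbit \vdash M_i : A$.

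\textbf{Classical rules.} In each case $L$ is unchanged, so it suffices to show that $M_i$ keeps the type $A$ in the same context.
\begin{itemize}
\item For $(\lambda x.N)V$, I invert the application and abstraction rules. This splits the context into $\Delta,\Delta'$ with $\Delta,x:B\vdash N:A$ and $\Delta'\vdash V:B$. The Substitution Lemma~\ref{lemma:substitution} then gives $\Delta,\Delta'\vdash N[x\leftarrow V]:A$.
\item The $\letv{\pv{x}{y}}{\pv{V}{W}}{N}$ case is the same, applying the substitution lemma twice after inverting the pair rule.
\item For $\letstar{\starc}{N}$, inversion shows that $\starc$ is typed in the empty context, so $N$ has the whole context.
\item For the conditionals, the guard $\tc$ or $\fc$ must be typed in the empty context. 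I assume the obvious typing rules $\vdash \tc:\bit$ and $\vdash\fc:\bit$, which are implicit in the operational semantics. Hence $\Delta=\emptyset$, and either branch is typed under $\Gamma$ with type $A$.
\end{itemize}

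\textbf{Operator rules.} Here I invert the typing of $c\,V$, using $\vdash c:\mathcal T(c)$.
\begin{itemize}
\item For $\termOne\,\starc$, $\termZero\,\starc$ and $\discard\,b$ the context is empty. The result $\tc$, $\fc$ or $\starc$ has type $\bit$, $\bit$ or $1$ respectively, matching the output type.
\item For $U\langle x_{j_1},\dots\rangle$, the argument's context is exactly $x_{j_1}:\qbit,\dots,x_{j_n}:\qbit$. The resulting tuple of the same variables has type $\qbit^{\otimes n}$, and $L$ is unchanged.
\item For $\meas\,x$, the context is $x:\qbit$ and the result is typed in the empty context, which matches removing $x$ from $L$.
\item For $\new\,b$, the context of $\new\,b$ is empty, and the result $y$ is typed under $y:\qbit$. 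This matches appending the fresh $y$ to $L$.
\end{itemize}
In each case the new closure context is the old one with exactly the consumed qubit variables replaced by the produced ones.

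\textbf{Congruence rule.} I need a context-decomposition lemma, proved by induction on $\evalCtx$. It has two parts:
\begin{itemize}
\item if $\Theta\vdash\evalCtx[N]:A$, then there are $\Theta=\Theta_1,\Theta_2$ and a type $B$ with $\Theta_2\vdash N:B$;
\item for every $\Theta_2'$ and $N'$ with $\Theta_2'\vdash N':B$ and $\Theta_1,\Theta_2'$ disjoint, we have $\Theta_1,\Theta_2'\vdash \evalCtx[N']:A$.
\end{itemize}
This follows from the shape of the typing rules, since each evaluation-context former corresponds to a rule that splits its context multiplicatively. The conditional case $\ite{\evalCtx}{M}{N}$ is fine because the hole sits in the guard, whose context is disjoint from the branches' shared context.

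I then apply the induction hypothesis to the inner reduction, with $\Gamma$ instantiated to the non-qubit part of $\Theta_2$ and with the qubit variables of $L$ split accordingly. Since the variables of $L$ that are not free in $N$ are untouched by the inner step, and fresh variables avoid $\Theta_1$, the recomposed closure is well typed. The point to check is that the qubit variables of $L$ occurring in $\Theta_1$ are neither consumed nor duplicated. This is where linearity is used: every variable of $L$ occurs exactly once, either in $\Theta_1$ or in $\Theta_2$.

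\textbf{Main obstacle.} I expect the main obstacle to be bookkeeping rather than ideas. The statement's typing of closures is stated with $L$ fixed, so the inductive hypothesis must be generalized to allow $L_i\neq L$ and to allow sub-closures whose register list is a sublist of the whole. I would first reformulate the invariant as follows: $\Gamma,\Theta\vdash M:A$, where $\Theta$ assigns $\qbit$ to exactly the variables of $L$ free in $M$. Because every variable in $L$ is used linearly, this is equivalent to the original invariant, and it makes the congruence case go through cleanly.
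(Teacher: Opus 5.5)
Your proof is correct. The paper gives no argument of its own for this proposition; it defers to Selinger and Valiron. Their subject-reduction proof is the same induction on the reduction derivation that you propose: inversion of the syntax-directed rules, the substitution lemma for the $\beta$ and let-pair redexes, and a replacement property for evaluation contexts.

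What your direct route buys over the citation is that it actually fits this calculus. The cited system is affine (its axioms admit weakening), while the system here is strictly linear. So the cited theorem alone does not show that reducts stay typable without weakening. Your generalized invariant is exactly the missing check, together with two facts about the rules:
\begin{itemize}
\item a step never touches register variables outside the redex;
\item fresh variables avoid $\Theta_1$.
\end{itemize}
Fold both facts into the induction hypothesis explicitly, since the bare statement does not provide them.

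One precision. If $\Gamma\vdash[Q,L,M]:A$ is read as putting \emph{every} variable of $L$ in the context, your reformulated invariant is weaker than the original, not equivalent to it. You must then also propagate $|L_i|\subseteq\mathrm{FV}(M_i)$ through each step. This does hold:
\begin{itemize}
\item measurement removes $x$ from both $L$ and the term;
\item $\new$ adds $y$ to both;
\item both branches of a conditional are typed in the same context, so discarding one loses no variable.
\end{itemize}
You were also right to make the implicit rules $\vdash\tc:\bit$ and $\vdash\fc:\bit$ explicit.
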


\begin{lemma}
\label{lemma:types-values}
A well-typed value $V$ is either a constant, a variable or one of the
following cases occurs: if it is of type $A\otimes B$ then the value is
of the shape $\pv{V_1}{V_2}$ where $V_1$ and $V_2$ are values of types
$A$ and $B$ respectively, if it is of type $\bit$ then it is either
$\tc $ or $\fc$, and if it is of type $A\multimap B$ it is a
$\lambda$-abstraction.
\end{lemma}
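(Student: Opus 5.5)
The statement is the last lemma of the appendix: any well-typed value $V$ is a constant, a variable, or has the canonical shape dictated by its type. I would prove it by a direct case analysis on the syntax of values, using the typing rules read backwards (inversion). The grammar of values is
\[
V ::= x \mid \starc \mid c \mid \tc \mid \fc \mid \pv{V}{W} \mid \lambda x. M ,
\]
and I take ``constant'' to cover $\starc$, $\tc$, $\fc$ and the operators $c\in\QO$. Since $\tc,\fc$ were added to the calculus as terms of type $\bit$, I assume they come with typing axioms $\vdash\tc:\bit$ and $\vdash\fc:\bit$. Every term constructor is introduced by exactly one typing rule, so the last rule of any derivation $\Delta\vdash V:A$ is determined by the outermost constructor of $V$.

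I would go through the forms of $V$ in turn. If $V$ is $x$, $\starc$, $\tc$, $\fc$ or some $c$, it is a variable or a constant and there is nothing to prove. If $V=\pv{V_1}{V_2}$, the last rule must be the pair rule, so $A=A_1\otimes A_2$ with $\Delta_1\vdash V_1:A_1$ and $\Delta_2\vdash V_2:A_2$. The grammar already gives that $V_1,V_2$ are values, and this is the required shape for tensor types. If $V=\lambda x.M$, the last rule is abstraction, so $A$ is an arrow type, as required.

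It then remains to read the result type-by-type. For $A=B\otimes C$, a non-variable, non-constant value cannot be a $\lambda$-abstraction (its type would be an arrow), so it is a pair with components of types $B$ and $C$. For $A=\bit$, no pair or abstraction has type $\bit$, and among the constants only $\tc$ and $\fc$ do: $\starc$ has type $1$, and every $c\in\QO$ has type $\mathcal T(c)$, which is an arrow. For $A=B\multimap C$, pairs are excluded because their type is a tensor, so a non-variable, non-constant value is an abstraction.

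The only step needing care is this last bookkeeping. One has to confirm from the table of $\mathcal T(c)$ in Definition~\ref{def:qo} that every operator has an arrow type, so that no operator can masquerade as a value of type $\bit$ or of a tensor type. One also has to note that, because the type system has no subsumption or conversion rule, inversion on the last typing rule is sound.
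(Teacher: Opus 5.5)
Your proof is correct and is the standard canonical-forms argument. The paper gives no proof of its own here and defers to Selinger and Valiron, where the lemma is likewise obtained by inverting the syntax-directed typing rules on each value form; your extra checks, that every $c\in\QO$ has an arrow type $\mathcal T(c)$ and that $\tc,\fc$ need typing axioms the paper's rule list omits, are exactly the bookkeeping that inversion requires.
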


\begin{proposition}[Progress]
  Given a well-typed quantum closure $\vdash [Q, L, M] : A$, either $M$ is a
  value or there is $\mu \in \distr{\qclosures}$ such that $[Q, L, M] \to \mu$.
\end{proposition}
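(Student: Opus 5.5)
We prove Progress by structural induction on the typing derivation of $\Gamma, x_1:\qbit,\dots,x_n:\qbit \vdash M : A$, where $\Gamma$ is empty, so the only free variables of $M$ are the register variables $x_i \in L$, all of type $\qbit$. The induction hypothesis should be strengthened to this form: for any closure whose term is typed in a context consisting only of $\qbit$-variables from $L$, either the term is a value or the closure reduces. This generalization is needed because subterms may mention register variables; each premise of a rule is again typed in a sub-context of $\qbit$-variables, so the hypothesis applies to it.

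\textbf{Base cases.} Variables (necessarily of type $\qbit$), $\starc$, constants $c$, $\tc$ and $\fc$ are values. Likewise $\lambda x.M$ is a value, so no appeal to the hypothesis is needed there.

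\textbf{Compound terms.} For $\pv{M}{N}$, $\letstar{M}{N}$, $\letv{\pv{x}{y}}{M}{N}$, $\ite{M}{N}{P}$ and $M~N$, we first apply the induction hypothesis to the subterm in evaluation position, namely $M$. If $[Q,L,M]\to\mu$, the congruence rule lifts this through the corresponding context $\evalCtx$ ($\pv{[\cdot]}{N}$, $\letstar{[\cdot]}{N}$, and so on). If $M$ is a value $V$, then for pairs and applications we apply the hypothesis to $N$ in context $\pv{V}{[\cdot]}$ or $V~[\cdot]$, which again reduces or yields a value. When the relevant subterms are all values, Lemma~\ref{lemma:types-values} identifies their shape, and a redex fires:
\begin{itemize}
\item a value of type $1$ is $\starc$, since a variable cannot have type $1$ here;
\item a value of type $A\otimes B$ is a pair;
\item a value of type $\bit$ is $\tc$ or $\fc$;
\item a value of arrow type is a $\lambda$-abstraction or a constant $c$.
\end{itemize}
In the first three cases the matching classical rule applies; for a $\lambda$-abstraction we use $\beta$.

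\textbf{Application of a constant (the main obstacle).} The delicate case is $c~V$, because we must check that the operator rules cover every well-typed argument. By typing, $V$ has type $\mathcal T$'s input:
\begin{itemize}
\item \emph{Input type $1$} ($\termZero$, $\termOne$): $V=\starc$, and the corresponding rule fires.
\item \emph{Input type $\bit$} ($\discard$, $\new$): $V\in\{\tc,\fc\}$, and the corresponding rule fires.
\item \emph{Input type $\qbit$} ($\meas$, one-qubit gates): $V$ is a value of type $\qbit$, hence a variable in $L$. The measurement or unitary rule applies.
\item \emph{Input type $\qbit\otimes\qbit$} ($\opn{CNOT}$): Lemma~\ref{lemma:types-values} gives $V=\pv{x_{j_1}}{x_{j_2}}$ with both components variables in $L$. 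Linearity forces them to be distinct, so the unitary rule's side condition holds.
\end{itemize}
One remaining subtlety is ensuring these variables actually belong to $L$. This follows from the closure typing convention: the free $\qbit$-variables of $M$ are exactly those of $L$.
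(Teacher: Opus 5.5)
Your proof is correct, and it is the standard argument that the paper does not spell out but defers to Selinger and Valiron. That argument is induction on the typing derivation, using the canonical-forms Lemma~\ref{lemma:types-values} together with the observation that, in a context of $\qbit$-variables from $L$, no variable can have type $1$, $\bit$, a tensor or an arrow. This guarantees that every non-value term has a redex in evaluation position, including every well-typed operator application $c~V$.

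The only imprecision is your claim that every premise is typed in a $\qbit$-only context. That fails under binders, for instance for the body $N$ of $\letv{\pv{x}{y}}{M}{N}$. However, you never invoke the hypothesis there, so the argument is unaffected.
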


We inductively define binary relations
$\to_{n} \subseteq \qclosures \times \distr{\qclosures}$ as
follows: we first define $\to_{1}$ by $[Q,L,M] \to_{1} \mu$ if and only
if $[Q,L,M] \to \mu$ or $M$ is a value $V$ and $\mu = \{[Q,L,V]^{1}\}$;
and we define $\to_{n+1}$ by: $[Q,L,M] \to_{n+1} \mu$ if and only if
there are distributions
$\nu_{1},\ldots,\nu_{m} \in \distr{\qclosures}$ and
$\mu_{1},\ldots,\mu_{m} \in \mathbb{R}_{[0,1]}$ such that
$\mu = \sum_{1 \leq i \leq m} \mu_{i} \nu_{i}$ and
\begin{align*}
  [Q,L,M]
  &\to \{[Q_{1},L_{1},M_{1}]^{\mu_{1}},
    \ldots,[Q_{k},L_{k},M_{k}]^{\mu_{k}}\} \\
  [Q_{i},L_{i},M_{i}] &\to_{n} \nu_{i}
\end{align*}
for all $i \in \{1,\ldots,k\}$. 

\begin{theorem}[Termination]
  For any typed quantum closure $\vdash [Q,L,M] : A$, there is a
  unique distribution $\mu \in \distr{\qclosures}$ consisting of
  value states such that $[Q,L,M] \to_{n} \mu$ for any sufficiently
  large $n \in \mathbb{N}$.
\end{theorem}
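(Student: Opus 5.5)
The statement is the Termination theorem for $\qlambda$: every typed closed quantum closure reaches, for all sufficiently large $n$, one and the same distribution of value states. I would split the argument into three parts.

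\emph{Bounded reduction length.} First I show that every reduction sequence from a typed closure is finite, with a uniform bound. Since the calculus is purely linear, I use a size measure $|M|$ (number of syntax nodes) and check rule by rule that every reduction strictly decreases it. For $\beta$ on a value, linearity means the variable $x$ occurs exactly once, so $M[x\leftarrow V]$ has size $|M|+|V|-1 < |(\lambda x.M)V|$. Pair-let and unit-let behave the same way. The conditional discards a branch. Operator rules replace an application $c\,V$ by a value of size at most $|V|$ plus a constant; I will adjust the size function so these strictly decrease too. For instance, I give constants weight $3$ and $\tc,\fc,*$ and variables weight $1$, so that $\termOne\,*\mapsto\tc$ and $\new\,\fc\mapsto y$ decrease. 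By the Congruence Rule the decrease propagates through evaluation contexts. Subject Reduction (Proposition~\ref{prop:subject-reduction}) keeps every branch typed, so every path in the reduction tree has length at most $|M|$.

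\emph{Existence.} Since branching is finite (each step yields a finite distribution), the reduction tree is finite by the bound above. By Progress, every leaf is a value state. Hence $[Q,L,M]\to_n\mu$ holds for every $n\ge|M|$: the definition of $\to_1$ lets value states stay put, which pads short branches. I then prove by induction on $n$ that if $[Q,L,M]\to_n\mu$ and every closure in $\mu$ is a value state, then $[Q,L,M]\to_{n+1}\mu$. This gives the "any sufficiently large $n$" clause.

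\emph{Uniqueness.} This is the main obstacle, because of determinism. I will show the one-step relation is deterministic. Each non-value term decomposes uniquely as $\evalCtx[R]$ with $R$ a redex; this is a standard induction on $M$ using the value-first shape of contexts such as $V\,\evalCtx$ and $\pv{V}{\evalCtx}$. Each redex has exactly one applicable rule, giving a unique distribution. The measurement rule needs care: its result distribution is determined by $Q$ and the position of $x$ in $L$, up to the choice of the fresh variable name in $\new$. So I work modulo renaming of register variables, which I would make explicit. With determinism, $\to_n$ is a function, by induction on $n$: the convex decomposition $\sum\mu_i\nu_i$ is forced by the unique one-step distribution and the inductive hypothesis on each $[Q_i,L_i,M_i]$. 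Combined with stabilisation, this yields a unique $\mu$ of value states.
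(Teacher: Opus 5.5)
\textbf{Main gap: your size measure can increase under $\beta$.} You claim that linearity makes a bound variable occur exactly once. In $\qlambda$ this is false. The conditional rule types both branches in the \emph{same} context $\Gamma$, so a variable of $\Gamma$ occurs once in \emph{each} branch, and substitution duplicates $V$.

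For instance, $x:A\vdash \ite{\termOne\,\starc}{x}{x}:A$ is derivable. With $A=\qbit\multimap\qbit$, the $\beta$-step $(\lambda x.\ite{\termOne\,\starc}{x}{x})\,V\to\ite{\termOne\,\starc}{V}{V}$ increases syntactic size once $V=\lambda y.\opn{H}(\cdots(\opn{H}\,y)\cdots)$ has enough nested gates. No positive reweighting of nodes fixes this, because the redex adds a fixed overhead while $|V|$ is unbounded. The pair-let rule has the same problem.

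The missing idea is a measure that counts the guard additively but the branches by maximum: $\|\ite{M}{N}{P}\|=1+\|M\|+\max(\|N\|,\|P\|)$, with every other constructor counted as in ordinary size. By induction on the typing derivation, $\Gamma,x:A\vdash M:B$ then gives $\|M[x\leftarrow V]\|=\|M\|+\|V\|-1$. The only non-additive case is $x$ in the shared branch context, where $\max(\|N\|+c,\|P\|+c)=\max(\|N\|,\|P\|)+c$. The hole of an evaluation context never lies inside a branch, so $\|\cdot\|$ is strictly monotone under the congruence rule, and every reduction rule strictly decreases it. With Subject Reduction keeping reducts linearly typed, your bound on branch lengths and the rest of your argument go through with $\|M\|$ in place of $|M|$.

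\textbf{Two smaller points.} First, your padding lemma needs the first premise of $\to_{n+1}$ to be read as a $\to_1$ step. As literally written it is a genuine $\to$ step, so a value state satisfies $\to_n$ only for $n=1$. Your lemma then fails at its base case, and the theorem itself would be false, e.g.\ when $M$ is already a value. The $\to_1$ reading is clearly intended, but you should state it.

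Second, uniqueness must be understood modulo global phases and zero-weight entries, not just modulo renaming of fresh register variables. The measurement rule determines $R_0,R_1$ only up to a phase, and not at all when the corresponding amplitude is $0$. Fix a convention such as $\alpha,\beta\geq 0$ and drop null branches.

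For comparison, the paper gives no argument here and simply defers to Selinger and Valiron. Once the measure is repaired, your proof is a reasonable self-contained substitute.
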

\section{On Swapping Unitary Gates}
\label{asec:swapp-unit-gates}

Let $S$ be the following $\lambda$-term:
\begin{equation*}
  x:\bit,
  f:A,
  g:A
  \vdash
  \ite{x}{f \circ g}{g \circ f}
  : A
\end{equation*}
where
$A = (\qbit \otimes \qbit) \multimap (\qbit
\otimes \qbit)$ and $f \circ g$ is an abbreviation
of $\lambda x.\ f \, (g \, x)$. It is reasonable to expect
any compositional compilation algorithm associates
$S$ with a quantum circuit with holes
\begin{equation*}
  l:\bit, l_{1}:\qbit, l_{2}:\qbit
  \rhd C[-]\{-\} \rhd l_{1}:\qbit, l_{2}:\qbit
\end{equation*}
such that for any unitary gates
$U$ and $V$ acting on $2$ qubits,
\begin{equation*}
  \interp{C[U]\{V\}} =
  \interp{\ite{l}{U;V}{V;U}}
\end{equation*}
where $C[U]\{V\}$ is the quantum circuit obtained
by substituting $[-]$ with $U^{}$ and $\{-\}$ with
$V$. The goal of this section is to show that
either $[-]$ or $\{-\}$ must appear more than once
in $C[-]\{-\}$.

Towards this goal, let us formally introduce
quantum circuits with holes.
We define quantum circuits \emph{with holes} by
the following BNF:
\begin{equation*}
  \mathscr{C},\mathscr{D},\mathscr{E}
  ::=
  [-] \mid
  \{-\} \mid
  c_{\overline{r}}^{\overline{l}}
  \mid \mathscr{C} ; \mathscr{D}
  \mid \ite{\mathscr{C}}{\mathscr{D}}{\mathscr{E}}
\end{equation*}
where $c$ varies over $\mathbb{Q}$. Let
$\mathscr{C}$ be a quantum circuits with holes.
For a quantum circuit $C$ and $D$, we denote
$\mathscr{C}(C,D)$ for the quantum circuit
obtained by replacing $[-]$ and $\{-\}$ in
$\mathscr{C}$ with $C$ and $D$ respectively.
Throughout this section, we denote
$\{l_{1} : \qbit, l_{2} : \qbit\}$ by $\Theta$ and
denote $(l_{1},l_{2})$ by $\theta$. For
environments $\Gamma$ and $\Delta$, we write
$\Gamma \rhd \mathscr{C} \rhd \Delta$ when we can
derive the judgment from the rules in
Figure~\ref{fig:qcircrules} and the following axioms:
\begin{equation*}
  \infer{
    \Theta \rhd [-] \rhd \Theta
  }{},
  \qquad
  \infer{
    \Theta \rhd \{-\} \rhd \Theta
  }{}.
\end{equation*}
It is easy to see that if
$\Gamma \rhd \mathscr{C} \rhd \Delta$, then for any quantum
circuits $\Theta \rhd C \rhd \Theta$ and
$\Theta \rhd D \rhd \Theta$, we have
$\Gamma \rhd \mathscr{C}(C,D) \rhd \Delta$.

Given quantum circuits with holes
$\Gamma \rhd \mathscr{C} \rhd \Delta$ and
$\Gamma \rhd \mathscr{D} \rhd \Delta$, we write
\begin{equation*}
  \mathscr{C} \approx \mathscr{D}
\end{equation*}
when for any unitary gates $U$ and $V$ acting on 2
qubits, we have
\begin{equation*}
  \interp{\mathscr{C}(U^{\theta}_{\theta},V^{\theta}_{\theta})}
  =
  \interp{\mathscr{D}(U^{\theta}_{\theta},V^{\theta}_{\theta})}.
\end{equation*}
The following proposition is the goal of this
section.
\begin{proposition}\label{aprop:gateswap}
  Let $\mathscr{C}$ be a quantum circuit with
  holes such that
  \begin{equation*}
    l : \bit, \Theta
    \rhd \mathscr{C}(X,Y) \rhd \Theta,
  \end{equation*}
  and we define a quantum circuit with holes
  $\mathscr{D}$ to be
  \begin{equation*}
    \ite{l}{
      [-] ; \{-\}
    }{
      \{-\} ; [-]
    }.
  \end{equation*}
  If $\mathscr{C} \approx \mathscr{D}$, then
  either $[-]$ or $\{-\}$ appears more than once in
  $\mathscr{C}$.
\end{proposition}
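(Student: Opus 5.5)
The plan is to argue by contradiction. Suppose $\mathscr{C}\approx\mathscr{D}$ and each of $[-]$ and $\{-\}$ occurs at most once in $\mathscr{C}$. I would proceed in three steps.

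\textbf{Step 1: both holes occur.} If, say, $[-]$ does not occur, then $\interp{\mathscr{C}(U^\theta_\theta,V^\theta_\theta)}$ does not depend on $U$. Take $V=\mathrm{Id}$, feed $\termOne$ on $l$, and pick two gates $U\neq U'$ with different channels. Then $\mathscr{D}$ yields $\rho\mapsto U\rho U^\dagger$ and $\rho\mapsto U'\rho U'^\dagger$, which differ, a contradiction. The case of $\{-\}$ is symmetric. So both holes occur exactly once.

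\textbf{Step 2: a fixed causal order.} Take a normal form of $\mathscr{C}$ as a tree of sequential compositions and $\mathtt{ite}$'s on labels, proceeding by induction on $\mathscr{C}$. Since each hole occurs once, one of two things happens. Either the two occurrences lie in different branches of some conditional, in which case no classical execution path passes through both. Or they lie in a common sequential composition, in which case one of them, say $[-]$, precedes $\{-\}$ on every path containing both. The $\mathtt{ite}$ semantics of Fig.~\ref{fig:circden} is a sum over branches, and by the same induction $\interp{\mathscr{C}(U,V)}$ is a sum over paths of compositions of fixed completely positive maps, interleaved with $\mathrm{Ad}_U\otimes\mathrm{id}$ and $\mathrm{Ad}_V\otimes\mathrm{id}$ acting on the $\Theta$-wires plus ancillary wires and bits. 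Grouping the paths, I expect a decomposition
$$\interp{\mathscr{C}(U,V)}=\Phi_3\circ(\mathrm{Ad}_V\otimes\mathrm{id})\circ\Phi_2\circ(\mathrm{Ad}_U\otimes\mathrm{id})\circ\Phi_1+\Psi_U+\Psi_V.$$
Here the $\Phi_i$ and the pieces $\Psi_U$, $\Psi_V$ are completely positive and independent of the gates, $\Psi_U$ involves only $U$ and $\Psi_V$ only $V$, and the branching bits are carried along as classical ancillas. This is a quantum comb with a fixed order, $U$ before $V$, with some leakage in which one hole is skipped. The leakage terms can be absorbed by letting the skipped path pass the $\Theta$-wires around the slot, so the whole thing becomes a single fixed-order comb.

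\textbf{Step 3: a causality argument, which I expect to be the main obstacle.} The key property of a fixed-order comb is that the (reduced) state entering the $[-]$ slot cannot depend on $V$, since $V$ is plugged in later. I would fix the input on $l$ to $\termZero$, so the target is the order $V$ then $U$, with input $\ket{00}$ on $\Theta$. I would choose $V\in\{\mathrm{Id},\,X\otimes\mathrm{Id}\}$ and $U=\opn{CNOT}$, so that the target outputs are $\ket{00}$ versus $\ket{11}$, which are perfectly distinguishable and entangle the information about $V$ into what $U$ acts on. I then need to exhibit a second choice, for instance $U=\opn{CNOT}$ versus $U=\mathrm{Id}$ together with a $V$ acting on qubit~2, such that matching all four combinations forces the slot-$[-]$ input to carry information about $V$, contradicting causality.

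If a clean witness of this kind is elusive, my fallback is algebraic. The map is bilinear in $(U\otimes\bar U,\;V\otimes\bar V)$, so I would extend it linearly to arbitrary $2$-qubit operators $U,V$. I would then plug in a rank-one $U=\ket{a}\bra{b}$ and $V=\ket{c}\bra{d}$. The comb order forces the output to factor through $\bra{b}\cdots\ket{a}$ before $V$ acts. On the other side, $\ite{l}{U;V}{V;U}$ at $l=0$ gives $\ket{a}\braket{b|c}\bra{d}$, which vanishes when $b\perp c$ but not when $b=c$. A careful comparison of these two expressions should give the contradiction.
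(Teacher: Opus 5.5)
Your Steps 1 and 2 are sound. Step 1 even covers the zero-occurrence case that the paper leaves implicit. Step 2 is what the paper obtains by if-then-else removal: a plain circuit in which $[-]$ precedes $\{-\}$, so that $\interp{\mathscr{G}(U,V)}=f_V(U\bullet P)$. Here $P$ is the $V$-independent, ancilla-extended state entering $[-]$, and $f_V$ is trace-preserving CP.

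The gap is Step 3, which is the whole content of the proposition, and neither of your routes closes it. The discrete-witness route cannot succeed in principle. Suppose $U$ ranges only over $\{\mathrm{Id},\opn{CNOT}\}$, or over any family that a single use can perfectly discriminate. A fixed-order circuit can then proceed as follows:
\begin{enumerate}
\item swap the real input off $\Theta$;
\item feed the probe $\ket{10}$ into $[-]$ and measure the second qubit to learn which gate it was;
\item apply $\{-\}$ to the real input;
\item re-apply the identified gate as a classically controlled $\opn{CNOT}$, placed before or after $\{-\}$ according to $l$.
\end{enumerate}
This reproduces $\mathscr{D}$ exactly on those choices, for every $V$. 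So no finite table of such combinations yields a contradiction; the argument must exploit that $U$ ranges over a family too rich to be identified by one use.

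Your algebraic fallback misuses linearity. Since each hole occurs once, $\interp{\mathscr{C}(U,V)}$ is linear in $\mathrm{Ad}_U$, so $\mathscr{C}\approx\mathscr{D}$ transfers to the linear span of unitary channels. But every $\Phi$ in that span satisfies $\Phi(I)=cI$ and $\mathrm{tr}\circ\Phi=c\,\mathrm{tr}$. The map $\rho\mapsto\ket{a}\bra{b}\rho\ket{b}\bra{a}$ sends $I$ to $\ket{a}\bra{a}$, so it lies outside the span. Identities valid on all unitaries genuinely fail there: $\mathrm{tr}(U\tfrac14 IU^\dagger)=1$ for every unitary $U$, but not for $U=\ket{00}\bra{00}$. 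Moreover, with rank-one $V$ the part after $[-]$ is no longer trace-preserving, so your intended comparison is not even set up.

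The missing idea is to stay inside the span of unitaries:
\begin{enumerate}
\item Average $U$ against the sign-diagonal unitaries $\mathrm{diag}(\pm1,\pm1,\pm1,\pm1)$. This dephases $P$ into its blocks $Q,R,S,T$ along $\ket{00},\ket{01},\ket{10},\ket{11}$.
\item Take $V=\mathrm{Id}\otimes W$, so that $V\ket{00}\in\mathrm{span}\{\ket{00},\ket{01}\}$.
\item Plug in permutation unitaries for $U$ and use positivity of each $f_V(\cdot)$. This shows that $f_V(Q)$ is concentrated on $\ket{00}$ with weight $|\langle 00|V|00\rangle|^2$.
\item Trace preservation then gives $\mathrm{tr}(Q)=|\langle 00|V|00\rangle|^2$ for all such $V$. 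This is impossible, since $Q$ does not depend on $V$.
\end{enumerate}
This is exactly the paper's argument, with $V=X_{z,w}$.
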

\begin{proof}
  We suppose that both $[-]$ and $\{-\}$ appear
  exactly once in $\mathscr{C}$ and derive a
  contradiction. By applying if-then-else removal
  to $\mathscr{C}$, we obtain an if-then-else free
  quantum circuit with holes $\mathscr{E}$ such
  that $[-]$ and $\{-\}$ appear exactly once and
  $\mathscr{C} \approx \mathscr{E}$. We only give
  a proof for the case when $[-]$ appears left-hand side of $\{-\}$ in $\mathscr{E}$, i.e.,
  there is a subcircuit
  $\mathscr{E}_{0};\mathscr{E}_{1}$ of
  $\mathscr{E}$ such that $[-]$ appears in
  $\mathscr{E}_{0}$ and $\{-\}$ appears in
  $\mathscr{E}_{1}$. We can prove the other case
  in the same manner. Let us prepare some
  notations. We define quantum circuits
  $l:\bit \rhd \bra{00} \rhd l:\bit, \Theta$ and
  $l:\bit \rhd \mathscr{F} \rhd \Theta$ and
  $l:\bit \rhd \mathscr{G} \rhd \Theta$ by
  \begin{align*}
    \bra{00}
    &=
      0_{l_{0}} ;
      0_{l_{1}} ;
      \new_{l_{0}}^{l_{0}} ;
      \new_{l_{1}}^{l_{1}}, \\
    \mathscr{F}
    &= \bra{00} ;
      \mathscr{D}, \\
    \mathscr{G}
    &= \bra{00} ;
      \mathscr{E},
  \end{align*}
  We write $\mathbb{S}$ for the unit circle
  \begin{equation*}
    \{(z,w) \in \mathbb{C} \times \mathbb{C} \mid
    |z|^{2} + |w|^{2} = 1\}.
  \end{equation*}
  For $(z,w) \in \mathbb{S}$, we define a
  $4 \times 4$ unitary matrix $X_{z,w}$ by
  \begin{equation*}
    X_{z,w} =
    \begin{pmatrix}
      z & -w & 0 & 0 \\
      w^{\ast} & z^{\ast} & 0 & 0 \\
      0 & 0 & z & -w \\
      0 & 0 & w^{\ast} & z^{\ast} \\
    \end{pmatrix}.
  \end{equation*}
  For a $4 \times 4$ unitary matrix $U$, we simply
  write $\Theta \rhd U \rhd \Theta$ for the
  corresponding quantum circuit. For
  $n \in \mathbb{N}$, a $4 \times 4$ unitary
  matrix $U$ and a $4n \times 4n$ unitary matrix
  \begin{equation*}
    V =
    \begin{pmatrix}
      V_{1,1} & \cdots & V_{1,n} \\
      \vdots & \ddots & \vdots \\
      V_{n,1} & \cdots & V_{n,n} \\
    \end{pmatrix}
  \end{equation*}
  consisting of $4 \times 4$ submatrices
  $V_{i,j}$, we define a $4n \times 4n$ matrix
  $U \bullet V$ by
  \begin{equation*}
    U \bullet V =
    \begin{pmatrix}
      UV_{1,1}U^{\dagger}
      & \cdots
      & UV_{1,n}U^{\dagger} \\
      \vdots & \ddots & \vdots \\
      UV_{n,1}U^{\dagger}
      & \cdots
      & UV_{n,n}U^{\dagger} \\
    \end{pmatrix}.
  \end{equation*}
  We now proceed to prove the claim. Since
  $\mathscr{E}$ has no if-then-else constructor,
  and $[-]$ appears left-hand side of $\{-\}$ in
  $\mathscr{E}$, there exist a natural number
  $n \in \mathbb{N}$, a positive matrix
  $P \in \mathbb{C}^{4n \times 4n}$ and a family
  of trace-preserving completely positive maps
  \begin{equation*}
    \left\{
      f_{z,w} \colon
      \mathbb{C}^{4n \times 4n}
      \to
      \mathbb{C}^{4 \times 4}
    \right\}_{(z,w) \in \mathbb{S}}
  \end{equation*}
  such that for any $4 \times 4$ unitary matrix
  $U$ and any $(z,w) \in \mathbb{S}$,
  \begin{equation}\label{eq:E}
    \interp{1_{l:\bit};\mathscr{G}
      (U, X_{z,w})} =
    f_{z,w} (U \bullet P).
  \end{equation}
  This can be checked by induction on the
  structure of $\mathscr{E}$. Since
  $\mathscr{G} \approx \mathscr{F}$, it follows
  from \eqref{eq:E} that for any $4 \times 4$
  unitary matrix $U$ and any
  $(z,w) \in \mathbb{S}$, we have
  \begin{equation} \label{eq:E=D} %
    f_{z,w}(U \bullet P) = U
    \begin{pmatrix}
      |z|^{2} & zw & 0 & 0 \\
      z^{\ast} w^{\ast} & |w|^{2} & 0 & 0 \\
      0 & 0 & 0 & 0 \\
      0 & 0 & 0 & 0 \\
    \end{pmatrix}
    U^{\dagger}.
  \end{equation}
  Let us divide $P$ into the following $4 \times 4$
  submatrices $P_{i,j}$:
  \begin{equation*}
    \begin{pmatrix}
      P_{1,1} & \cdots & P_{1,n} \\
      \vdots & \ddots & \vdots \\
      P_{n,1} & \cdots & P_{n,n} \\
    \end{pmatrix},
  \end{equation*}
  and we denote the $(k,l)$th entry of $P_{i,j}$
  by $a_{i,j,k,l}$:
  \begin{equation*}
    P_{i,j} =
    \begin{pmatrix}
      a_{i,j,1,1}
      & a_{i,j,1,2}
      & a_{i,j,1,3}
      & a_{i,j,1,4} \\
      a_{i,j,2,1}
      & a_{i,j,2,2}
      & a_{i,j,2,3}
      & a_{i,j,2,4} \\
      a_{i,j,3,1}
      & a_{i,j,3,2}
      & a_{i,j,3,3}
      & a_{i,j,3,4} \\
      a_{i,j,4,1}
      & a_{i,j,4,2}
      & a_{i,j,4,3}
      & a_{i,j,4,4} \\
    \end{pmatrix}.
  \end{equation*}
  We define $4n \times 4n$ positive matrices $K$
  and $L$ by
  \begin{align*}
    K
    &=
      \begin{pmatrix}
        K_{1,1} & \cdots & K_{1,n} \\
        \vdots & \ddots & \vdots \\
        K_{n,1} & \cdots & K_{n,n} \\
      \end{pmatrix},
    &
      L
    &=
      \begin{pmatrix}
        L_{1,1} & \cdots & L_{1,n} \\
        \vdots & \ddots & \vdots \\
        L_{n,1} & \cdots & L_{n,n} \\
      \end{pmatrix}
  \end{align*}
  where $K_{i,j}$ and $L_{i,j}$ are given by
  \begin{align*}
    K_{i,j}
    &=
      \begin{pmatrix}
        a_{i,j,1,1}
        & a_{i,j,1,2}
        & a_{i,j,1,3}
        & 0 \\
        a_{i,j,2,1}
        & a_{i,j,2,2}
        & a_{i,j,2,3}
        & 0 \\
        a_{i,j,3,1}
        & a_{i,j,3,2}
        & a_{i,j,3,3}
        & 0 \\
        0
        & 0
        & 0
        & 0 \\
      \end{pmatrix},
    &
      L_{i,j}
    &=
      \begin{pmatrix}
        0
        & 0
        & 0
        & 0 \\
        0
        & 0
        & 0
        & 0 \\
        0
        & 0
        & 0
        & 0 \\
        0
        & 0
        & 0
        & a_{i,j,4,4} \\
      \end{pmatrix}.
  \end{align*}
  Since \eqref{eq:E=D} holds for any $U$
  and $(z,w) \in \mathbb{S}$, by replacing
  $U$ in \eqref{eq:E=D} with
  \begin{equation*}
    U
    \begin{pmatrix}
      1 & & & \\
        & 1 & & \\
        & & 1 & \\
        & & & -1 \\
    \end{pmatrix},
  \end{equation*}
  we see that for any $4 \times 4$ matrix $U$ and
  $(z,w) \in \mathbb{S}$, we also have
  \begin{equation} \label{eq:+++-}%
    f_{z,w}\left( U \bullet \left(
        \begin{pmatrix}
          1 & & & \\
            & 1 & & \\
            & & 1 & \\
            & & & -1 \\
        \end{pmatrix}
        \bullet P\right)
    \right)
    = U
    \begin{pmatrix}
      |z|^{2} & zw & 0 & 0 \\
      z^{\ast}w^{\ast} & |w|^{2} & 0 & 0 \\
      0 & 0 & 0 & 0 \\
      0 & 0 & 0 & 0 \\
    \end{pmatrix}
    U^{\dagger}.
  \end{equation}
  By adding each side of \eqref{eq:E=D} and
  \eqref{eq:+++-}, we see that for any
  $4 \times 4$ unitary matrix $U$ and any
  $(z,w) \in \mathbb{S}$,
  \begin{equation} \label{eq:fuq} %
    f_{z,w}(U \bullet (K + L)) = U
    \begin{pmatrix}
      |z|^{2} & zw & 0 & 0 \\
      z^{\ast}w^{\ast} & |w|^{2} & 0 & 0 \\
      0 & 0 & 0 & 0 \\
      0 & 0 & 0 & 0 \\
    \end{pmatrix}
    U^{\dagger}.
  \end{equation}
  By repeating similar arguments, we can show that
  for any $4 \times 4$ unitary matrix $U$ and any
  $(z,w) \in \mathbb{S}$, we have
  \begin{equation} \label{eq:QRST} %
    f_{z,w}(U \bullet (Q + R + S + T)) = U
    \begin{pmatrix}
      |z|^{2} & 0 & 0 & 0 \\
      0 & |w|^{2} & 0 & 0 \\
      0 & 0 & 0 & 0 \\
      0 & 0 & 0 & 0 \\
    \end{pmatrix}
    U^{\dagger}.
  \end{equation}
  Where the $4n \times 4n$ positive matrices $Q$, $R$, $S$ and $T$ are given by
  \begin{align*}
    Q
    &=
      \begin{pmatrix}
        Q_{1,1} & \cdots & Q_{1,n} \\
        \vdots & \ddots & \vdots \\
        Q_{n,1} & \cdots & Q_{n,n} \\
      \end{pmatrix},
    &
      R
    &=
      \begin{pmatrix}
        R_{1,1} & \cdots & R_{1,n} \\
        \vdots & \ddots & \vdots \\
        R_{n,1} & \cdots & R_{n,n} \\
      \end{pmatrix},
    \\
    S
    &=
      \begin{pmatrix}
        S_{1,1} & \cdots & S_{1,n} \\
        \vdots & \ddots & \vdots \\
        S_{n,1} & \cdots & S_{n,n} \\
      \end{pmatrix},
    &
      T
    &=
      \begin{pmatrix}
        T_{1,1} & \cdots & T_{1,n} \\
        \vdots & \ddots & \vdots \\
        T_{n,1} & \cdots & T_{n,n} \\
      \end{pmatrix}
  \end{align*}
  consisting of $4 \times 4$ submatrices
  \begin{align*}
    Q_{i,j}
    &=
      \begin{pmatrix}
        a_{i,j,1,1}
        & 0
        & 0
        & 0 \\
        0
        & 0
        & 0
        & 0 \\
        0
        & 0
        & 0
        & 0 \\
        0
        & 0
        & 0
        & 0 \\
      \end{pmatrix},
    &
      R_{i,j}
    &=
      \begin{pmatrix}
        0
        & 0
        & 0
        & 0 \\
        0
        & a_{i,j,2,2}
        & 0
        & 0 \\
        0
        & 0
        & 0
        & 0 \\
        0
        & 0
        & 0
        & 0 \\
      \end{pmatrix},
    \\
    S_{i,j}
    &=
      \begin{pmatrix}
        0
        & 0
        & 0
        & 0 \\
        0
        & 0
        & 0
        & 0 \\
        0
        & 0
        & a_{i,j,3,3}
        & 0 \\
        0
        & 0
        & 0
        & 0 \\
      \end{pmatrix},
    &
      T_{i,j}
    &=
      \begin{pmatrix}
        0
        & 0
        & 0
        & 0 \\
        0
        & 0
        & 0
        & 0 \\
        0
        & 0
        & 0
        & 0 \\
        0
        & 0
        & 0
        & a_{i,j,4,4} \\
      \end{pmatrix}.
  \end{align*}
  When $U$ in \eqref{eq:QRST} is the identity
  matrix, we have
  \begin{equation} \label{eq:id} %
    f_{z,w}(Q) + f_{z,w}(R) + f_{z,w}(S) +
    f_{z,w}(T) =
    \begin{pmatrix}
      |z|^{2} & 0 & 0 & 0 \\
      0 & |w|^{2} & 0 & 0 \\
      0 & 0 & 0 & 0 \\
      0 & 0 & 0 & 0 \\
    \end{pmatrix}.
  \end{equation}
  We define
  \begin{equation*}
    (q_{1},q_{2},q_{3},q_{4}),
    \quad (r_{1},r_{2},r_{3},r_{4}),
    \quad (s_{1},s_{2},s_{3},s_{4}),
    \quad (t_{1},t_{2},t_{3},t_{4})
  \end{equation*}
  to be the main diagonals of $f_{z,w}(Q)$,
  $f_{z,w}(R)$, $f_{z,w}(S)$ and $f_{z,w}(T)$
  respectively. Since $f_{z,w}(Q)$, $f_{z,w}(R)$,
  $f_{z,w}(S)$ and $f_{z,w}(T)$ are positive, all
  elements in the main diagonals are non-negative
  real numbers. By \eqref{eq:id}, we have
  \begin{align}
    q_{1} + r_{1} + s_{1} + t_{1} &= |z|^{2},
                                    \label{eq:qrst1} \\
    q_{2} + r_{2} + s_{2} + t_{2} &= |w|^{2},
                                    \label{eq:qrst2} \\
    q_{3} + r_{3} + s_{3} + t_{3} &= 0,
                                    \label{eq:qrst3} \\
    q_{4} + r_{4} + s_{4} + t_{4} &= 0.
                                    \label{eq:qrst4}
  \end{align}
  From \eqref{eq:qrst3} and \eqref{eq:qrst4}, we
  obtain
  \begin{equation*}
    q_{3} = r_{3} = s_{3} = t_{3} =
    q_{4} = r_{4} = s_{4} = t_{4} = 0.
  \end{equation*}
  By replacing $U$ in \eqref{eq:QRST} with
  \begin{equation*}
    \begin{pmatrix}
      1 & 0 & 0 & 0 \\
      0 & 0 & 1 & 0 \\
      0 & 1 & 0 & 0 \\
      0 & 0 & 0 & 1 \\
    \end{pmatrix},
  \end{equation*}
  we obtain
  \begin{equation*}
    f_{z,w}(Q) + f_{z,w}(U \bullet R) + f_{z,w}(U
    \bullet S) + f_{z,w}(T) =
    \begin{pmatrix}
      |z|^{2} & 0 & 0 & 0 \\
      0 & 0 & 0 & 0 \\
      0 & 0 & |w|^{2} & 0 \\
      0 & 0 & 0 & 0 \\
    \end{pmatrix}.
  \end{equation*}
  It follows from positivity of $f_{z,w}(Q)$,
  $f_{z,w}(U \bullet R)$, $f_{z,w}(U \bullet S)$
  and $f_{z,w}(T)$ that we have
  \begin{equation*}
    q_{2} = t_{2} = 0.
  \end{equation*}
  In the same manner, we can prove
  \begin{equation*}
    r_{1} = s_{1} = s_{2} = t_{1} = t_{2} = 0.
  \end{equation*}
  Therefore, by \eqref{eq:qrst1} and \eqref{eq:qrst2},
  we obtain
  \begin{equation*}
    q_{1} = |z|^{2},
    \qquad
    r_{2} = |w|^{2}.
  \end{equation*}
  Since $f_{z,w}$ preserves traces,
  \begin{equation*}
    \mathrm{tr}(Q) = |z|^{2},
    \qquad
    \mathrm{tr}(R) = |w|^{2}.
  \end{equation*}
  These equalities hold for any
  $(z,w) \in \mathbb{S}$, a contradiction.
\end{proof}

\section{Proof of Theorem~\ref{thm:if-then-else-removal}}
\label{asec:proof-if-then-else-removal}
\begin{theorem}
  \label{athm:if-then-else-removal}
  For any
  circuit $\Gamma\rhd C\rhd \Delta$ of size $n$
  there exists a plain circuit
  $\Gamma\rhd D\rhd \Delta$ of size
  $\mathcal O(n)$ such that
  $\interp{C}=\interp{D}$.
\end{theorem}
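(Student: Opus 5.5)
The plan is to eliminate conditionals one at a time, innermost first. Each $\ite{l}{D}{E}$ is replaced by the plain gadget described in the proof sketch of Theorem~\ref{thm:if-then-else-removal}. Correctness then follows from compositionality of $\interp{-}$ (Fig.~\ref{fig:circden}); the size bound is a separate accounting argument.

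\textbf{Step 1 (conditional swap).} For $\Gamma=(l_1:\baseT_1,\ldots,l_k:\baseT_k)$ and a copy $\Delta$ with the same types, construct $\tau_{\Gamma,\Delta,l}$ wire by wire as a composite of one-wire conditional swaps $\tau_{l_i,r_i,l}$, so $|\tau_{\Gamma,\Delta,l}|=O(k)$. I will check semantically that $\interp{\termOne_l;\tau}$ is the symmetry on $X\otimes X$ and $\interp{\termZero_l;\tau}$ is the identity. Since $\interp{-}$ is linear in the $(1,1)$ component of $l$, it suffices to verify these two cases, using the explicit formulas for the maps of the basic gates. If the gate set has no primitive controlled swap, I will realise the qubit case with a Fredkin-like decomposition over Clifford+T, classically controlled by $l$, and the bit case by the analogous bit construction.

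\textbf{Step 2 (killing a branch).} For the branch that is \emph{not} selected, feed it fresh inputs $\Xi$ built from $\termZero;\new$ (for qubits) or $\termZero$ (for bits), and at the end measure and discard its outputs. The key lemma is that for any plain $\Gamma\rhd E\rhd\Delta$ the composite ``prepare $\Xi$; $E$; $\meas$; $\discard$'' has semantics equal to the scalar $1$ on the tensor factor it acts on. This holds because every plain circuit is interpreted by a trace-preserving CP map: each gate in $\QO$ is trace-preserving, and this property is closed under $;$ and $\otimes$. Combining this lemma with Step 1 and the clause of Fig.~\ref{fig:circden} for conditionals, the gadget computes
\[
(\vec A,\vec B)\mapsto\interp{D}(\vec A)+\interp{E}(\vec B),
\]
which is exactly $\interp{\ite{l}{D}{E}}$. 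An induction on $C$, replacing conditionals bottom-up so that $D$ and $E$ are already plain, then gives $\interp{C}=\interp{D}$ for the output circuit $D$.

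\textbf{Step 3 (size).} This is the step I expect to be the main obstacle. A naive count charges each conditional $O(|\Gamma|+|\Delta|)$, and with nested conditionals the same wires are swapped again at every nesting level, which could give $O(n\cdot\text{depth})$.

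I would first try to fix the notion of size so that a conditional's branch typing is charged to the conditional itself. For this, the $\Theta$ passed through a gate in rule $\QOrule$ should not be included in the branch environments: before applying the gadget, restrict $\Gamma,\Delta$ to the labels actually touched by $D$ or $E$, using a frame lemma stating that untouched labels pass through unchanged. With this restriction the environment of each conditional is bounded by the number of wire endpoints appearing syntactically in its branches, and I would try to charge these to the gates introducing them.

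If nesting still causes overcounting, the fallback is to flatten nested guards. Compute one classical control bit per conditional, namely the conjunction of its guard with the enclosing control bit. Then perform a single layer of conditional swaps per gate, on that gate's own wires only, controlled by its unique control bit. This costs $O(\text{arity})$ per gate plus $O(1)$ per conditional, which is $O(n)$ overall.
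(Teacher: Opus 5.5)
Your Steps 1 and 2 are the paper's proof. It builds $\sigma_{L,\baseT}$ from $\new$, a Fredkin gate and $\meas$ on the control bit, and chains these into $\tau_{\Gamma_1,\Gamma_2,l}$. It then replaces each $\ite{l}{D}{E}$ by the gadget that runs both branches and kills the unselected one. The paper writes this out only for $\Gamma=\Delta=\{l_1:\qbit\}$; your trace-preservation lemma and your treatment of general $\Gamma,\Delta$ supply what it leaves implicit.

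For the size bound, the paper only observes that $|\tau_{\Gamma,\Delta,l}|$ is proportional to $|\Gamma|+|\Delta|$ and declares the total linear. That inference holds only if the size of a circuit charges each conditional for the width of its environments, and the paper never fixes a size measure. Your diagnosis in Step 3 is correct on both counts.
\begin{itemize}
\item Because of the pass-through $\Theta$ in rule $\QOrule$, a conditional between two one-gate branches over $k$ wires already yields $\Theta(k)$ gates.
\item Even restricting to touched labels, nesting still costs a depth factor. Take $C_0=H^{q_1}_{q_1};\dots;H^{q_k}_{q_k}$, $C_1=\ite{g_1}{C_0}{H^{q_1}_{q_1}}$ and $C_{i+1}=\ite{g_{i+1}}{(\termZero_{g_i};C_i)}{H^{q_1}_{q_1}}$. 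Every level touches all $k$ wires, so the construction produces $\Theta(km)$ gates from $C_m$, which has size $O(k+m)$.
\end{itemize}

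The gap is in your fallback. Swapping a gate's own wires with dummies under $\lnot c_g$ and swapping them back realises `apply $g$ iff $c_g$' only for type-preserving gates. It fails for $\meas$, $\new$, $\termZero$, $\termOne$ and $\discard$. It also fails whenever the two branches consume or produce the labels of $\Gamma,\Delta$ through different gates. In those cases the live value of a label ends up on a wire that depends on which branch ran. Since labels in a plain circuit are unique, the two branches cannot both produce the same output label. If you reconcile them with a conditional swap on the touched labels at each exit, you are back to the $\tau$ of Step 2 and its $\Theta(n\cdot\mathrm{depth})$ count. So the claimed $O(1)$ per conditional does not follow from what you wrote.

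The missing ingredient is a static assignment of labels to wires, updated gate by gate:
\begin{itemize}
\item Under $\lnot c_g$, swap $g$'s inputs onto fresh dummies, and let those dummies become the new wires of the input labels. The sibling or enclosing branch then finds the live values there.
\item Compute $g$'s outputs on fresh wires. Under $c_g$, swap each output into the wire already assigned to that label, if one exists. Such a wire exists when the label is a bypassed input, or when the sibling branch produced the same label.
\item Handle the consumption of a guard the same way, under the enclosing control bit, since the guard may be live for the sibling branch.
\end{itemize}
With this bookkeeping each gate costs $O(\mathrm{arity})$ conditional swaps. Each conditional costs $O(1)$, for copying and conjoining control bits with $\new$, a Clifford+T Toffoli and $\meas$. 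No merge is needed at any exit. Every leftover wire carries garbage that, for each value of the control bits, is in product with the live data. Your Step 2 lemma then lets you measure and discard it, which gives the $O(n)$ bound.
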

\begin{proof}
  We prove the claim by induction on the structure
  of $C$. We first prepare a few circuits. Let
  $\boolT$ be a base type, and let
  $L = (l_{1},l_{2},l_{3})$ be a triple of labels.
  For an environment $\Delta$ such that
  $\{l_{1}:\boolT,l_{2}:\boolT,l_{3}:\bit\}
  \subseteq \Delta$, we define a circuit
  \begin{equation*}
    \Delta
    \rhd \sigma_{L,\boolT}
    \rhd
    \Delta
  \end{equation*}
  as follows:
  \begin{align*}
    \sigma_{L,\qbit}
    &=
      \new^{l_{3}}_{l_{3}};
      \mathtt{CSWAP}^{L}_{L};
      \meas^{l_{3}}_{l_{3}},
    \\
    \sigma_{L,\bit}
    &=
      \new^{l_{1}}_{l_{1}};
      \new^{l_{2}}_{l_{2}};
      \sigma_{L,\qbit};
      \meas^{l_{1}}_{l_{1}};
      \meas^{l_{2}}_{l_{2}}
  \end{align*}
  where $\mathtt{CSWAP}$ is the controlled-swap gate (also called the
  \emph{Fredkin gate}) given by
  \begin{equation*}
    \mathtt{CSWAP} \ket{ijk} =
    \begin{cases}
      \ket{jik}, & \text{if $k = 1$}, \\
      \ket{ijk}, & \text{if $k = 0$}.
    \end{cases}
  \end{equation*}
  We note that we can construct $\mathtt{CSWAP}$
  using the Toffoli gate, and the Toffoli gate is
  definable in Clifford$+T$
  \cite{Nielsen_Chuang_2010}. To illustrate the
  constructions of the plain circuits,
  we adopt a graphical presentation of quantum
  circuits in an informal way. We illustrate the
  circuit $\sigma_{L,\boolT}$ as follows:
  \[
    \begin{tikzpicture}[baseline=(current bounding box.south)]
  \node[draw, rectangle, minimum width=0.75cm, minimum height=1cm]
  at (0,0) (s) {$\sigma_{L,\boolT}$};
  \draw (s.west) -- ++(-0.5,0) node[left] {$\boolT$};
  \draw ([yshift=0.4cm]s.west) -- ++(-0.5,0) node[left] {$\boolT$};
  \draw ([yshift=-0.4cm]s.west) -- ++(-0.5,0) node[left] {$\bit$};
  \draw (s.east) -- ++(0.5,0) node[right] {$\boolT$};
  \draw ([yshift=0.4cm]s.east) -- ++(0.5,0) node[right] {$\boolT$};
  \draw ([yshift=-0.4cm]s.east) -- ++(0.5,0) node[right] {$\bit$};
\end{tikzpicture}.
  \]
  Here, wires on the left (right) hand side
  receive inputs (return outputs). It is not
  difficult to check that when $\boolT = \qbit$,
  \begin{align*}
    \interp{\Gamma \rhd
    \termOne_{l_{3}} ; \sigma_{L,\qbit} \rhd \Delta
    }
    &=
      \interp{
      \Gamma
      \rhd
      \termOne_{l_{3}} ;
      \mathtt{SWAP}^{l_{1},l_{2}}_{l_{1},l_{2}} \rhd \Delta},
    \\
    \interp{
    \Gamma
    \rhd
    \termZero_{l_{3}} ; \sigma_{L,\qbit}
    \rhd
    \Delta
    }
    &=
      \interp{
      \Gamma
      \rhd
      \termZero_{l_{3}}
      \rhd
      \Delta
      }
  \end{align*}
  where $\Gamma = \Delta \setminus \{l_{3}:\bit\}$ and
  $\mathtt{SWAP}$ is given by
  $\mathtt{SWAP} \ket{ij} = \ket{ji}$. These
  equations mean that $\sigma_{L,\qbit}$ and $\sigma_{L,\bit}$
  conditionally swap its inputs as illustrated as
  follows:
  \begin{equation*}
    \begin{gathered}
      \begin{tikzpicture}[baseline=(current bounding box.south)]
  \node[draw, rectangle, minimum width=0.75cm, minimum height=1cm]
  at (0,0) (s) {$\sigma_{L,\qbit}$};
  \node[draw, rectangle, minimum width=0.4cm, minimum height=0.4cm]
  at ([xshift=-0.5cm, yshift=-0.4cm]s.west) (0left) {$1$};
  \draw (s.west) -- ++(-1,0);
  \draw ([yshift=0.4cm]s.west) -- ++(-1,0);
  \draw (0left.east) -- ([yshift=-0.4cm]s.west);
  \draw (s.east) -- ++(0.5,0);
  \draw ([yshift=0.4cm]s.east) -- ++(0.5,0);
  \draw ([yshift=-0.4cm]s.east) -- ++(0.5,0);

  \draw ([xshift=1.25cm]s.east) to[out=0,in=180] ++(2,0.4);
  \draw ([xshift=1.25cm,yshift=0.4cm]s.east) to[out=0,in=180] ++(2,-0.4);
  \node[draw, rectangle, minimum width=0.4cm, minimum height=0.4cm]
  at ([xshift=2cm, yshift=-0.4cm]s.east) (0right) {$1$};
  \draw (0right.east) -- ([xshift=3.25cm,yshift=-0.4cm]s.east);
  \node at ([xshift=0.875cm]s.east) {$=$};

\end{tikzpicture} \\
      \begin{tikzpicture}[baseline=(current bounding box.south)]
  \node[draw, rectangle, minimum width=0.75cm, minimum height=1cm]
  at (0,0) (s) {$\sigma_{L,\qbit}$};
  \node[draw, rectangle, minimum width=0.4cm, minimum height=0.4cm]
  at ([xshift=-0.5cm, yshift=-0.4cm]s.west) (1swap) {$0$};
  \draw (s.west) -- ++(-1,0);
  \draw ([yshift=0.4cm]s.west) -- ++(-1,0);
  \draw (1swap.east) -- ([yshift=-0.4cm]s.west);
  \draw (s.east) -- ++(0.5,0);
  \draw ([yshift=0.4cm]s.east) -- ++(0.5,0);
  \draw ([yshift=-0.4cm]s.east) -- ++(0.5,0);

  \draw ([xshift=1.25cm]s.east) -- ++(2,0);
  \draw ([xshift=1.25cm,yshift=0.4cm]s.east) -- ++(2,0);
  \node[draw, rectangle, minimum width=0.4cm, minimum height=0.4cm]
  at ([xshift=2cm, yshift=-0.4cm]s.east) (1right) {$0$};
  \draw (1right.east) -- ([xshift=3.25cm,yshift=-0.4cm]s.east);
  \node at ([xshift=0.875cm]s.east) {$=$};
\end{tikzpicture}      
    \end{gathered}    
  \end{equation*}
  We generalize the conditional
  swap $\sigma_{\Gamma}$ to conditional swaps for
  environments. For environments $\Gamma_{1}$ and $\Gamma_{2}$,
  we construct a circuit
  \begin{equation*}
    \Gamma_{1},\Gamma_{2},l:\bit \rhd
    \tau_{\Gamma_{1},\Gamma_{2},l:\bit} \rhd
    \Gamma_{1},\Gamma_{2},l:\bit
  \end{equation*}
  by sequentially composing the circuits
  $\sigma_{L,\boolT}$. To be concrete, we define
  $\tau_{\Gamma_{1},\Gamma_{2},l:\bit}$ by
  induction on the size of $\Gamma_{i}$. When
  $\Gamma_{1}$ and $\Gamma_{2}$ are empty, we
  define $\tau_{\Gamma_{1},\Gamma_{2},l:\bit}$ to
  be the identity gate on $l:\bit$. When
  $\Gamma_{i}$ is a disjoint union
  $\{l_{i}:\boolT\} \cup \Theta_{i}$, we define
  $\tau_{\Gamma_{1},\Gamma_{2},l:\bit}$ to be the
  sequential composition of:
  \begin{itemize}
  \item $\sigma_{(l_{1},l_{2},l),\baseT}$ acting on the control bit
    (the last bit) and $\{l_{1} : \boolT\}$ in
    $\Gamma_{1}$ and $\{l_{2} : \boolT\}$ in the
    second $\Gamma_{2}$; and
  \item $\tau_{\Theta_{1},\Theta_{2},l:\bit}$
    acting on the control bit and the remaining part
    $\Theta_{1}$ and $\Theta_{2}$.
  \end{itemize}
  The behavior of the circuit
  $\tau_{\Gamma_{1},\Gamma_{2},l:\bit}$ is similar
  to that of $\sigma_{L,l:\bit}$. If the control
  bit is $1$, then
  $\tau_{\Gamma_{1},\Gamma_{2},l:\bit}$ swaps
  (q)bits in $\Gamma_{1}$ and (q)bits in
  $\Gamma_{2}$; and if the control bit is $0$, then
  $\tau_{\Gamma_{1},\Gamma_{2},l:\bit}$ does
  nothing. Let us start the inductive proof
  of the claim. We give the proof only for the
  non-trivial case
  \begin{equation*}
    \Gamma , l:\bit \rhd
    \ite{l}{D}{E}\rhd \Delta
  \end{equation*}
  with
  $\Gamma \rhd D \rhd \Delta$ and
  $\Gamma \rhd E \rhd \Delta$. For
  simplicity, we suppose that
  $\Gamma$ and $\Delta$ are
  of the following forms:
  \begin{equation*}
    \Gamma = \Delta = \{l_{1}:\qbit\}.
  \end{equation*}
  Under this assumption, we define a circuit
  \begin{math}
    \Gamma, l:\bit \rhd F \rhd \Delta
  \end{math}
  to be
  \begin{equation*}
    0_{l_{2}};
    \new^{l_{2}}_{l_{2}};
    \tau_{\Gamma,\Xi,l:\bit};
    D';
    E; \\
    \tau_{\Gamma,\Xi,l:\bit};
    \discard^{l};
    \meas^{l_{2}}_{l_{2}};
    \discard^{l_{2}}
  \end{equation*}
  where $\Xi = \{l_{2}:\qbit\}$ and
  $\Xi \rhd D' \rhd \Xi$ is a
  quantum gate obtained by renaming $l_{1}$ in $D$
  by $l_{2}$. The structure of this circuit
  can be illustrated as follows:
  \[
    \begin{tikzpicture}[baseline=(current bounding box.south)]
  \node[draw, rectangle, minimum width=0.5cm, minimum height=0.5cm] (0) {$0$};
  \node[draw, rectangle, minimum width=0.5cm, minimum height=0.5cm, right=0.5cm of 0] (new) {$\new$};
  \node[draw, rectangle, minimum width=0.5cm, minimum height=1.7cm, right=0.5cm of new, yshift=-0.6cm] (tau) {$\tau$};
  \node[draw, rectangle, minimum width=0.5cm, minimum height=0.5cm, right= 0.5cm of tau, yshift=0.6cm] (D) {$D'$};
  \node[draw, rectangle, minimum width=0.5cm, minimum height=0.5cm, right= 0.5cm of tau] (E) {$E$};
  \node[draw, rectangle, minimum width=0.5cm, minimum height=1.7cm, right=0.5cm of E] (tau2) {$\tau$};
  \node[draw, rectangle, minimum width=0.5cm, minimum height=0.5cm, right=0.5cm of tau2, yshift=0.6cm] (meas) {$\meas$};

  \draw (new.east) -- node[above]{$\scriptstyle \Xi$} (new.east-|tau.west);
  \draw (tau.west) -- (0.west |- tau.west) node[left] {$\Gamma$};
  \draw (0.east) -- (new.west);
  \draw ([yshift=-0.6cm]tau.west) -- node[above] {$\scriptstyle \bit$}
  ([yshift=-0.6cm]tau.west -| 0.west);
  \draw ([xshift=0.5cm]meas.east) -- ++(0,0.2);
  \draw ([xshift=0.5cm]meas.east) -- ++(0,-0.2);
  \draw ([xshift=0.55cm]meas.east) -- ++(0,0.15);
  \draw ([xshift=0.55cm]meas.east) -- ++(0,-0.15);
  \draw ([xshift=0.6cm]meas.east) -- ++(0,0.1);
  \draw ([xshift=0.6cm]meas.east) -- ++(0,-0.1);
  \draw (meas.east) -- ([xshift=0.5cm]meas.east);
  \draw (tau2.east) -- (tau2.east-|meas.east) -- ++(0.5,0)
  node[right] {$\Xi$};
  \draw ([xshift=0.5cm,yshift=-1.2cm]meas.east) -- ++(0,0.2);
  \draw ([xshift=0.5cm,yshift=-1.2cm]meas.east) -- ++(0,-0.2);
  \draw ([xshift=0.55cm,yshift=-1.2cm]meas.east) -- ++(0,0.15);
  \draw ([xshift=0.55cm,yshift=-1.2cm]meas.east) -- ++(0,-0.15);
  \draw ([xshift=0.6cm,yshift=-1.2cm]meas.east) -- ++(0,0.1);
  \draw ([xshift=0.6cm,yshift=-1.2cm]meas.east) -- ++(0,-0.1);
  \draw ([yshift=0.6cm]tau.east) -- (D.west);
  \draw (tau.east) -- (E.west);
  \draw ([yshift=0.6cm]tau2.west) -- (D.east);
  \draw (tau2.west) -- (E.east);
  \draw ([yshift=-0.6cm]tau.east) --
  node[below] {$\scriptstyle \bit$}
  ([yshift=-0.6cm]tau2.west);
  \draw ([yshift=0.6cm]tau2.east) -- (meas.west);
  \draw ([yshift=-0.6cm]tau2.east) --
  ([yshift=-0.6cm]tau2.east-|meas.east) --
  ++(0.5,0);
  ;

\end{tikzpicture}
  \]
  The gate $\myground$ indicates a discarded bit. Observe that, if we
  feed the bit $1$ to $l:\bit$, then the circuit above will compute
  $D$ and ``kill'' $E$, that is, feed it with $\ket 0$ and measure and
  discard its result; conversely, if we feed the bit $0$ to $l:\bit$,
  then the circuit above will similarly compute $E$ and ``kill'' $D$.
  In fact, we have $\interp{\ite{C}{D}{E}} = \interp{F}$. If we obtain
  a circuit $Y$ by applying this if-then-else elimination to a circuit
  $X$, then it is easy to see that the size of $Y$ is proportional to
  the sum of the size of $X$.
\end{proof}

\section{Bisimulation between the \circuittkmz and the \evaltkm}

In this section we establish a bisimulation between the machines \evaltkm and \circuittkmz. 
More precisely, we will establish a bisimulation relating a single execution of \circuittkmz,
producing a quantum circuit, with a family of probabilistic executions of
\evaltkm performed in parallel, giving rise to a distribution of quantum states.

This section is organized as follows: in Subsection \ref{subsec:slicing} we first discuss some properties of (extended) circuits and their semantics; then, in Subsection \ref{subsec:evaltkm} we provide a formal definition of the \evaltkm (by adapting the machine from~\cite{DalLago2017} to our setting); in Subsection \ref{subsec:evalpara} we introduce a notion of parallel execution for the \evaltkm; in Subsection \ref{subsec:bisi} we define and prove the bisimulation between parallel executions of the \evaltkm and executions of the \circuittkmz; finally, in Subsection \ref{subsec:termi} we deduce the termination and confluence of the \circuittkmz.

\subsection{Slicing (Extended) Circuits}\label{subsec:slicing}

\subsubsection{Basic Circuits}

Given a circuit $C$, we will suppose given an indexing $\alpha_1,\dots,\alpha_n\in \mathbb N$ of all $\meas$-gates occurring in $C$. The label $\alpha$ of a gate $\meas^l_{l'}$ should not be confused with the labels $l,l'$ of the gate's wire. 
We will use $\ell$ to refer to either a wire label or a $\meas$ gate index.

For any circuit $C$, let:
\begin{itemize}
\item $\LITE(C)$ the set of labels $l$ such that $\ite{l}{D_1}{D_2}$ occurs in $C$;
\item $\LMEAS(C)$ the set of indices $\alpha$ of $\meas$-gates occurring in $C$;
\item $\mathcal L(C)=\LITE(C)\cup\LMEAS(C)$.
\end{itemize}

We define the following sets of addresses:
\begin{definition}\label{def:circadd}
Let $C$ be a circuit. A set $L\subseteq \mathcal L(C)$ is called a \emph{path} if one of the following hold:
\begin{itemize}
\item $C=c\neq\meas$ and $L=\emptyset$;
\item $C=\meas$ has index $\alpha$ and $L=\{\alpha\}$;
\item $C=D_1;D_2$ and $L=L_1\cup L_2$, where $L_i$ is a path in $D_i$;
\item $C=\ite{l}{D_1}{D_2}$ and either $L=\{l\}\cup L_1$, with $L_1$ a path in $D_1$ or
$L=\{l\}\cup L_2$, with $L_2$ a path in $D_2$.

\end{itemize}

An \emph{address for $C$} is a partial function $s:\mathcal L(C)\rightharpoonup
\{0,1\}$ whose domain $\mathrm{dom}(s)\subseteq \mathcal L(C)$ is a path of $C$,
and the following condition holds: for all subterms of $C$ of the form
$\ite{l}{D_1}{D_2}$, such that $\mathrm{dom}(s)=\{l\}\cup L_b$, where $L_b$ is a
path in $D_i$, $s(l)=b$.

We let $@C$ indicate the set of addresses of $C$.
\end{definition}

Given a circuit $C$ and an address $s\in @C$, let us define what it means for a gate $U$ occurring in $C$ to \emph{occur at address $s$}:
\begin{itemize}
\item if $C=U$, then $U$ occurs in $C$ at address $\emptyset$;
\item if $C=C_1;C_2$ and $U$ occurs in $C_b$ at address $s$, then it occurs in $C$ at address $s$;
\item if $U=\ite{l}{D_1}{D_2}$, and $U$ occurs in $D_b$ at address $s'\in @D_b$, then $U$ occurs in $C$ at address $s=s'\cup\{(l,b)\}$.
\end{itemize}

In the following, we will adopt the following notation for projections and injections maps:
Given an environment of the form $\Gamma=\bit^l,\Delta$, we define, for $b\in \{0,1\}$, the maps 
\begin{align*}
\pi^l_b
&: \interp{\bit^l, \Delta} \simeq \interp{\Delta} \times \interp{\Delta} \to
\interp{\Delta} & 
\iota^l_b & : \interp{\Delta} \to \interp{\Delta} \times
\interp{\Delta} \simeq \interp{\bit^l, \Delta}
\end{align*}
In this way, the semantics
of if-then-else in circuits can be defined as:
$$
\interp{\ite{l}{C}{D}}= \interp{C}\circ \pi^l_0+ \interp{D}\circ \pi^l_1.
$$

For any well-typed circuit $\Gamma\rhd C\rhd \Delta$ and address $s\in @C$, we
define the CPM $C^{\sharp s}:\interp\Gamma\to\interp\Delta$ by induction as
follows:

\begin{itemize}
\item if $C=c\neq \meas$, then $C^{\sharp s}=\interp{c}$;

\item if $C=\meas^{l:\bit}_{l':\qbit}$ with index $\alpha$, then $C^{\sharp s}=\iota_{s(\alpha)}^l\circ \pi_{s(\alpha)}^l\circ \interp{\meas}$, that is, $C^{\sharp s}=\Phi(p)$, where $p:(2)\to (1,1)$ is the map given by
$$p\begin{pmatrix}
        a & b \\ c & d
      \end{pmatrix} = (1-s(\alpha)) (a, 0)+s(\alpha) (0, d).$$
            
\item if $C=D;E$, then $C^{\sharp s}=E^{\sharp(s|_{\mathcal L(E)})}\circ D^{\sharp(s|_{\mathcal L(D)})}$;
      
\item if $C=  \ite{l}{D_1}{D_2}$, then 
$C^{\sharp s}:=D_{s(l)}^{\sharp(s|_{\mathcal L(D_{s(\alpha)})})}\circ \pi_{s(l)}^l.$

\end{itemize}

For readability, but with a slight abuse of notation, in the following we will often write $C^{\sharp s}$ even when $s$ is a partial function whose \emph{restriction} $s|_{\mathcal L(C)}$ to $\mathcal L(C)$ yields an address in $@C$.

Given a circuit $C$ and a partial function $s:\mathcal L(C)\rightharpoonup\{0,1\}$, whenever $s\notin @C$ we set $C^{\sharp s}=0$. This allows us to rewrite the definition of $C^{\sharp s}$, when $C=\ite{l}{D,E}$ as
\begin{equation}\label{eq:itesl}
C^{\sharp s}=
(\ite{l}{D}{E})^{\sharp s}= (1-s(l))\Big(D^{\sharp s}\circ \pi_0^l\Big )
+ s(l)\Big(E^{\sharp s}\circ \pi_1^l\Big ).
\end{equation}

\begin{definition}[ite-closed circuit]
A well-typed circuit $\Gamma\rhd C\rhd \Delta$ is \emph{ite-closed} if any subcircuit of the form $\ite{l}{D}{E}$ of $C$ occurs in a composition $F;\ite{l}{D}{E}$, for some subcircuit $\Gamma'\rhd F\rhd \bit^l,\Delta'$.
\end{definition}

\begin{lemma}[Boolean choice]\label{lemma:circuitbool}
For any ite-closed circuit $\Gamma\rhd C\rhd \bit^l,\Delta$ and $s\in @C$, either
$\pi^l_0\circ C^{\sharp s}=0$ or $\pi^l_1\circ C^{\sharp s}=0$.

\end{lemma}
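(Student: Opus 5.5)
The plan is to prove the lemma by structural induction on $C$, using the typing rules of Fig.~\ref{fig:qcircrules} and the clauses defining $C^{\sharp s}$. The induction tracks the \emph{origin} of the output wire $l:\bit$. Only three kinds of gate can put a bit on a wire: $\termZero$, $\termOne$ and $\meas$. Every other constructor either leaves a bit wire untouched, by acting on the context $\Theta$ of the $\QOrule$-rule, or consumes it ($\discard$, $\new$, or the guard of an $\mathtt{ite}$). We then show that every slice $C^{\sharp s}$ is supported on a single component of $\interp{\bit^l,\Delta}\simeq\interp{\Delta}\times\interp{\Delta}$.

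\textbf{Base cases.} If $C$ is $\termZero_l$ or $\termOne_l$ (possibly tensored with $\Theta$), then $C^{\sharp s}=\interp{\Theta}\otimes\pinterp{c}$ lands in the component $1$ or $0$ respectively, by the definition of $\pinterp{\termOne}$ and $\pinterp{\termZero}$. So one projection vanishes. If $C=\meas^{l'}_{l}$ with index $\alpha$, then by definition $C^{\sharp s}=\iota^l_{s(\alpha)}\circ\pi^l_{s(\alpha)}\circ\interp{\meas}$. Hence $\pi^l_{1-s(\alpha)}\circ C^{\sharp s}=\pi^l_{1-s(\alpha)}\circ\iota^l_{s(\alpha)}\circ\cdots=0$, since $\pi_b\circ\iota_{b'}=0$ for $b\neq b'$. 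The case where $l$ is a wire of $\Theta$ carried through a gate acting elsewhere is absorbed into the sequential case below. In that case $c$ acts only on other wires, so $\pi^l_b\circ(\interp{\Theta}\otimes\pinterp{c})=(\pi^l_b\otimes\mathrm{id})\circ\cdots$ factors through $\pi^l_b$ applied before the gate.

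\textbf{Sequential case $C=D;E$.} Here $C^{\sharp s}=E^{\sharp s}\circ D^{\sharp s}$. If $l$ is an output wire of $E$ created by some gate inside $E$, the induction hypothesis for $E$ gives the claim directly. Otherwise $E$ merely transports $l$. In that case $\pi^l_b\circ E^{\sharp s}=(E')^{\sharp s}\circ\pi^l_b$, where $E'$ is $E$ with the $l$-wire removed, and this commutation is checked by a small auxiliary induction. So $\pi^l_b\circ C^{\sharp s}=(E')^{\sharp s}\circ\pi^l_b\circ D^{\sharp s}$, and we apply the induction hypothesis to $D$. Ite-closedness is inherited by subcircuits, so the hypothesis applies.

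\textbf{Conditional case.} By ite-closedness, a conditional only occurs as $F;\ite{l'}{D_1}{D_2}$. Using equation~\eqref{eq:itesl}, the slice is $D_{s(l')}^{\sharp s}\circ\pi^{l'}_{s(l')}\circ F^{\sharp s}$. We apply the induction hypothesis to the branch $D_{s(l')}$ if $l$ is produced inside it, and otherwise to $F$ via the transport argument above. The guard $l'$ differs from $l$, so it does not interfere.

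\textbf{Main obstacle.} The delicate point is the base of the transport chain: a wire $l$ that belongs to $\Gamma$ and is never touched by any gate. There, $C^{\sharp s}$ is essentially $\mathrm{id}$ on the $l$-component, and both projections can be nonzero. To handle it I will rely on how the lemma is used: the circuits arise as $F;\ite{l}{-}{-}$, where the guard is the output of a well-typed subcircuit, and environments in the machine are fresh-labelled. So the label $l$ of an output guard is never an input label of $\Gamma$. I will make this freshness convention explicit at the start of the induction, and check that it is preserved by each typing rule. Under it the untouched-input case cannot occur, and every chain of transports ends in a $\termZero$, $\termOne$ or $\meas$ base case.
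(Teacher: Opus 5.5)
Your diagnosis is right, and it is sharper than the paper's own proof. As literally stated, the lemma is false. Take $\Gamma=\{l:\bit,q:\qbit\}$ and $C=\opn{H}^{q}_{q}$. This circuit is vacuously ite-closed and $@C=\{\emptyset\}$. On $\interp{\qbit}\times\interp{\qbit}$, $C^{\sharp\emptyset}$ acts as $(A,B)\mapsto(\opn{H}A\opn{H}^{\dagger},\opn{H}B\opn{H}^{\dagger})$, so neither projection vanishes.

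The paper's induction has your skeleton: base case $\meas$, case $D;E$, case $D;\ite{l'}{F_1}{F_2}$. However, in case $D;E$ it applies the induction hypothesis to $E$ even when $E$ only carries $l$. That is exactly where your identity $\pi^l_b\circ E^{\sharp s}=(E')^{\sharp s}\circ\pi^l_b$, followed by the hypothesis on $D$, is needed.

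Your repair still has a gap. The condition ``$l$ is not a label of $\Gamma$'' is not an inductive invariant, because rule $\QOrule$ lets gates reuse labels. Consider $C=(\termZero_{k};\new^{k}_{l});\meas^{l}_{l}$. The output $l$ is not an input of $C$. But for the factor $E=\meas^{l}_{l}$, the label $l$ is both an input of $E$ and created inside $E$. So the freshness hypothesis fails for $E$, $E$ does not merely transport $l$, and neither branch of your dichotomy applies. The same happens with $\discard^{l};\termZero_{l}$, or with a factor containing a conditional that recreates $l$ in one branch and carries it in the other.

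There are two fixes:
\begin{itemize}
\item Strengthen the convention to uniqueness of labels along each path, i.e.\ no gate output reuses a label already present.
\item More cleanly, state the lemma per slice, for those $s$ along which the output wire $l$, traced backwards through the branches selected by $s$, ends at a $\termZero$, $\termOne$ or $\meas$ gate rather than at an input. Your conditional case already reasons this way. This predicate is inductive, and it makes ite-closedness superfluous, since $C^{\sharp s}$ already selects one branch of every conditional.
\end{itemize}

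Separately, your justification that the restriction always holds where the lemma is used is wrong. For $x:\bit, y:\qbit\vdash\ite{x}{y}{\opn{H}\,y}:\qbit$, the guard wire is the input bit carried only by identity gates and renamings. So the guard-producing prefix is precisely the pass-through counterexample. If identities may relabel wires, then $\termZero$, $\termOne$ and $\meas$ are also not the only gates that put a bit on a new label.

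So the restriction must be presented as a change of statement, not derived from the machine. It is harmless for the lemma's only consumer, Proposition~\ref{prop:circuitsum}. The addresses of $C=D;\ite{l}{F_1}{F_2}$ already range over both values of $s(l)$. Bilinearity of composition and the induction hypothesis then give $\sum_{s\in @C}C^{\sharp s}=\interp{F_1}\circ\pi^l_0\circ\interp{D}+\interp{F_2}\circ\pi^l_1\circ\interp{D}=\interp{C}$ directly, with no Boolean-choice argument at all.
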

\begin{proof}
By induction on $C$, we only consider the critical cases:
\begin{itemize}
\item if $C=\meas^l$ has index $\alpha$, then, by definition, if $s(\alpha)=0$, then $\pi^l_1\circ C^{\sharp s}=0$ and conversely, if $s(\alpha)=1$, then $\pi^l_0\circ C^{\sharp s}=0$.

\item if $C=D;E$, then by induction hypothesis for some $b\in \{0,1\}$, 
$\pi^l_b\circ E^{\sharp s}=0$, so $\pi^l_b\circ C^{\sharp s}=\pi^l_b\circ E^{\sharp s}\circ D^{\sharp s}=0$.

\item if $C=D;\ite{l'}{F_1}{F_2}$, where $F_1,F_2$ have an output $\bit^{l}$, then, by induction hypothesis, for some $b\in \{0,1\}$, $\pi^{l'}_b\circ D^{\sharp s}=0$ and, for some $c_0,c_1\in \{0,1\}$, $\pi^{l}_{c_0}\circ F_{0}^{\sharp s}=\pi^{l}_{c_1}\circ F_{1}^{\sharp s}=0$. Using the linearity of completely positive maps, which implies $\Phi\circ 0=0$, we consider two cases:
\begin{itemize}
\item if $s(l')=b$, then 
$\pi^{l}_{c_b}\circ C^{\sharp s}=\pi^{l}_{c_b}\circ
F_b^{\sharp s}\circ \pi^{l'}_b\circ D^{\sharp s}=
\pi^{l}_c\circ
F_b^{\sharp s}\circ 0=
0$.

\item if $s(l')=1-b$, then 
$\pi^{l}_{c_{1-b}}\circ C^{\sharp s}=\pi^{l}_{c_{1-b}}\circ 
F_{1-b}^{\sharp s}\circ \pi^{l'}_{1-b}\circ D^{\sharp s}=
0\circ \pi^{l'}_{1-b}\circ D^{\sharp s}=
0$. \qedhere

\end{itemize}

\end{itemize}

\end{proof}

\begin{proposition}\label{prop:circuitsum}
For any well-typed ite-closed circuit $\Gamma\rhd C\rhd \Delta$, 
$
\interp C=\sum_{s\in @C}C^{\sharp s}.
$
\end{proposition}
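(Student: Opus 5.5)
We prove the equation $\interp C=\sum_{s\in @C}C^{\sharp s}$ by structural induction on $C$. We rely on the convention $C^{\sharp s}=0$ whenever $s\notin @C$, which lets us sum freely over partial functions on $\mathcal L(C)$. We also use bilinearity of composition of completely positive maps, i.e.\ $(\sum_i\Phi_i)\circ(\sum_j\Psi_j)=\sum_{i,j}\Phi_i\circ\Psi_j$. Since every subcircuit of an ite-closed circuit is again well typed, and the inductive cases below only need the hypothesis on immediate subcircuits, we may carry the ite-closedness assumption along, or drop it where it is not used.

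\emph{Base cases.} If $C=c\neq\meas$, then $@C=\{\emptyset\}$ and $C^{\sharp\emptyset}=\interp c$, so there is nothing to prove. If $C=\meas$ with index $\alpha$, the addresses are $\{\alpha\mapsto0\}$ and $\{\alpha\mapsto1\}$. The corresponding maps send $\begin{pmatrix}a&b\\c&d\end{pmatrix}$ to $(a,0)$ and $(0,d)$ respectively, and their sum is $(a,d)=\pinterp{\meas}$.

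\emph{Sequential composition} $C=D;E$. The index sets $\mathcal L(D)$ and $\mathcal L(E)$ are disjoint, because labels of conditionals and indices of measurements are distinct occurrences. Hence the map $s\mapsto(s|_{\mathcal L(D)},s|_{\mathcal L(E)})$ is a bijection $@C\cong @D\times @E$, which is exactly the path clause in Definition~\ref{def:circadd}. Using the induction hypothesis and bilinearity,
\[
\interp C=\interp E\circ\interp D=\sum_{t\in @E}\sum_{u\in @D}E^{\sharp t}\circ D^{\sharp u}=\sum_{s\in @C}C^{\sharp s}.
\]

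\emph{Conditional} $C=\ite{l}{D_1}{D_2}$. Each address of $C$ has the form $\{l\mapsto b\}\cup s'$ with $s'\in @D_b$, and the two families ($b=0,1$) are disjoint. By the definition of the slice semantics and the induction hypothesis,
\[
\sum_{s\in @C}C^{\sharp s}=\sum_{b}\sum_{s'\in @D_b}D_b^{\sharp s'}\circ\pi^l_b=\sum_b\interp{D_b}\circ\pi^l_b,
\]
and the right-hand side is the if-then-else clause of $\interp C$. Some care is needed with the indexing here: the branch selected by $s(l)=b$ must match the convention of the circuit semantics, $\interp{C}\circ\pi^l_0+\interp{D}\circ\pi^l_1$, and also equation~\eqref{eq:itesl}. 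We fix one convention and check that it is used consistently. If the ite-closed case $F;\ite{l}{D_1}{D_2}$ is treated as a unit, the same computation combines with the composition case and nothing further is needed; Lemma~\ref{lemma:circuitbool} is not required for this sum formula.

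The main obstacle is bookkeeping rather than mathematics. We must make sure that the address sets really decompose as products and disjoint unions. For this, the labels guarding nested conditionals and the measurement indices must be pairwise distinct across subcircuits; this follows from the linear typing of labels and the global indexing of measurements. Identity/tensor padding by an environment $\Theta$ does not change addresses.
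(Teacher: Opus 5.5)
Your argument is correct, but it handles conditionals differently from the paper. The paper never makes a bare $\ite{l}{D_1}{D_2}$ an inductive case. It treats $D;\ite{l}{F_1}{F_2}$ as a unit and uses Lemma~\ref{lemma:circuitbool} to obtain, for each slice $s$ of $D$, a bit $b_s$ with $\pi^l_{1-b_s}\circ D^{\sharp s}=0$. It then collapses the sum onto the branch $b_s$ (step $(\star)$) and re-expands it to $\sum_{s\in @C}F_{s(l)}^{\sharp s}\circ\pi^l_{s(l)}\circ D^{\sharp s}$ (step $(\star\star)$).

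Your reindexing $@C\cong\bigsqcup_b @D_b$ for $C=\ite{l}{D_1}{D_2}$, combined with bilinearity and the product decomposition for $;$, shows that those two steps cancel. The identity is pure bookkeeping: it holds for every well-typed circuit and needs neither ite-closedness nor the Boolean-choice lemma. What the paper's detour exhibits, and yours does not, is that in an ite-closed circuit every slice with $s(l)\neq b_s$ vanishes. So nonzero slices are consistent with the guard bit actually produced. That fact matters later, when slices are matched with runs of the \evaltkm, but it plays no role in the sum formula.

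Two small corrections to your write-up:
\begin{itemize}
\item The induction has to run over all well-typed circuits. Ite-closedness is not inherited by subcircuits: the bare conditional inside $F;\ite{l}{D_1}{D_2}$ is not ite-closed. So ``carrying the assumption along'' would fail, whereas dropping it costs nothing, since no case uses it.
\item Disjointness of guard labels does not follow from linear typing. A consumed label may be re-emitted by a later gate (the paper itself writes $\mathtt{new}^{l_3}_{l_3}$), so two conditionals can share a guard label. Addresses, being functions on labels, then cannot record independent choices for them. What you actually need is a freshness convention on guard labels; the paper assumes it silently too, when it writes $\mathcal L(C)=\mathcal L(D)+\mathcal L(E)$.
\end{itemize}
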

\begin{proof}
By induction on $C$. We consider the two critical cases:
\begin{itemize}
\item $C=D;E$: observe that $\mathcal L(C)=\mathcal L(D)+\mathcal L(E)$, so any $s\in @C$ is uniquely of the form $s_D+s_E$, with $s_D\in @D$ and $s_E\in @E$. We then have 
\begin{align*} 
\interp{C}&= \interp{E}\circ \interp{D}\\
&\stackrel{\tiny{\text{[IH]}}}{=}
\left( \sum_{s\in @E}E^{\sharp s}\right)\circ  
\left( \sum_{s\in @D}D^{\sharp s}\right)
\\ &= \sum_{s_D\in @D, s_E\in @E}E^{\sharp s_E}\circ D^{\sharp s_D}
\\&=\sum_{s\in @C}E^{\sharp s|_{\mathcal L(E)}}\circ D^{\sharp s|_{\mathcal L(D)}}=
\sum_{s\in @C}(D;E)^{\sharp s}=\sum_{s\in @C}C^{\sharp s}.
\end{align*}

\item $C= D;\ite{l}{F_1}{F_2}$; by Lemma \ref{lemma:circuitbool} for any $s\in @D$ (and a fortiori for any $s\in @C$) there exists $b_s\in \{0,1\}$ such that $\pi^l_{1-b_s}\circ D^{\sharp s}=0$.
We can then compute:
\begin{align*}
\interp{C}&= \big(\interp{E}\circ \pi^l_0
+  \interp{F}\circ \pi^l_1\big)\circ \interp{D}
=  \interp{E}\circ \pi^l_0\circ \interp{D}
+ \interp{F}\circ \pi^l_1\circ \interp{D}
\\ &\stackrel{\tiny{\text{[IH]}}}{=}
\left(
\sum_{s_1\in @F_1, s_D\in @D}F_1^{\sharp s_1} \circ \pi^l_0\circ D^{\sharp s_D}\right) \\
&\quad + \left( \sum_{s_2\in @F_1,s_D\in @D}F_2^{\sharp s_2} \circ \pi^l_1\circ D^{\sharp s_D}\right) \\
&\stackrel{(\star)}{=}
\sum_{s\in @C}
F^{\sharp s}_{b_s}\circ \pi^l_{b_s}\circ D^{\sharp s}
\\&\stackrel{(\star\star)}{=}
\sum_{s\in @C}
F^{\sharp s}_{s(l)}\circ \pi^l_{s(l)}\circ D^{\sharp s}
\\&=\sum_{s\in @C}C^{\sharp s}.
\end{align*}
Where:
\begin{itemize}
\item the equality $A\stackrel{(\star)}{=}B$ holds because all terms $F_{1-b_s}^{\sharp s}\circ \pi_{1-b_s}^l\circ D^{\sharp s}$ are equal to $0$, since $\pi_{1-b_s}^l\circ D^{\sharp s}=0$;

\item the equality $A\stackrel{(\star\star)}{=}B$ is justified as follows: 
for all $s\in @C$, if $s(l)\neq b_s$, then $F^{\sharp s}_{s(l)}\circ \pi^l_{s(l)}\circ D^{\sharp s}=0$ so the sum $B$ coincides with $\sum_{s\in @C, s(l)=b_s}
F^{\sharp s}_{s(l)}\circ \pi^l_{s(l)}\circ D^{\sharp s}$, which is in fact the same as $A$. \qedhere

\end{itemize}
\end{itemize}
\end{proof}

\subsubsection{Extended Circuits}

We consider now slicing of extended circuits. 
For simplicity, we restrict here our attention to extended circuits \emph{without} if-then-else constructs of the form $\ite{l}{F_1}{F_2}$, but only containing if-then-else constructs of the form $C\xmapsto{l}(D,E)$.
We will consider the more general case in Subsection \ref{subs:excircplus}.

Let $E$ be such an extended circuit; the set of circuits $\mathrm{circ}(E)$ is defined by induction as 
\begin{align*}
\mathrm{circ}(C)&=\{C\},\\
\mathrm{circ}(C\xmapsto{l}(D,E))&=\{C\}\cup \mathrm{circ}(D)\cup \mathrm{circ}(E).
\end{align*}
As in the previous subsection, we suppose given an indexing $\alpha_1,\dots, \alpha_n$ of all $\meas$ gates occurring in the circuits in $\mathrm{circ}(E)$, with distinct gates always corresponding to distinct indices.

We define the following sets:
\begin{itemize}
\item $\LITES(E)$ is the set of labels in the guard of some if-then-else of the form $C\xmapsto{l}{(D,E)}$.

\item $\LMEAS(E)$ is the set of indices of $\meas$-gates in the circuits in $\mathrm{circ}(E)$;

\item $\mathcal L(E)=\LITES(E)\cup \LMEAS(E)$.
\end{itemize}

Notice that, since we no more consider if-then-elses of the form $\ite{l}{F}{G}$, an address $s\in @C$ for a (basic) circuit is now simply a \emph{total} function $s:\LMEAS(C)\to \{0,1\}$.

We define addresses for extended circuits:
\begin{definition}\label{def:address2}
Let $E$ be an extended circuit. 
We define an \emph{address for $E$} as a finite set of pairs $\{(l_0,b_0),\dots, (l_n,b_n)\}$ such that the sequence 
$[(l_0,b_0),\dots,(l_n,b_n)]$ indicates a path from the root of $E$ to one of its leaves.
We let $@E\subseteq (\LITES(E)\rightharpoonup \{0,1\})$ indicate the set of addresses of $E$.
We indicate addresses as $\ADD,\ADDH$.
\end{definition}

\begin{definition}
Given a well-typed circuit $\Gamma\rhd C\rhd \bit^l,\Delta$, let $\Gamma\rhd\DISC{l}{C}\rhd \Delta$ be the circuit obtained by discarding the bit of label $l$.
For all extended circuit $E$ and address $\ADD\in @E$ we define the circuits $E@\ADD$ and $E|_\ADD$ by induction as follows:
\begin{itemize}
\item if $E=C$, then $E@\ADD=E|_\ADD=E$;
\item if $E=C\xmapsto{l}(E_1,E_2)$ and $\ADD(l)=b$, then $E@\ADD=E_b@\ADD$ and 
$E|_\ADD=\DISC{l}{C};E_b|_\ADD$.

\end{itemize}
A gate $c$ is said to \emph{occur at address $\ADD$} if $c$ occurs in the circuit $E|_\ADD$.
\end{definition}

We introduce now \emph{super-}addresses, which are partial functions taking into account \emph{both} if-then-elses of the form $C\xmapsto{l}{(D,E)}$ and $\meas$-gates:

\begin{definition}

A partial function $s:\mathcal L(E)\rightharpoonup \{0,1\}$ is called a \emph{super-address of $E$} if the following hold:
\begin{enumerate}
\item its restriction to $\LITES(E)$, that we indicate as $\ADD_s$, is an address;
\item for any $\alpha\in \LMEAS(E)$, $s(\alpha)$ is defined iff the corresponding $\meas$-gate occurs in $E$ at address $\ADD_s$.

\end{enumerate}
We let $@^{\mathsf s}E$ indicate the set of super-addresses of $E$. 

\end{definition}

\begin{example}
Let $E$ be
$$
E=
C_0\xmapsto{l_0}\Big(
C_{10}\xmapsto{l_1}(C_{11},C_{12}),
C_{20}\xmapsto{l_2}(C_{21},C_{22})
\Big).
$$
The addresses of $E$ are the partial functions of either of the form $\ADD=\{(l_0,0),(l_1,b)\}$ or $\ADD=\{(l_0,1),(l_2,b)\}$, for some $b\in \{0,1\}$.
Moreover, suppose $C_{11}$ contains a gate $\meas$ of index $\alpha$ and $C_{20}$ contains a
gate $\meas$ of index $\beta$; then $s=\{(l_0,0), (l_1,1),(\alpha,0)\}$ and $t=\{(l_0,1),
(l_2,0),(\beta,0)\}$ are super-addresses, while $s=\{(l_0,0),
(l_1,1)\}$ and $t=\{(l_0,1), (l_2,0),(\alpha,0),(\beta,1)\}$ are not super-addresses.
\end{example}

\begin{rem}
The notion of super-address for extended circuits is different from the corresponding notion of address for a basic  circuit with if-then-else: this is due to the fact that, while the extended circuits truly are,
wrt the $\xmapsto{l}{(D,E)}$, binary branching trees whose nodes are labelled
with circuits, basic circuits, wrt to the $\ite{l}{D}{E}$, may have a more complex
structure (e.g.~a circuit may be of the form $C=C_1;C_2$ so that the
if-then-elses inside $C_1$ are, so to say, incomparable with those inside
$C_2$).
\end{rem}

\begin{lemma}\label{lemma:addresses}
For any extended circuit $E$, $@^{\mathsf s}E=@\translate(E)$
\end{lemma}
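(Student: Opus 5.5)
We prove $@^{\mathsf s}E=@\translate(E)$ by structural induction on the extended circuit $E$. Throughout, $E$ contains no if-then-else of the form $\ite{l}{F}{G}$, so its basic circuits are plain. First we identify the index sets: $\mathcal L(\translate(E))=\mathcal L(E)$. This holds because $\translate$ turns each node $C\xmapsto{l}(E_1,E_2)$ into $C;\ite{l}{\translate(E_1)}{\translate(E_2)}$. The guard labels are therefore exactly $\LITES(E)$, and the $\meas$-gates of $\translate(E)$ are exactly those of the circuits in $\mathrm{circ}(E)$, with the same indexing.

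\emph{Base case} $E=C$. Here $\LITES(E)=\emptyset$, so the only address of $E$ is $\emptyset$. A super-address must then be defined on exactly the indices of the $\meas$-gates occurring in $E|_\emptyset=C$, that is, on all of $\LMEAS(C)$. On the other side, since $C$ is plain, an easy induction on Definition~\ref{def:circadd} shows that every path of $C$ equals $\LMEAS(C)$. Hence $@C$ consists of the total functions $\LMEAS(C)\to\{0,1\}$, which are exactly the super-addresses of $E$.

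\emph{Inductive case} $E=C\xmapsto{l}(E_1,E_2)$. By Definition~\ref{def:circadd}, a path of $C;\ite{l}{\translate(E_1)}{\translate(E_2)}$ has the form $\LMEAS(C)\cup\{l\}\cup L_b$, where $L_b$ is a path of $\translate(E_b)$. The accompanying address condition forces the value $b$ on $l$ (up to the paper's convention matching the two branches with the values $0$ and $1$ in the definition of $E@\ADD$). So $@\translate(E)$ is exactly the set of functions $s_C\cup\{(l,b)\}\cup s'$ with $s_C:\LMEAS(C)\to\{0,1\}$ total, $b\in\{0,1\}$, and $s'\in@\translate(E_b)$.

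Now decompose a super-address $s$ of $E$ in the same way. Its address part $\ADD_s$ is a root-to-leaf path, so it contains some $(l,b)$, and $\ADD_s\setminus\{(l,b)\}$ is an address of $E_b$. Since $E|_{\ADD_s}=\DISC{l}{C};E_b|_{\ADD_s}$, a $\meas$-gate occurs in $E$ at address $\ADD_s$ exactly when it lies in $C$ or occurs in $E_b$ at the restricted address. Hence $s=s_C\cup\{(l,b)\}\cup s'$ with $s_C$ total on $\LMEAS(C)$ and $s'\in@^{\mathsf s}E_b$, and conversely every such union is a super-address of $E$. By the induction hypothesis $@^{\mathsf s}E_b=@\translate(E_b)$, so the two decompositions coincide and the sets are equal.

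The main obstacle is bookkeeping, not mathematics. One must check that the gates occurring ``at address $\ADD$'' in $E$ (defined via $E|_\ADD$) match exactly the $\meas$-gates selected by a path in $\translate(E)$. This relies on the gates of the unchosen branch $E_{1-b}$ not occurring in $E|_{\ADD}$, and on label and index sets of different subtrees being disjoint, which is guaranteed by the global injective indexing of $\meas$-gates and the uniqueness of labels. We would state this as a small auxiliary claim proved within the same induction.
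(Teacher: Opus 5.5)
Your proof is correct and takes the same route as the paper. The paper observes that $\translate(E)$ lies in a sub-grammar of ``restricted'' circuits, binary trees of conditional-free circuits, and then says one checks by induction that their addresses are exactly the super-addresses of $E$; your base case and inductive step carry out that induction in full. One small correction: your auxiliary claim should not rely on guard labels being disjoint across sibling subtrees, since the machine can create the same guard label in both branches of an earlier conditional; your decomposition only needs $l\notin\LITES(E_b)$ along the chosen branch and the global injectivity of the $\meas$-indexing, so the argument is unaffected.
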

\begin{proof}
The fundamental observation is that the circuits of the form $\translate(E)$ belong to the following circuit sub-grammar:
$$
B:= C\mid( \ite{l}{B}{B});C
$$
where $C$ indicates an arbitrary if-then-else-free circuit. 
Call such circuits $B$ \emph{restricted}. All restricted circuits are ite-closed. Moreover, they correspond to binary trees whose nodes are labelled by if-then-else-free circuits. One can easily check by induction that an address for a restricted circuit is the same as a super-address for the corresponding extended circuit. \qedhere
\end{proof}

We will often use of the following useful construction on super-addresses:
\begin{definition}[super-address extension]
For all extended circuit $E$, set $S\subseteq{\mathcal L(E)}\rightharpoonup \{0,1\}$, address $\ADD\in @E$ and label $\ell\notin\mathcal L(E)$, let
\[
\EXT(E,S,\ADD,\ell);=\{s\cup\{(\ell,b)\}\mid s\in S, \ADD_s=\ADD, b\in \{0,1\}\}\cup \{s\in S\mid \ADD_s\neq \ADD\}.
\]
\end{definition}
The following fact, easily checked, shows a first use of the construction above:

\begin{lemma}\label{lemma:superaddress}
For all extended circuits $E,F$ and address $\ADD\in @E$, let $F$ be obtained from $E$ by replacing the circuit $C:=E@\ADD$ by $C\xmapsto{l}{(\mathrm{id},\mathrm{id})}$. 
Then $@^{\mathsf s}F=\EXT(E,@^{\mathsf s}E,\ADD,l)$.
\end{lemma}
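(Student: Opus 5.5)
We prove the two inclusions $@^{\mathsf s}F\subseteq\EXT(E,@^{\mathsf s}E,\ADD,l)$ and $\supseteq$ by unfolding the definitions of addresses and super-addresses. First we compare the address sets. $F$ is obtained from $E$ by grafting a new node $C\xmapsto{l}(\mathrm{id},\mathrm{id})$ at the leaf reached by $\ADD$. By Definition~\ref{def:address2}, the root-to-leaf paths of $F$ are therefore of two kinds. Every path of $E$ not ending at that leaf is still a path of $F$. The path $\ADD$ itself is replaced by the two paths $\ADD\cup\{(l,0)\}$ and $\ADD\cup\{(l,1)\}$. So $@F=(@E\setminus\{\ADD\})\cup\{\ADD\cup\{(l,b)\}\mid b\in\{0,1\}\}$, which can be checked formally by induction on $E$ following the recursive definition of $E@\ADD$. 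We also record $\LITES(F)=\LITES(E)\cup\{l\}$ and $\LMEAS(F)=\LMEAS(E)$, since $\mathrm{id}$ contains no $\meas$-gate. Here $l$ is assumed fresh, i.e.\ $l\notin\mathcal L(E)$, as required by the definition of $\EXT$.

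Next we compare, for each address, which $\meas$-gates occur there. Let $\ADDH\in @E$ with $\ADDH\neq\ADD$. Then $F|_\ADDH=E|_\ADDH$, because the traversed nodes are unchanged. For $b\in\{0,1\}$ we have $F|_{\ADD\cup\{(l,b)\}}=E|_{\ADD}$ with $C$ replaced by $\DISC{l}{C};\mathrm{id}$, which contains exactly the same $\meas$-gates as $C$. Hence the set of $\meas$-indices occurring at $\ADD\cup\{(l,b)\}$ in $F$ equals the set occurring at $\ADD$ in $E$. Again this is a straightforward induction on the tree structure of $E$.

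Combining the two steps gives both inclusions. For $\subseteq$, take a super-address $s$ of $F$. If $\ADD_s\neq\ADD\cup\{(l,b)\}$ for both $b$, then $\ADD_s\in @E\setminus\{\ADD\}$ and $s$ has the same $\meas$-domain there, so $s\in @^{\mathsf s}E$ with $\ADD_s\neq\ADD$. Otherwise $s=s'\cup\{(l,b)\}$, where $s'$ is $s$ restricted to $\mathcal L(E)$; then $s'$ is a super-address of $E$ with $\ADD_{s'}=\ADD$. Conversely, each element of $\EXT(E,@^{\mathsf s}E,\ADD,l)$ satisfies both conditions of a super-address of $F$, for the same reasons.

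The only mildly delicate point is the second step, namely checking that inserting $\DISC{l}{C}$ in place of $C$ does not alter the $\meas$-gates along the path. This is immediate because discarding only adds a $\discard$ gate. No step is a real obstacle: the statement is bookkeeping on trees.
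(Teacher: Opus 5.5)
Your proof is correct and takes the same route as the paper. The paper gives no argument beyond calling the fact ``easily checked''. Your two observations are exactly the details it leaves implicit:
\begin{itemize}
\item $@F=(@E\setminus\{\ADD\})\cup\{\ADD\cup\{(l,b)\}\mid b\in\{0,1\}\}$, using freshness of $l$;
\item $F|_{\ADDH}=E|_{\ADDH}$ for $\ADDH\neq\ADD$, while $F|_{\ADD\cup\{(l,b)\}}$ contains exactly the $\meas$-gates of $E|_{\ADD}$.
\end{itemize}
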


For any well-typed extended circuit $\Gamma\rhd E\rhd \mathcal I$ and super-address $s\in @^{\mathsf s}E$ such that $\Gamma\rhd E@\ADD_s\rhd \Delta$, we define the CPM $E^{\sharp s}:\interp\Gamma\to\interp\Delta$ by induction as follows:
\begin{itemize}
\item if $E=C$, then $E^{\sharp s}=C^{\sharp{s}}$;

\item if $E=  C\xmapsto{l}(F_0,F_1)$, then $E^{\sharp s}= F_{s(l)}^{\sharp s}\circ \pi_{s(l)}^l \circ C^{\sharp s}$.
\end{itemize}

Given an extended circuit $E$, for all partial function $s:\mathcal L(E)\rightharpoonup\{0,1\}$, not in $ @^{\mathsf s}E$ we set $E^{\sharp s}=0$. This allows us to rewrite the definition of $E^{\sharp s}$, when $E=C\xmapsto{l}{(F,G)}$, as
$$
E^{\sharp s}=
\left(C\xmapsto{l}{(F,G)}\right)^{\sharp s}= (1-s(l))\Big(F^{\sharp s}\circ \pi_0^l\circ C^{\sharp s}\Big )
+ s(l)\Big(G^{\sharp s}\circ \pi_1^l\circ C^{\sharp s}\Big ).
$$

\begin{lemma}\label{lemma:extequiv}
For any uniform extended circuit $E$ and super-address $s\in @^{\mathsf s}E$, $E^{\sharp s}=\translate(E)^{\sharp s} $. Moreover, $E^{\sharp s}$ is of the form
\begin{equation}\label{eq:dag}
E^{\sharp s}=C_0^{\sharp s}\circ (\pi_{s(l_1)}^{l_1}\circ C_1^{\sharp s})\circ \dots\circ
(\pi_{s(l_n)}^{l_n}\circ C_n^{\sharp s}),
\tag{$\dag$}
\end{equation}
where $l_1,\dots, l_n$ indicates the list (in reversed chronological order) of labels encountered in exploring the tree of $E$ from root to leaves, and 
 $C_0,\dots, C_n$ are the circuits (in reversed chronological order) encountered.
\end{lemma}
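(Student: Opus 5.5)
The plan is an induction on the structure of the extended circuit $E$, proving both claims together. By Lemma~\ref{lemma:addresses}, $@^{\mathsf s}E=@\translate(E)$, so $\translate(E)^{\sharp s}$ is defined for every $s\in @^{\mathsf s}E$. Recall that in this subsection the circuits occurring in $E$ contain no $\ite{l}{F}{G}$ constructs. Hence every $C\in\mathrm{circ}(E)$ is if-then-else-free, and its addresses are total functions on $\LMEAS(C)$.

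\emph{Base case.} Let $E=C$. Then $\translate(E)=C$, and the equation $E^{\sharp s}=C^{\sharp s}=\translate(E)^{\sharp s}$ holds by definition. This is also form~\eqref{eq:dag} with $n=0$.

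\emph{Inductive case.} Let $E=C\xmapsto{l}(F_0,F_1)$, which is uniform, so $F_0$ and $F_1$ are uniform with the same output environment. Then $\translate(E)=C;\ite{l}{\translate(F_0)}{\translate(F_1)}$. Applying the clauses defining $(-)^{\sharp s}$ for basic circuits gives
\[
\translate(E)^{\sharp s}=\translate(F_{s(l)})^{\sharp s}\circ\pi^l_{s(l)}\circ C^{\sharp s}.
\]
Here we use that $s$ restricts to an address of each component, since $\mathcal L(C)$ and $\mathcal L(\translate(F_b))$ are disjoint by the distinct indexing of $\meas$-gates and labels. The induction hypothesis gives $\translate(F_{s(l)})^{\sharp s}=F_{s(l)}^{\sharp s}$, because $s$ restricted to $\mathcal L(F_{s(l)})$ is a super-address of $F_{s(l)}$. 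So the right-hand side equals $F_{s(l)}^{\sharp s}\circ\pi^l_{s(l)}\circ C^{\sharp s}$, which is the definition of $E^{\sharp s}$.

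Form~\eqref{eq:dag} follows by unfolding. The induction hypothesis writes $F_{s(l)}^{\sharp s}$ as a composite of the circuits and projections met along the branch of $F_{s(l)}$ selected by $s$. Composing on the right with $\pi^l_{s(l)}\circ C^{\sharp s}$ appends the root circuit $C$ and its guard label $l$ as the last factor. In reversed chronological order, this factor is exactly $(\pi^{l_n}_{s(l_n)}\circ C_n^{\sharp s})$, so the list of circuits and labels is as claimed.

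The main obstacle is bookkeeping rather than mathematics. We must check that restricting $s$ to a subcircuit's labels yields a genuine (super-)address of that subcircuit, which needs the distinctness of indices. We must also check the boundary cases where $s(l)$ is the other branch, so that the branch not taken contributes nothing. For the latter, the convention $C^{\sharp s}=0$ for $s\notin @C$ together with~\eqref{eq:itesl} handles it uniformly: the unchosen summand vanishes because of the $(1-s(l))$ or $s(l)$ factor. Finally, the typing of the leaf $\leaf{\Delta}$ must match on both sides, and this follows from Lemma~\ref{lem:translation-EC-well-typed}.
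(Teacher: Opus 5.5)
Your proof is correct and follows the paper's argument. The paper also proceeds by induction on $E$ with a trivial base case. In the inductive case it unfolds $\translate(E)=C;\ite{l}{\translate(F_0)}{\translate(F_1)}$, applies the induction hypothesis to the branches, and obtains \eqref{eq:dag} by appending $\pi^l_{s(l)}\circ C^{\sharp s}$ as the last factor. The only cosmetic difference is that the paper writes this step in the summed form with the $(1-s(l))$ and $s(l)$ coefficients, which you already invoke to dispose of the unchosen branch.
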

\begin{proof}
By induction on $E$:
\begin{itemize}

\item if $E=C$, then $E=\translate(E)$, so it is clear that $E^{\sharp s}=\translate(E)^{\sharp s}$. Equation \eqref{eq:dag} also trivially holds with $n=0$ and $C_0:=C$.

\item if $E=C\xmapsto{l}{(F,G)}$, then $\translate(E)=C;\ite{l}{\translate(F)}{\translate(G)}$; if $s(l)=0$, then 
\begin{align*}
E^{\sharp s}&=(1-s(l))(F^{\sharp s}\circ\pi_0^l\circ C^{\sharp s})+s(l)(G^{\sharp s}\circ\pi_1^l\circ C^{\sharp s})\\
&\stackrel{\tiny{\text{[IH]}}}{=}
(1-s(l))(\translate(F)^{\sharp s}\circ\pi_0^l\circ C^{\sharp s})+s(l)(\translate(G)^{\sharp s}\circ\pi_1^l\circ C^{\sharp s})\\
&=
\Big((1-s(l))\big(\translate(F)^{\sharp s}\circ\pi_0^l\big)+s(l)\big(\translate(G)^{\sharp s}\circ\pi_1^l\big)\Big)\circ C^{\sharp s}
=(\translate(E))^{\sharp s}.
\end{align*}
Moreover, by induction hypothesis, equation \eqref{eq:dag} holds for $F$ and $G$, and we can conclude that the equation holds for $E$ as well since $E^{\sharp s}$ is of either of the forms 
$F^{\sharp s}\circ(\pi_0^l\circ C^{\sharp s})$ and $G^{\sharp s}\circ(\pi_1^l\circ C^{\sharp s})$. \qedhere
\end{itemize}
\end{proof}

\begin{proposition}\label{prop:sumext}
For any uniform extended circuit $E$, $\interp{E}=\interp{\tau(E)}=\sum_{s\in @^{\mathsf s}E}E^{\sharp s}$.

\end{proposition}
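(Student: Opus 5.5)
The plan is to chain three earlier results. The first equality $\interp{E}=\interp{\tau(E)}$ holds by definition: the semantics of a uniform extended circuit is the semantics of its translation, which is well-typed with type $\Gamma\rhd\translate(E)\rhd\Delta$ by Lemma~\ref{lem:translation-EC-well-typed}. For the second equality, we aim to apply Proposition~\ref{prop:circuitsum} to the basic circuit $\translate(E)$, obtaining $\interp{\translate(E)}=\sum_{s\in @\translate(E)}\translate(E)^{\sharp s}$. We then rewrite the index set with Lemma~\ref{lemma:addresses}, using $@\translate(E)=@^{\mathsf s}E$. Finally we rewrite each summand with Lemma~\ref{lemma:extequiv}, using $\translate(E)^{\sharp s}=E^{\sharp s}$ for each $s\in @^{\mathsf s}E$.

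The one hypothesis left to check is that Proposition~\ref{prop:circuitsum} applies, i.e.\ that $\translate(E)$ is ite-closed. As noted in the proof of Lemma~\ref{lemma:addresses}, $\translate(E)$ lies in the restricted grammar generated by $C;\ite{l}{\translate(F)}{\translate(G)}$. We argue by induction on $E$. If $E=C$ there is no if-then-else. If $E=C\xmapsto{l}(F,G)$, the typing rule of Figure~\ref{fig:extended-circuit-typing} gives $\Gamma\rhd C\rhd l:\bit,\Delta'$. So the outermost conditional is preceded by a subcircuit $C$ that outputs $\bit^l$. The inner conditionals sit inside $\translate(F)$ and $\translate(G)$, and they are ite-closed by the induction hypothesis, since $F$ and $G$ are again uniform with the same leaf environment.

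The main obstacle is bookkeeping rather than mathematics. Definition~\ref{def:circadd} allows the paths of a basic circuit to include both ite-labels and $\meas$ indices. We therefore need the $\meas$-gate indexing of $\translate(E)$ to coincide with the indexing fixed on $\mathrm{circ}(E)$, so that the identifications of Lemma~\ref{lemma:addresses} and Lemma~\ref{lemma:extequiv} are literally between the same partial functions. We will fix this convention explicitly: $\translate$ introduces no new $\meas$ gates, so the indices transfer verbatim.

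If one prefers to avoid relying on ite-closedness, an alternative is a direct induction on $E$. We would write $\interp{C;\ite{l}{\translate(F)}{\translate(G)}}=\interp{\translate(F)}\circ\pi^l_0\circ\interp{C}+\interp{\translate(G)}\circ\pi^l_1\circ\interp{C}$. Expanding $\interp{C}=\sum_{s}C^{\sharp s}$ by Proposition~\ref{prop:circuitsum}, and expanding $F$ and $G$ by the induction hypothesis, then regrouping via the decomposition of super-addresses as $s_C\cup\{(l,b)\}\cup s_b$, gives exactly the defining clause of $E^{\sharp s}$.
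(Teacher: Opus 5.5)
Your proposal is correct and follows the paper's proof: the first equality is the definition of $\interp{E}$, and the second chains Proposition~\ref{prop:circuitsum} applied to $\translate(E)$, the identification $@\translate(E)=@^{\mathsf s}E$ from Lemma~\ref{lemma:addresses}, and the summand-wise identity $\translate(E)^{\sharp s}=E^{\sharp s}$ from Lemma~\ref{lemma:extequiv}. Your induction showing that $\translate(E)$ is ite-closed, and your remark that $\translate$ preserves the $\meas$-indexing, spell out hypotheses the paper leaves implicit (it asserts ite-closedness of restricted circuits only in passing, in the proof of Lemma~\ref{lemma:addresses}).
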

\begin{proof}
By Lemma \ref{lemma:addresses} we have that $@\translate(E)=@^{\mathsf s}E$.
Using Proposition \ref{prop:circuitsum}, Lemma \ref{lemma:extequiv} and the definition of $\interp{E}$ we have that $\interp{E}=\interp{\translate(E)}=\sum_{s\in @^{\mathsf s}E}\tau(E)^{\sharp s}=\sum_{s\in @^{\mathsf s}E}E^{\sharp s}$.
\end{proof}

In the following we need the following auxiliary lemmas.

Given $r$ qubits, a set $I\subset \{1,\dots, r\}$, and a gate $c$ over $r$ qubits, we indicate as ${c}^I $ the gate that applies $c$ to the qubits indexed by $I$ and is the identity on all other qubits.

\begin{lemma}\label{lemma:help1}
For all extended circuits $E$, $s\in @^{\mathsf s}E$ and gate $c$, let $E'$ be obtained from $E$ by replacing the circuit $E@\ADD_s$ by $E@\ADD_s;c$. Then
\begin{itemize}
\item[i.] if $c$ is unitary or $c=\new$, then 
$(E')^{\sharp s}=\interp{c^I}\circ E^{\sharp s}$;
\item[ii.] if $c=\meas$ of index $\alpha$, then $@^{\mathsf s}E'=\EXT(E,@^{\mathsf s}E,\ADD_s,\alpha)$ and 
$(E')^{\sharp(s\cup\{(\alpha,b)\})}=\iota_{b}\circ \pi_{b}\circ \interp{c^I}\circ E^{\sharp s}$.

\end{itemize}
\end{lemma}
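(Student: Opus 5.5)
The proof goes by unfolding the definition of $(E')^{\sharp s}$ and inducting on the tree structure of $E$, following the path $\ADD_s$. By Lemma~\ref{lemma:extequiv}, $E^{\sharp s}$ has the factorised form \eqref{eq:dag}, namely $C_0^{\sharp s}\circ(\pi^{l_1}_{s(l_1)}\circ C_1^{\sharp s})\circ\cdots$, where $C_0=E@\ADD_s$ is the leaf reached along $\ADD_s$. Replacing that leaf by $C_0;c$ changes only the outermost factor, so $(E')^{\sharp s'}$ equals the same composite with $C_0^{\sharp s'}$ replaced by $(C_0;c)^{\sharp s'}$. By the clause for sequential composition in the definition of $(-)^{\sharp s}$ for basic circuits, $(C_0;c)^{\sharp s'}=c^{\sharp s'}\circ C_0^{\sharp s'}$. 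Here the gate $c$, acting on the wires given by its labels, is interpreted as $\interp{c^I}$, i.e.\ tensored with the identity on $\Theta$ and composed with the appropriate symmetries.

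Formally, the induction on $E$ runs as follows. In the base case $E=C$ the claim is the sequential composition clause. In the step case $E=C\xmapsto{l}(F_0,F_1)$ with $s(l)=b$, only $F_b$ is modified, and
\[
(E')^{\sharp s}=(F_b')^{\sharp s}\circ\pi^l_b\circ C^{\sharp s},
\]
so the induction hypothesis on $F_b$ gives $(F'_b)^{\sharp s}=\interp{c^I}\circ F_b^{\sharp s}$, which closes case (i) by associativity.

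For case (i), if $c$ is unitary or $\new$ then $\LMEAS$ is unchanged, so $@^{\mathsf s}E'=@^{\mathsf s}E$ and $c^{\sharp s}=\interp{c}$. This gives $(E')^{\sharp s}=\interp{c^I}\circ E^{\sharp s}$.

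For case (ii), with $c=\meas$ of index $\alpha$, first establish the set equality. The new gate occurs in $E'$ exactly at address $\ADD_s$ and no new ite-label is introduced. So by the definition of super-address, the super-addresses of $E'$ are those of $E$ with address different from $\ADD_s$, together with those with address $\ADD_s$ extended by $(\alpha,b)$ for either $b$. This is precisely $\EXT(E,@^{\mathsf s}E,\ADD_s,\alpha)$, the same reasoning as in Lemma~\ref{lemma:superaddress}. Then the same induction applies, now using the $\meas$ clause $c^{\sharp(s\cup\{(\alpha,b)\})}=\iota_b\circ\pi_b\circ\interp{\meas}$. The restriction of $s\cup\{(\alpha,b)\}$ to the other circuits is $s$, since $\alpha\notin\mathcal L$ of any other component.

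The only delicate point I expect is bookkeeping: checking that restricting the enlarged super-address to each component circuit recovers exactly $s$, and that the wire-label symmetries make $\interp{c}$ correspond to $\interp{c^I}$. Both follow from the assumptions that indices of $\meas$-gates are distinct and that labels are unique in environments.
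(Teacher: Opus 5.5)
Your proof is correct and follows the paper's argument. The paper also factorises $E^{\sharp s}$ as $(E@\ADD_s)^{\sharp s}\circ\phi$ via \eqref{eq:dag} and then applies the sequential-composition clause, together with the $\meas$ clause, to the modified leaf. Your explicit induction on the tree (which avoids the uniformity hypothesis under which Lemma~\ref{lemma:extequiv} is stated) and your check that $@^{\mathsf s}E'=\EXT(E,@^{\mathsf s}E,\ADD_s,\alpha)$ only make explicit what the paper leaves implicit.
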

\begin{proof}
By \eqref{eq:dag} we can write $E^{\sharp s}$ as $(E@\ADD_s)^{\sharp s}\circ \phi$ and $(E')^{\sharp s}$ as $(E@\ADD_s;c)^{\sharp s}\circ \phi$. We then have for $c\neq\meas$, 
$$
(E')^{\sharp s}= (E@\ADD_s;c^I)^{\sharp s}\circ \phi=
\interp{c^I}^{\sharp s}\circ (E@\ADD_s)^{\sharp s}\circ \phi=
\interp{c^I}\circ E^{\sharp s}.
$$
and, for $c=\meas$ and $s'=s\cup\{(\alpha,b)\}$,
\begin{align*}
(E')^{\sharp s'} &= (E@\ADD_s;\meas^I)^{\sharp s'}\circ \phi 
 = \interp{\meas^I}^{\sharp s'}\circ (E@\ADD_s)^{\sharp s}\circ \phi \
= \iota_{b}^l\circ \pi_{b}^l\circ\interp{\meas^I}\circ E^{\sharp s}.
\end{align*}
\end{proof}

\begin{lemma}\label{lemma:help2}
Let $ E$ be a well-typed extended circuit, let 
$s\in @^{\mathsf s}E$ and let $\Gamma'\rhd C\rhd \bit^l,\Delta'$ be $E@\ADD_s$. Let $E'$ be $E$ with $C$ replaced by $C\xmapsto{l}{(\mathrm{id},\mathrm{id})}$. Then $@^{\mathsf s}E'=\EXT(E,@^{\mathsf s}E,\ADD_s,l)$ and for all $b\in \{0,1\}$, $(E')^{\sharp(s\cup(l,b))}=\pi^l_b\circ E^{\sharp s}$.
\end{lemma}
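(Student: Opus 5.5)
We first describe the super-addresses of $E'$ and then compute the slices by induction on the tree structure of $E$. For the first part, observe that $E'$ is obtained from $E$ by replacing the leaf $C=E@\ADD_s$ by $C\xmapsto{l}{(\mathrm{id},\mathrm{id})}$. This is exactly the situation of Lemma~\ref{lemma:superaddress}, with $\ADD:=\ADD_s$. We therefore obtain directly $@^{\mathsf s}E'=\EXT(E,@^{\mathsf s}E,\ADD_s,l)$. We still need $l\notin\mathcal L(E)$, i.e.\ that $l$ is not already the guard of some $\xmapsto{}$ in $E$. This follows from well-typedness: $l$ is an output wire label of the leaf $C$, while guards of earlier branchings are consumed by the typing rule of Fig.~\ref{fig:extended-circuit-typing}. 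Since the two new branches $\mathrm{id}$ contain no $\meas$-gates, no new measurement indices appear. Hence the super-addresses of $E'$ at address $\ADD_s\cup\{(l,b)\}$ are precisely the $s\cup\{(l,b)\}$.

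For the equation, we use the normal form \eqref{eq:dag} of Lemma~\ref{lemma:extequiv}, exactly as in the proof of Lemma~\ref{lemma:help1}. We write $E^{\sharp s}=C^{\sharp s}\circ\phi$, where $\phi$ is the composite of the slices $\pi^{l_i}_{s(l_i)}\circ C_i^{\sharp s}$ along the path from the root to $C$. (Strictly, \eqref{eq:dag} is stated for uniform circuits, but its proof by induction on $E$ never uses uniformity, so we reuse it for an arbitrary well-typed $E$.) The same path in $E'$ passes through the same nodes, with the same guards and the same circuits $C_i$; the only difference is that it ends at the new node $C\xmapsto{l}(\mathrm{id},\mathrm{id})$. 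By the defining clause of $(\cdot)^{\sharp}$ for nodes, we get
\[
(E')^{\sharp(s\cup\{(l,b)\})}=\mathrm{id}^{\sharp}\circ\pi^l_b\circ C^{\sharp s}\circ\phi=\pi^l_b\circ E^{\sharp s}.
\]
Here we use that $C^{\sharp(s\cup\{(l,b)\})}=C^{\sharp s}$, since $l\notin\mathcal L(C)$, together with the analogous fact for the $C_i$ and guards along the path.

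To make this rigorous without appealing informally to ``the path'', we will instead argue by induction on $E$. In the base case $E=C$, the claim is immediate from the node clause. In the inductive case $E=D\xmapsto{l'}(F_0,F_1)$, we have $s(l')=c$ for some $c$, and the replacement happens inside $F_c$. The node clause gives $(E')^{\sharp(s\cup(l,b))}=(F_c')^{\sharp(s\cup(l,b))}\circ\pi^{l'}_c\circ D^{\sharp s}$, and the induction hypothesis applied to $F_c$ finishes the argument.

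The main obstacle is purely bookkeeping: checking that extending $s$ with $(l,b)$ does not change the slices of the untouched subcircuits. This rests on $l$ being fresh with respect to $\mathcal L$ of every circuit on the path, which we derive from linearity of labels in the typing.
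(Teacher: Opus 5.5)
Your argument is correct and follows the paper's route. You write $E^{\sharp s}=C^{\sharp s}\circ\phi$ via \eqref{eq:dag}, unfold the clause for the new node $C\xmapsto{l}(\mathrm{id},\mathrm{id})$ to get $(E')^{\sharp(s\cup\{(l,b)\})}=\pi^l_b\circ C^{\sharp s}\circ\phi$, and take the super-address part from Lemma~\ref{lemma:superaddress}; your remark that \eqref{eq:dag} does not really need uniformity is a fair point the paper glosses over. One caveat: freshness of $l$ does not follow from well-typedness, since a guard label consumed by an ancestor node can be recreated inside the subtree (e.g.\ by $\termZero_l$) or reused as a guard in a sibling subtree, so it is a presupposition built into the definition of $\EXT$, and your computation only needs $l\notin\mathrm{dom}(\ADD_s)$.
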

\begin{proof}
Using \eqref{eq:dag}, we have
\begin{align*}
\pi^l_b\circ E^{\sharp s}
&=
\pi^l_b\circ \Big( C^{\sharp s}\circ\phi\Big)
\\
&=
(1-b)\Big( \pi^l_0\circ C^{\sharp s}\circ\phi\Big)+
b\Big(
 \pi^l_1\circ C^{\sharp s}\circ\phi\Big)
 \\
&=
(1-b)\Big(
\mathrm{id}\circ( \pi^l_0\circ C^{\sharp s}\circ\phi)\Big)+
b\Big(
\mathrm{id}\circ( \pi^l_1\circ C^{\sharp s}\circ \phi)\Big)\\
&=
(E')^{\sharp(s\cup(l,b))}.
\end{align*}
\end{proof}

\subsection{The Machine \evaltkm}\label{subsec:evaltkm}

We now provide a more formal definition of the machine \evaltkm. Our definition follows the essence of \cite{DalLago2017}, but is adapted to our setting.

\subsubsection{Quantum+Classical Registers}

Let us first define the kind of registers used by the machine.

\begin{definition}[quantum+classical register]
A \emph{quantum+classical environment} is a list $L=[l_1:\mathbb B_1, \dots, l_n:\mathbb B_n]$ of assignments relating labels $l_1,\dots, l_n$ with either $\qbit$ or $\bit$; for such $L$, let $L_{\qbit}$ and $L_{\bit}$ indicate the lists obtained from $L$ by selecting, respectively, the labels of type $\qbit$ and those of type $\bit$.

A \emph{quantum+classical register} is a triple $m=[L,Q,v]$, where:
\begin{itemize}
\item $L$ is a quantum+classical assignment;
\item $Q\in \mathbb C^{2^{|L_\qbit|}}$ is a quantum state over $|L_\qbit|$ qubits;
\item $v\in \{0,1\}^{|L_\bit|}$ is a list of $|L_\bit|$ Boolean values.
\end{itemize}

\end{definition}

Let us define the \emph{mixed state} associated with a quantum+classical register:

\begin{definition}
Define $\tocpm(0)=(1,0)\in \complex\times\complex$ and
$\tocpm(1)=(0,1)\in \complex\times\complex$. For any natural numbers
$m,n$, vector $Q\in \complex^{m}$ and Boolean vector $v\in \{0,1\}^n$,
let
\begin{itemize}
\item $ \tocpm(Q)=Q\cdot Q^{\dag} \in V_{m}=\complex^{m\times m}$,
where $\cdot$ indicates the outer product;
\item $\tocpm(v)=\iota_v(1)\in \otimes_{i=1}^nV_{(1,1)}=V_{(1,\dots,
1)}=\prod_{i=1}^{2^n}  \complex$, where $\iota_v: \complex\to
\prod_{i=1}^{2^n} \complex$ is the linear map $\iota_v(x)=(0,\dots,
0,x,0,\dots, 0)$ placing $x$ at the position corresponding to the
``address'' $v$;
\end{itemize}

For any environment $\Delta=\{l_1:\mathbb B_1,\dots,l_n:\mathbb B_n\} $, with $l_1<\dots<l_n$, we have $\interp{\Delta}:=\interp{\mathbb B_1}\otimes\dots\otimes\interp{\mathbb B_n}$.
Given a quantum+classical register $L=[l_1:\mathbb B_1,\dots,l_n:\mathbb B_n]$,
define $\interp{L}:=\interp{|L|}$. However, recall that $l_1,\dots, l_n$ need not be in $<$-order, so in fact 
$\interp{L}=\interp{B_{\sigma(1)}}\otimes\dots\otimes\interp{B_{\sigma(n)}}$ for some uniquely defined permutation $\sigma\in \mathfrak S_n$.
Call then $\varphi_L$ the (uniquely defined) isomorphism
\[
\varphi_L: \interp{L_\qbit}\otimes\interp{L_\bit}\longrightarrow \interp{\mathbb B_{\sigma(1)}}\otimes\dots\otimes\interp{\mathbb B_{\sigma(n)}}=\interp{L}.
\]
For any quantum+classical memory $m=[L,Q,v]$, we finally define 
\[
\tocpm(m):= \varphi_L(\tocpm(Q)\otimes\tocpm(v))\in \interp{L}.
\]
\end{definition}

The intuition for the definition of $\tocpm(m)\in\interp{L}$ is that the qubits
and bits in $Q$ and $v$ are re-arranged to be ordered following the linear
ordering of the corresponding labels. We now consider the case of
\emph{pseudo-distributions} of quantum+classical registers:
\begin{definition}
Let $\theta=\{m_1^{\theta_1},\dots, m_n^{\theta_n}\}$ be a pseudo-distribution of quantum registers, where 
$m_n=[L_n,Q_n,v_n]$.
$\theta$ is said \emph{uniform} where for all $i,j=1,\dots, n$, $L_i=L_j$. For a uniform such $\theta$, we define
\[
\tocpm(\theta):=\sum_{i=1}^n \theta_i\cdot \tocpm(m_i)\in \interp{L_1}.
\]
\end{definition}

\begin{example}
Consider the quantum states $Q_0,Q_1\in \complex^2$ and the Boolean
values $v_0=0$ and $v_1=1$.  Let $L=[l_1:\qbit,l_2:\bit]$ (notice that $\varphi_L$ is the identity). Let $\theta$ be the uniform pseudo-distribution $\{
[L,Q_0, v_0]^p,[L,Q_1, v_1]^{1-p}\}$. We have then
$$
\tocpm(\theta)=
\left(pQ_0Q_0^{\dag}, (1-p)Q_1Q_1^{\dag}\right)
\in V_{2 \otimes (1,1)}=  \complex^{2\times 2}\times \complex^{2\times 2}=\interp{L}.
 $$

\end{example}

The following lemma explains how the application of unitary gates to a distribution of quantum states relates to the corresponding mixed states.

\begin{lemma}\label{lemma:CU} For any unitary gate
$U:\qbit^{n}\to\qbit^{n}$ and pseudo-distribution of quantum states
$\theta=\{Q_1^{\theta_1},\dots,Q_k^{\theta_k}\}$, where $Q\in \mathbb C^{2^n\times 2^n}$, $\interp{U}(
\tocpm(\theta))=\tocpm(\theta_U)$, where
$\theta_U=\theta_U\{({U}Q_1)^{\theta_1},\dots,({U}Q_k)^{\theta_k}\}$.
\end{lemma}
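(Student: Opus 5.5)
The plan is to reduce the statement to linearity of the superoperator $\interp{U}(A)=UAU^{\dagger}$ together with the definition of $\tocpm$ on pseudo-distributions. Concretely, $\tocpm(\theta)$ is by definition the convex combination $\sum_{i=1}^k\theta_i\cdot\tocpm(Q_i)=\sum_{i=1}^k \theta_i\, Q_iQ_i^{\dagger}$. Here I read each $Q_i$ as a state vector in $\complex^{2^n}$, so that $\tocpm(Q_i)$ is the outer product; the statement's ``$Q\in\complex^{2^n\times 2^n}$'' is presumably a typo for this. No classical bits are involved, so the isomorphism $\varphi_L$ and the bit-component $\tocpm(v)$ play no role.

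The first step is to use that $\interp{U}$ is a linear map on $\complex^{2^n\times 2^n}$, since it is a morphism of $\cat{CPM}$. This gives
\[
\interp{U}(\tocpm(\theta))=\sum_{i=1}^k\theta_i\,\interp{U}(Q_iQ_i^{\dagger}).
\]
The second step is the pointwise computation
\[
\interp{U}(Q_iQ_i^{\dagger})=U Q_iQ_i^{\dagger}U^{\dagger}=(UQ_i)(UQ_i)^{\dagger}=\tocpm(UQ_i),
\]
using $(UQ_i)^{\dagger}=Q_i^{\dagger}U^{\dagger}$. Summing with the weights $\theta_i$ then yields exactly $\tocpm(\theta_U)$.

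The only point needing care, and I expect it is the sole mild obstacle, is that $\theta_U$ must still be a legitimate pseudo-distribution with the same weights. The elements $UQ_i$ are pairwise distinct because $U$ is injective, and they remain normalized because $U$ is unitary. Should the intended reading allow collisions anyway, merging coinciding entries by adding their weights leaves $\tocpm$ unchanged, again by linearity. Similarly, if the gate acts on a subset of qubits inside a larger register, one first rewrites it as $U\otimes\mathrm{Id}$, which is again unitary, and the same argument applies.
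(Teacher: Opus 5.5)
Your proposal is correct and follows essentially the same argument as the paper: expand $\tocpm(\theta)$ as $\sum_i\theta_i Q_iQ_i^{\dagger}$, push $\interp{U}$ through the sum by linearity, and use $UQ_iQ_i^{\dagger}U^{\dagger}=(UQ_i)(UQ_i)^{\dagger}$. The paper leaves two points implicit that you make explicit: reading each $Q_i$ as a vector in $\complex^{2^n}$, and checking that $\theta_U$ is still a well-formed pseudo-distribution.
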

\begin{proof}
We have 
$  \interp{U}( \tocpm(\theta))  =\interp{U}( \sum_i\theta_i Q_iQ_i^{\dag}) 
   = \sum_i\theta_i ({U}Q_i)({U}Q_i)^{\dag} 
   = \tocpm(\theta_U)$.
\end{proof}

\subsubsection{Definition of the Machine}

We will now provide a sketch of the definition of the \evaltkm.

Let us recall that a configuration of the \evaltkm\ is a triple $\config=(\pi,\MM,m)$, where 
all tokens in $\MM$ have \emph{the same} address $\ADD_{\config}$ (called \emph{the address of $\config$}), and 
$m=[L,Q,c]$ is a quantum+classical register satisfying $|L|=\mathbf{ENV}(\MM)$. In other words, the labels assigned in $L$ coincide with the positions of the tokens in $\MM$.  

Given a derivation $\pi$,
\begin{varitemize}
\item an \emph{initial configuration} for $\pi$ has the form $(\pi, \mathbf{TKN}(\negative_\pi\cup \negones_\pi\cup\ones_\pi^{\downarrow}), m)$, where $m=[L,Q,c]$ is any quantum+classical register such that
$|L|=\mathbf{ENV}(\MM)$;

\item a \emph{final configuration} for $\pi$ has the form $(\pi, \MM, m)$, where
$\mathbf{PSN}(\MM)=\mathbf{TKN}(\positive_\pi\cup \posones_\pi\cup\ones_\pi^{\downarrow}\cup\guard\pi)$ and 
 $m=[L,Q,c]$ is any quantum+classical register such that
$|L|=\mathbf{ENV}(\MM)$.

\end{varitemize}

Observe that the positions of the tokens in the initial and final configurations are as for the \circuittkmz. However, while in the final configuration the \circuittkmz may contain \emph{multiple} tokens over any final position (each with a different address),  the \evaltkm may only contain \emph{one} such token, since all tokens have the same address.

The reduction of the \evaltkm is, as in \qlambda, probabilistic, and is thus described by a monadic reduction relation $\config\totkm\{\config_1^{\mu_1},\dots, \config_n^{\mu_n}\}$, that we describe below.

The \emph{structural rules} are defined similarly to the \circuittkmz: more precisely, for each structural rule
\[
(\{\mathfrak a:A\}\cup\MM,E)\totkm (\set{\mathfrak b:B}\cup\MM, E[l_{\mathfrak a}\mapsto l_{\mathfrak b}])
\]
of the \circuittkmz, there is a corresponding (monadic) structural rule
\[
(\{\mathfrak a:A\}\cup\MM,m)\totkm\left\{ (\set{\mathfrak b:B}\cup\MM, m[l_{\mathfrak a}\mapsto l_{\mathfrak b}])^1\right\}
\]
of the \evaltkm. This also holds for the guard rules.

The \emph{circuit rules} of the \evaltkm are defined as follows:
\begin{varitemize}
\item for all $U\neq \new, \meas$, we have a transition
{\small
\begin{align*}
 (\pi,\set{\mathfrak a_1 : \baseT_1, \dots, \mathfrak a_n : \baseT_n} \cup \MM, 
             [L,Q,c])
            \totkm
\left\{ 
 \left (\pi, \set{\mathfrak a_{n+1} : \baseT_{n+1}, \dots, \mathfrak a_{2n} : \baseT_{2n}} \cup \MM,
            [L[l_{\mathfrak a_i}\mapsto l_{\mathfrak a_{n+i}}], U^{I}Q, c]
        \right   )^1
        \right\},
            \end{align*}
            }
    where $U^IQ$ indicates the application of the gate $U$ to $Q$ following the labels $I=\{l_{\mathfrak{a}_1},\dots, l_{\mathfrak{a}_n}\}$. 

\item    When $U=\new$, the transition is 
{\small
\begin{align*}
 (\pi,\set{\mathfrak a : \baseT} \cup \MM, 
             [L,Q,c])
            \totkm\left\{
           \left (\pi, \set{\mathfrak b : \baseT'} \cup \MM,
            [L[l_{\mathfrak a}:\bit\mapsto l_{\mathfrak b}:\qbit], Q\otimes\ket{b}, c-\set{(l,b)}]
        \right   )^1\right\}.
            \end{align*}
            }

\item    When $U=\meas$, the operation $U^IQ$ is replaced by an actual \emph{measurement} of the qubit at $l_{\mathfrak a_1}$, and the transition occurs with the probability associated $|\alpha_b|^2$ with the measured output:
\begin{align*}
 (\pi,\set{\mathfrak a : \baseT} \cup  \MM, 
           &  [L,Q,c])\\
           \totkm\Big\{&
           \left (\pi, \set{\mathfrak b : \baseT'} \cup \MM,
            [L[l_{\mathfrak a}:\qbit\mapsto l_{\mathfrak b}:\bit], Q_0, c\cup\set{(l,0)}]
        \right   )^{|\alpha_0|^2},\\
        &  \left (\pi, \set{\mathfrak b : \baseT'} \cup \MM,
            [L[l_{\mathfrak a}:\qbit\mapsto l_{\mathfrak b}:\bit], Q_1, c\cup\set{(l,1)}]
        \right   )^{|\alpha_1|^2}
        \Big\},
            \end{align*}
    where $Q=\alpha_0(Q_0\otimes \ket 0)+\alpha_1(Q_1\otimes\ket 1)$.
    \end{varitemize}
    
    \medskip
    
The asynchronous rules are defined as follows: 
\begin{varitemize}   
\item there is a transition
\[
(\pi,\set{(\sigma:\bit,\ADD)}\cup\MM,m)\totkm \left\{(\pi,\set{(\sigma:\bit,\ADD_b)}\cup\ADD_b(\MM), m^*)^1\right\}, 
\]
provided that the label $l=\mathsf{lab}(\bit)$ does not occur in $\ADD$, and where:
	\begin{itemize}
	\item $m^*$ is obtained from $m$ by removing the value $(l,b)$ contained in the classical register;
	\item $\ADD_b$ is $\ADD_b\cup\{(l,b)\}$.
	\end{itemize}
\item For each conditional transition 
{\small
\[
  (\set{( \bit, \ADD)} \cup \set{\mathfrak a :  A} \cup \MM, E)
  \totkm
  (\set{( \bit,\ADD)} \cup \set{\mathfrak b: B} \cup \MM, E[l_{\mathfrak a}\mapsto l_{\mathfrak b}]_\ADD), \text{provided }(l\to b)\in \ADD
\]
}
of the \circuittkmz there is a corresponding transition
{\small
\[
  (\set{( \bit, \ADD)} \cup \set{\mathfrak a :  A} \cup \MM,m)
  \totkm
  \left\{
  (\set{( \bit,\ADD)} \cup \set{\mathfrak b: B} \cup \MM, m[l_{\mathfrak a}\mapsto l_{\mathfrak b}]_\ADD)^1\right\}, \text{provided }(l\to b)\in \ADD
\]
}
of the \evaltkm.

\end{varitemize}

The following result, which expresses the soundness of the \evaltkm with respect to the evaluation of quantum closures, can be proved along the lines of Theorem 29 in \cite{DalLago2017}.

\begin{proposition}
For any Boolean type derivation $\pi:\Gamma\vdash M:A$ and quantum+classical register $m=[L,Q,c]$, 
$\DISTM{Q,L_{\qbit},M}=\DIST{Q,L,M}$.
\end{proposition}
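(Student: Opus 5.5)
The plan is to prove the statement by a simulation argument between runs of the \evaltkm\ on the Boolean derivation $\pi$ and reduction sequences of $\qlambda$, following the structure of the proof of Theorem 29 in \cite{DalLago2017}, rather than working directly with $\DISTM{-}$. First we fix what both sides mean. Since $\pi$ is a Boolean typing, the \evaltkm\ started from the initial configuration with register $m=[L,Q,c]$ terminates; its final configurations, weighted by their probabilities, form a pseudo-distribution of registers over $\mathbf{ENV}(\positive_\pi)$. On the $\qlambda$ side, the Termination theorem of Appendix~\ref{app:lambda} gives a unique distribution of value states, and since $A$ is first-order each value is a tuple of variables and Booleans, hence a register. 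Both sides are thus distributions over the same kind of object, and the statement is the equality of these two distributions.

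We proceed by induction on the length of the $\qlambda$ reduction (equivalently, on $n$ in $\to_n$), proving a \emph{one-step invariance} lemma. If $[Q,L,M]\to\{[Q_i,L_i,M_i]^{\mu_i}\}$, with typing derivations $\pi_i$ of the $M_i$ given by Subject Reduction (Proposition~\ref{prop:subject-reduction}), then
\[
  \DISTM{\pi,m}=\sum_i \mu_i\,\DISTM{\pi_i,m_i}.
\]
This turns the statement into a check, rule by rule, that a reduction step does not change the output distribution of the machine. For the distributions on each side to be well defined we rely on the confluence of the \evaltkm\ (up to probabilistic sums), as in \cite{DalLago2017}.

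The rule-by-rule check splits into three cases.
\begin{itemize}
  \item \emph{Classical rules} (beta, let-pair, let-unit). These are the standard GoI invariance results: token paths in $\pi$ and in the reduced derivation correspond, modulo identity-axiom cuts. Substituting a value $V$ for $x$ collapses the axiom/application detour, and each $\pi$-run projects onto a $\pi_i$-run that visits the same synchronization points in the same order. Hence the register updates coincide.
  \item \emph{Operator rules.} The contractum is a variable or Boolean whose register content equals what the \evaltkm\ computes when the tokens cross the gate. In the unitary case this is immediate, using Lemma~\ref{lemma:CU} at the level of mixed states. For $\meas$, the machine's probabilistic branching with weights $|\alpha_b|^2$ matches the two closures exactly. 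For $\new$, it matches the fresh qubit.
  \item \emph{If-then-else on $\tc/\fc$.} Once the guard token carries value $b$, the \evaltkm\ only ever visits the sub-derivation of branch $P_b$. It is therefore equivalent to the machine on the derivation with the other branch pruned, which is the typing of the contractum.
\end{itemize}
The congruence rule follows because evaluation contexts correspond to sub-derivations, and tokens outside the redex behave identically. The induction ends at value states, where the machine merely routes tokens structurally and outputs the register unchanged.

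The main obstacle I expect is the beta/substitution case in the \emph{multitoken} setting. Tokens interleave, so matching a run on $\pi$ against a run on $\pi_i$ requires a careful invariant, a relation between configurations defined by a token-position map that forgets the redex, and this relation must be preserved under every interleaving. I would state it as a weak bisimulation (structural steps in the redex are silent) and prove preservation case by case. Deadlock-freedom of the \evaltkm\ on Boolean typings then ensures that no branch of the distribution is lost.
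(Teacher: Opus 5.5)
Your overall route, a one-step invariance lemma $\DISTM{\pi,m}=\sum_i\mu_i\,\DISTM{\pi_i,m_i}$ plus induction down to value states, is what the paper intends by deferring to Theorem~29 of \cite{DalLago2017}. However, the mechanism you propose for the substitution cases (beta and let-pair) fails as stated.

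The conditional rule shares its context between the two branches, so $M[x\leftarrow V]$ copies the derivation of $V$ into both branches of every conditional whose context contains $x$. This interacts badly with how unit tokens are started. The \evaltkm\ places a token on every $\starc$ in $\ones^\downarrow_\pi$, i.e.\ on every unit axiom outside conditional branches, \emph{including those inside $\lambda$-bodies}. Unit axioms inside a branch only receive tokens once the guard has been resolved.

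Take $q:\qbit\vdash (\lambda f.\,\ite{G}{f\,q}{f\,q})\,V:\qbit\otimes\bit$, with $G$ closed and $V=\lambda z.\pv{z}{\meas(\opn{H}(\new(\termZero\,\starc)))}$. On $\pi$, the token born at the $\starc$ of $V$ can cross $\termZero$, $\new$, $\opn{H}$ and perform the measurement before $G$ is evaluated at all. On the derivation of the reduct $\ite{G}{V\,q}{V\,q}$, the corresponding tokens do not exist until the guard fires. The same happens with free qubits of $V$ that feed gates in its body.

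So a $\pi$-run can take non-structural, even probabilistic, steps inside the redex that the reduct cannot match at that point. Hence no weak bisimulation in which only structural steps are silent can be preserved under every interleaving. Your claim that the runs visit the same synchronization points in the same order is also false here.

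The lemma itself is still true. The postponed operations act on qubits disjoint from everything the run does before the branch is entered, so they commute with it. The missing step is to exploit this, most simply through confluence of the \evaltkm, which you currently use only for well-definedness.

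Since $\DISTM{\pi,m}$ does not depend on the schedule, fix on the $\pi$ side a run in which every token born in $V$, or entering it through its free variables, stays idle until the branch containing the relevant occurrence of $x$ is entered. This cannot deadlock: such tokens can only flow towards those occurrences, which lie inside the branches, so they are never needed to fire the guard.

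Then relate the single copy of $V$ in $\pi$ with the copy in the selected branch of $\pi_i$. The position map is many-to-one, and the copy in the other branch is never populated. With this, the step-by-step simulation you describe goes through. Equivalently, you can state your relation up to permutation of steps acting on disjoint parts of the register.
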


\subsection{Coherent Configurations of the \evaltkm}\label{subsec:evalpara}

We now introduce a \emph{parallel} probabilistic reduction for the $\evaltkm$: a
state for this reduction will be given by a pseudo-distribution of configurations of
the $\evaltkm$ which, intuitively, correspond to the outcomes of different ways
in which the machine execution could have gone.

\begin{definition}
A \evaltkm\ configuration $\config=(\pi, \MM,m)$ is called \emph{well-sliced}
when for all $(l,\sigma,\ADD),(l',\sigma',\ADD')\in \MM$, one has $\ADD=\ADD'$.
For a well-sliced configuration $\config$ with shared set of addresses $\ADD$,
we let $@\config:=\ADD$.
\end{definition}

\begin{lemma}
Any reachable configuration $\config$ of the $\evaltkm$ is well-sliced, and the set $@\config$ contains all choices made when traversing an if-then-else.
\end{lemma}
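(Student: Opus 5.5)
The natural approach is induction on the length of the reduction sequence leading from an initial configuration of the \evaltkm\ to $\config$, proving simultaneously the two claims: (a) every token in $\config$ carries the same address, so $@\config$ is well defined; and (b) for every label $l$ in the domain of $@\config$ there is an if-then-else in $\pi$ whose guard has label $l$ and which has been traversed, with $@\config(l)$ equal to the Boolean read on the guard. Conversely, every traversal of a guard appears in $@\config$. For the base case, the initial configuration is $(\pi,\mathbf{TKN}(\negative_\pi\cup\negones_\pi\cup\ones_\pi^{\downarrow}),m)$, and all its tokens have the empty address by the definition of $\mathbf{TKN}$. So well-slicedness holds trivially and no if-then-else has been traversed.

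The inductive step is a case analysis on the rule applied.

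\emph{Structural rules and guard rules.} These move a single token $\mathfrak a_1$ to $\mathfrak a_2$, and by the notational convention $\mathfrak a_2$ inherits the address of $\mathfrak a_1$. Both invariants are therefore preserved.

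\emph{Circuit rules (unitary, $\new$, $\meas$).} These replace tokens $\mathfrak a_1,\dots,\mathfrak a_n$ at a common address with tokens $\mathfrak a_{n+1},\dots,\mathfrak a_{2n}$ at the same address. In the $\meas$ case, each branch of the resulting distribution is a configuration whose tokens still share that address; only the register changes. Invariant (b) is untouched.

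\emph{Conditional transitions.} These move a token into or out of a branch only when $(l\to b)\in\ADD$. Addresses are unchanged, so both invariants are preserved.

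The only interesting case is the \emph{asynchronous rule}, which fires when a token $(\sigma:\bit,\ADD)$ sits at the guard output with $l=\mathsf{lab}(\sigma)\notin\mathrm{dom}(\ADD)$. In the \evaltkm\ version, the classical register holds a unique value $b$ for $l$. The rule replaces the guard token by $(\sigma,\ADD_b)$ and every other token by its image under $\ADD_b(-)$, i.e.\ all addresses are reset to $\ADD\cup\{(l,b)\}$. Since by the induction hypothesis all tokens previously shared $\ADD$, they now all share $\ADD_b$, so (a) holds. For (b), the new entry $(l,b)$ records exactly the Boolean read on the guard of this if-then-else. The side condition $l\notin\mathrm{dom}(\ADD)$ guarantees the entry is fresh and is added exactly once, so no earlier choice is overwritten.

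The main obstacle is making precise the second half of the statement: ``contains all choices made when traversing an if-then-else''. I would formalise it as invariant (b) above. The delicate point is to check that a token can only enter a branch of the if-then-else with guard $l$ via a conditional transition requiring $(l\to b)\in\ADD$. Such a $b$ can only have been inserted by the asynchronous rule, at the moment the guard value $b$ was read. This needs the observation that labels of distinct if-then-else guards are distinct, which follows from $\mathsf{lab}$ being a bijection. Once that is in place, every traversal is witnessed in $@\config$, and conversely every entry of $@\config$ comes from a traversal, which concludes the induction.
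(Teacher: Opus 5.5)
Your proof is correct and follows the paper's argument. The only transition that changes addresses is the one that reads the guard bit, which you call the asynchronous rule and the paper's proof calls the guard rule; it resets every token's address to the same $\ADD\cup\{(l,b)\}$, and all other rules preserve addresses. Your explicit induction on reduction length, and your formalisation of the ``all choices'' clause as invariant (b), spell out what the paper states in a single sentence.
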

\begin{proof}
This follows from the observation that the unique rule which modifies the addresses $G$ in the tokens is the guard rule, and this rule applies uniformly to \emph{all} the tokens in the current configuration.
\end{proof}

We need the following notion:
\begin{definition}[coherent pseudo-distribution]
For all extended circuits $E$, a \emph{$E$-indexed coherent pseudo-distribution} is a pseudo-distribution
$\mu=\{\config_s^{\mu_s}\}_{s\in @^{\mathsf s}E}$ (where $\config_s=(\pi,\MM_s,m_s)$) of well-sliced configurations of the $\evaltkm$ such that for all $s,t\in @^{\mathsf s}E$:
\begin{enumerate}
\item $@\config_s=\ADD_s$;

\item if $\ADD_s=\ADD_t$, then $\MM_s=\MM_t$.
\end{enumerate}

\end{definition}

In a coherent pseudo-distribution of configurations, two configurations over equivalent slices must have the same tokens, but may have different quantum+classical registers. Intuitively, two configurations over non-equivalent super-addresses correspond to two runs of the $\evaltkm$ which differ in the choices made when entering some if-then-else rule (in which case the runs cannot have the tokens in the same positions), while two configurations over equivalent super-addresses correspond to runs of the $\evaltkm$ which do not (yet) differ in the choices made when entering some if-then-else rule, but which differ in the output produced by some $\meas$-gate (but whose tokens have nonetheless moved in the same way).

In the following we will relate a single run of the $\circuittkmz$ with many parallel runs of the $\evaltkm$: a single configuration of the former, producing a circuit $E$, will be related with an $E$-indexed pseudo-distribution of coherent configurations of the latter; one token move of the former will be related with one or more token moves of the latter. For this, we need to introduce a notion of parallel reduction of pseudo-distributions as we do below.

\begin{definition}\label{def:redtype}
A \emph{reduction type of the $\evaltkm$} is a tuple $a=(\MM_0,\MM_1,\dots,\MM_n)$ of well-sliced sets of tokens of any of the following kinds:
\begin{itemize} 

\item \emph{deterministic type}: $n=1$, and $\MM_1$ is obtained from $\MM_0$ by moving a finite number of tokens according to any of the structural rules of the $\evaltkm$;

\item \emph{choice type}: $n=2$ and $\MM_1=\MM_2$ is obtained from $\MM_0$ by moving one token through a $\meas$-gate;

\item \emph{pseudo-deterministic type}: $n=2$, $\MM_0$ contains a token in the guard of an if-then-else of label $l$ and, $\MM_1$ is obtained by updating the addresses with $(l,0)$, while $\MM_2$ is obtained by updating the addresses with $(l,1)$.
\end{itemize}
\end{definition}

Let us first define what it means for the $\circuittkmz$ to make a reduction of type $a$.
Given a $\circuittkmz$-configuration $(\pi,\MM,E)$ and an address $\ADD\in @E$, let $\MM|_\ADD$ indicate the set of tokens in $\MM$ whose address is $\ADD$.

\begin{definition}
Let $\config=(\pi,\MM,E),\donfig=(\pi,\MM',E')$ be two configurations of the $\circuittkmz$, let $\ADD\in @E$, and $a=(\MM_0,\MM_1,\dots, \MM_n)$, $n=1,2$, be a reduction type.
We say that there is a \emph{reduction step of type $a$ from $\config$ to $\donfig$ at address $\ADD$}, noted $\config\to^a_\ADD\donfig$ if the following hold:
\begin{itemize}
\item $\MM|_\ADD=\MM_0$ and $\MM'$ is obtained from $\MM$ by replacing the tokens $\MM_0$ with $\bigcup_{i=1}^n\MM_i$;
\item $E'$ is obtained by updating $E$ following the reduction associated with $a$.

\end{itemize}
\end{definition}

We now consider the (more delicate) case of parallel probabilistic reduction of type $a$ for the $\evaltkm$:

\begin{definition}[parallel reduction of the \evaltkm]\label{def:parallelred}
Let $E,F$ be two extended circuits. 
Let $\mu=\{\config_s^{\mu_s}\}_{s\in @^{\mathsf s}E}$, $\nu=\{\donfig_t^{\nu_t}\}_{t\in @^{\mathsf s}F}$ be, respectively, an $E$-indexed and a $F$-indexed coherent pseudo-distribution of configurations of the $\evaltkm$. Let $a$ be a reduction type and $\ADD\in @E$. We say that there is a \emph{parallel reduction step of type $a$ from $\mu$ to $\nu$ at address $\ADD$}, noted $\mu\mtocl^a_{\ADD}\nu$, if the following hold:

\begin{itemize}
\item $F$ is obtained from $E$ by applying the circuit transformation corresponding to the reduction type $a$ at address $\ADD$;

\item for all $s\in @^{\mathsf s}E$ such that $\ADD\neq \ADD_s$, $s\in @^{\mathsf s}F$ and $\config_s=\donfig_s$, $\mu_s=\nu_s$ and, conversely, for all $s\in @^{\mathsf s}F$ such that $\ADD\neq \ADD_s$, $s\in @^{\mathsf s}E$ and $\config_s=\donfig_s$, $\mu_s=\nu_s$ (i.e.~the reduction does not move any token outside address $\ADD$);
\item one of the following holds:
	\begin{enumerate}
	
	\item $a=(\MM_0,\MM_1)$ is of deterministic type, the tokens in $\MM_0,\MM_1$ are at address $\ADD$,
	$@^{\mathsf s}E=@^{\mathsf s}F$,  
	and for all $t\supseteq \ADD$, 
	$\config_t=(\pi,\MM_0,m_t)$ and 
	$\donfig_t=(\pi,\MM_1,m'_t)$, and $m'_t$ is obtained from $m_t$ by possibly applying the gate corresponding to the reduction, and $\nu_t=\mu_t$;

	\item $a=(\MM_0,\MM_1,\MM_1)$ is of choice type, corresponding to some gate $\meas^l$ of index $\alpha$, the tokens in $\MM_0,\MM_1$ are at address $\ADD$, 
	$@^{\mathsf s}F=\EXT(E,@^{\mathsf s}E,\ADD,\alpha)$, 
 and for all $t\in @^{\mathsf s}E$ and $b\in \{0,1\}$, $\config_t=(\pi,\MM_0,m_t)$, 
	$\donfig_{t\cup\{(\alpha,b)\}}=(\pi,\MM_1,m'_{t,b})$, where $m'_{t,b}=[L'_t,Q_t^b,v_t::(l:b)]$ is obtained from $m_t=[L_t,Q_t,v_t]$ by applying the $\meas$-gate to the state $Q_t=\lambda_0(Q_{t}^0\otimes\ket 0)+\lambda_1(Q_{t}^1\otimes \ket 1)$ and measuring the value $b$ (notice that $L'_t$ is obtained from $L_t$ by updating $l:\qbit$ into $l:\bit$); correspondingly, we have $\nu_{t\cup\{(\alpha,b)\}}=\lambda_{b}\cdot\mu_t$;
	
	 \item $a=(\MM_0,\MM_1,\MM_2)$ is of pseudo-deterministic type, 
	 the tokens in $\MM_0$ are at address $\ADD$, those in $\MM_{b+1}$ are at address $\ADD\cup\{(l,b)\}$, for $b=0,1$, $\MM_0$ contains one token $(l,\sigma,\ADD)$ in the guard $\bit^l$ of an if-then-else, 
	 $@^{\mathsf s}F=\EXT(E,@^{\mathsf s}E,\ADD,l)$, and
	 for all $s\supseteq \ADD$, 
	$\config_s=(\pi,\MM_0,m_s)$ and 
	 $\donfig_{s\cup\{(l,b)\}}=(\pi,\MM_{1+b},m'_s)$, with $m'_s$ obtained from $m_s$ by removing the
value $(l,b')$ contained in the classical register; finally, $\nu_{s\cup\{(l,b)\}}=\mu_s$ if $b=b'$ and is $0$ otherwise.
	\end{enumerate}

\end{itemize}
\end{definition}

\begin{rem}
Given a parallel reduction step $\mu\mtocl_{\ADD}^a \nu$ of pseudo-deterministic type (with the notation from  above), triggered by an if-then-else with guard $\bit^l$, 
and a super-address $s\in @^{\mathsf s}E$, there are four possible evolutions, depending on the value of the slice choice $(l,b)$ made for the if-then-else and for the value $(l,b')$ that is actually red from the classical register:
\begin{enumerate}
\item $b=b'=0$: in this case, the tokens move onto the left-hand part of the if-then-else (since the value $b'=0$ is read from the register), and 
the slice $s\cup\{(l,b=0)\}$
takes into account those tokens; 
accordingly, the probability of this slice coincides with the probability of the slice $s$.

\item $b=0,b'=1$: in this case, the tokens move onto the right-hand part of the if-then-else (since the value $b'=1$ is read from the register), but 
the slice $s\cup\{(l,b=0)\}$ only
takes into account the tokens moving onto the left-hand part (yet no token has any chance to get there); accordingly, the probability of this slice is $0$.

\item $b=b'=1$: similar to case 1.
\item $b=1,b'=0$: similar to case 2.
\end{enumerate}
\end{rem}

\begin{rem}[pseudo-distribution $a_{\ADD}(\mu)$]
For fixed $a$ and $\ADD$, the relation $\mu\mtocl^a_{\ADD}\nu$ is in fact \emph{functional}:
for all $\nu'$ such that $\mu\mtocl^a_{\ADD}\nu'$, it must be $\nu=\nu'$. In other words, the data $\mu,a,\ADD$ uniquely determines the produced pseudo-distribution $\nu$, that we will henceforth indicate as $a_{\ADD}(\mu)$.
\end{rem}

We now establish the termination and soundness of parallel reduction:

\begin{proposition}[termination of parallel $\evaltkm$-reduction]\label{thm:paraterm}
There exists no infinite chain $\mu_0\mtocl^{a_1}_{\ADD_1}\dots \mtocl^{a_{n-1}}_{\ADD_{n-1}} \mu_n \mtocl^{a_n}_{\ADD_n}\dots$.
\end{proposition}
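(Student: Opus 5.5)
Fix a coherent pseudo-distribution $\mu_0$ and a chain of parallel reductions $\mu_0\mtocl^{a_1}_{\ADD_1}\mu_1\mtocl^{a_2}_{\ADD_2}\cdots$, with the extended circuits $E_0,E_1,\dots$ indexing the $\mu_i$. The plan is to bound the length of the chain by a well-founded measure built from the token positions of the \evaltkm{} configurations, reusing the termination of the sequential \evaltkm{} (which holds as for the $\textsf{MSIAM}$ of \cite{DalLago2017}, and is what underlies the equality $\DISTM{-}=\DIST{-}$ of the preceding proposition). Concretely, I would show that every parallel step at address $\ADD$ can be \emph{projected} onto each slice $s\in @^{\mathsf s}E_{i+1}$ with $\ADD_s\supseteq\ADD$, as a genuine single-token (or single-synchronisation) step of the \evaltkm{} from $\config_{s'}$ to $\donfig_s$, where $s'$ is the restriction of $s$ to $\mathcal L(E_i)$. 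By the definition of the extensions $\EXT(E,-,\ADD,\ell)$, this restriction is indeed a super-address of $E_i$.

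\textbf{Step 1: projection of a single parallel step.} I would argue by cases on the reduction type $a$ of Definition~\ref{def:redtype}. For the deterministic type, the configurations $\config_t$ and $\donfig_t$ with $t\supseteq\ADD$ are related by a structural or circuit rule of the \evaltkm. For the choice type, $\donfig_{t\cup\{(\alpha,b)\}}$ is one of the two outcomes of the $\meas$ transition from $\config_t$. For the pseudo-deterministic type, $\donfig_{s\cup\{(l,b)\}}$ is the outcome of the asynchronous guard transition; when $b\neq b'$ this holds only up to its probability being $0$, but the token set $\MM_{1+b}$ is still obtained by a legal move from $\MM_0$. Slices not extending $\ADD$ are left untouched, by the second clause of Definition~\ref{def:parallelred}.

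\textbf{Step 2: threading a slice through the chain.} In an infinite chain, the super-address sets grow by extension, so they form a finitely branching tree: each step extends a set by at most one label, and only along address $\ADD_i$. By K\H{o}nig's lemma there is an infinite branch $s_0\subseteq s_1\subseteq\cdots$ with $s_i\in @^{\mathsf s}E_i$, compatible with the following property: each address $\ADD_{i+1}$ chosen at a step either extends $\ADD_{s_i}$, in which case the configuration on the branch moves, or it does not, in which case that configuration is unchanged. Since there are finitely many addresses at any time, I would pick the branch so that infinitely many steps act on it. The plan for this is to consider the token multiset measure summed over all current addresses instead of a single branch. Along such a branch, Step~1 yields an infinite \evaltkm{} run, once the 0-probability pseudo-deterministic moves are treated as ordinary moves on token sets. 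This contradicts termination of the \evaltkm, which is proven there by a measure on token positions insensitive to the register contents.

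\textbf{Main obstacle.} The delicate point is Step~2. I expect to replace the K\H{o}nig argument with a direct measure: the finite multiset, over $\ADD\in @E_i$, of the GoI termination measure of the token set $\MM|_\ADD$. Each parallel step replaces one element by one smaller element (deterministic or choice type) or by two smaller elements (pseudo-deterministic type), so the Dershowitz--Manna multiset ordering decreases strictly. The real work is to verify that the \evaltkm{} termination measure depends only on the token set, and not on the register. It must also still decrease on the $b\neq b'$ branch, where the machine would never actually go; this requires checking that the sequential termination proof covers both branches of every conditional.
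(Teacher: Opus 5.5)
Your argument is sound, but it is organised differently from the paper's.

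\textbf{The paper's route.} It lifts each parallel step at $\ADD$ slice-wise to probabilistic \evaltkm{} steps. It observes that a slice of positive weight always has a successor of positive weight. It then applies K\H{o}nig's lemma to the tree of these moves, extracting an infinite \evaltkm{} run of positive probability, which contradicts the cited termination result.

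\textbf{Your route.} You order configurations by the Dershowitz--Manna multiset extension, over $\ADD\in @E_i$, of a register-independent termination measure on $\MM|_{\ADD}$. Your first K\H{o}nig sketch, with branches $s_0\subseteq s_1\subseteq\cdots$ indexed by time, would not by itself yield a branch that is moved infinitely often; the tree has to be the tree of moves, as in the paper. The multiset measure makes this moot.

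\textbf{What each buys.} The paper's argument is shorter and treats the \evaltkm{} as a black box. Yours is more robust, because Definition~\ref{def:parallelred} keeps reducing slices of weight $0$. After a pseudo-deterministic step with $b\neq b'$, every slice at $\ADD\cup\{(l,b)\}$ may have weight $0$, and such slices are not \evaltkm{} runs at all. So the paper's ``each happening with probability $>0$'' does not literally exclude an infinite chain living there, whereas your measure does.

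\textbf{The price.} Your key lemma, termination of the register-free token dynamics with guards branching both ways, is not the sequential statement the paper cites, so you must prove it directly. This is easy here: once addresses resolve the conditionals a token meets, it follows a finite GoI path in a conditional-free slice of $\pi$, and each guard fires at most once per address, by the side condition of the asynchronous rule. Note, however, that the measure cannot depend on token positions alone, since a pseudo-deterministic step moves no token. You can add, for example, the number of guard positions occupied by a token whose label is not yet in its address.
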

\begin{proof}
Let $\mu\mtocl^a_{\ADD}\nu$, with the notations from Definition \ref{def:parallelred};
for all $s\in @^{\mathsf s}E$, there exists a configuration $\config_s$ and finitely many configurations $\donfig_{1},\dots, \donfig_{n}$, such that $\config_s\to [\donfig_{1}^{\rho_1},\dots, \donfig_n^{\rho_n}]$, such that $\sum_i\rho_i=\mu_s$. In particular, if $\mu_s>0$, then there exists $i=1,\dots, n$ such that $\rho_i>0$. 
Suppose now there exists an infinite chain $\mu_0\mtocl^{a_1}_{\ADD_1}\dots \mtocl^{a_{n-1}}_{\ADD_{n-1}} \mu_n \mtocl^{a_n}_{\ADD_n}\dots$. One could then construct (via K\"onig's Lemma) an infinite chain of $\evaltkm$-reductions, each happening with probability $>0$, hence contradicting the termination of the $\evaltkm$ (\cite[Theorem 10]{DalLago2017}).
\end{proof}

To establish soundness we need the following definition:
\begin{definition}[pruned distribution]
For any pseudo-distribution $\mu=\{x_1^{\mu_1},\dots, x_n^{\mu_n}\}$, let $\widetilde{\mu}$, called the \emph{pruned distribution} of $\mu$, be the distribution $\widetilde\mu=\{x_i^{\mu_i}\in \mu\mid\mu_i>0\}$.
\end{definition}

In the result below, we implicitly assume a pruned variant of the reduction of the $\evaltkm$, i.e.~given by rules of the form $\config\to \{\config_1^{\mu_1},\dots, \config_n^{\mu_n}\}$, where, for all $i=1,\dots, n$, $\mu_i>0$. In other words, we remove from the reduction all configurations that have probability $0$.

\begin{proposition}[soundness and completeness of parallel $\evaltkm$-reduction]\label{prop:soundpara}
For any chain of parallel reductions
$
\{\config^1 \} \mtocl^{a_1}_{G_1} \mu_1 \mtocl^{a_2}_{G_2}\dots \mtocl^{a_{n-1}}_{G_{n=1}}\mu_{n-1} \mtocl^{a_n}_{G_n}\mu_n$ there exists a (pruned) $\evaltkm$-reduction $\config\to^*\widetilde{\mu_n}$.
Conversely, for any (pruned) $\evaltkm$-reduction $\config\to^*\mu$ there exists a 
chain of parallel reductions
$
\{\config^1 \} \mtocl^{a_1}_{G_1} \mu_1 \mtocl^{a_2}_{G_2}\dots \mtocl^{a_{n-1}}_{G_{n=1}}\mu_{n-1} \mtocl^{a_n}_{G_n}\mu_n$ such that $\mu=\widetilde{\mu_n}$.

\end{proposition}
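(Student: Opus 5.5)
Both directions go by induction on the number of steps, relating a single parallel step $\mu\mtocl^a_\ADD\nu$ with one (pruned) step of the \evaltkm\ applied simultaneously to the configurations of $\mu$ that live at address $\ADD$. The invariant to carry along is that $\widetilde{\mu_k}$ is exactly the pruned pseudo-distribution reached by the \evaltkm\ from $\config$. Here the monadic reduction is lifted to pseudo-distributions in the usual way: one reduces some of the configurations in the support and takes the convex combination. The base case $\{\config^1\}$ is immediate, since it is already pruned and equals the starting point of the $\evaltkm$-reduction.

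\textbf{Soundness.} Assume $\widetilde{\mu_k}$ is reachable, and consider a step $\mu_k\mtocl^{a}_{\ADD}\mu_{k+1}$. By the second clause of Definition~\ref{def:parallelred}, configurations at super-addresses $s$ with $\ADD_s\neq\ADD$ are untouched. Those with $\ADD_s=\ADD$ all carry the same token set $\MM_0$, by coherence. I then split on the type of $a$.

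\begin{itemize}
\item If $a$ is deterministic, each $\config_s$ with $\mu_s>0$ performs the corresponding structural or gate rule with probability $1$, yielding $\donfig_s$ with the same weight.
\item If $a$ is of choice type, the $\meas$ rule of the \evaltkm\ sends $\config_t$ to $\{\donfig_{t\cup(\alpha,0)}^{|\lambda_0|^2},\donfig_{t\cup(\alpha,1)}^{|\lambda_1|^2}\}$. This matches the weights $\nu_{t\cup(\alpha,b)}$, where the factor $\lambda_b$ in Definition~\ref{def:parallelred} has to be read as the outcome probability. Pruning then discards exactly the outcomes of probability zero.
\item If $a$ is pseudo-deterministic, each $\config_s$ reads the bit $b'$ it actually holds in its register and applies the guard rule with probability $1$. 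It lands on $\donfig_{s\cup(l,b')}$ with weight $\mu_s$, while the companion slice $s\cup(l,1-b')$ receives weight $0$ and is therefore removed by pruning. This is the place where pruning is essential.
\end{itemize}

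In each case, reducing these finitely many support elements in sequence, which is legitimate because the lifted relation permits reducing elements of a distribution independently, produces $\widetilde{\mu_{k+1}}$.

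\textbf{Completeness.} Given a pruned reduction $\config\to^*\mu$, I argue by induction on its length, maintaining a coherent $\mu_k$ with $\widetilde{\mu_k}$ equal to the current distribution. Each \evaltkm\ step acts on some configuration $\config_s$ in the support, at address $\ADD_s$. I classify the step by the rule it uses (structural or gate, $\meas$, or guard), which determines a reduction type $a$ of the matching kind at address $\ADD=\ADD_s$.

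The obstacle is that a parallel step moves \emph{all} configurations at address $\ADD$ at once, whereas the \evaltkm\ may move one of them and leave the others sharing its tokens $\MM_0$ idle. I would handle this with a local commutation (diamond) property of the \evaltkm: rules acting on different tokens, or on configurations that are independent in the distribution, commute. Using it, I permute the given reduction so that, once a rule fires on $\MM_0$ at one slice, the same rule is applied immediately afterwards to every other slice at $\ADD$. Each step of the \evaltkm\ only depends on the token positions, which coherence guarantees are equal across these slices, so the same rule is indeed applicable to all of them.

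Grouping the reordered steps in this way yields exactly $a_\ADD(\mu_k)$. Coherence of the result is checked directly from $\EXT$ together with Lemmas~\ref{lemma:superaddress} and~\ref{lemma:help2}. Finally, since the endpoint distribution is unaffected by the permutation, the terminal $\mu_n$ satisfies $\widetilde{\mu_n}=\mu$.
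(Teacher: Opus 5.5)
Your soundness direction is fine. It is the paper's argument (induction on $n$, converting each parallel step into pruned \evaltkm-steps on the support) with the case analysis written out, including the correct reading of $\lambda_b$ as an outcome probability.

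The gap is in the converse. Commutation can only \emph{reorder} steps that the given run actually performs; it cannot supply steps that a sibling configuration never takes. With the lifting you chose (reduce some elements of the support), a run may stop at an incoherent distribution. Take a derivation where some token can still move after a $\meas$ step. That step splits a configuration into $\config_0,\config_1$, which have the same address and the same tokens. Now advance some token in $\config_0$ but not in $\config_1$. The resulting $\mu$ contains two configurations at the same address with different token sets. Every parallel chain produces coherent pseudo-distributions (condition 2 of coherence), and pruning only deletes entries, so no $\widetilde{\mu_n}$ equals this $\mu$. Your closing claim that ``the endpoint distribution is unaffected by the permutation'' therefore cannot be realised. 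Requiring every configuration of the support to step does not help either, since siblings may move different tokens.

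So the converse needs a restriction that your proposal does not supply; the paper's own one-line proof, an induction via single-step conversion, does not address this point either. There are three options.
\begin{itemize}
\item Define $\to^*$ on distributions so that all configurations sharing an address move in lockstep. Then the converse is immediate, step by step.
\item Restrict to runs whose endpoint is coherent, in particular complete runs ending in final configurations. Here your reordering idea does work, but you must add the missing argument: in a complete run, every sibling at address $\ADD$ eventually performs the same token move. This holds because a token's next move is determined by its position and address, and a non-final token must eventually leave its position. Only then can the diamond property gather these moves into a single parallel step.
\item State completeness only up to further reduction, first extending the run using termination of the \evaltkm\ and then reordering.
\end{itemize}
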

\begin{proof}
Both claims follow, by induction on $n$, from the observation that a parallel reduction $\{\config^1\}\mtocl^a_{\ADD}\mu$ can always be converted into a pruned $\evaltkm$-reduction $\config\to\widetilde{\mu_n}$ and conversely.
\end{proof}

\subsection{The Bisimulation Theorem}\label{subsec:bisi}

We introduce a family of relations between coherent distributions for
the \evaltkm\ and \circuittkmz\ configurations that we will prove to be a bisimulation between $\circuittkmz$-reduction and parallel probabilistic $\evaltkm$-reduction.

We can now introduce our bisimulation relation.

\begin{definition}\label{def:Rmc}
Let $\pi$ be a derivation and $E,F$ be extended circuits. Let $m=[L,Q,v]$ be a quantum+classical register for the initial state of
the \evaltkm over $\pi$. For all $E$-indexed coherent pseudo-distributions $\mu=\{\config_s^{\mu_s}\}_{s\in @^{\mathsf s}E}$ of 
\evaltkm\ configurations, where $\config_s=(\pi,  \MM_s, m_s)$, with $m_s=[L_s,Q_s,v_s]$, and
for all \circuittkmz-configurations $\config=(\pi,  \MM, F)$, the relation 
 $
 \mu \ \mathscr R^{m}_{\pi} \ \config
 $
holds when:
\begin{description}
\item[(index)] $E=F$;
	
\item[(token)] for all $s\in @^{\mathsf s}E$, $\MM_s=\MM|_{\ADD_s}$;

\item[(mixed state)] 
For all $s\in @^{\mathsf s} E$, $\mu_s\cdot \tocpm(m_s)= E^{\sharp s} ( \tocpm(m))$.
\end{description}
\end{definition}

The (index) condition just states that the $\evaltkm$-configurations are indexed by the super-addresses of the circuit in the $\circuittkmz$-configuration. 
The (token) condition states that the token positions at some address $\ADD$ in the $\circuittkmz$ coincide with the token positions in \emph{all} configurations $\MM_s$ of the $\evaltkm$ such that $\ADD_s= \ADD$ (i.e.~all configurations for the same slice). Recall that there may be many such configurations, all differing from each other due to different outcomes of measurements.
Observe that, as a consequence of the (token) condition, since $m_s=[L_s,Q_s,v_s]$ satisfies $|L_s|=\mathbf{ENV}(\MM_s)$, we have that the circuit $E$ has the typing $\mathbf{ENV}(\negative_\pi)\rhd E\rhd \mathcal J_{\MM}$, where the environment $\mathcal J_{\MM}@\ADD_s$ coincides with $\mathbf{ENV}(\MM|_{\ADD_s})=\mathbf{ENV}(\MM_s)$: in other words, the output wires of $E@\ADD_s$ correspond to the entries of the register $m_s$.
 Finally, the (mixed state) condition states that, for any super-address $s\in @^{\mathsf s}E$, the mixed state corresponding to the quantum+classical register at slice $s$ multiplied by its probability $\mu_s$ coincides with the state produced by applying the extended circuit $E$, sliced at $s$, to the initial quantum+classical register $m$. 

We will now show that these conditions are precisely the good invariants for the execution of \emph{both} machines.

\begin{theorem}[bisimulation between the \evaltkm and the \circuittkmz]\label{thm:noif-then-else-simulation} For any extended circuit $E$, $E$-indexed coherent
distribution of \evaltkm\ configurations $\mu$, any
\circuittkmz\ configuration $\config$, for any quantum+classical register
$m$, reduction type $a$ and address $\ADD\in @E$,
\begin{enumerate}
\item $\mu \ \mathscr R^{m}_{\pi} \ \config\ \text{and} \ \config\to^a_{\ADD}\config' \ \Rightarrow \ \exists\mu'\ \text{s.t.} \ \mu\mtocl^a_{\ADD}\mu'\ \text{and}\
\mu' \ \mathscr R^{m}_{\pi}\ \config'.
$
\item $
\mu \ \mathscr R^{m}_{\pi} \ \config\ \text{and} \ \mu\mtocl^a_{\ADD}\mu'\
 \Rightarrow \ \exists\config'\ \text{s.t.} \  
\config\to^a_{\ADD}\config'\  \text{and}\
\mu' \ \mathscr R^{m}_{\pi}\ \config'.
$
\end{enumerate}
\end{theorem}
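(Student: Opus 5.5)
The plan is a case analysis on the reduction type $a$ of Definition~\ref{def:redtype}, proving both directions at once. The reason this works is that, for fixed $\mu$, $a$ and $\ADD$, the parallel step is functional: the remark on $a_{\ADD}(\mu)$ says that $\mu'=a_{\ADD}(\mu)$ is uniquely determined. The \circuittkmz{} step $\config\to^a_\ADD\config'$ is likewise determined by $a$ and $\ADD$, through the extended circuit $E'$ obtained by the corresponding circuit transformation.

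Applicability agrees on the two sides by the (token) condition. The tokens $\MM|_\ADD$ of $\config$ coincide with $\MM_s$ for every $s$ with $\ADD_s=\ADD$. So a rule whose premise only concerns the positions of tokens at address $\ADD$ applies in $\config$ if and only if it applies in every $\config_s$ over that slice. For direction (1) I will therefore set $\mu':=a_\ADD(\mu)$; for direction (2) I will take $\config'$ to be the \circuittkmz{} step of the same type. In both directions it then remains to check that (index), (token) and (mixed state) hold for the pair $(\mu',\config')$.

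Slices with $\ADD_s\neq\ADD$ are untouched on both sides, by the second clause of Definition~\ref{def:parallelred}. For the \circuittkmz{}, such an $s$ still satisfies $(E')^{\sharp s}=E^{\sharp s}$, since by \eqref{eq:dag} only the leaf $E@\ADD$ is modified. So I only need to treat $s\supseteq\ADD$, and I split into three cases.

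\begin{itemize}
\item \emph{Deterministic type, structural rules.} Here $@^{\mathsf s}E'=@^{\mathsf s}E$, and $E'$ differs from $E$ only by the label renaming $l_{\mathfrak a}\mapsto l_{\mathfrak b}$. The register is renamed in the same way, so $\tocpm$ is preserved.
\item \emph{Deterministic type, gates $U$ and $\new$.} Lemma~\ref{lemma:help1}(i) gives $(E')^{\sharp s}=\interp{c^I}\circ E^{\sharp s}$. Combining this with Lemma~\ref{lemma:CU}, extended to $\new$ by the definition of $\pinterp{\new}$, gives $\mu_s\tocpm(m'_s)=\interp{c^I}(\mu_s\tocpm(m_s))$.
\item \emph{Choice type ($\meas$ of index $\alpha$).} Lemma~\ref{lemma:help1}(ii) gives $@^{\mathsf s}E'=\EXT(E,@^{\mathsf s}E,\ADD,\alpha)$ and $(E')^{\sharp(s\cup\{(\alpha,b)\})}=\iota_b\pi_b\interp{\meas^I}E^{\sharp s}$. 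I will compute that projecting $\interp{\meas}$ applied to $\tocpm(Q_t)$ onto outcome $b$ yields $|\lambda_b|^2\,\tocpm(Q_t^b)\otimes\tocpm(b)$. This matches $\nu_{t\cup\{(\alpha,b)\}}\cdot\tocpm(m'_{t,b})$, provided the weight in Definition~\ref{def:parallelred} is read as the probability $|\lambda_b|^2$.
\item \emph{Pseudo-deterministic type (asynchronous rule).} Lemma~\ref{lemma:help2} gives $@^{\mathsf s}E'=\EXT(E,@^{\mathsf s}E,\ADD,l)$ and $(E')^{\sharp(s\cup(l,b))}=\pi^l_b\circ E^{\sharp s}$. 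The (token) condition follows from the definition of $\MM'$, which duplicates the tokens with addresses $\ADD_0,\ADD_1$ and adds the fresh unit tokens for $\ones_{\pi_i\downarrow}$ in the same way on both machines.
\end{itemize}

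I expect the main obstacle to be the (mixed state) condition in the pseudo-deterministic case. The register $m_s$ contains a definite classical value $b'$ at label $l$, since $\tocpm(m_s)=\varphi_L(\tocpm(Q_s)\otimes\tocpm(v_s))$. The key step is the identity $\pi^l_b(\mu_s\tocpm(m_s))=\mu_s\tocpm(m_s^*)$ when $b=b'$ and $=0$ otherwise. This is exactly $\nu_{s\cup\{(l,b)\}}\cdot\tocpm(m'_s)$ under the convention of Definition~\ref{def:parallelred}, and it is here that the four-case remark following that definition is used.

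The subtle point is that Lemma~\ref{lemma:help2} requires the guard wire $\bit^l$ to be an output of $E@\ADD_s$. I will check this against the typing $\mathbf{ENV}(\negative_\pi)\rhd E\rhd\mathbf{ENV}(\MM)$, together with the fact that guard positions are excluded from the environment only \emph{after} branching. This part of the argument may need care with how $\guard_\pi$ is treated in $\mathbf{ENV}$.

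Finally, coherence of $\mu'$ follows directly from the (token) condition, since all configurations over the same address receive the same token set.
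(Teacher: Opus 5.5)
Your proposal is correct and essentially matches the paper's proof. The paper also uses the functionality of $a_{\ADD}(\cdot)$ to reduce both directions to two facts. First, applicability of the step coincides on both sides by the (token) condition. Second, $a_{\ADD}(\mu)\ \mathscr R^{m}_{\pi}\ a_{\ADD}(\config)$, which the paper verifies case by case (structural, $U$, $\new$, $\meas$, guard) via Lemmas~\ref{lemma:help1} and~\ref{lemma:help2}, with the same $b=b'$ versus $b\neq b'$ projection argument you give for the pseudo-deterministic case. Your reading of the choice-type weight as $|\lambda_b|^2$ is also the one the paper's computation actually uses.
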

\begin{proof}
Since $\config'=a_{\ADD}(\config)$ and $\mu'=a_{\ADD}(\mu)$ are uniquely determined by $a$ together with $\ADD$, $\config$ and $\mu$, respectively, we will actually prove the following statements:
\begin{itemize}
\item[(a.)] $\mu \ \mathscr R^{m}_{\pi} \ \config\ \Rightarrow \ \Big( \config\to^a_{\ADD} a_{\ADD}(\config) \ \Leftrightarrow \  \mu\mtocl^a_{\ADD} a_{\ADD}(\mu)\Big)$;
 \item[(b.)] $\mu \ \mathscr R^{m}_{\pi} \ \config\  \Rightarrow \  a_{\ADD}(\mu) \ \mathscr R^{m}_{\pi}\ a_{\ADD}(\config)$,
  provided that $a_{\ADD}(\mu)$ and $a_{\ADD}(\config)$ are well-defined.
\end{itemize}

Let us first prove the statement (a.), so suppose $\mu \ \mathscr R^{m}_{\pi} \ \config$ and let $\mu=\{\config_s^{\mu_s}\}_{s\in @^{\mathsf s}E}$, where 
$\config_s=\{\pi, \MM_s,m_s\}$, and 
$\config=(\pi, \MM, E)$.
\begin{description}

\item[($\config\to^a_{\ADD} a_{\ADD}(\config) \ \Rightarrow \  \mu\mtocl^a _{\ADD} a_{\ADD}(\mu)$)]
Suppose reduction $a$ applies to some set of tokens with underlying address $\ADD$;
since $\mu$ is $E$-indexed, 
by the (token) condition, for all configurations $\config_s$, with $\ADD_s=\ADD$, the tokens $\MM_s$ are in the same position as in $\MM|_{\ADD}$; 
 hence, the corresponding $\evaltkm$-reduction rule $a$ applies to all such $\config_{s}$.

\item[($ \mu\mtocl^a_{\ADD} a_{\ADD}(\mu) \ \Rightarrow \ \config\to^a_{\ADD} a_{\ADD}(\config)$)]
Suppose reduction $a$ applies to all configurations $\config_s$ such that $\ADD_s= \ADD$. By the (token) condition $\MM_s=\MM|_{\ADD}$, all such tokens exist with same position and address in $\MM$ at address $\ADD$. Hence, the corresponding $\circuittkmz$-reduction rule $a$ applies to $\config$ at address $\ADD$.

\end{description}

The proof of statement (b.) is more delicate and requires checking all reduction rules.
Suppose again $\mu \ \mathscr R^{m}_{\pi} \ \config$ and let $\mu$ and $\config$ be defined as above. Let $\mu':=a_{\ADD}(\mu)=\{\donfig_t^{\mu'}\}_{t\in @^{\mathsf s}F}$, where $\donfig_t=(\pi,\MM_t,m'_t)$, $m'_t=(L'_t,Q'_t,v'_t)$, and $\config':=a_{\ADD}(\config)=(\pi,\MM',F)$.
We will check that, in any possible case, $\mu' \mathscr R^{m}_{\pi}\config'$ must hold.

The claim is easily checked for all structural rules: the corresponding reduction type is deterministic, so $\mu'=a_{\ADD}(\mu)$ is indexed over $@^{\mathsf s}F=@^{\mathsf s}E$, and moreover no change to the quantum+classical register is produced.

Instead, we need to consider in more detail the \evaltkm\ transitions corresponding to (1) traversing
 a unitary gate $c$, a \new gate or a \meas gate, and (2) updating addresses once a token reaches the guard of an if-then-else rule.

Let us start with (1). Let $c$ be either unitary or in $\{\new,\meas\}$. 
We have $F=E[E@\ADD \mapsto (E@ \ADD);c]$ and 
$\MM'$ differs from $\MM$ in that $r$ tokens placed at address $\ADD$ moved from the input to the output positions of gate $c$.
Given $r$ qubits, a set $I\subset \{1,\dots, r\}$, and a gate $c$ over $r$ qubits, we indicate as ${c}^I $ the gate that applies $c$ to the qubits indexed by $I$ and is the identity on all other qubits.

We consider the three
cases for $c$ separately:
\begin{itemize}

\item \emph{unitary gate $c^{\vec l}_{\vec r}=U$ applied to the qubits of labels $I=\{l_1,\dots, l_k\}$}: the reduction has deterministic type, so 
$\mathcal L(F)=\mathcal L(E)$ and $@^{\mathsf s}F=@^{\mathsf s}E$; 
$\mu'$ is given by $\donfig_s=\config_s$ for all $s\not\supseteq\ADD$, while for $s\supseteq \ADD$, $\donfig_{s}=
(\pi,  \MM'_{s}, m'_{s})$, and 
$\MM'_{s}$ is obtained by moving the tokens situated at the input positions of $U$ towards the corresponding output positions (here relying on the (token) condition that holds by hypothesis between $\MM_s$ and $\MM|_{\ADD}$).

The (index) condition holds since $\mu'$ is indexed by $@^{\mathsf s}F=@^{\mathsf s}E$. 
From the (token) condition for $\mu$ and $\config$ one also immediately deduces the (token) condition for $\mu'$ and $\config'$.

Let us look at the (mixed state) condition:   
$m'_{s}=[L'_s,Q'_{s},v_{s}]$ has
$L'_s$ defined by replacing $l_1,\dots, l_k$ by $r_1,\dots, r_k$ in $L_s$ and  
$Q'_{s}={U}^{I}Q_{s}$. The probabilities $\mu'_s$ are given simply as $\mu'_s=\mu_s$. 

Observe that what we write $U^IQ_s$ is in fact the result of applying $U$ to the qubits that the register $L_s$ associates with the labels $l_1,\dots, l_k$. In other words, in terms of mixed states, it corresponds to actually computing $\varphi_{L'_s}^{-1}\circ\interp{Q^I}\circ\varphi_{L_s}$, since we have:
\[
\begin{tikzpicture}
	\begin{pgfonlayer}{nodelayer}
		\node [style=none] (0) at (-2.25, 0.75) {$\interp{(L_s)_{\qbit}\otimes(L_s)_{\bit}}$};
		\node [style=none] (1) at (-2.25, -0.75) {$\interp{(L'_s)_{\qbit}\otimes(L_s)_{\bit}}$};
		\node [style=none] (2) at (1.55, 0.75) {$\interp{L}$};
		\node [style=none] (3) at (1.55, -0.75) {$\interp{L'_s}$};
		\node [style=none] (4) at (-2.175, 0.5) {};
		\node [style=none] (5) at (-2.175, -0.5) {};
		\node [style=none] (6) at (1.5, 0.5) {};
		\node [style=none] (7) at (1.5, -0.5) {};
		\node [style=none] (8) at (-0.75, 0.75) {};
		\node [style=none] (9) at (1.25, 0.75) {};
		\node [style=none] (10) at (-0.75, -0.75) {};
		\node [style=none] (11) at (1.25, -0.75) {};
		\node [style=none] (12) at (0.25, 1) {$\varphi_{L_s}$};
		\node [style=none] (13) at (1.95, 0) {$\interp{U^I}$};
		\node [style=none] (14) at (0.25, -1) {$\varphi_{L'_s}$};
		\node [style=none] (15) at (-2.775, 0) {''$U^I Q_s$''};
	\end{pgfonlayer}
	\begin{pgfonlayer}{edgelayer}
		\draw [style={arrows={->[]}}] (4.center) to (5.center);
		\draw [style={arrows={->[]}}] (8.center) to (9.center);
		\draw [style={arrows={->[]}}] (10.center) to (11.center);
		\draw [style={arrows={->[]}}] (6.center) to (7.center);
	\end{pgfonlayer}
\end{tikzpicture}
\]

We thus have that 
\begin{align*}
\tocpm(m'_s)&=\varphi_{L'_s}(\tocpm(Q'_s)\otimes \tocpm(v_s))\\
&=
\varphi_{L'_s} \Big( \big(\varphi_{L'_s}^{-1}\circ \interp{U^I}\circ \varphi_{L_s}\big) \big(\tocpm(Q_s)\otimes \tocpm(v_s)\big)\Big )\\
&=
\varphi_{L'_s}\circ \varphi_{L'_s}^{-1}\circ \interp{U^I}\circ \varphi_{L_s} \big(\tocpm(Q_s)\otimes \tocpm(v_s)\big)\\
&=
\interp{U^I} \Big(\varphi_{L_s} \big(\tocpm(Q_s)\otimes \tocpm(v_s)\big)\Big)=
\interp{U^I} (\tocpm(m_s)).
\end{align*}

By hypothesis, we know that for all $s\in @^{\mathsf s} E$,
$\mu_s\cdot \tocpm(m_s)=E^{\sharp s}(\tocpm(m))$.
The condition $\mu_s\cdot\tocpm(m_s)=f^{\sharp s}(\tocpm(m))$ then obviously holds for all $s\not\supseteq\ADD$.
For $s\supseteq\ADD$ we have, using Lemma \ref{lemma:help1}:
\begin{align*}
\mu'_s\cdot\tocpm(m'_{s})
& = \mu_s\cdot (\interp{U^{I}}(
\tocpm(m_{s})))=
\interp{U^{I}} (\mu_s\cdot \tocpm(m_{s}))
\stackrel{\text{\tiny[Hyp]}}{=}  
 \interp{U^{I}}\left({E^{\sharp s}} (\tocpm(m))\right)\\
& \ \ \ {=} \ \  \left(\interp{U^{I}}\circ E^{\sharp s}\right) (\tocpm(m))
\stackrel{\text{\tiny{[Lemma \ref{lemma:help1} (i.)]}}}{=}  
F^{\sharp s} (\tocpm(m)).
\end{align*}

\item \emph{$\new$ gate applied to qubit $I=\{i\}$}: as in the previous case, 
the reduction has deterministic type, $\mathcal L(F)=\mathcal L(E)$ and $@^{\mathsf s}E=@^{\mathsf s}F$; 
 $\mu'$ is given by $\donfig_s=\config_s$ for all $s\not\supseteq\ADD$, while
for all $s\in \supseteq\ADD$, $\donfig_{s}=
(\pi,  \MM'_{s}, m'_{s})$, with
$\MM'_{s}$ differing from $\MM_{s}$ in that its unique token placed at the input of the $\new$-gate (which, as in the case above, exists in $\MM_s$ by hypothesis) is moved from the input to the output position of $\new$ (yielding one deterministic reduction step).
The (index) and (token) condition can then be checked similarly to the previous case.

Let us check the (mixed condition). We have $m'_{s}=[L'_s,Q'_{s},v'_{s}]$, where $L'_s$ differs from $L_s$ in that the $i$-th label $l:\bit$ is changed to $l:\qbit$, and moreover 
we have $Q'_{s}=Q_s\otimes \ket{b}$ and $v'_s$ is obtained from $v_s$ by removing the value $(l,b)$;
the associated probabilities are given simply as 
$\mu'_s=\mu_s$. 

By hypothesis, we know that for all $s\in @^{\mathsf s} E$,
$\mu_s\cdot\tocpm(m_s)=E^{\sharp s}(\tocpm(m))$.
We only need to check that $\mu_s\cdot\tocpm(m_s)=F^{\sharp s}(\tocpm(m))$ holds for $s\supseteq\ADD$.
First observe that
$$
\interp{\new^I} : V_{n}\otimes V_{(1,1)}\otimes V_{(1,\dots, 1)} \ \longrightarrow \  V_{n}\otimes V_2 \otimes V_{(1,\dots, 1)}
$$
that is, up to isomorphisms, we have that
$$
\interp{\new^I} : \prod_{i=1}^{2^{m+1}}\complex^{n\times n}\simeq \Big( \prod_{i=1}^{2^{m}}\complex^{n\times n} \times  \prod_{i=1}^{2^{m}}\complex^{n\times n}\Big)\ \longrightarrow \
\prod_{i=1}^{2^m}\complex^{n\times n}\otimes \complex^{2\times 2}
$$
is given pointwise by
\begin{align*}
&\interp{\new^I}((x_1,\dots, x_{2^m}),(y_1,\dots, y_{2^m}) ) \\
&= (x_1\otimes  \ket{0}\ket{0}^\dag,\dots, x_{2^m}\otimes  \ket{0}\ket{0}^\dag)+ (y_1\otimes \ket{1}\ket{1}^\dag,\dots, y_{2^m}\otimes \ket{1}\ket{1}^\dag).
\end{align*}
Using this, we can deduce that, calling $\iota_{v_s}$ the morphism $\iota_{v_s}:\interp{(L_s)_{\qbit}}\to\interp{L_s}$,
$$
\interp{\new^I} \left( \tocpm(m_s) \right)
=
\interp{\new^I}  \left (\iota_{v_s} (\tocpm(Q_s)) \right)=
 \iota_{v'_s}(\tocpm(Q'_s))=\tocpm(m'_s).
$$
By reasoning similarly to what we did for $\interp{U^I}$, 
we can then compute, using Lemma \ref{lemma:help1}:
\begin{align*}
\mu_s\cdot\tocpm(m'_s)&=
\mu_s\cdot
\interp{\new}^I\left (
 \tocpm(m_s) \right)=
\interp{\new}^I\left ( \mu_s\cdot
 \tocpm(m_s) \right)
 \stackrel{\text{\tiny [Hyp]}}{=}
\interp{\new^I} \Big(E^{\sharp s}(\tocpm(m))\Big)\\
&=\Big(\interp{\new^I}\circ E^{\sharp s}\Big)(\tocpm(m))
\stackrel{\tiny\text{[Lemma \ref{lemma:help1} (i.)]}}{=}F^{\sharp s}(\tocpm(m)).
\end{align*}

\item \emph{\meas gate of index $\alpha$ applied to qubit $I=\{i\}$}: the reduction has choice type, $\mathcal L(F)=\mathcal L(E)\cup\{\alpha\}$ and $@^{\mathsf s}F=\EXT(E,@^{\mathsf s}E,\ADD,\alpha)$. 
For all $s\in @^{\mathsf s}E$ with $\ADD_s= \ADD$, the configuration $\config_s$ gives rise to two configurations $\donfig_{s\cup\{(\alpha,0)\}},\donfig_{s\cup{(\alpha,1)}}$ in $\mu'$ (notice that $\ADD_{s\cup\{(\alpha,b)\}}=\ADD_s=\ADD$).
For $t=s\cup\{(\alpha,b)\}$, the configuration $\donfig_{t}=(\pi,\MM'_{t},m'_t)$ is defined as follows:
$\MM'_t$ is obtained from $\MM_s$ by moving the token placed in the input position of the $\meas$-gate towards the output position; 
in
$m'_t=[L'_t,Q'_t,v'_t]$ 
$L'_t$ is obtained from $L_s$ by changing the output label $l$ of the $\meas$-gate from $l:\qbit$ to $l:\bit$, 
$Q'_t$ is defined below, and $v'_t$ is obtained from $v_s$ by  
adding the value $(l,b)$.

The (index) and (token) conditions are then clearly satisfied by construction.

Let us check the (mixed state) condition.
As before, by hypothesis, we know that for all $s\in @^{\mathsf s} E$,
$\mu_s\cdot \tocpm(m_s)=E^{\sharp s}(\tocpm(m))$.
Let $s\in @^{\mathsf s}E$ be such that $\ADD_s=\ADD$, fix $b\in \{0,1\}$ and let $t=s\cup\{(\alpha,b)\}$; let us look more closely at how the quantum register and the assigned probabilities for the configuration $\config'_t$ are defined. Recall that each quantum state $Q_s\in V_n\otimes V_2$ can be written as
$Q_s=\alpha_0Q_{s,0}+\alpha_1Q_{s,1}$, where
$Q_{s,\ell}=\sum_j \gamma_{j,\ell} \ket{\phi_j^{\ell}}\otimes \ket{\ell}$. 
We then have $Q'_{t}:=\sum_j \gamma_{j,b} \ket{\phi_j^{b}}$ and $\mu'_{t}=\mu_s\cdot |\alpha_{b}|^{2}$.

We can now compute
\begin{align}\label{eqqiqi}
  \begin{split}
\tocpm(Q_s)&=
|\alpha_0|^2 \cdot Q_{s,0}Q_{s,0}^{\dag} + 
\alpha_0\alpha_1^{\dag} \cdot Q_{s,0}Q_{s,1}^{\dag} + 
\alpha_1\alpha_0^{\dag} \cdot Q_{s,1}Q_{s,0}^{\dag} +  
|\alpha_1|^2 \cdot Q_{s,1}Q_{s,1}^{\dag}.
\end{split}
\end{align}
Observe that
\begin{align*}
|\alpha_b|^2 \cdot Q_{s,b}Q_{s,b}^{\dag}&=|\alpha_b|^2 \cdot \sum_{j,j'} \gamma_{j,b}\gamma_{j',b}^\dag\cdot\left (\ket{\phi_j^{b}}\bra{\phi_{j'}^{b}}\otimes \ket{b}\bra{b}\right)\\ &=\sum_{j,j'} \gamma_{j,b}\gamma_{j',b}^\dag\cdot\left (\ket{\phi_j^{b}}\bra{\phi_{j'}^{b}}\otimes |\alpha_b|^2\cdot \ket{b}\bra{b}\right).
\end{align*}
This means that $|\alpha_0|^2 \cdot Q_{s,0}Q_{s,0}^{\dag}$ and $|\alpha_1|^2 \cdot Q_{s,1}Q_{s,1}^{\dag}$ are respectively of the form
\begin{equation}\label{eqalpha}
\sum_k \delta^{0}_k\cdot \left( x^{0}_k\otimes
\begin{pmatrix}
|\alpha_0|^2
&
0
\\
0&
0
\end{pmatrix} \right)\qquad 
\sum_k \delta^{1}_k\cdot \left( x^{1}_k\otimes
\begin{pmatrix}
0
&
0
\\
0&
|\alpha_1|^2
\end{pmatrix} \right)
\end{equation}
where the $\delta_k^b$ span all scalars
$\gamma_{j,b}\gamma_{j',b}^\dag$ and the $x_k^b$ span all vectors
$\ket{\phi_j^b}\bra{\phi_{j'}^b}$. Furthermore, notice that we have:
\begin{equation}\label{eqalpha1}
\begin{split}
&\sum_k\delta^{b}_k\cdot x^{b}_k 
=\sum_{j,j'}\gamma_{j,b}\gamma_{j',b}^\dag\cdot \ket{\phi_j^b}\bra{\phi_{j'}^b} 
=(Q'_{t})(Q'_{t})^\dag 
=\tocpm(Q'_{t}).
\end{split}
\end{equation}

Similarly, $\alpha_0\alpha_1^{\dag} \cdot Q_{s,0}Q_{s,1}^{\dag}$ and $
\alpha_1\alpha_0^{\dag} \cdot Q_{s,1}Q_{s,0}^{\dag}$ are respectively of the form
\begin{equation}\label{eqalpha2}
\sum_k \delta^{01}_k\cdot \left( x^{01}_k\otimes
\begin{pmatrix}
0
&
\alpha_0\alpha_1^{\dag}
\\
0&
0
\end{pmatrix} \right)\qquad
\sum_k \delta^{10}_k\cdot \left( x^{10}_k\otimes
\begin{pmatrix}
0
&
0
\\
\alpha_1\alpha_0^{\dag}&
0
\end{pmatrix} \right)
\end{equation}

Now consider the gate $\meas$. We have that
$$
\interp{\meas^I}:
 V_{n}\otimes V_2 \otimes V_{(1,\dots, 1)}
 \ \longrightarrow \
V_{n}\otimes V_{(1,1)}\otimes V_{(1,\dots, 1)}
$$
that is
$$
\interp{\meas^I}:
\prod_{i=1}^{2^m}\complex^{n\times n}\otimes \complex^{2\times 2}
\ \longrightarrow \
\prod_{i=1}^{2^{m+1}}\complex^{n\times n}\simeq \Big( \prod_{i=1}^{2^{m}}\complex^{n\times n} \times  \prod_{i=1}^{2^{m}}\complex^{n\times n}\Big)
$$
and is given by
\begin{align*}
&\interp{\meas^I}: \left( x_1\otimes \begin{pmatrix}a_1 & b_1 \\ c_1& d_1 \end{pmatrix}, \dots,  x_{2^m}\otimes \begin{pmatrix}a_{2^m} & b_{2^m} \\ c_{2^m}& d_{2^m} \end{pmatrix} \right) 
\mapsto
\left(
\left(
a_1 x_1,\dots, a_{2^m}x_{2^m}
\right),
\left(
d_1x_1, \dots, d_{2^m}x_{2^m}
\right)
\right).
\end{align*}
In particular, we have
\begin{align*}
&\interp{\meas^I}: \left(0, \dots,0, x\otimes \begin{pmatrix}a & b \\ c& d \end{pmatrix},0 ,\dots, 0 \right)
\mapsto
((0,\dots,0, a\cdot x, 0,\dots, 0), (0,\dots,0, d\cdot x,0,\dots, 0))
\end{align*}
that is, letting $\iota_{v_s}$ the morphism $\iota_{v_s}:\interp{(L_s)_\qbit}\to\interp{L_s}$, we have 
\begin{align*}
\interp{\meas^I}\left( \iota_{v_s}\left(x\otimes \begin{pmatrix}a & b \\ c& d \end{pmatrix}\right)\right)& =  a\cdot\iota_{v_s\cup\{(\alpha,0)\}}(x)+d\cdot\iota_{v_s\cup\{(\alpha,1)\}}(x).
\end{align*}

Using \eqref{eqqiqi}, \eqref{eqalpha} and \eqref{eqalpha1} we thus have
\begin{align*}
\interp{\meas^I}\left( \iota_{v_s}\left( \tocpm(Q_i)\right)\right) & = |\alpha_0|^2\cdot \iota_{v_s\cup\{(\alpha,0)\}}\left( \sum_k \delta_k^0\cdot x_k^0\right)+
|\alpha_1|^2\cdot \iota_{v_s\cup\{(\alpha,1)\}}\left( \sum_k \delta_k^1\cdot x_k^1\right) \\
&= |\alpha_0|^2\cdot \iota_{v_s\cup\{(\alpha,0)\}}\left( \tocpm(Q'_{s\cup\{(l,0)\}})\right)+
|\alpha_1|^2\cdot \iota_{v_s\cup\{(\alpha,1)\}}\left( \tocpm(Q'_{s\cup\{(l,1)\}})\right).
\end{align*}
\noindent and thus, since $\mu'_{t}=\mu_s|\alpha_b|^2$, we have 
\begin{align*}
\mu_s\cdot
\interp{\meas^I} \left(
\tocpm(m_s) \right)&=
\mu_s\cdot
\interp{\meas^I} \left( \iota_{v_s}(\tocpm(Q_s)) \right)
\\
&=
\sum_{\ell
=0,1} \mu'_{s\cup\{(\alpha,\ell)\}}\cdot \iota_{v_s\cup\{(\alpha,\ell)\}}(\tocpm(Q'_{s\cup\{(l,\ell)\}}))\\
&=
\sum_{\ell
=0,1} \mu'_{s\cup\{(l,\ell)\}}\cdot ( \tocpm(m'_{s\cup\{(\alpha,\ell)\}})).
\end{align*}
By reasoning as we did for the case of $\interp{U^I}$, we can thus finally compute, using Lemma \ref{lemma:help1},
\begin{align*}
\mu'_t\cdot
\tocpm(m'_t)&=
\pi^l_b\left(\mu_s\cdot \interp{\meas^I}\left(
\tocpm(m_s)\right)\right)
=
\left(\pi^l_b\circ \interp{\meas^I}\right) \left(\mu_{s} \cdot\tocpm(m_s) \right)\\
& \stackrel{\text{\tiny [Hyp]}}{=}
\left(\pi^l_b\circ\interp{\meas^I} \right)\Big(E^{\sharp s} (\tocpm(m))\Big)
=\Big(\pi^l_b\circ\interp{\meas^I}\circ E^{\sharp s}\Big)(\tocpm(m))\\
&
\stackrel{\tiny\text{[Lemma \ref{lemma:help1} (ii.)]}}{=}
F^{\sharp t}(\tocpm(m)).
\end{align*}
\end{itemize}

It remains to consider the case (2) of a token entering as first in the guard of an if-then-else. Say the corresponding occurrence of $\bit$ is labeled by $l$.

We have a reduction of pseudo-deterministic type, with $\mathcal L(F)=\mathcal L(E)\cup\{l\}$ and $@^{\mathsf s}F=\EXT(E,@^{\mathsf s}E,\ADD,l)$. 
For all $s\in @^{\mathsf s}E$ with $\ADD_s=\ADD$, we have two corresponding configurations $\donfig_{s\cup\{(l,b)\}}$, with $b\in 0,1$. Letting $t=s\cup\{(l,b)\}$, we have 
$\donfig_t=(\pi, \MM'_{t},m_t)$, where $\MM'_t=\{(l',\sigma,\ADD\cup\{(l,b)\})\mid (l',\sigma,G)\in \MM_s\}$ and $m'_{t}=(L'_{t},Q'_t,v'_t)$, where $L'_t=L_s,Q'_t=Q_s$ and $v'_t$ is obtained from $v_s$ by removing the value $(l,b')$.
The assigned probability $\mu_t$ is $\mu_s$ if $b=b'$ and $0$ otherwise.

The (index) condition is verified by construction. 
For the (token) condition, observe that, for all $u\in @^{\mathsf s}F$, with $u\supseteq \ADD$, letting $u^*=u-\{(l,u(l))\}$, 
and noticing that $\ADD_u=\ADD_{u^*}\cup\{(l,u(l))\}$, we have
 \begin{align*}
 \MM'_u&=\{(l',\sigma,\ADD\cup\{(l,u(l))\})\mid(l',\sigma, \ADD)\in \MM_{u^*}\}\\
 &\stackrel{\tiny\text{Hyp}}{=}
  \{(l',\sigma,\ADD\cup\{(l,u(l))\})\mid(l',\sigma, \ADD)\in \MM|_{\ADD_{u^*}}\}=\MM'|_{\ADD_u}.
 \end{align*}

It remains to show the (mixed state) condition. 
Let $s\in @^{\mathsf s}E$ with $\ADD_s=\ADD$, let $b\in \{0,1\}$ and $t=s\cup\{(l,b)\}$.
From Lemma \ref{lemma:help2} (ii.) we have $\pi_b(E^{\sharp s}
(x))=F^{\sharp t}(x)$.

We must consider two cases:
\begin{itemize}
\item if $b\neq b'$, from 
$\mu_s\cdot\tocpm(m_{s})=
\iota_{v_s}( {\mu}_s\cdot 
(\tocpm(Q_s))$ and the fact that $v'_t$ contains the value $(l,b')$ we deduce $\pi_{b}(\mu_s\cdot
\tocpm(m_{s}))=0$. 
From the hypothesis $\mu_s\cdot\tocpm(m_s)=E^{\sharp s}(\tocpm(m))$ we deduce then $\pi_b(E^{\sharp s}(\tocpm(m)))=0$. Using Lemma \ref{lemma:help2} (ii.) we then have
\begin{align*}
\mu'_t\cdot
\tocpm(m_{t})&=0
=\pi_b(E^{\sharp s}(\tocpm(m)))=
F^{\sharp t}(\tocpm(m)).
\end{align*}

\item if $b=b'$, 
then, recalling that $v_s=v'_t\cup\{(l,b)\}$, we can first compute
\begin{align}\label{eq:dagg}
  \begin{split}
  \mu'_t\cdot
\tocpm(m'_t)&=\mu_s\cdot \iota_{v'_t}(\tocpm(Q_s)) 
=
\mu_s\cdot\left(\pi^l_{b}\left(
 \iota_{v_s}(\tocpm(Q_s))
\right) \right)= \pi^l_{b}(\mu_s\cdot\tocpm(m_s)).
  \end{split}
\tag{$\ddag$}
\end{align}
Using Lemma \ref{lemma:help2} (ii.) we then have
\begin{align*}
\mu'_t\cdot\tocpm(m'_t)&\stackrel{\tiny\eqref{eq:dagg}}{=}
 \pi^l_b\big(\mu_s\cdot\tocpm(m_s)\big) \stackrel{\tiny\text{[Lemma \ref{lemma:help2} (ii.)]}}{=}\pi^l_b(E^{\sharp s}(\tocpm(m))) = F^{\sharp t}(\tocpm(m)).
\end{align*}
\end{itemize}

\end{proof}

From the Bisimulation Theorem we can now deduce the following result:

\begin{theorem}[Soundness of the \circuittkmz wrt to the \evaltkm]
Let $\pi :\ \vdash M :A$ be a Boolean typing derivation and suppose
that the \circuittkmz{} produces, over $\pi$, the final configuration
$\config=(\pi,\mathbf{PSN}(\MM), E )$. 
Then, 
for any quantum+classical register $m$, 
the \evaltkm over $\pi$ and $m$ produces a 
$E$-indexed pseudo-distribution $\mu=\{\config_s^{\mu_s}\}_{s\in @^{\mathsf s}E}$ of final configurations such that 
$
\interp{E}(\tocpm(m)) =\tocpm(\mu)$.
\end{theorem}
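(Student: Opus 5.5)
The plan is to run the two machines in lockstep, using the Bisimulation Theorem (Theorem~\ref{thm:noif-then-else-simulation}) as the invariant. The starting point is the initial \circuittkmz-configuration $\config_0=(\pi,\mathbf{TKN}(\negative_\pi\cup\negones_\pi\cup\ones_\pi^\downarrow),\mathrm{Id})$ and the one-point pseudo-distribution $\mu_0=\{\config^1\}$, where $\config$ is the initial \evaltkm-configuration on $m$. Since $\mathrm{Id}$ has no if-then-else nodes and no $\meas$-gates, $@^{\mathsf s}\mathrm{Id}=\{\emptyset\}$. Also $\mathrm{Id}^{\sharp\emptyset}$ is the identity, so the (mixed state) condition $1\cdot\tocpm(m)=\mathrm{Id}^{\sharp\emptyset}(\tocpm(m))$ holds trivially. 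The (index) and (token) conditions hold because both machines start with the same tokens at the empty address. Hence $\mu_0\ \mathscr R^m_\pi\ \config_0$.

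Next I would take the given \circuittkmz-run $\config_0\to\config_1\to\cdots\to\config_n=\config$, which is finite and ends in the stated final configuration. Each step is a reduction $\config_i\to^{a_i}_{\ADD_i}\config_{i+1}$ of one of the three reduction types of Definition~\ref{def:redtype}: structural and unitary/$\new$ rules are deterministic, $\meas$ is choice type, and the asynchronous guard rule is pseudo-deterministic. By clause (1) of the Bisimulation Theorem, induction on $i$ yields $\mu_i$ with $\mu_i\mtocl^{a_i}_{\ADD_i}\mu_{i+1}$ and $\mu_{i+1}\ \mathscr R^m_\pi\ \config_{i+1}$. By Proposition~\ref{prop:soundpara}, this chain of parallel reductions corresponds to a genuine (pruned) \evaltkm-reduction $\config\to^*\widetilde{\mu_n}$.

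It remains to show that $\mu:=\mu_n$ consists of final configurations and satisfies the equation. By the (token) condition, $\MM_s=\MM|_{\ADD_s}$ for each $s\in @^{\mathsf s}E$. Because $\config_n$ is final, every address of $E$ carries exactly the tokens at the final positions, so each $\config_s$ is final. The circuit $E$ is then uniform, with $\mathbf{ENV}(\negative_\pi)\rhd E\rhd\mathbf{ENV}(\positive_\pi)$, so all $m_s$ live in the same space $\interp{\mathbf{ENV}(\positive_\pi)}$ and $\mu$ is uniform. Proposition~\ref{prop:sumext} together with (mixed state) then gives
\[
\interp{E}(\tocpm(m))=\sum_{s\in @^{\mathsf s}E}E^{\sharp s}(\tocpm(m))=\sum_{s}\mu_s\cdot\tocpm(m_s)=\tocpm(\mu).
\]

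I expect the main obstacle to be bookkeeping rather than mathematics. First, the \circuittkmz-run must be decomposed into steps that are exactly of the reduction types admitted by the parallel relation; a structural step at an address with several super-addresses has to move all configurations $\config_s$ with $\ADD_s=\ADD$ simultaneously. Second, zero-weight slices created by pseudo-deterministic steps must be handled: these are the slices with $b\neq b'$, and $\mu$ keeps them while the actual \evaltkm-output is the pruned $\widetilde\mu$. Such terms contribute $0$ to $\tocpm(\mu)$, so the equation is unaffected. One only has to state that ``produces $\mu$'' is meant up to pruning, or equivalently that $\tocpm(\mu)=\tocpm(\widetilde\mu)$.
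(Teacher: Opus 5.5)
Your proposal is correct and follows the paper's proof. Both relate the initial configurations (Lemma \ref{lemma:init-config-related}), iterate clause (1) of Theorem \ref{thm:noif-then-else-simulation} along the finite \circuittkmz-run, and conclude from the uniformity of the final $E$, Proposition \ref{prop:sumext} and the (mixed state) condition. Your explicit appeal to Proposition \ref{prop:soundpara} and your remark that zero-weight slices leave $\tocpm(\mu)$ unchanged are bookkeeping points the paper leaves implicit.
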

\begin{proof}
By initializing the \evaltkm over $m$, by applying Lemma \ref{lemma:init-config-related} and Theorem
\ref{thm:noif-then-else-simulation} a finite number of times,
we end up producing a pseudo-distribution of final configurations $\mu$ such that $\mu\ \mathscr R^m_\pi \ \config$. 
Since $\config$ is final, we know that $\mathbf{ENV}(\negative_\pi)\rhd E\rhd \mathbf{ENV}(\positive_\pi)$, i.e.~$E$ is uniform. Moreover, we also know that for all $s\in @^{\mathsf s}E$, $|{L_s}|=\mathbf{ENV}(\positive_\pi)$,  where $m_s=[L_s,Q_s,v_s]$, so $\tocpm(m_s)\in \interp{\mathbf{ENV}(\positive_\pi)}$.
We can then compute 
\begin{align*}
\interp{E}(\tocpm(m))
&\stackrel{\tiny\text{Lemma \ref{prop:sumext}}}{=}
\sum_{s\in @^{\mathsf s}E}E^{\sharp s}(\tocpm(m))\\
&\stackrel{\tiny\text{Prop.~\ref{thm:noif-then-else-simulation}}}{=}
\sum_{s\in @^{\mathsf s}E}\nu_s\cdot \tocpm(m_s)\\
&=\tocpm(\mu)\in \interp{\mathbf{ENV}(\positive_\pi)}.
\end{align*}
\end{proof}

\subsection{Termination and Confluence of the \circuittkmz}\label{subsec:termi}

We will exploit Theorem \ref{thm:noif-then-else-simulation} to prove several properties of the \circuittkmz. 

First, recall that the initial configuration of both the \circuittkmz{} and the
\evaltkm{} are similar: their sets of tokens are the same. Moreover, the quantum+classical memory is the one whose labels are in bijection with the
occurrence of $\qbit$ (resp.~$\bit$) of the conclusion of $\pi$, where all the values are set to $\ket 0$ (resp. $0$).

\begin{lemma}[Initial Configurations are Related]
  \label{lemma:init-config-related}
  For any typing derivation $\pi : \Delta \vdash M : A$, letting $m$
  be the register in $\initm(\pi)$, we have
$  \mu\set{\initm(\pi)} \ \mathscr R^{m}_{\pi} \ \initcz(\pi)$.
 \end{lemma}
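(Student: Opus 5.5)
The proof amounts to unfolding Definition~\ref{def:Rmc} on the two initial configurations. In each of the three conditions the circuit is the identity and there is a single address. Write $\initcz(\pi)=(\pi,\MM_0,\mathrm{Id})$ with $\MM_0=\mathbf{TKN}(\negative_\pi\cup\negones_\pi\cup\ones_\pi^\downarrow)$, and $\initm(\pi)=(\pi,\MM_0,m)$. Both machines start from the same set of tokens, all at the empty address $\emptyset$. The pseudo-distribution $\mu\set{\initm(\pi)}$ is the Dirac one, $\{\initm(\pi)^1\}$.

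First I will compute the (super-)addresses of the extended circuit $E=\mathrm{Id}$. This is a plain circuit with no $\xmapsto{l}$ nodes and no $\meas$-gates, since it is only identity gates on each wire. So $\mathcal L(E)=\emptyset$, $@E=\{\emptyset\}$ and $@^{\mathsf s}E=\{\emptyset\}$. Hence the Dirac distribution can be regarded as an $E$-indexed coherent pseudo-distribution: it is indexed by the single super-address $\emptyset$, with $\config_\emptyset=\initm(\pi)$ and $\mu_\emptyset=1$. Coherence holds trivially, because $@\config_\emptyset=\emptyset=\ADD_\emptyset$ and there is only one index. This settles the (index) condition.

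For the (token) condition, I note that $\MM_0|_{\ADD_\emptyset}=\MM_0|_\emptyset=\MM_0$, since every initial token has address $\emptyset$. This equals the token set of $\initm(\pi)$ by definition of the initial configurations.

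For the (mixed state) condition I need $1\cdot\tocpm(m)=\mathrm{Id}^{\sharp\emptyset}(\tocpm(m))$. By the definition of slicing for basic circuits, the slice of a non-$\meas$ gate is its interpretation, and slices of sequential compositions compose. So $\mathrm{Id}^{\sharp\emptyset}=\interp{\mathrm{Id}}$, the identity CPM on $\interp{\mathbf{ENV}(\negative_\pi)}$. The only point needing care is that the register $m$ lives in $\interp{L}$ with $|L|=\mathbf{ENV}(\MM_0)=\mathbf{ENV}(\negative_\pi)$. This holds because unit positions in $\negones_\pi\cup\ones_\pi^\downarrow$ are excluded from $\mathbf{ENV}$, so $\tocpm(m)\in\interp{\mathbf{ENV}(\negative_\pi)}$ and the equation holds. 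I expect this to be the only (mild) obstacle: checking that the environments and typings line up, including the ordering isomorphism $\varphi_L$, rather than any real computation.
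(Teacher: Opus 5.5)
Your proposal is correct and takes the same approach as the paper. The paper's proof simply observes that the extended circuit is the identity, the token sets coincide, and there is a single (super-)address, so that (index), (token) and (mixed state) hold immediately. You carry out those three checks explicitly, including the observation that unit positions are excluded from $\mathbf{ENV}$, so that $\tocpm(m)$ indeed lives in $\interp{\mathbf{ENV}(\negative_\pi)}$.
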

\begin{proof}
  Each condition is straightforward as the extended circuit is simply the
  identity, the tokens are in the same positions and there is a single address
  in the extended circuit.
\end{proof}

\begin{theorem}[Termination of \circuittkmz]
  \label{thm:termination-circuit-tkm}
    Given a type derivation $\pi : \Delta \vdash M : A$, the \circuittkmz
    terminates in a final
    configuration.
\end{theorem}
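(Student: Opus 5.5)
The plan is to transfer termination and deadlock-freedom from the \evaltkm{} to the \circuittkmz{} through the bisimulation of Theorem~\ref{thm:noif-then-else-simulation}. Fix $\pi:\Delta\vdash M:A$ and let $m$ be the register of $\initm(\pi)$. By Lemma~\ref{lemma:init-config-related}, $\mu\set{\initm(\pi)}\ \mathscr R^m_\pi\ \initcz(\pi)$. The argument then has two parts: no run of the \circuittkmz{} is infinite, and every normal form is final.

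\emph{No infinite runs.} Suppose there were an infinite run $\config_0=\initcz(\pi)\to\config_1\to\cdots$. Each step is a reduction $\config_i\to^{a_i}_{\ADD_i}\config_{i+1}$ of one of the three types in Definition~\ref{def:redtype}:
\begin{itemize}
\item structural and gate rules other than $\meas$ are deterministic;
\item the $\meas$ rule is of choice type;
\item the asynchronous guard rule is pseudo-deterministic.
\end{itemize}
Part (1) of Theorem~\ref{thm:noif-then-else-simulation}, applied inductively, yields $\mu_0\mtocl^{a_1}_{\ADD_1}\mu_1\mtocl^{a_2}_{\ADD_2}\cdots$ with $\mu_i\ \mathscr R^m_\pi\ \config_i$ for all $i$. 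This is an infinite chain of parallel reductions, which contradicts Proposition~\ref{thm:paraterm}.

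\emph{Normal forms are final.} Let $\config=(\pi,\MM,E)$ be a normal form, with $\mu\ \mathscr R^m_\pi\ \config$. By part (2) of the bisimulation theorem, or more directly by statement (a.) in its proof, no parallel step $\mu\mtocl^a_{\ADD}$ is possible for any $a,\ADD$. Hence for every $s$ with $\mu_s>0$ the configuration $\config_s$ admits no \evaltkm{} step.

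The main obstacle sits here: we must show that some $s$ at each address carries positive weight, or else argue on token positions directly. My first attempt is to choose $m$ so that every slice is reachable. One option is a fully mixed initial register; another is to note that the (token) condition fixes $\MM_s=\MM|_{\ADD_s}$ regardless of weights, so stuckness is a property of the token set alone. Either way, by Proposition~\ref{prop:soundpara} each $\config_s$ with positive weight is a reachable, stuck \evaltkm{} configuration. Deadlock-freedom of the \evaltkm{} (\cite{DalLago2017}) then makes it final, so $\MM_s$ occupies exactly $\positive_\pi\cup\posones_\pi\cup\guard_\pi$.

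If some address only carries zero-weight slices, I would argue instead that \evaltkm{} stuckness depends only on token positions, not on the register. The one exception is the guard rule reading a bit, and a token sitting on a guard can always fire the asynchronous rule in the \circuittkmz{}. So the token-level argument applies to every address. Since every address of $E$ then has its tokens in final positions, the token condition gives $\mathbf{PSN}(\MM)=\positive_\pi\cup\posones_\pi\cup\guard_\pi$, and $\config$ is final.
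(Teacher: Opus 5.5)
Your route is the paper's: Lemma~\ref{lemma:init-config-related} plus part~(1) of Theorem~\ref{thm:noif-then-else-simulation} turn an infinite \circuittkmz{} run into an infinite parallel chain, which contradicts Proposition~\ref{thm:paraterm}. Finality is then read off from the deadlock-freedom of the \evaltkm{}. You are right that zero-weight slices are the delicate point, and the paper's one-line finality argument does not mention them. Your patch for them, however, does not close the issue.

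Knowing that stuckness is determined by the token set $\MM|_{\ADD}$ only tells you that the \evaltkm{} token set at $\ADD$ is \emph{stuck}. To conclude that it is \emph{final} you again need deadlock-freedom, and the result of \cite{DalLago2017} concerns configurations reached, with positive probability, from an initial one. Addresses carrying only weight-$0$ slices need not be of that kind, whatever $m$ you choose. Each asynchronous guard step at $\ADD$ gives every $s$ an extension $s\cup\{(l,b)\}$ with $b\neq b'$ of weight $0$. If the guard is deterministic, say $\termOne\,\starc$ or $\meas(\new\,\fc)$, one of the two new addresses carries only weight-$0$ slices. Its tokens then sit in a branch that no \evaltkm{} run, from any initial register, reaches with positive probability. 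So a fully mixed register does not help, and ``the token-level argument applies to every address'' is an assertion rather than an argument.

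What is missing is a deadlock-freedom statement that is insensitive to the classical value read at a guard. One option is a variant of the \evaltkm{} whose guard rule may choose either branch. On the term side this corresponds to letting a conditional on an evaluated guard reduce to either branch, which still enjoys subject reduction, progress and termination by linear typing. Another option is a direct typing argument that, at a single address of the \circuittkmz{}, tokens cannot stop short of $\positive_\pi\cup\posones_\pi\cup\guard_\pi$.

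The same issue also affects your first half. The K\"onig-lemma proof of Proposition~\ref{thm:paraterm} only follows positive-probability branches, so it does not exclude an infinite chain of steps confined to weight-$0$ addresses. You may cite its statement, but a register-independent argument of the kind above would repair both halves at once.
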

\begin{proof}
  By Lemma~\ref{lemma:init-config-related}, $\initm(\pi)$ is related
  to $\config$. Then proceed by absurdity. Assume we have an infinite rewrite
  sequence $\config \to \config_1 \to \dots$. By
  Theorem~\ref{thm:noif-then-else-simulation} (1), we can deduce that there
  exists an infinite chain of reductions $\initm(\pi) \mtocl \mu_1 \mtocl \dots$ of the \evaltkm, where each pseudo-distribution $\mu_i$ is in relation with $\config_i$. 
  This contradicts then Theorem \ref{thm:paraterm}.
  The fact that \circuittkmz must terminate in a final configuration is, again, given
  by the fact that \evaltkm terminates in a final configuration and thanks to
  Theorem~\ref{thm:noif-then-else-simulation}.
\end{proof}

Let us now consider confluence.
\begin{definition}[configuration equivalence]

A $\circuittkmz$-configuration $\config=(\pi,\MM,E)$ is \emph{uniform} if $E$ is uniform.
 For all uniform $\circuittkmz$-configurations
 $\config_1=(\pi_1,\MM_1,E_1)$ and $\config_2=(\pi_2,\MM_2,E_2)$,
  $\config_1 \equiv
\config_2$ holds whenever $\pi_1=\pi_2$ and $\interp{E_1}=\interp{E_2}$.
\end{definition}

\begin{lemma}
  \label{lemma:config-related-same-distri-same-sem}
  Let $\config=[\pi,\MM,E], \config'=[\pi,\MM',E']$ be two uniform \circuittkmz configurations over $\pi$, with $@^{\mathsf s}E=@^{\mathsf s}E'$, and let $\mu$ 
  be a coherent pseudo-distribution of \evaltkm configurations. If, for all quantum+classical register $m = [ L,Q,v]$ for $\pi$, we have $\mu \mathscr R^{m}_\pi \config$ and $\mu \mathscr R^{m}_\pi
  \config'$, then $\config\equiv\config'$.
\end{lemma}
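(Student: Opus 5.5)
The plan is to show $\interp{E}=\interp{E'}$ by testing both completely positive maps on the mixed states $\tocpm(m)$ of registers $m$ for $\pi$, and to use the (mixed state) clause of $\mathscr R^m_\pi$ to express both sides through the same distribution $\mu$. Since $\config$ and $\config'$ are uniform, Proposition~\ref{prop:sumext} applies to both circuits and gives
\[
\interp{E}(\tocpm(m))=\sum_{s\in @^{\mathsf s}E}E^{\sharp s}(\tocpm(m)),
\qquad
\interp{E'}(\tocpm(m))=\sum_{s\in @^{\mathsf s}E'}E'^{\sharp s}(\tocpm(m)).
\]
The index sets coincide by the hypothesis $@^{\mathsf s}E=@^{\mathsf s}E'$. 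The (mixed state) condition for $\mu\,\mathscr R^m_\pi\,\config$ and for $\mu\,\mathscr R^m_\pi\,\config'$ gives, for every $s$, $E^{\sharp s}(\tocpm(m))=\mu_s\cdot\tocpm(m_s)=E'^{\sharp s}(\tocpm(m))$. Summing over $s$ therefore yields $\interp{E}(\tocpm(m))=\interp{E'}(\tocpm(m))=\tocpm(\mu)$.

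Before summing, I would check that the codomains agree. By the (token) condition, each slice has output environment $\mathbf{ENV}(\MM|_{\ADD_s})=\mathbf{ENV}(\MM_s)=\mathbf{ENV}(\MM'|_{\ADD_s})$, so both circuits have the same uniform typing $\mathbf{ENV}(\negative_\pi)\rhd -\rhd\Delta$. This makes $\config\equiv\config'$ well-typed as a statement; the equality $\pi_1=\pi_2$ holds trivially.

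The remaining step, and the main obstacle, is passing from agreement on the states $\tocpm(m)$ to equality of the maps. These states are classical bits tensored with pure quantum states $QQ^\dagger$. I would use linearity of CPMs. The operators $QQ^\dagger$ over all normalized $Q$ span $\mathbb C^{n\times n}$ by polarization, and each $\tocpm(v)$ is a basis vector of the classical component. Hence the vectors $\varphi_L(\tocpm(Q)\otimes\tocpm(v))$ span $\interp{\mathbf{ENV}(\negative_\pi)}$. Because the hypothesis quantifies over \emph{all} registers $m$ with the same $\mu$ — which implicitly means the family $\mu$ is allowed to depend on $m$ — this spanning argument gives $\interp{E}=\interp{E'}$.

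Care is needed only in reading the hypothesis: $\mu$ (and so each $\mu_s\cdot\tocpm(m_s)$) depends on $m$. The argument still works because for each fixed $m$ both relations hold with the same $\mu$, so the per-$m$ equality is all that is used.
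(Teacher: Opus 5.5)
Your proposal is correct and follows the same route as the paper. The (mixed state) clause gives $E^{\sharp s}(\tocpm(m)) = (E')^{\sharp s}(\tocpm(m))$ on each slice, and Proposition~\ref{prop:sumext} sums these to $\interp{E}(\tocpm(m))=\interp{E'}(\tocpm(m))$; the paper stops there and silently treats agreement on all register states as equality of the maps, whereas your spanning-plus-linearity argument (pure states $QQ^\dagger$ and classical basis vectors span $\interp{\mathbf{ENV}(\negative_\pi)}$) supplies exactly that missing justification.
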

\begin{proof}
  Let $\config = (\pi, \MM, E)$ and $\config' = (\pi, \MM', E')$. As both
  $\config$ and $\config'$ are in relation with $\mu$, by the definition of
  $\mathscr R^{m}_\pi$ we know that, thanks to the (mixed state) condition,
  that the semantics of $E$ and $E'$, at any slice $s$, is the same for any
  quantum + classical register $m$ and are thus equal by $\interp{-}$ by
  definition of $E^{\#s}$ and Proposition~\ref{prop:sumext}. 
\end{proof}

\begin{theorem}[Confluence Modulo Equivalence]

For all $\circuittkmz$-configurations $\config, \config',\config''$, 
if $\config\to\config'$ and $\config\to\config''$, then there exists uniform configurations $\tilde{\config}'$ and 
$\tilde{\config}''$ such that $\config'\to^*\tilde{\config}'$, $\config''\to^*\tilde{\config}''$ and 
$\tilde{\config}'\equiv \tilde{\config}''$.
\end{theorem}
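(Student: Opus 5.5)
We reduce confluence of the \circuittkmz{} to termination and to the bisimulation with parallel \evaltkm-reduction. Given $\config\to\config'$ and $\config\to\config''$, Theorem~\ref{thm:termination-circuit-tkm} lets us extend both to terminating runs $\config'\to^*\tilde\config'$ and $\config''\to^*\tilde\config''$ ending in final configurations. A final configuration has all tokens in $\positive_\pi\cup\posones_\pi\cup\guard_\pi$, so its extended circuit is uniform, with typing $\mathbf{ENV}(\negative_\pi)\rhd - \rhd\mathbf{ENV}(\positive_\pi)$. Hence $\tilde\config'$ and $\tilde\config''$ are uniform, and it remains to show $\interp{\tilde E'}=\interp{\tilde E''}$.

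To prove this, fix an arbitrary quantum+classical register $m$ for $\pi$ and track both runs on the \evaltkm{} side.
\begin{itemize}
\item We may assume $\config$ is reachable from $\initcz(\pi)$. This is the situation in which the theorem is used; otherwise we strengthen the hypothesis to $\mu\,\mathscr R^m_\pi\,\config$ for some $\mu$.
\item Lemma~\ref{lemma:init-config-related} together with repeated use of Theorem~\ref{thm:noif-then-else-simulation}(1) gives a coherent $\mu$ with $\mu\,\mathscr R^m_\pi\,\config$.
\item Continuing with part (1) along each branch, every step $\to^a_\ADD$ is matched by a parallel step $\mtocl^a_\ADD$. This yields $\mu',\mu''$, both consisting of final \evaltkm-configurations, with $\mu'\,\mathscr R^m_\pi\,\tilde\config'$ and $\mu''\,\mathscr R^m_\pi\,\tilde\config''$.
\end{itemize}
By Proposition~\ref{prop:sumext} and the (mixed state) condition,
\[\interp{\tilde E'}(\tocpm(m))=\sum_{s}\mu'_s\cdot\tocpm(m'_s)=\tocpm(\mu'),\]
and likewise $\interp{\tilde E''}(\tocpm(m))=\tocpm(\mu'')$.

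Next we identify $\tocpm(\mu')$ and $\tocpm(\mu'')$. By Proposition~\ref{prop:soundpara}, the pruned distributions $\widetilde{\mu'}$ and $\widetilde{\mu''}$ are results of ordinary \evaltkm-reductions from the initial configuration on $m$, and both consist of final configurations. Confluence of the \evaltkm{}, or equivalently uniqueness of its final distribution $\DISTM{m}$ (\cite{DalLago2017}, transported through $\DISTM{m}=\DIST{m'}$), gives $\widetilde{\mu'}=\widetilde{\mu''}$ up to the indexing. Zero-weight entries contribute nothing to $\tocpm$, so $\tocpm(\mu')=\tocpm(\mu'')$. Hence $\interp{\tilde E'}$ and $\interp{\tilde E''}$ agree on every state $\tocpm(m)$.

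We expect this step to be the main obstacle, for two reasons.
\begin{itemize}
\item Lemma~\ref{lemma:config-related-same-distri-same-sem} cannot be applied verbatim. The two runs may open if-then-elses, or meet $\meas$ gates, in different orders, so $@^{\mathsf s}\tilde E'$ and $@^{\mathsf s}\tilde E''$ may be distinct index sets. The argument must therefore pass through the aggregated states $\tocpm(\mu')$, not slice by slice.
\item The states $\tocpm(m)$ arising from registers (pure quantum part, definite classical part) are not all of $\interp{\mathbf{ENV}(\negative_\pi)}$. They do span it linearly: pure states span each matrix block, and the classical basis vectors index the blocks. Since $\interp{\tilde E'}$ and $\interp{\tilde E''}$ are linear, equality on this spanning set gives equality everywhere.
\end{itemize}
Combining, $\interp{\tilde E'}=\interp{\tilde E''}$, i.e.\ $\tilde\config'\equiv\tilde\config''$.
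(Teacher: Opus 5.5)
Your proof is correct, but it is organised differently from the paper's.

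\textbf{The paper's route.} The paper closes a diamond. From $\mu\,\mathscr R^m_\pi\,\config$ it uses direction (1) of Theorem~\ref{thm:noif-then-else-simulation} to get $\mu',\mu''$ related to $\config',\config''$. It then invokes confluence of the \evaltkm{} to join $\mu'$ and $\mu''$ in a common $\tilde\mu$. Direction (2) transfers this back to configurations $\tilde{\config}',\tilde{\config}''$, both related to $\tilde\mu$. Finally, Lemma~\ref{lemma:config-related-same-distri-same-sem} concludes slice by slice.

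\textbf{Your route.} You use termination to go straight to final configurations, so you need only the forward simulation. You sum the (mixed state) equations via Proposition~\ref{prop:sumext} to get $\interp{\tilde E'}(\tocpm(m))=\tocpm(\mu')$. You then compare the aggregated states $\tocpm(\mu')$ and $\tocpm(\mu'')$ through Proposition~\ref{prop:soundpara} and uniqueness of the final \evaltkm{} distribution. In effect you rerun the paper's soundness argument and observe that confluence modulo $\equiv$ is a corollary of soundness plus termination.

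\textbf{Comparison.} The paper's diagram mirrors the standard transfer of confluence along a bisimulation and exercises the backward direction. However, it needs a single pseudo-distribution $\tilde\mu$ related to both $\tilde{\config}'$ and $\tilde{\config}''$. Under the (index) condition, that forces the two circuits to share their super-addresses (read literally, to coincide). The diagram does not justify this when the two runs open conditionals or duplicate $\meas$ gates in different orders, which is exactly the issue you flag.

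Your aggregation avoids that requirement. It also applies unchanged to a machine with only a forward simulation, as for \circuittkm. And it makes explicit the density step that Lemma~\ref{lemma:config-related-same-distri-same-sem} uses tacitly: pure states with a definite classical part span $\interp{\mathbf{ENV}(\negative_\pi)}$, and the runs are fixed independently of $m$.

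Both arguments rely equally on $\config$ being reachable, or related to some $\mu$ for every register $m$. The statement's ``for all configurations'' is imprecise in the same way for both.
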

\begin{proof}
  It suffices to observe that the following diagram commutes for all quantum+classical register $m$, where $\mu \simeq
  \config$ is shorthand for $\mu\mathscr R^m_\pi \config$:
\begin{center}
  \begin{tikzpicture}
	\begin{pgfonlayer}{nodelayer}
		\node [style=none] (0) at (0, 0.5) {$\config$};
		\node [style=none] (1) at (-1.5, 0.5) {$\config'$};
		\node [style=none] (2) at (1.5, 0.5) {$\config''$};
		\node [style=none] (3) at (-1.5, 1.75) {$\tilde{\config}'$};
		\node [style=none] (4) at (1.5, 1.75) {$\tilde{\config}''$};
		\node [style=none] (5) at (0, -0.5) {$\mu$};
		\node [style=none] (6) at (-1.5, -0.5) {$\mu'$};
		\node [style=none] (7) at (1.5, -0.5) {$\mu''$};
		\node [style=none] (8) at (0, -1.5) {$\tilde{\mu}$};
		\node [style=none] (9) at (-0.25, 0.5) {};
		\node [style=none] (10) at (-1.25, 0.5) {};
		\node [style=none] (11) at (1.25, 0.5) {};
		\node [style=none] (12) at (-1.5, 1.5) {};
		\node [style=none] (13) at (-1.5, 0.75) {};
		\node [style=none] (14) at (1.5, 1.5) {};
		\node [style=none] (15) at (1.5, 0.75) {};
		\node [style=none] (16) at (0.25, 0.5) {};
		\node [style=none] (17) at (-1.25, 1.775) {};
		\node [style=none] (18) at (-1.25, 1.65) {};
		\node [style=none] (19) at (1.25, 1.775) {};
		\node [style=none] (20) at (1.25, 1.65) {};
		\node [style=none] (21) at (0, 1.95) {$\interp{-}$};
		\node [style=none] (22) at (-1.25, -0.475) {};
		\node [style=none] (23) at (-1.25, -0.6) {};
		\node [style=none] (24) at (-0.25, -0.475) {};
		\node [style=none] (25) at (-0.25, -0.6) {};
		\node [style=none] (26) at (0.25, -0.475) {};
		\node [style=none] (27) at (0.25, -0.6) {};
		\node [style=none] (28) at (1.25, -0.475) {};
		\node [style=none] (29) at (1.25, -0.6) {};
		\node [style=none] (30) at (1.425, 0.275) {};
		\node [style=none] (31) at (1.6, 0.275) {};
		\node [style=none] (32) at (1.425, -0.225) {};
		\node [style=none] (33) at (1.6, -0.225) {};
		\node [style=none] (34) at (-0.075, 0.275) {};
		\node [style=none] (35) at (0.1, 0.275) {};
		\node [style=none] (36) at (-0.075, -0.225) {};
		\node [style=none] (37) at (0.1, -0.225) {};
		\node [style=none] (38) at (-1.575, 0.275) {};
		\node [style=none] (39) at (-1.4, 0.275) {};
		\node [style=none] (40) at (-1.575, -0.225) {};
		\node [style=none] (41) at (-1.4, -0.225) {};
		\node [style=none] (42) at (-1.65, -0.725) {};
		\node [style=none] (43) at (-1.4, -0.725) {};
		\node [style=none] (44) at (1.35, -0.725) {};
		\node [style=none] (45) at (1.6, -0.725) {};
		\node [style=none] (46) at (-0.65, -1.225) {};
		\node [style=none] (47) at (-0.4, -1.225) {};
		\node [style=none] (48) at (0.35, -1.225) {};
		\node [style=none] (49) at (0.6, -1.225) {};
		\node [style=none] (50) at (-2.65, 1.75) {$\tilde{\mu}$};
		\node [style=none] (51) at (2.75, 1.75) {$\tilde{\mu}$};
		\node [style=none] (52) at (-1.775, 1.8) {};
		\node [style=none] (53) at (-1.775, 1.625) {};
		\node [style=none] (54) at (-2.275, 1.8) {};
		\node [style=none] (55) at (-2.275, 1.625) {};
		\node [style=none] (56) at (2.3, 1.8) {};
		\node [style=none] (57) at (2.3, 1.625) {};
		\node [style=none] (58) at (1.8, 1.8) {};
		\node [style=none] (59) at (1.8, 1.625) {};
		\node [style=none] (60) at (0.4, 0) {$H$};
		\node [style=none] (61) at (2.5, 0) {Theorem~\ref{thm:noif-then-else-simulation}};
		\node [style=none] (62) at (-2.5, 0) {Theorem~\ref{thm:noif-then-else-simulation}};
		\node [style=none] (63) at (2, 2.35) {Theorem~\ref{thm:noif-then-else-simulation}};
		\node [style=none] (64) at (-2, 2.35) {Theorem~\ref{thm:noif-then-else-simulation}};
		\node [style=none] (65) at (0, -2) {Confluence~\cite[Theorem 10]{DalLago2017}};
		\node [style=none] (66) at (0, 2.5) {Lemma~\ref{lemma:config-related-same-distri-same-sem}};
	\end{pgfonlayer}
	\begin{pgfonlayer}{edgelayer}
		\draw [style={arrows={->[]}}] (9.center) to (10.center);
		\draw [style={arrows={->[]}}] (13.center) to (12.center);
		\draw [style={arrows={->[]}}] (15.center) to (14.center);
		\draw [style={arrows={->[]}}] (16.center) to (11.center);
		\draw (17.center) to (19.center);
		\draw (18.center) to (20.center);
		\draw [style={arrows={->[]}}] (25.center) to (23.center);
		\draw [style={arrows={->[]}}] (24.center) to (22.center);
		\draw [style={arrows={->[]}}] (26.center) to (28.center);
		\draw [style={arrows={->[]}}] (27.center) to (29.center);
		\draw (31.center) to (33.center);
		\draw [in=45, out=-150, looseness=1.25] (30.center) to (32.center);
		\draw (35.center) to (37.center);
		\draw [in=45, out=-150, looseness=1.25] (34.center) to (36.center);
		\draw (39.center) to (41.center);
		\draw [in=45, out=-150, looseness=1.25] (38.center) to (40.center);
		\draw [style={arrows={->[]}}] (43.center) to (47.center);
		\draw [style={arrows={->[]}}] (42.center) to (46.center);
		\draw [style={arrows={->[]}}] (44.center) to (48.center);
		\draw [style={arrows={->[]}}] (45.center) to (49.center);
		\draw (53.center) to (55.center);
		\draw [in=-45, out=120, looseness=1.25] (52.center) to (54.center);
		\draw (57.center) to (59.center);
		\draw [in=-45, out=120, looseness=1.25] (56.center) to (58.center);
	\end{pgfonlayer}
\end{tikzpicture}
  \end{center}
\end{proof}
\section{The $\evaltkm(^*)$ Simulates the $\circuittkm$}

We will now establish a simulation result between the $\circuittkm$ and the $\evaltkm$.
While, as we have seen, any transition of the $\circuittkmz$ is (bi)simulated by a corresponding transition of the $\evaltkm$, we will show that any application of the synchronous rule of the $\circuittkm$ can be simulated by a sequence of transitions of the $\evaltkm$.

\subsection{Extended Circuits with if-then-else}\label{subs:excircplus}

In the bisimulation argument for the $\circuittkmz$ and the $\evaltkm$ we restricted our attention to extended circuits that do not contain the constructor $\ite{-}{-}{-}$. This is no more possible now, since this constructor is precisely the one introduced by the synchronous guard rule. 
We thus need to extend our analysis of extended circuits to this new case.

For any extended circuit, let $\LITE(E)$ be the set of labels such that $E$ contains a sub-circuit of the form $\ite{l}{C}{D}$. Notice that $\LITE(E)\cap\LITES(E)=\emptyset$.

\begin{definition}\label{def:circaddbis}[$\Sigma$-path, $\Sigma$-address]
Let $C$ be a circuit and $\Sigma\subseteq \LITE(C)$. A set $L\subseteq \mathcal L(C)$ is called a \emph{$\Sigma$-path} if one of the following hold:
\begin{itemize}
\item $C=c\neq\meas$ and $L=\emptyset$;
\item $C=\meas$ of index $\alpha$ and $L=\{\alpha\}$;
\item $C=D_1;D_2$ and $L=L_1\cup L_2$, where $L_i$ is a path in $D_i$;
\item $C=\ite{l}{D_1}{D_2}$, with $l\in \Sigma$, and either $L=\{l\}\cup L_1$, with $L_1$ a path in $D_1$ or
$L=\{l\}\cup L_2$, with $L_2$ a path in $D_2$.
\item $C=\ite{l}{D_1}{D_2}$, with $l\notin \Sigma$, and $L=L_1$, with $L_1$ a path in $D_1$, or $L=L_2$, with 
 $L_2$ a path in $D_2$.

\end{itemize}

A \emph{$\Sigma$-address for $C$} is a partial function $s:\mathcal L(C)\rightharpoonup \{0,1\}$ whose domain $\mathrm{dom}(s)\subseteq \mathcal L(C)$ is a $\Sigma$-path of $C$, and the following (inductive) condition holds: if 
$C=\ite{l}{C_1}{C_2}$, with $l\in \Sigma$, and $s(l)=b$, then $s-\{(l,b)\}$ is a $\Sigma$-address for $C_b$.

We let $@_\Sigma C$ indicate the set of $\Sigma$-addresses of $C$ and we define $@C:=@_{\LITE(C)}(C)$

For any $s\in @_\Sigma C$, let $\mathrm{ext}(s)=\{t\in @C\mid s\subseteq t\}$, and let the CPM morphism
$C^{\sharp s}$ be defined by
$$
C^{\sharp s}=\sum_{t\in \mathrm{ext}(s)}C^{\sharp t}.
$$
\end{definition}

Proposition \ref{prop:circuitsum} can now be restated in the following form:
\begin{proposition}\label{prop:circuitsumbis}
For any well-typed ite-closed circuit $\Gamma\rhd C\rhd \Delta$ and $\Sigma\subseteq \LITE(C)$, 
$
\interp C=\sum_{s\in @_\Sigma C}C^{\sharp s}.
$
\end{proposition}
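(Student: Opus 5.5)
The plan is to reduce the statement to Proposition~\ref{prop:circuitsum}, using the fact that the $\Sigma$-addresses partition the full addresses. By definition, $C^{\sharp s}=\sum_{t\in\mathrm{ext}(s)}C^{\sharp t}$ for $s\in @_\Sigma C$. It therefore suffices to show that the family $\{\mathrm{ext}(s)\}_{s\in @_\Sigma C}$ is a partition of $@C=@_{\LITE(C)}C$. Once this holds,
\[
\sum_{s\in @_\Sigma C}C^{\sharp s}=\sum_{s\in @_\Sigma C}\ \sum_{t\in \mathrm{ext}(s)}C^{\sharp t}=\sum_{t\in @C}C^{\sharp t}=\interp C,
\]
where the last equality is Proposition~\ref{prop:circuitsum} applied to the ite-closed circuit $C$.

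The partition claim is proved by establishing a restriction map. For every $t\in @C$ we show that there is exactly one $s\in @_\Sigma C$ with $s\subseteq t$; concretely, $s=t|_{\mathrm{dom}}$ for the $\Sigma$-path obtained by deleting from $\mathrm{dom}(t)$ the guard labels of conditionals $\ite{l}{D_1}{D_2}$ with $l\notin\Sigma$. The argument is an induction on $C$, following the clauses of Definition~\ref{def:circaddbis}.

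\begin{itemize}
\item \emph{Gate and $\meas$ cases.} These are immediate: the path is $\emptyset$ or $\{\alpha\}$, and it is the same for $\Sigma$-addresses and full addresses.
\item \emph{Composition $D_1;D_2$.} Addresses split as disjoint unions over $\mathcal L(D_1)$ and $\mathcal L(D_2)$, so the claim follows from the induction hypotheses for each factor.
\item \emph{Conditional with $l\in\Sigma$.} Both $s$ and $t$ record $l$ with the same value $b$, and we recurse into $D_b$.
\item \emph{Conditional with $l\notin\Sigma$.} The full address $t$ contains $(l,b)$ together with an address of $D_b$, while a $\Sigma$-address is just a $\Sigma$-address of $D_1$ or of $D_2$. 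Restricting $t$ drops $(l,b)$ and gives a $\Sigma$-address of $D_b$, which settles existence.
\end{itemize}

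Uniqueness also needs care in the case $l\notin\Sigma$. We must check that a $\Sigma$-address of $D_{1-b}$ cannot be contained in $t$. This holds because the label sets $\mathcal L(D_1)$ and $\mathcal L(D_2)$ are disjoint, since $\meas$ indices and guard labels are distinct across subcircuits. The one degenerate case is when the $\Sigma$-path of $D_{1-b}$ is empty, i.e.\ the subcircuit has no $\meas$ gates and no $\Sigma$-guards. Then the empty function would be a $\Sigma$-address on both sides.

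I expect this degenerate overlap to be the main obstacle. If $D_1$ and $D_2$ both admit the empty $\Sigma$-address, the "two" $\Sigma$-addresses coincide as partial functions, and $@_\Sigma C$, being a set, contains the empty function only once. Its extension $\mathrm{ext}(\emptyset)$ then contains full addresses through both branches, and the sum is still correct provided $@_\Sigma C$ is read as a set. I would treat this explicitly by showing that the map $t\mapsto s$ is well defined and surjective onto $@_\Sigma C$, and that the fibres are exactly the sets $\mathrm{ext}(s)$, rather than arguing branch by branch. With this, the double sum reindexes exactly, and no separate use of Lemma~\ref{lemma:circuitbool} beyond what Proposition~\ref{prop:circuitsum} already contains is needed.
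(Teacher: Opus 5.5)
Your strategy is the paper's own: show that the sets $\mathrm{ext}(s)$, $s\in @_\Sigma C$, partition $@C$, then reindex and apply Proposition~\ref{prop:circuitsum}. However, the partition step is false for Definition~\ref{def:circaddbis} as written, and your handling of the degenerate case does not repair it.

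The problem arises at a conditional $\ite{l}{D_1}{D_2}$ with $l\notin\Sigma$. If the empty function is a $\Sigma$-address of one branch, it is contained in \emph{every} full address $t$ through the other branch. This holds even when the restriction of $t$ to that branch is nonempty, in which case $t$ also lies above a second, nonempty $\Sigma$-address. You assume that the empty function is then a $\Sigma$-address ``on both sides'', and that the fibres of $t\mapsto s$ are exactly the sets $\mathrm{ext}(s)$. Neither holds in general: $\mathrm{ext}(\emptyset)$ can be strictly larger than the fibre over $\emptyset$.

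Here is a concrete counterexample. Take $q:\qbit\rhd C\rhd q:\qbit$ with $C=F;\ite{l}{\opn{H}^q_q}{(\meas^q_{q'};\new^{q'}_q)}$, where $F$ is $\termZero_l$ or $\termOne_l$, whichever selects the second branch. Let $\beta$ be the index of the $\meas$ gate and take $\Sigma=\emptyset$. Then $@_\emptyset C=\{\emptyset,\{(\beta,0)\},\{(\beta,1)\}\}$ and $\mathrm{ext}(\emptyset)=@C$. Let $t_0,t_1$ be the two full addresses through the measuring branch. The $\opn{H}$ branch contributes $0$, so $C^{\sharp t_0}+C^{\sharp t_1}=\interp{C}$. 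Hence $\sum_{s\in @_\emptyset C}C^{\sharp s}=\interp{C}+C^{\sharp t_0}+C^{\sharp t_1}=2\interp{C}\neq\interp{C}$.

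The paper's proof asserts exactly the disjointness ``$s\neq s'\Rightarrow\mathrm{ext}(s)\cap\mathrm{ext}(s')=\emptyset$'' without justification. So the defect lies in the definition of $\mathrm{ext}$, and hence in the statement as written, not only in your write-up.

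The repair that matches your restriction-map idea is to define $\mathrm{ext}(s)$ as a fibre. Let $r(t)$ be the restriction of $t$ to $\mathcal L(C)\setminus(\LITE(C)\setminus\Sigma)$, and set $\mathrm{ext}(s)=\{t\in @C\mid r(t)=s\}$; that is, require $s=r(t)$ rather than $s\subseteq t$. Your induction shows that $r(t)\in @_\Sigma C$. Every $\Sigma$-address extends to a full one by choosing the omitted guards, so $r$ maps $@C$ onto $@_\Sigma C$. The fibres of a surjection partition its domain, so the double sum reindexes to $\sum_{t\in @C}C^{\sharp t}=\interp{C}$ by Proposition~\ref{prop:circuitsum}, with no special treatment of the empty address.
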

\begin{proof}
Observing that, for all $s,s'\in @_\Sigma C$, 
if $s\neq s'$ then $\mathrm{ext}(s)\cap\mathrm{ext}(s')=\emptyset$, 
and that for all $t\in @C$, its restriction $t_\Sigma$ to $\Sigma$ is in $@_\Sigma C$,
we deduce that the sets $\mathrm{ext}(s)$, for $s\in @_\Sigma C$, form a partition of $@C$.
We thus have:
\begin{align*}
\sum_{s\in @_\Sigma C}C^{\sharp s}&=
\sum_{s\in @_\Sigma C}\sum_{t\in \mathrm{ext}(s)}C^{\sharp t}=\sum_{s\in @_ C}C^{\sharp s},
\end{align*}
so we conclude by Proposition \ref{prop:circuitsum}.
\end{proof}

Let us now consider extended circuits. The set of addresses $@E$ of an extended circuit is defined as before.
Super-addresses are now defined relatively to a set $\Sigma\subseteq \LITE(E)$:

\begin{definition}[$\Sigma$-super-addresses]
Let $E$ be an extended circuit and $\Sigma\subseteq \LITE(E)$.
A set $L\subseteq \mathcal L(E)$ is called a \emph{$\Sigma$-path in $E$} if one of the following hold:
\begin{itemize}
\item $E=C$ and $L$ is a $\Sigma$-path in $C$;
\item $E=C\xmapsto{l}(E_1,E_2)$ and $L=L_0\cup\{l\}\cup L'$, where $L_0$ is a $\Sigma$-path in $C$ and $L'$ is a $\Sigma$-path in either $E_1$ or $E_2$.
\end{itemize}

A partial function $s:\mathcal L(E)\to\{0,1\}$ is called a \emph{$\Sigma$-super-address of $E$} if its domain is a $\Sigma$-path $L\subseteq \mathcal L(E)$ and the following (inductive) conditions holds \begin{itemize}
\item if $E=C\xmapsto{l}(E_1,E_2)$, with $l\in \Sigma$, then $s-\{(l,b)\}$ is a $\Sigma$-super-address for $E_b$;
\item for all circuit $C\in \mathrm{Circ}(E)$, the restriction $s|_{\mathcal L(C)}$ is a $\Sigma$-path in $C$.

\end{itemize}
We let $@^{\mathsf s}_\Sigma E$ indicate the set of $\Sigma$-super-addresses of $E$ and define $@^{\mathsf s}E:=@^{\mathsf s}_{\LITE(E)}(E)$

For any $s\in @_\Sigma^*E$, the address $\ADD_s$ is defined as before as $\ADD_s:= s|_{\LITES(E)}\in @E$.
Moreover, let $\mathrm{ext}(s)=\{ t\in @^{\mathsf s}(E)\mid s\subseteq t\}$, and define the CPM morphism $E^{\sharp s}$ as 
$$
E^{\sharp s}=\sum_{t\in \mathrm{ext}(s)}E^{\sharp t}.
$$

\end{definition}

Lemma \ref{lemma:addresses}, Lemma \ref{lemma:superaddress}, 
Lemma \ref{lemma:extequiv}, Proposition \ref{prop:sumext}, Lemma \ref{lemma:help1} and Lemma \ref{lemma:help2} can be reformulated as follows (the proofs being essentially the same):

\begin{lemma}\label{lemma:addressesbis}
For any extended circuit $E$ and $\Sigma\subseteq \LITE(E)$, $@_\Sigma ^*E=@_{\Sigma\cup \LITES(E)} \tau(E)$.
\end{lemma}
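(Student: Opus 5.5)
The plan is to adapt the proof of Lemma~\ref{lemma:addresses}, which identifies super-addresses of $E$ with addresses of $\tau(E)$, to the relativised setting. That proof rests on the observation that $\tau(E)$ lies in the restricted sub-grammar $B ::= C \mid (\ite{l}{B}{B});C$. The difference now is that the base circuits $C \in \mathrm{circ}(E)$ may themselves contain inner conditionals $\ite{l'}{D_1}{D_2}$ with $l' \in \LITE(E)$. By construction, the outer conditional labels produced by $\tau$ are exactly $\LITES(E)$, and these are disjoint from $\LITE(E)$. Consequently $\LITE(\tau(E)) = \LITES(E) \uplus \LITE(E)$, and $\mathcal L(\tau(E)) = \mathcal L(E)$ once the $\meas$-indices are identified. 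Taking $\Sigma' = \Sigma \cup \LITES(E)$, the claim to prove is that a partial function $s$ is a $\Sigma$-super-address of $E$ if and only if it is a $\Sigma'$-address of $\tau(E)$.

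I will argue by structural induction on $E$, proving the path statement and the address condition together.

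\emph{Base case} $E = C$. Here $\tau(E) = C$ and $\LITES(E) = \emptyset$, so $\Sigma' = \Sigma$. Definition of $\Sigma$-path for $E = C$ directly reduces to the $\Sigma$-path in $C$, and the address conditions coincide: the first clause of the super-address definition is vacuous, and the second says that the restriction to $\mathcal L(C)$ is a $\Sigma$-path.

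\emph{Inductive case} $E = C \xmapsto{l} (E_1, E_2)$. Here $\tau(E) = C ; \ite{l}{\tau(E_1)}{\tau(E_2)}$ with $l \in \Sigma'$. By the definition of $\Sigma'$-path for sequential composition and for a conditional whose label is in $\Sigma'$, a $\Sigma'$-path of $\tau(E)$ has the form $L_0 \cup \{l\} \cup L'$. In this decomposition, $L_0$ is a $\Sigma'$-path of $C$, which is the same as a $\Sigma$-path of $C$ because $\LITE(C) \cap \LITES(E) = \emptyset$. The set $L'$ is a $\Sigma'$-path of $\tau(E_b)$, which by the induction hypothesis is a $\Sigma$-path of $E_b$. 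This matches the clause defining $\Sigma$-paths for $E$.

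For the address condition, the requirement that $s - \{(l,b)\}$ be a $\Sigma'$-address of $\tau(E_b)$ matches the super-address clause for $E_b$ via the induction hypothesis. I need $\Sigma \cap \LITE(E_b)$ to be the relevant set, but restriction is harmless here. The local condition inside $C$ is exactly the requirement that $s|_{\mathcal L(C)}$ be a $\Sigma$-address of $C$. One caveat: the super-address definition as written says \emph{path}, and I read it as address.

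The main obstacle is bookkeeping rather than mathematics. The super-address definition uses a global clause over $\mathrm{Circ}(E)$ together with an inductive clause only for $l \in \Sigma$, while the circuit-address definition is purely inductive. I first check that, for $s$ with $\ADD_s$ fixed, the global clause restricted to the circuits along the path $\ADD_s$ is equivalent to the inductive address conditions inside each such $C$. The remaining step is to make precise that the outer labels always belong to $\Sigma'$, so that $\tau$'s conditionals are always branched on. This is exactly why $\LITES(E)$ is added to $\Sigma$. Once these two points are in place, the induction closes as in Lemma~\ref{lemma:addresses}.
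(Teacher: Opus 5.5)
Your proposal is correct and takes essentially the paper's route. The paper proves this lemma by pointing back to Lemma~\ref{lemma:addresses}: an induction along the tree structure of $E$, using $\tau(C\xmapsto{l}(E_1,E_2)) = C;\ite{l}{\tau(E_1)}{\tau(E_2)}$, together with the remark $\LITE(E)\cap\LITES(E)=\emptyset$, which you make explicit so that adding $\LITES(E)$ to $\Sigma$ forces branching exactly on the outer conditionals. Your readings of the sloppy definition are the intended ones: ``address'' for ``path'' in the global clause, that clause restricted to circuits along $\ADD_s$, and the node clause applying to every outer label rather than only to labels in $\Sigma$.
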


\begin{lemma}\label{lemma:superaddressbis}
For all extended circuits $E,F$,  $\Sigma\subseteq \LITE(E)$ and $\Sigma$-address $\ADD\in @_\Sigma E$, if $F$ is obtained from $E$ by replacing the circuit $C:=E@\ADD$ by $C\xmapsto{l}{(\mathrm{id},\mathrm{id})}$, then $@_\Sigma^*F=\EXT(E,@^{\mathsf s}_\Sigma E,\ADD,l)$.
\end{lemma}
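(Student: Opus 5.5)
We have to prove Lemma~\ref{lemma:superaddressbis}: with $F$ obtained from $E$ by replacing $C:=E@\ADD$ with $C\xmapsto{l}(\mathrm{id},\mathrm{id})$, we have $@^{\mathsf s}_\Sigma F=\EXT(E,@^{\mathsf s}_\Sigma E,\ADD,l)$. I would argue by structural induction on $E$, following the proof of Lemma~\ref{lemma:superaddress} and adapting it to the $\Sigma$-relative notions of path and super-address. First I would record two bookkeeping facts. The fresh label $l$ lies in $\LITES(F)$ and not in $\LITE(F)$, so $\Sigma\subseteq\LITE(F)=\LITE(E)$ is unchanged. Moreover $\mathcal L(F)=\mathcal L(E)\cup\{l\}$, because the two new $\mathrm{id}$ leaves contain neither $\meas$-gates nor conditionals.

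\emph{Base case} $E=C$, so $\ADD=\emptyset$. Here $F=C\xmapsto{l}(\mathrm{id},\mathrm{id})$, and its $\Sigma$-paths are the sets $L_0\cup\{l\}$ with $L_0$ a $\Sigma$-path in $C$, since the only $\Sigma$-path of $\mathrm{id}$ is $\emptyset$. Since $l\notin\Sigma$, the first inductive clause places no constraint on $s(l)$. So a $\Sigma$-super-address of $F$ is exactly $s\cup\{(l,b)\}$ with $s\in@^{\mathsf s}_\Sigma C$ and $b\in\{0,1\}$. Every such $s$ has $\ADD_s=\emptyset=\ADD$, so this set is $\EXT(C,@^{\mathsf s}_\Sigma C,\emptyset,l)$.

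\emph{Inductive case} $E=C'\xmapsto{l'}(E_0,E_1)$ with $\ADD(l')=c$. Then $F=C'\xmapsto{l'}(E_0',E_1')$, where $E_c'$ is obtained from $E_c$ at address $\ADD\setminus\{(l',c)\}$ and $E'_{1-c}=E_{1-c}$. A $\Sigma$-super-address of $F$ splits uniquely as a $\Sigma$-path part in $C'$, the pair $(l',b)$, and a $\Sigma$-super-address of $E'_b$.
\begin{itemize}
\item For $b=1-c$ the super-address is unchanged, and $\ADD_s\neq\ADD$, which matches the second component of $\EXT$.
\item For $b=c$, the induction hypothesis gives $@^{\mathsf s}_\Sigma E'_c=\EXT(E_c,@^{\mathsf s}_\Sigma E_c,\ADD\setminus\{(l',c)\},l)$.
\end{itemize}
Recombining with the $C'$-part and $(l',c)$ yields exactly the extended elements for those $s$ with $\ADD_s=\ADD$, and leaves the rest untouched. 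Throughout I use that the condition ``$s|_{\mathcal L(D)}$ is a $\Sigma$-path for every $D\in\mathrm{circ}$'' is local to each circuit, so it is preserved under the splitting.

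\emph{Main obstacle.} The delicate point is the definition of $\Sigma$-super-address for labels in $\LITE$ that are not in $\Sigma$: they neither appear in the domain nor force a branch. One must check that replacing $E@\ADD$ does not interact with such internal conditionals of $C$. This holds because $C$ itself is kept verbatim, so its $\Sigma$-paths, including the unconstrained choices at non-$\Sigma$ conditionals, are inherited unchanged. I would make this explicit in the base case. Everything else is routine reindexing.
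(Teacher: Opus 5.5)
Your structural induction on $E$ is correct and is exactly the routine verification the paper intends: the paper gives no separate argument, saying only that the proof is essentially that of Lemma~\ref{lemma:superaddress}, which is ``easily checked''. One small correction concerns the base case. There $s(l)$ is unconstrained not because $l\notin\Sigma$ (read literally, that clause of the definition never fires on $\xmapsto{}$-labels, since $\LITE(E)\cap\LITES(E)=\emptyset$, while your inductive case correctly uses the intended reading in which $s(l')$ selects the branch), but because both new branches are $\mathrm{id}$, whose only super-address is the empty one.
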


\begin{lemma}\label{lemma:extequivbis}
For any uniform extended circuit $E$, $\Sigma\subseteq \LITE(E)$ and super-address $s\in @_\Sigma ^*E$, $E^{\sharp s}=\tau(E)^{\sharp s} $. Moreover, $E^{\sharp s}$ is of the form
\begin{equation}
\label{eq:dag3}
E^{\sharp s}=C_0^{\sharp s}\circ (\pi_{s(l_1)}^{l_1}\circ C_1^{\sharp s})\circ \dots\circ
(\pi_{s(l_n)}^{l_n}\circ C_n^{\sharp s}),
\tag{$\dag$}
\end{equation}
where $l_1,\dots, l_n$ indicates the list (in reversed chronological order) of labels encountered in exploring the tree of $E$ from root to leaves, and 
 $C_0,\dots, C_n$ are the circuits (in reversed chronological order) encountered.
\end{lemma}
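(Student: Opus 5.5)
We argue by induction on the structure of the uniform extended circuit $E$, following the proof of Lemma~\ref{lemma:extequiv}, with $\Sigma$-super-addresses in place of super-addresses and extension sums in place of single slices. The key ingredient is linearity: for any $s\in @^{\mathsf s}_\Sigma E$, both sides of the claimed equation are defined as sums over $\mathrm{ext}(s)$. By Lemma~\ref{lemma:addressesbis}, these sums range over the same index sets on the two sides. Indeed, a full super-address $t\in @^{\mathsf s}E$ extending $s$ corresponds exactly to a full address of $\tau(E)$ extending $s$, since $@^{\mathsf s}E=@_{\LITE(E)\cup\LITES(E)}\tau(E)=@\tau(E)$. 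It therefore suffices to prove $E^{\sharp t}=\tau(E)^{\sharp t}$ for every full $t$, and then sum over $\mathrm{ext}(s)$.

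For full $t$, the computation is the one already done in Lemma~\ref{lemma:extequiv}. In the base case $E=C$, we have $\tau(E)=C$ and there is nothing to prove. In the inductive case $E=C\xmapsto{l}(F,G)$, we have $\tau(E)=C;\ite{l}{\tau(F)}{\tau(G)}$. We expand $E^{\sharp t}$ as $(1-t(l))\,F^{\sharp t}\circ\pi^l_0\circ C^{\sharp t}+t(l)\,G^{\sharp t}\circ\pi^l_1\circ C^{\sharp t}$. We then apply the induction hypothesis to $F$ and $G$, which are uniform because their leaves are leaves of $E$, and refold the result using equation~\eqref{eq:itesl} for $(\ite{l}{\tau(F)}{\tau(G)})^{\sharp t}$ together with the sequential-composition clause. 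The only new point compared to Lemma~\ref{lemma:extequiv} is that the inner circuits $C$ may themselves contain $\ite{l'}{-}{-}$ with $l'\in\LITE(E)$. This is harmless, because $C^{\sharp t}$ is the basic-circuit slice and the restriction $t|_{\mathcal L(C)}$ is a genuine address of $C$ by definition of super-address.

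For the factorisation~\eqref{eq:dag3}, we unfold the tree of $E$ along the branch selected by $\ADD_s$ (recall that $\LITES$-labels are always in the domain of $s$), again by induction. Each node contributes a factor $\pi^{l_i}_{s(l_i)}\circ C_i^{\sharp s}$. Here $C_i^{\sharp s}$ now denotes the partial-slice sum $\sum_{t\in\mathrm{ext}(s|_{\mathcal L(C_i)})}C_i^{\sharp t}$.

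The main obstacle is showing that the sum over $\mathrm{ext}(s)$ distributes as a product of the per-circuit sums. This requires $\mathrm{ext}(s)$ to be in bijection with the product of the sets $\mathrm{ext}(s|_{\mathcal L(C_i)})$ along the selected branch. This holds because the label sets $\mathcal L(C_i)$ are pairwise disjoint, the measurement indices are distinct, and the $\LITES$-part of any $t\in \mathrm{ext}(s)$ is forced to equal $\ADD_s$. Given this bijection, bilinearity of composition of completely positive maps turns the sum of products into a product of sums, which yields~\eqref{eq:dag3}. I will check this product decomposition first, by induction on the branch length, since everything else reduces to the computations already carried out in Lemma~\ref{lemma:extequiv}.
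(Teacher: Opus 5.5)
Your proposal is correct and takes essentially the paper's approach. The paper only asserts that the induction of Lemma~\ref{lemma:extequiv} carries over; you make explicit the bookkeeping it leaves implicit, namely the reduction to full super-addresses via Lemma~\ref{lemma:addressesbis} and linearity, and the distribution of the sum over $\mathrm{ext}(s)$ through its bijection with the product of the sets $\mathrm{ext}(s|_{\mathcal L(C_i)})$ along the branch fixed by $\ADD_s$. That bijection relies on the label sets $\mathcal L(C_i)$ of distinct circuits along a branch being disjoint, which the paper also assumes tacitly, and it is exactly what lets the factors $C_i^{\sharp s}$ in~\eqref{eq:dag3} be read as partial-slice sums.
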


\begin{proposition}\label{prop:sumextbis}
For any uniform extended circuit $E$ and $\Sigma\subseteq \LITE(E)$, $\interp{E}=\sum_{s\in @_\Sigma^*E}E^{\sharp s}$.

\end{proposition}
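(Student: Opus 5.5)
The plan is to derive Proposition~\ref{prop:sumextbis} exactly as Proposition~\ref{prop:sumext} was derived, now working relative to $\Sigma$. There are three steps: translate $E$ into the plain circuit $\tau(E)$, apply the $\Sigma$-refined circuit decomposition (Proposition~\ref{prop:circuitsumbis}), and match the summands slice by slice using Lemmas~\ref{lemma:addressesbis} and~\ref{lemma:extequivbis}.

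First, since $E$ is uniform, $\interp{E}=\interp{\tau(E)}$ by definition, and $\tau(E)$ is well typed by Lemma~\ref{lem:translation-EC-well-typed}. I also need $\tau(E)$ to be ite-closed. The translation $\tau$ only produces subcircuits of the form $C;\ite{l}{\tau(F)}{\tau(G)}$, and these are ite-closed by construction. The remaining conditionals are the $\ite{l}{C_1}{C_2}$ already inside the leaf circuits of $E$; these were introduced by the synchronous rule as $E@\GG;\ite{l}{\ldots}{\ldots}$, so they are ite-closed as well. I would state this as a side invariant: extended circuits reachable by the machine have ite-closed components. Since $\LITE(\tau(E))=\LITE(E)\cup\LITES(E)$, I then apply Proposition~\ref{prop:circuitsumbis} to $\tau(E)$ with the set $\Sigma'=\Sigma\cup\LITES(E)$, which gives $\interp{\tau(E)}=\sum_{s\in @_{\Sigma'}\tau(E)}\tau(E)^{\sharp s}$.

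Second, Lemma~\ref{lemma:addressesbis} identifies the index set: $@_{\Sigma'}\tau(E)=@^{\mathsf s}_\Sigma E$. Lemma~\ref{lemma:extequivbis} identifies each summand, giving $\tau(E)^{\sharp s}=E^{\sharp s}$ for every $s\in @^{\mathsf s}_\Sigma E$. Substituting both yields $\interp{E}=\sum_{s\in @^{\mathsf s}_\Sigma E}E^{\sharp s}$.

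I expect the main obstacle to be checking that the two ``$\mathrm{ext}$'' sums used to define $E^{\sharp s}$ and $\tau(E)^{\sharp s}$ for partial ($\Sigma$-)super-addresses run over corresponding sets. Concretely, I must show $t\mapsto t$ is a bijection between $\{t\in @^{\mathsf s}E\mid s\subseteq t\}$ and $\{t\in @\tau(E)\mid s\subseteq t\}$. This follows from Lemma~\ref{lemma:addressesbis} applied with $\Sigma=\LITE(E)$, since $@^{\mathsf s}E=@\tau(E)$, together with the fact that extension by inclusion is preserved. Once that bijection is in place, the partition argument from the proof of Proposition~\ref{prop:circuitsumbis} goes through unchanged: the sets $\mathrm{ext}(s)$ are pairwise disjoint and cover $@^{\mathsf s}E$. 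This gives an alternative direct route as well, namely $\sum_{s\in @^{\mathsf s}_\Sigma E}E^{\sharp s}=\sum_{t\in @^{\mathsf s}E}E^{\sharp t}$, followed by the case $\Sigma=\LITE(E)$, which is Proposition~\ref{prop:sumext} extended to leaves containing $\mathtt{ite}$ via Proposition~\ref{prop:circuitsum}.
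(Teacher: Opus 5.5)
Your main chain is the paper's own proof: the argument for Proposition~\ref{prop:sumext} rerun with $\Sigma\cup\LITES(E)$ through Lemma~\ref{lemma:addressesbis}, Proposition~\ref{prop:circuitsumbis} and Lemma~\ref{lemma:extequivbis}. You can also drop your reachability invariant, since the induction proving Proposition~\ref{prop:circuitsum} closes for every well-typed circuit (in the conditional case $\interp{\ite{l}{D_1}{D_2}}$ is already the sum of the two projected branches), so ite-closedness is not needed here.

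One caution concerns the disjointness you assert in your last paragraph, which the paper's proof of Proposition~\ref{prop:circuitsumbis} also takes for granted. With $\mathrm{ext}(s)=\{t\mid s\subseteq t\}$, disjointness fails when a conditional whose label lies outside $\Sigma$ has one branch with $\meas$ gates and one without. For example, take the one-leaf $E=\ite{l}{D_1}{D_2}$ with $\Sigma=\emptyset$ and only $D_1$ measuring: then $\emptyset\in @^{*}_\Sigma E$, $\mathrm{ext}(\emptyset)$ is all of $@^{\mathsf s}E$, and every slice through $D_1$ is counted twice. The fix is to take $\mathrm{ext}(s)$ to be the set of $t\in @^{\mathsf s}E$ whose restriction to $\mathcal L(E)\setminus(\LITE(E)\setminus\Sigma)$ is exactly $s$; with that reading both your partition argument and the main chain go through.
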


\begin{lemma}\label{lemma:help1bis}
For any extended circuit $E$, $\Sigma\subseteq \LITE(E)$, $s\in @_\Sigma^*E$ and gate $c$, let $E'$ be obtained from $E$ by replacing the circuit $E@\ADD_s$ by $E@\ADD_s;c$. Then
\begin{itemize}
\item[i.] if $c$ is unitary or $c=\new$, then 
$(E')^{\sharp s}=\interp{c^I}\circ E^{\sharp s}$;
\item[ii.] if $c=\meas^l$ of index $\alpha$, then $@^{\mathsf s}E'=\EXT(E,@_\Sigma^*E,\ADD_s,\alpha)$ and 
$(E')^{\sharp( s\cup\{(\alpha,b)\})}=\iota_{b}\circ \pi_{b}\circ \interp{c^I}\circ E^{\sharp s}$.

\end{itemize}
\end{lemma}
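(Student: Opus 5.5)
The plan is to follow the proof of Lemma~\ref{lemma:help1} almost verbatim, replacing each use of Lemma~\ref{lemma:extequiv} by its $\Sigma$-version Lemma~\ref{lemma:extequivbis}, and then dealing with the one new feature: $E^{\sharp s}$ is now a \emph{sum} $\sum_{t\in\mathrm{ext}(s)}E^{\sharp t}$ over the full super-addresses extending $s$. I would first prove both items for full super-addresses $t\in @^{\mathsf s}E$, and only then pass to an arbitrary $\Sigma$-super-address $s$ by summing over $\mathrm{ext}(s)$.

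\emph{Full super-addresses.} Fix $t\in @^{\mathsf s}E$. By \eqref{eq:dag3}, I write $E^{\sharp t}=(E@\ADD_t)^{\sharp t}\circ\phi_t$, where $\phi_t$ collects the contributions of the circuits and branch projections above the leaf $E@\ADD_t$. Appending $c$ to the leaf does not alter $\phi_t$, so $(E')^{\sharp t}=(E@\ADD_t;c)^{\sharp t}\circ\phi_t$. The clause for sequential composition in the definition of $C^{\sharp t}$ then gives $(E@\ADD_t;c)^{\sharp t}=c^{\sharp t}\circ(E@\ADD_t)^{\sharp t}$, once we use that $\mathcal L(E@\ADD_t;c)$ splits disjointly. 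For $c$ unitary or $\new$ we have $c^{\sharp t}=\interp{c^I}$. For $c=\meas$ of index $\alpha$, taking $t'=t\cup\{(\alpha,b)\}$ gives $c^{\sharp t'}=\iota_b\circ\pi_b\circ\interp{c^I}$. The identity $@^{\mathsf s}E'=\EXT(E,@^{\mathsf s}_\Sigma E,\ADD_s,\alpha)$ is checked by the same induction on the $\Sigma$-path definition as Lemma~\ref{lemma:superaddress}. The new $\meas$ sits at the leaf $E@\ADD_s$, outside every $\ite{l}{-}{-}$, so it contributes the fresh coordinate $\alpha$ exactly to the super-addresses whose $\ADD$ equals $\ADD_s$.

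\emph{Lifting to $\Sigma$.} Let $s\in @^{\mathsf s}_\Sigma E$. Every $t\in\mathrm{ext}(s)$ satisfies $\ADD_t=\ADD_s$, because $\LITES(E)$ is disjoint from $\LITE(E)$ and $s$ already fixes all the $\LITES$ labels. Hence the leaf modified is the same one for every $t$, and the previous step applies uniformly. The extensions in $E'$ of $s$, or of $s\cup\{(\alpha,b)\}$, are exactly the $t$, or $t\cup\{(\alpha,b)\}$, with $t\in\mathrm{ext}(s)$. This holds because $c$ introduces no new $\LITE$ label. Summing and using linearity of composition of CP maps, I get $(E')^{\sharp s}=\sum_t\interp{c^I}\circ E^{\sharp t}=\interp{c^I}\circ E^{\sharp s}$, and the analogous identity in case (ii).

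The main obstacle is the bookkeeping about extensions: I need to verify that $\mathrm{ext}_{E'}(s)$ is in bijection with $\mathrm{ext}_E(s)$, with the extra $\alpha$-coordinate in the $\meas$ case. This requires a careful check that the $\Sigma$-path clauses for $\ite{l}{-}{-}$ with $l\notin\Sigma$ interact correctly with a gate appended at the leaf. I would argue this by induction on the tree structure of $E$.
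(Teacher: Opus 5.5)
Your proposal is correct and follows the paper's route. The paper simply reruns the proof of Lemma~\ref{lemma:help1}: it factors $E^{\sharp s}$ via \eqref{eq:dag3} (Lemma~\ref{lemma:extequivbis}) as the leaf's contribution composed with a map $\phi$ that the modification does not touch, and then applies the sequential-composition clause to $E@\ADD_s;c$. Your detour through full super-addresses, followed by summing over $\mathrm{ext}(s)$ by linearity, just makes explicit why that factorization holds for the summed maps $E^{\sharp s}$. One caveat: as literally stated, the set identity in (ii) is false unless $\Sigma=\LITE(E)$. Its left-hand side should read $@^{\mathsf s}_\Sigma E'$, and that corrected version is what your extension bookkeeping actually establishes.
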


\begin{lemma}\label{lemma:help2bis}
Let $ E$ be a well-typed extended circuit, $\Sigma\subseteq \LITE(E)$, 
$s\in @^{\mathsf s}_\Sigma E$ and let $\Gamma'\rhd C\rhd \bit^l,\Delta'$ be $E@\ADD_s$. Let $E'$ be $E$ with $C$ replaced by $C\xmapsto{l}{(\mathrm{id},\mathrm{id})}$. Then 
$@_\Sigma^*E'=\EXT(E,@_\Sigma^*E,\ADD_s,l)$ and for all $b\in\{0,1\}$, $(E')^{\sharp(s\cup(l,b))}=\pi^l_b\circ E^{\sharp s}$.
\end{lemma}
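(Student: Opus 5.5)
We argue as in the proof of Lemma~\ref{lemma:help2}, relying on the $\Sigma$-relative form of the factorization \eqref{eq:dag3} from Lemma~\ref{lemma:extequivbis}. The proof has two parts: the combinatorial claim about $\Sigma$-super-addresses, then the semantic equation.

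\textbf{Combinatorial claim.} We prove $@_\Sigma^*E'=\EXT(E,@_\Sigma^*E,\ADD_s,l)$ by induction on the tree structure of $E$, following the $\Sigma$-path definition.
\begin{itemize}
\item Away from address $\ADD_s$, the trees $E$ and $E'$ coincide, so their $\Sigma$-super-addresses coincide.
\item At $\ADD_s$, the leaf circuit $C$ is replaced by the node $C\xmapsto{l}(\mathrm{id},\mathrm{id})$. The labels of the two identity circuits contribute nothing: no $\meas$-gates and no $\ite{}{}{}$ constructs. Hence a $\Sigma$-path through the new node is exactly a $\Sigma$-path $L_0$ of $C$, extended with $l$.
\item By the inductive clause, the value at $l$ may be either $b\in\{0,1\}$.
\end{itemize}
This yields exactly the super-addresses $t\cup\{(l,b)\}$ for $t\in @_\Sigma^*E$ with $\ADD_t=\ADD_s$, which is the content of Lemma~\ref{lemma:superaddressbis}. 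So this part should follow from that lemma, instantiated at the address $\ADD_s$.

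\textbf{Semantic equation.} We show $(E')^{\sharp(s\cup(l,b))}=\pi^l_b\circ E^{\sharp s}$. By \eqref{eq:dag3}, write $E^{\sharp s}=C^{\sharp s}\circ\phi$, where $\phi$ collects the factors $\pi^{l_i}_{s(l_i)}\circ C_i^{\sharp s}$ along the path to $\ADD_s$. The same $\phi$ appears in $E'$, since the ancestors of the modified leaf are unchanged. By the definition of $\sharp$ on a node $C\xmapsto{l}(F_0,F_1)$,
\begin{equation*}
(E')^{\sharp(s\cup(l,b))}=\mathrm{id}\circ\pi^l_b\circ C^{\sharp s}\circ\phi=\pi^l_b\circ E^{\sharp s}.
\end{equation*}

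\textbf{Main obstacle.} When $s$ is a proper $\Sigma$-super-address, $E^{\sharp s}$ is a sum $\sum_{t\in\mathrm{ext}(s)}E^{\sharp t}$ over full super-addresses rather than a single composite. We handle this in two steps.
\begin{enumerate}
\item Check that $\mathrm{ext}(s\cup(l,b))$ computed in $E'$ is in bijection with $\mathrm{ext}(s)$ computed in $E$, via $t\mapsto t\cup(l,b)$. This follows from the combinatorial claim applied with $\Sigma=\LITE(E)$.
\item Apply the single-address identity to each summand and use linearity of $\pi^l_b$ and of precomposition:
\begin{equation*}
\sum_t\pi^l_b\circ E^{\sharp t}=\pi^l_b\circ\sum_t E^{\sharp t}.
\end{equation*}
\end{enumerate}

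One technical point needs care: $C^{\sharp s}$ for a circuit containing a non-$\Sigma$ conditional is itself defined as a sum. The factorization \eqref{eq:dag3} must therefore be read with these summed $C_i^{\sharp s}$, which is consistent because composition distributes over sums of CPMs.
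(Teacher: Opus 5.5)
Your proposal is correct and takes essentially the paper's route. The paper proves this lemma ``essentially as'' Lemma~\ref{lemma:help2}: it writes $E^{\sharp s}=C^{\sharp s}\circ\phi$ via \eqref{eq:dag3}, unfolds the definition of $(-)^{\sharp}$ on the new node $C\xmapsto{l}(\mathrm{id},\mathrm{id})$, and delegates the address identity to Lemma~\ref{lemma:superaddressbis}. Your explicit reduction of the $\Sigma$-relative case to full super-addresses, using the bijection $t\mapsto t\cup\{(l,b)\}$ between $\mathrm{ext}(s)$ and $\mathrm{ext}(s\cup\{(l,b)\})$ together with linearity of $\pi^l_b\circ -$, is a sound filling-in of a detail the paper leaves implicit.
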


Finally, we have the following new lemma:
\begin{lemma}\label{lemma:optiaddress}
Let $E,E_1,E_2$ be extended circuits with disjoint sets of labels. For all 
addresses $\ADD\in @E$ and fresh label $l$, letting 
$$F:= E\Big [E@\ADD\ \ \mapsto\ \  E@\ADD;\ite{l}{\tau(E_1)}{\tau(E_2)} \Big],$$
then  
$$
@_\emptyset^* F=\{ s\in @_\emptyset^*E\mid \ADD_s\neq \ADD\}\cup \{s\cup t\mid s\in @_\emptyset^*E,\ADD_s=\ADD, t\in @_\emptyset \tau(E_b), b\in\{0,1\}\}.
$$

\end{lemma}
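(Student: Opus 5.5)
The plan is to unfold the definitions of $\emptyset$-super-addresses and work by induction on the tree structure of $E$, following the path determined by $\ADD$. Take $\Sigma=\emptyset$. A $\emptyset$-path never records the label of a circuit-level conditional $\ite{l}{-}{-}$: by the last clause of the definition of $\Sigma$-path, for $C=\ite{l}{D_1}{D_2}$ with $l\notin\Sigma$, a path is just a path of $D_1$ or of $D_2$. The $\xmapsto{l}$ labels in $\LITES$ are still recorded, as are measurement indices.

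\emph{Base case} $E=C$ (so $\ADD=\emptyset$ and $F=C;\ite{l}{\tau(E_1)}{\tau(E_2)}$). By the sequential-composition clause, a $\emptyset$-path of $F$ is $L_C\cup L'$, where $L_C$ is a $\emptyset$-path of $C$. By the conditional clause with $l\notin\emptyset$, $L'$ is a $\emptyset$-path of $\tau(E_b)$ for some $b$.

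The address condition holds componentwise. Since the labels of $C$, $E_1$ and $E_2$ are pairwise disjoint, a partial function on $L_C\cup L'$ splits uniquely as $s\cup t$. Here $s\in @_\emptyset C=@^{\mathsf s}_\emptyset C$ and $t\in @_\emptyset\tau(E_b)$. Conversely, every such union is a $\emptyset$-super-address of $F$. Because $\ADD_s=\emptyset=\ADD$ always holds in this case, the first set in the claimed union is empty and the equality follows.

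\emph{Inductive case} $E=C\xmapsto{l'}(E',E'')$. Suppose $\ADD(l')=0$, so the replacement happens inside $E'$ and $F=C\xmapsto{l'}(F',E'')$; the case $\ADD(l')=1$ is symmetric. By definition, a $\emptyset$-super-address of $F$ has the form $s_C\cup\{(l',b')\}\cup s'$, with $s'$ a super-address of $F'$ if $b'=0$ and of $E''$ if $b'=1$.
\begin{itemize}
\item If $b'=1$, then $\ADD_s(l')=1\neq\ADD(l')$, so $\ADD_s\neq\ADD$. These are exactly the super-addresses of $E$ through the $E''$ branch, and they are unchanged.
\item If $b'=0$, apply the induction hypothesis to $F'$ with the address $\ADD\setminus\{(l',0)\}$. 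Prefixing with $s_C\cup\{(l',0)\}$ commutes with both halves of the union, since $\ADD_s=\ADD$ iff $\ADD_{s'}=\ADD\setminus\{(l',0)\}$.
\end{itemize}

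The main care point is the disjointness and uniqueness of the decomposition $s\cup t$. It relies on $l$ being fresh and on the label sets of $E$, $E_1$, $E_2$ being disjoint, which the hypothesis provides. A second point to check is the requirement that $s|_{\mathcal L(C')}$ be a $\emptyset$-path for every $C'\in\mathrm{Circ}(F)$. For $E@\ADD;\ite{l}{\tau(E_1)}{\tau(E_2)}$ this reduces to the base-case analysis. Everything else is routine bookkeeping.
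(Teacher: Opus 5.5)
Your proof is correct and rests on the same idea as the paper's. An $\emptyset$-super-address of $F$ records only $\xmapsto{l}$-labels and $\meas$ indices, so either $\ADD_s\neq\ADD$ and it is an unchanged super-address of $E$, or $\ADD_s=\ADD$ and it splits, by disjointness of labels, as a super-address of $E$ together with some $t\in @_\emptyset\tau(E_b)$. The paper states this as a direct case split in a few lines, while you spell it out by induction on the tree of $E$ along $\ADD$, which is a more careful rendering of the same argument.
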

\begin{proof}
A $\emptyset$-super-address of $F$ only contains labels for the if-then-elses of the form $C\xmapsto{l}(E,F)$ and for the $\meas$-gates. 
So, $s\in @_\emptyset F$ is either incompatible with $\ADD$, and then it is an $\emptyset$-super address in $E$, or, 
if $\ADD_s=\ADD$, it must contain an $\emptyset$-super address of $E$, together with 
values for the labels of the $\meas$-gates in either $E_1$ or $E_2$, that is, with some (uniquely determined) function $t\in @_\emptyset\tau(E_b)$, for some $b=0,1$.
\end{proof}

\subsection{The Machine $\evaltkm^*$}
In this section we introduce a slight variant of the $\evaltkm$, that we call $\evaltkm^*$.

A token for the $\evaltkm^*$ is of the form $(l,\sigma, \ADD,\ADDH)$, where $\ADD,\ADDH$ are two disjoint addresses; 
all rules are as for the $\evaltkm$ except for the rules that act on addresses:
\begin{itemize}
\item in the guard rule, the value $(l,b)$ read and deleted from the classical register is added \emph{either} to the addresses $\ADD$ \emph{or} to the addresses $\ADDH$ (but not both), the same choice being made for all tokens;
\item when a token wants to enter one of the branches of an if-then-else, it looks for the presence of the value $(l,b)$ (telling which branch has to be chosen) in \emph{both} addresses $\ADD,\ADDH$. As these are disjoint, it may only find it in one.

\end{itemize}
For any token $\alpha=(l,\sigma, \ADD,\ADDH)$ of the $\evaltkm^*$, we define the $\evaltkm$-tokens
\begin{align*}
\mathrm{erase}(\alpha)&=(l,\sigma,\ADD)&
\mathrm{join}(\alpha)&=(l,\sigma,\ADD\cup \ADDH).
\end{align*}
For any set of tokens $\MM$ of the $\evaltkm^*$, we let $\mathrm{erase}(\MM)=\{\mathrm{erase}(\alpha)\mid\alpha\in \MM\}$ indicate the corresponding set of tokens of the $\evaltkm$, and similarly for $\mathrm{join}(\MM)$.

For the following result it is enough to check what happens with the rules for the if-then-else, as all other rules are identical for the two machines:
\begin{proposition}[bisimulation between the $\evaltkm$ and the $\evaltkm^*$]\label{prop:bisi2}
For all configurations $\config, \config'$ of the $\evaltkm^*$, $\config\to\{\config_i^{\mu_i}\}_{i\in I}$ iff $\mathrm{join}(\config)\to
\{\mathrm{join}(\config_i)^{\mu_i}\}_{i\in I}$ holds for the $\evaltkm$.
\end{proposition}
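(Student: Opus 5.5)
The argument is a case analysis on the rule applied to $\config$. We show that $\mathrm{join}$ turns each $\evaltkm^*$ step into an $\evaltkm$ step with exactly the same probabilistic branching. For the converse we use that $\mathrm{join}$ is injective on the reachable configurations of interest, so each $\evaltkm$ step from $\mathrm{join}(\config)$ lifts back to an $\evaltkm^*$ step.

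\textbf{Structural and circuit rules.} These leave the address components $\ADD,\ADDH$ untouched and act only on positions and on the register $m$. Their side conditions refer only to positions and to the register, never to addresses. Hence, for every such rule, $\config\to\{\config_i^{\mu_i}\}$ holds iff $\mathrm{join}(\config)\to\{\mathrm{join}(\config_i)^{\mu_i}\}$, with the same probabilities. This covers the $\meas$ case, whose two outcomes have weights $|\alpha_b|^2$ in both machines.

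\textbf{Guard rule.} In the $\evaltkm^*$ the value $(l,b)$ is read from the classical register, deleted, and added uniformly to all tokens, either to their $\ADD$ component or to their $\ADDH$ component. After $\mathrm{join}$, both choices yield tokens with address $\ADD\cup\ADDH\cup\{(l,b)\}$ and the same register with $(l,b)$ removed. This is exactly the $\evaltkm$ guard transition on $\mathrm{join}(\config)$.

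We must also check the side condition: $l$ must not occur in the address of $\mathrm{join}(\config)$. It holds iff $l$ occurs in neither $\ADD$ nor $\ADDH$, which is the side condition of the starred machine. Here we use that $\ADD$ and $\ADDH$ are disjoint, and that each if-then-else label is written at most once, because the guard of a given conditional fires only once.

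\textbf{Branch-entering rules.} A token enters the left or right branch according to whether $(l,0)$ or $(l,1)$ lies in $\ADD$ or in $\ADDH$. By disjointness this holds iff $(l,b)\in\ADD\cup\ADDH$. So the premise of the starred rule holds iff the premise of the corresponding $\evaltkm$ rule holds for $\mathrm{join}(\config)$, and the resulting configurations correspond under $\mathrm{join}$.

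\textbf{Converse direction.} The forward direction follows from the three cases above. For the backward direction, take an $\evaltkm$ step from $\mathrm{join}(\config)$. The rule it uses has its premise satisfied by $\mathrm{join}(\config)$, hence, by the same case analysis, by $\config$. Applying the corresponding starred rule gives a step $\config\to\{\config_i^{\mu_i}\}$ whose image under $\mathrm{join}$ is the given distribution.

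\textbf{Main obstacle: the guard rule.} The starred machine has two choices (add $(l,b)$ to $\ADD$ or to $\ADDH$), whereas the $\evaltkm$ has a single transition. The statement is therefore to be read per choice: each choice induces a transition whose image under $\mathrm{join}$ is the unique $\evaltkm$ transition. Conversely, for each $\evaltkm$ transition we pick either choice, say $\ADD$, to obtain a preimage. We make sure that distinct choices are not double-counted in the distribution. This holds because each transition is a separate rule instance, not a mixture of the two choices.
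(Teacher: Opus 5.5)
Your case analysis is the paper's own argument, written out in more detail. The paper only remarks that every rule other than the if-then-else rules is identical in the two machines, so only the guard and branch-entering rules need checking, and that is what you do.

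One correction: drop the claim that $\mathrm{join}$ is injective on reachable configurations. Your own guard-rule analysis refutes it, since adding $(l,b)$ to the left or to the right address gives two distinct $\evaltkm^*$ configurations with the same image. The backward direction therefore holds only in the existential form you adopt in your last paragraph: every \evaltkm\ step from $\mathrm{join}(\config)$ is the image of some $\evaltkm^*$ step from $\config$. That is also the form in which the proposition is used later.
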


\begin{definition}\label{def:wellsliced2}
A configuration $\config$ of the $\evaltkm^*$ is \emph{well-sliced} when all tokens have the same left address, that we note $\ADD_{\mathrm{left}}(\config)$, while their right addresses are defined on the same set of labels $\mathcal L_{\mathrm{right}}(\config)$ (although they may give different values). 

\end{definition}

The notion of indexed pseudo-distribution is adapted as follows:
\begin{definition}[coherent pseudo-distribution]\label{def:cohe2}
For all extended circuits $E$, a \emph{$E$-indexed coherent pseudo-distribution} of the $\evaltkm^*$ is a distribution
$\mu=\{\config_s^{\mu_s}\}_{s\in @^{\mathsf s}_\emptyset E}$ (where $\config_s=(\pi,\MM_s,m_s)$) of well-sliced configurations of the $\evaltkm^*$ such that for all $s,t\in @_\emptyset^*E$:
\begin{enumerate}

\item $\ADD_{\mathrm{left}}(\config_s)=\ADD_s$;

\item if $\ADD_s=\ADD_t$, then $\mathrm{erase}(\MM_s)=\mathrm{erase}(\MM_t)$;

\item for all $s\in  @_\emptyset^*E$, $\mathcal L_{\mathrm{right}}(\MM_s)\subseteq \LITE(E)$.
\end{enumerate}
\end{definition}

Observe that we index configurations by $\emptyset$-super-addresses $s\in @_\emptyset^*E$; in other words, we do \emph{not} count the values assigned to if-then-elses in the form $\ite{l}{C_1}{C_2}$; moreover, the condition $\ADD_{\mathrm{left}}(\config_s)=\ADD_s$ says that values stored in the second address $\ADDH$ of a configuration are also not taken into consideration in the indexing.
At the same time, we ask that the labels occurring in the right addresses of any token in the distribution are taken from the labels of the if-then-elses in the form $\ite{l}{C_1}{C_2}$ in $E$ (although such labels may take different values for each token).

\begin{rem}\label{rem:erase}
Even without using Proposition \ref{prop:bisi2}, one of the two directions of the bisimulation of Theorem \ref{thm:noif-then-else-simulation} (the simulation of $\circuittkmz$ via $\evaltkm$) continues to work if we replace the $\evaltkm$ with the $\evaltkm^*$, as soon as the (token) condition is adapted as follows:
\begin{description}
\item[(token$^*$)] for all $\ADD\in @E$ and $s\supseteq \ADD$, $\mathrm{erase}(\MM_s)=\MM|_\ADD$.
\end{description}
The idea is that a run of the $\circuittkmz$ is simulated by a (parallel) run of the $\evaltkm^*$ in which the second address $\ADDH$ is never updated (and so it remains empty all the time).  
\end{rem}

\begin{rem}\label{rem:confflag}
We can suppose the following facts:
\begin{itemize}
\item if a configuration $\config$ of the $\evaltkm^*$ contains a token in the position of a $\bit^l$ with label $l$, then one of two mutually exclusive conditions holds: either the classical register contains $(l,b)$ for some $b\in \{0,1\}$, or precisely one of the addresses $\ADD,\ADDH$ of the token contains $(l,b)$ for some $b\in \{0,1\}$. 

\item If a configuration $\config$ of the $\circuittkmz$ contains a token in the position of a $\bit^l$ with label $l$, then one of two mutually exclusive conditions holds: either the flag of the token is off, or the flag is on and the address $\ADD$ of the token contains $(l,b)$ for some $b\in \{0,1\}$.

\end{itemize}
The first assumption is easily shown to hold for all configurations appearing in a run of the $\evaltkm^*$, since a label $(l,b)$ first appears in the classical register, and is possibly moved to one of the addresses $\ADD,\ADDH$ after a guard transition is performed.
The second assumption is also easily shown to hold for all configurations appearing in a run of the $\circuittkm$, since flags are initially off, and they are turned on precisely when the corresponding label is added to the address.
Recall that the synchronous rule may only apply to a guard bit if the corresponding flag is off.

\end{rem}

The fundamental reason for considering the machine $\evaltkm^*$ is that we will make use of the erasure property below.

For any address $\ADD$ and token $\alpha=(l,\sigma,\ADDH_1,\ADDH_2)$ of the $\evaltkm^*$, let 
$$\mathrm{pushright}(\ADD,\alpha)=(l,\sigma, \ADDH_1-\ADD,\ADDH_2\cup (\ADD\cap
\ADDH_1)).$$ For any set of tokens $\MM$, let
$\mathrm{pushright}(\ADD,\MM)=\{\mathrm{pushright}(\ADD,\alpha)\mid\alpha\in
\MM\}$. Finally, for any configuration $\config=(\pi, \MM, m)$, 
define $\mathrm{pushright}(\ADD,\config)=(\pi,
\mathrm{pushright}(\ADD,\MM),m)$.

The operation $\mathrm{pushright}(\ADD,\alpha)$ simply pushes all values $(l,b)\in \ADD$ that occur in the first address $\ADDH_1$ into the second address $\ADDH_2$.
The lemma below states that, in a reduction of the $\evaltkm^*$ one can always push all produced values to the second address.

\begin{lemma}[right execution]\label{lemma:rightexec}
For any run $\config\to^*\{\donfig_i^{\mu_i}\}_{i\in I}$ of the $\evaltkm^*$, there is a corresponding run of the form 
$\config\to^* \{(\donfig'_i)^{\mu_i}\}_{i\in I}$, where 
$\donfig'_i=
\mathrm{pushright}(\ADD_{\mathrm{left}}(\donfig_i)-\ADD_{\mathrm{left}}(\config),\donfig_i).$
\end{lemma}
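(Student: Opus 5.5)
We prove the lemma by induction on the length of the run $\config\to^*\{\donfig_i^{\mu_i}\}_{i\in I}$, building the modified run step by step. The modified run makes the same choices as the original one, except that at every guard transition it stores the value $(l,b)$, read from the classical register, in the \emph{right} address $\ADDH$ rather than in the left one. The $\evaltkm^*$ explicitly permits this freedom: the guard rule may place the value in either address, provided the same choice is made for all tokens of the configuration.

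The invariant we maintain is as follows. If the original run reaches a configuration $\donfig$, the modified run reaches $\mathrm{pushright}(\ADD_{\mathrm{left}}(\donfig)-\ADD_{\mathrm{left}}(\config),\donfig)$ with the same probability. Since both quantities are computed the same way, we write $\donfig^\flat$ for this configuration. Observe that $\donfig^\flat$ has the same positions, labels and register as $\donfig$, and the same joined addresses, i.e.\ $\mathrm{join}(\donfig^\flat)=\mathrm{join}(\donfig)$. Only the values acquired since $\config$ have been moved from the left address to the right one.

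For the inductive step we take a transition $\donfig\to\{\donfig_j^{\nu_j}\}$ and check it rule by rule.
\begin{itemize}
\item \emph{Structural and circuit rules}, including the probabilistic $\meas$ rule. These rules neither read nor modify addresses, so the identical rule applies to $\donfig^\flat$. The resulting configurations are exactly $\donfig_j^\flat$, with the same weights, because $\ADD_{\mathrm{left}}$ is unchanged.
\item \emph{Guard rule adding $(l,b)$ to the left address.} In $\donfig^\flat$ we fire the same rule but add $(l,b)$ to the right address instead. The result coincides with $\mathrm{pushright}$ applied to $\donfig_j$, since $(l,b)$ now belongs to $\ADD_{\mathrm{left}}(\donfig_j)-\ADD_{\mathrm{left}}(\config)$.
\item \emph{Guard rule adding $(l,b)$ to the right address.} The same step is fired, and the invariant is trivially preserved.
\item \emph{Branch-entry rule.} This rule needs the value $(l,b)$ to be present in $\ADD\cup\ADDH$, which is invariant under $\mathrm{pushright}$. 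Disjointness of the two addresses is also preserved, because pushright only moves pairs from one address into the other.
\end{itemize}
Alternatively, the whole case analysis can be shortcut by invoking Proposition~\ref{prop:bisi2}, since transitions depend only on the joined addresses.

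The main obstacle is making sure that pushright is well defined and commutes with the steps. In particular, we must check that the values in $\ADD_{\mathrm{left}}(\config)$, which pushright leaves in the left address, are never duplicated. Moving a pair could create a conflict only if the right address already contained a value for the same label $l$. This is ruled out by Remark~\ref{rem:confflag}: a value $(l,b)$ lives in exactly one place among the register and the two addresses. Finally, well-slicedness (Definition~\ref{def:wellsliced2}) guarantees that all tokens undergo the same pushright. The case of tokens with differing right addresses needs no special treatment, because pushright acts only on their left components, which are shared by all tokens.
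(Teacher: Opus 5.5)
Your proposal is correct and is exactly the paper's argument. The paper's proof only observes that every guard step storing $(l,b)$ in the left address can instead store it in the right one, and your step-by-step induction with the invariant $\mathrm{join}(\donfig^\flat)=\mathrm{join}(\donfig)$ spells that observation out carefully. The appeal to Remark~\ref{rem:confflag} in your last paragraph is not needed: disjointness of the two addresses is already part of the definition of an $\evaltkm^*$ token, and $\mathrm{pushright}$ clearly preserves it.
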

\begin{proof}
The execution starts from the addresses of $\config$ and possibly adds new values $(l,b)$ to either $\ADD_{\mathrm{left}}(\config)$ or the right addresses whenever it applies a guard rule, yielding a final left address $\ADD_{\mathrm{left}}(\donfig)\supseteq \ADD_{\mathrm{left}}(\config)$.
To obtain the second execution, it is enough, in all these situations, to always add values to the right address and never to the first.
\end{proof}

Parallel reduction $\mu\mtocl^a_G \mu'$ of type $a$ at address $\ADD$ for the $\evaltkm^*$ can be defined as for the $\evaltkm$, by only considering the left address $\ADD$. At the same time, a new reduction type is added, corresponding to a guard rule whose Boolean choice is added to the right address, thus leading to no branching in the parallel reduction (contrarily to what happens with the pseudo-deterministic reduction type).

The new reduction type, called \emph{pruned pseudo-deterministic} is added as a new case of Definition \ref{def:parallelred}:

\begin{definition}

A reduction type of the $\evaltkm^*$ can also be of:
\begin{itemize}

\item \emph{pruned pseudo-deterministic type}: $n=1$, $\MM_0$ contains a token in the guard of an if-then-else and $\MM_1$ is obtained from $\MM_0$ by adding in the right address of each token the value $(l,b)$, for some fixed $b=\{0,1\}$.
\end{itemize}

The corresponding parallel reduction step $\mu\mtocl^a_G\nu$ is given by:
\begin{itemize}	
	 \item $a=(\MM_0,\MM_1)$ is of pruned pseudo-deterministic type, $\MM_0$ contains one token $(l,\sigma,\ADD,\ADDH)$ in the guard $\bit^l$ of an if-then-else, $@^{\mathsf s}_\emptyset F=@^{\mathsf s}_\emptyset E$;
	 for all $s\in @^{\mathsf s}_\emptyset F$, with $\ADD_s=\ADD$, 
	 we have $\config_s=(\pi,\MM_0,m_s)$, 
	 $\donfig_s=(\pi,\MM_1,m'_s)$, with $m'_s$ obtained from $m_s$ by deleting the value $(l,b)$, while $\MM_1$ is obtained from $\MM_0$ by adding the value $(l,b)$ to the right address;
	 finally, $\nu_s=\mu_s$.
\end{itemize}

A parallel reduction $\mu_1\mtocl^{a_1}_{G_1}\dots\mtocl^{a_n}_{G_n}\nu$ is called \emph{pruned} if it does not contain reductions of (non-pruned) pseudo-deterministic type. We indicate a pruned parallel reduction as $\mu\mtocl^*_{\mathsf p}\nu$.
\end{definition}

Right executions can be used to obtain the following useful lemma:

\begin{lemma}[pruning lemma]\label{lemma:pruning}
For any parallel reduction $\{\config^1\}\mtocl^*\{\donfig_s^{\mu_s}\}_{s\in @^{\mathsf s}_\emptyset E}$, there exists
an injective function $h: @^{\mathsf s}_\emptyset \tau(E)\to @^{\mathsf s}_\emptyset E$ and a pruned reduction
$\{\config^1\}\mtocl^*_{\mathsf p}\{\donfig_{h(u)}^{\mu_{h(u)}}\}_{u\in @^{\mathsf s}_\emptyset \tau(E)}$.
\end{lemma}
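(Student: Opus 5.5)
The plan is to argue by induction on the length $n$ of the parallel reduction $\{\config^1\}\mtocl^{a_1}_{G_1}\dots\mtocl^{a_n}_{G_n}\{\donfig_s^{\mu_s}\}_{s\in @^{\mathsf s}_\emptyset E}$. Pseudo-deterministic steps are the only ones that change the shape of the index set, by Lemma~\ref{lemma:superaddressbis}, so the induction is organised around them. First I fix the candidate for $h$. Since $\tau(E)$ contains no $\xmapsto{l}$ constructors, a $\emptyset$-super-address $u\in @^{\mathsf s}_\emptyset\tau(E)$ only records values of $\meas$-gates. Its domain, an $\emptyset$-path, nevertheless determines structurally, at each $\ite{l}{\tau(E_1)}{\tau(E_2)}$ coming from a node $C\xmapsto{l}(E_1,E_2)$, which branch $b_l(u)$ is traversed. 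I set
\[
h(u)=u\cup\{(l,b_l(u))\mid l\in\LITES(E)\text{ traversed by }u\}.
\]
By Lemma~\ref{lemma:addressesbis} with $\Sigma=\emptyset$, $h(u)$ is an $\emptyset$-super-address of $E$. Injectivity is immediate, since $u$ is recovered from $h(u)$ by restricting to $\LMEAS(E)$.

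The inductive step replays the given reduction. Deterministic and choice steps are copied unchanged: they act on slices at some address $\ADD$ only and commute with the reindexing $h$, using Lemma~\ref{lemma:help1bis} for the index sets. A pseudo-deterministic step at address $\ADD$ on guard $\bit^l$ is replaced by a pruned pseudo-deterministic step. For each slice $s$ with $\ADD_s=\ADD$, the value $(l,b')$ actually stored in the register of $\config_s$ is moved into the right address instead of the left one, so no branching of the index set occurs. By Remark~\ref{rem:confflag}, exactly one such $b'$ exists in each slice. In the original reduction this step produces the two slices $s\cup\{(l,0)\}$ and $s\cup\{(l,1)\}$, and the one with $b\neq b'$ gets probability $0$. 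The surviving slice is exactly the one of the form $h(u)$, because the branch recorded in $h(u)$ is the branch whose tokens actually move. Lemma~\ref{lemma:rightexec} guarantees that replacing left-address updates by right-address updates yields a legitimate run with the same probabilities and with configurations equal up to $\mathrm{pushright}$. Proposition~\ref{prop:bisi2}, comparing via $\mathrm{join}$, shows that the token dynamics are unaffected, because branch entry looks in both addresses.

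The main obstacle is the bookkeeping that identifies the configurations obtained at the end of the pruned run with $\donfig_{h(u)}$. The pruned configuration carries the choice $(l,b)$ in the right address, whereas $\donfig_{h(u)}$ carries it in the left address. I would handle this in two steps. First, strengthen the induction hypothesis to state equality up to $\mathrm{pushright}(\ADD_{\mathrm{left}}(\donfig_{h(u)})-\ADD_{\mathrm{left}}(\config),-)$, which $\mathrm{erase}$ and coherence, in the sense of Definition~\ref{def:cohe2}, allow us to treat as identification. Second, check that the coherence conditions of Definition~\ref{def:cohe2} are preserved. In particular, condition (3) requires the right-address labels to be ite-labels of the pruned index circuit. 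This holds because each pushed label $l$ becomes, in $\tau(E)$, the guard of an $\ite{l}{-}{-}$, that is, an element of $\LITE(\tau(E))$.
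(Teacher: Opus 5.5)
Your overall plan (replay the run, replace each pseudo-deterministic step by a pruned one that pushes the register value $(l,b')$ to the right address, and justify this with Lemma~\ref{lemma:rightexec}) matches the paper's sketch. The gap is in how you define $h$, and the step that depends on it fails.

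\textbf{The problem with your $h$.} You fix $b_l(u)$ \emph{structurally}: it is whichever branch the domain of $u$ happens to pass through. The paper instead fixes the measurement outcomes and uses their \emph{values*}. Given these values, the run is deterministic, so the bit reaching each guard $\bit^l$, and hence each if-then-else choice, is forced. It then defines $h(u)$ as $u$ plus exactly these forced choices.

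Your structural choice does not land on the slices the pruned run produces. Here is a concrete case:
\begin{itemize}
\item Let the leaf before the split end with a measurement of index $\alpha$ whose output is the guard $l$.
\item Let the two subtrees below $\xmapsto{l}$ contain measurements $\beta$ and $\gamma$, with $\beta$ in the branch selected by $(l,1)$.
\item Then $u=\{(\alpha,0),(\beta,c)\}$ is a legitimate $\emptyset$-super-address of $\tau(E)$, and your $h(u)=\{(\alpha,0),(l,1),(\beta,c)\}$.
\item The register, however, holds $(l,0)$. So $\mu_{h(u)}=0$, and in $\donfig_{h(u)}$ the tokens have entered the $(l,1)$ branch.
\item The pruned run always follows the register value. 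With outcome $\alpha=0$ it pushes $(l,0)$ and sends the tokens through $\gamma$, so it never produces $\donfig_{h(u)}$.
\end{itemize}

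More generally, whenever both branches contain measurements, \emph{both} children $s\cup\{(l,0)\}$ and $s\cup\{(l,1)\}$ lie in the image of your $h$. So the sentence ``the surviving slice is exactly the one of the form $h(u)$'' does not single anything out, and the identification of the final pruned configurations with $\donfig_{h(u)}$ breaks down.

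Your $h$ can also fail to be defined. If the subtrees below some $\xmapsto{l}$ contain no measurement (for instance $E@\ADD\xmapsto{l}(\mathrm{id},\mathrm{id})$ with measurement-free branches), the domain of $u$ is the same whichever branch is taken. Then $b_l(u)$ is undefined, even though $@^{\mathsf s}_\emptyset E$ contains both $s\cup\{(l,0)\}$ and $s\cup\{(l,1)\}$.

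\textbf{The missing idea.} The function $h$ must be defined \emph{dynamically}, along the run:
\begin{itemize}
\item For a fixed assignment of measurement outcomes, record at each guard step the value $b'$ actually read from the register; this is a function of those outcomes.
\item Let $h(u)$ be $u$ together with these pairs $(l,b')$.
\end{itemize}
Its image then consists precisely of the slices whose if-then-else choices agree with the register, i.e.\ the ones the pruned run reaches. Naturally, $h$ is defined only on those $u$ whose path follows the branches forced by their own measured values, and these are exactly the indices the pruned run produces. With this $h$, your replay argument and your $\mathrm{pushright}$ and coherence bookkeeping can go through. With the structural $h$ they cannot.
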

\begin{proof}[Proof sketch]
The idea is to replace any reduction of pseudo-deterministic type by a corresponding pruned one. Observe that, once we fix a choice for any $\meas$-gate in $E$ (by some total function $u\in \{0,1\}^{\LMEAS(E)}=@_\emptyset^*\tau(E)$), then, during a reduction, a \emph{unique} choice is determined for any if-then-else.
This means that given $u\in @_\emptyset^*\tau(E)$, we can select a super-address $h(u)\in @_\emptyset^* E$ containing \emph{both} the choices $u$ on the $\meas$-gates and the choices for the if-then-else determined by the values in $u$. 
The pruning yields then the desired distribution.
\end{proof}

We will now establish a bisimulation result for the $\evaltkm$ and the $\evaltkm^*$ with respect to parallel reduction.
The only delicate point to adapt Proposition \ref{prop:bisi2} is the case of a reduction of pruned pseudo-deterministic type of the $\evaltkm^*$.

\begin{definition}
Given extended circuits $E,F$, the relation $E\sqsubseteq F$ is defined as follows:
\begin{itemize}
\item if $E=\tau(F)$, then $E\sqsubseteq F$;

\item if $E_1\sqsubseteq F_1$ and $E_2\sqsubseteq F_2$, then $(C\xmapsto{l}{E_1,E_2})\sqsubseteq (C\xmapsto{l}{F_1,F_2})$.
\end{itemize}
\end{definition}

\begin{definition}[parallel bisimulation]
Let $\mu=\{\config_s^{\mu_s}\}_{s\in @^{\mathsf s}E}$ be a coherent pseudo-distribution of configurations of the $\evaltkm$ and 
let $\nu=\{\donfig_t^{\nu_t}\}_{t\in @^{\mathsf s}_\emptyset F}$ be a coherent pseudo-distribution of configurations of the $\evaltkm^*$.
The relation $\mu \ \mathscr T \ \nu$ holds iff the following conditions hold:
\begin{description}

\item[(circuit)] $F\sqsubseteq E$ and $E$ has no $\ite{-}{-}{-}$ construct;
\item[(injection)] there is an injective function $h:@^{\mathsf s}_\emptyset F\to @^{\mathsf s}E$ such that 	\begin{itemize}
	\item for all $t\in @^{\mathsf s}_\emptyset F$, $t= h(t)\vert_{@^{\mathsf s}_\emptyset F}$, $\config_{h(t)}=\mathrm{join}(\donfig_t)$ and $\mu_{h(t)}=\nu_t$;
	\item for all $s\notin \mathrm{Im}(h)$, $\mu_s=0$. 
	\end{itemize}
\end{description}
\end{definition}

Observe that a consequence of the definition above is that, given $\mu\mathscr T\nu$, 
for all configuration $\config_s$ with $\mu_s>0$ there exists $t$ such that $h(t)=s$ and 
$\mathrm{join}(\donfig_t)=\config_{s}$.

For any $\small a\in \{\text{deterministic},\text{choice},\text{pseudo-deterministic},\text{pruned pseudo-deterministic}\},$ let \\ $a^\bullet\in\{\text{deterministic},\text{choice},\text{pseudo-deterministic}\}$ be $a$, if the former is either deteministic or choice, and pseudo-deterministic in all other cases.

\begin{proposition}[parallel bisimulation of the $\evaltkm$ and the $\evaltkm^*$]\label{prop:bisi2}
For all coherent pseudo-distributions of configurations $\mu,\mu'$ of the $\evaltkm$ and for all coherent pseudo-distributions of configurations $\nu,\nu'$ of the $\evaltkm^*$ the following hold:

\begin{enumerate}
\item $\mu\ \mathscr T\ \nu$ and $\mu \mtocl^a_{\ADD} \mu'$ $\Rightarrow$ 
$\exists\nu^*, \exists \ADD'\subseteq \ADD, \exists b$ s.t.~
$\mu'\mathscr T\nu^*$, $a=b^\bullet$ and 
$\nu\mtocl^{b^\bullet}_{\ADD'} \nu^*$.

\item $\mu\ \mathscr T\ \nu$ and $\nu \mtocl^a_{\ADD} \nu'$ $\Rightarrow$ $\exists\mu^*,\exists \ADDH$ s.t.~
$\mu^*\mathscr T\nu'$ and 
$\nu\mtocl^{a^\bullet}_{\ADD\cup \ADDH} \nu'$.

\end{enumerate}
\end{proposition}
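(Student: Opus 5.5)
The plan is a case analysis on the reduction type, taking the relation $\mathscr T$ apart into its two clauses, (circuit) and (injection). In every case we fix the injection $h$ witnessing $\mu\ \mathscr T\ \nu$. Recall that the $\evaltkm$ and $\evaltkm^*$ transitions agree on everything except the treatment of addresses; this is the earlier single-step bisimulation, which says $\config\to\{\config_i^{\mu_i}\}$ iff $\mathrm{join}(\config)\to\{\mathrm{join}(\config_i)^{\mu_i}\}$. So the proof amounts to checking that parallel steps at a given address on one side lift to parallel steps at a corresponding address on the other, and that $h$ can be updated.

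For (1), suppose $\mu\mtocl^a_\ADD\mu'$. The first step is to transport $\ADD$ to the $\evaltkm^*$ side. Every $s$ with $\ADD_s=\ADD$ and $\mu_s>0$ lies in $\mathrm{Im}(h)$, say $s=h(t)$, with $\config_s=\mathrm{join}(\donfig_t)$. We set $\ADD'$ to be the left address of $\donfig_t$, which is a subset of $\ADD$ because $h(t)$ restricts to $t$ and join is the union of the left and right addresses.

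If $a$ is deterministic or choice, we take $b=a$. The same tokens move in every $\donfig_t$, since erase of the tokens is constant on the slice by coherence. We then define $\nu^*$ by applying the corresponding transition. For the choice type, we extend $h$ by $h(t\cup\{(\alpha,c)\})=h(t)\cup\{(\alpha,c)\}$. The (circuit) clause is preserved: appending a gate at address $\ADD$ of $E$, and at the corresponding leaf or branch of $F$, respects $\sqsubseteq$, because $\tau$ commutes with appending a gate.

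If $a$ is pseudo-deterministic at a guard with label $l$, there are two subcases. If $l$ belongs to the if-then-else part that $F$ already represents by a $\mathtt{ite}$ (so $F$ is below a $\tau(\cdot)$ at this point), we take $b$ to be pruned pseudo-deterministic. In this subcase $\nu^*$ keeps the indexing set $@^{\mathsf s}_\emptyset F$. The new $h$ sends $t$ to $h(t)\cup\{(l,b')\}$, where $b'$ is the value actually read from the register of $\donfig_t$. The super-address $h(t)\cup\{(l,1-b')\}$ receives probability $0$ in $\mu'$, which is exactly the requirement on $s\notin\mathrm{Im}(h)$. Otherwise we take $b=a$, mirror the split on both sides, and $\sqsubseteq$ is preserved by its second clause.

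For (2), we run the same analysis in reverse. A step $\nu\mtocl^a_\ADD\nu'$ is mirrored on the $\evaltkm$ side at the address $\ADD\cup\ADDH$, where $\ADDH$ is the right address shared by the moving tokens (well-slicedness guarantees that the labels agree). A pruned pseudo-deterministic step becomes a genuine pseudo-deterministic step of $\mu$. In that case $\mu^*$ is indexed by the super-addresses extended with $l$, and the branch not selected by the register value gets weight $0$; exactly as in the pseudo-deterministic case of Theorem~\ref{thm:noif-then-else-simulation}, this keeps $h$ injective and satisfies the (injection) clause.

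The main obstacle will be this pruned case. There we must check that $h(t)\vert_{@^{\mathsf s}_\emptyset F}=t$ still holds, and that $\mathrm{join}$ commutes with adding $(l,b')$ to the right address. In particular, the right address must never acquire a label from $\LITES$. We plan to handle this using condition (3) of Definition~\ref{def:cohe2} together with Remark~\ref{rem:confflag}.
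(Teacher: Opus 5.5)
Your proposal follows the paper's proof: there too, the deterministic, choice and non-pruned pseudo-deterministic cases are dispatched by the single-step $\mathrm{join}$-bisimulation, and the only case actually worked out is the pruned pseudo-deterministic step of point 2, handled exactly as you plan: a genuine split at $\ADD$ on the \evaltkm side (so that $F'\sqsubseteq E'$), the extension $h^*(t)=h(t)\cup\{(l,b)\}$ with $b$ the value read in $\donfig_t$, and weight $0$ on the unselected branch. The paper treats all of point 1 as immediate, so your explicit subcase there (a pseudo-deterministic \evaltkm step answered by a pruned $\evaltkm^*$ step) is extra detail, not a different route.
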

\begin{proof}
For point 1.,~all cases directly follow from Proposition \ref{prop:bisi2}.
For point 2.,~if  $a$ is not of pruned pseudo-deterministic type, the claim follows quite directly from Proposition \ref{prop:bisi2}.
Let us consider the case in which $a$ is of pruned pseudo-deterministic type:
then $\nu'$ is indexed by $F'= F[F@\ADD\mapsto \ite{l}{\mathrm{id}}{\mathrm{id}}]$, so $@^{\mathsf s}_\emptyset F=@^{\mathsf s}_\emptyset F'$.
We have transitions $\donfig_t\to\{(\donfig'_t)^1\}$, in which a classical bit $(l,b)$ is read from the classical register and added to the right address.
On the other hand, let $E'$ be obtained by replacing the circuit $E@\ADD$ with $E@\ADD\xmapsto{l}(\mathrm{id},\mathrm{id})$, so $@^{\mathsf s}_\emptyset E'=\EXT(E,@^{\mathsf s}_\emptyset E,\ADD,l)$.
Notice that from $F\sqsubseteq E$ it now follows $F'\sqsubseteq E'$.  
From $h:@^{\mathsf s}_\emptyset F\to @^{\mathsf s} E$ define
$h^*:@^{\mathsf s}_\emptyset F'\to @^{\mathsf s}E'$ by $h(t)=h(s)::(l,b)$. 
Now, for $s=h(t)$, we have two transitions $\config_{s}\to_1\config'_{s::(l,b)}$ and
$\config_s\to_0\config'_{s\cup\{(l,1-b)\}}$. Notice that $s\cup\{(l,b)\}=h^*(t)$ and $\mathrm{join}(\donfig'_t)=\config'_{s\cup\{(l,b)\}}$. We conclude that the distribution $\mu^*$ 
 given by all such $\config'_{s\cup\{(l,b)\}}$ satisfies $\mu^*\mathscr T\nu'$.
\end{proof}

\subsection{Inner, Initial and Final Configurations}

For the simulation argument we need to consider a more general notion of configuration, called \emph{inner}, which intuitively corresponds to the case in which the machine is visiting some \emph{sub-}derivation $\pi'$ of a larger derivation. This means that, while we are only interested in the positions of the tokens situated inside $\pi'$, we must consider also tokens lying outside $\pi'$ (but which never move), and, more importantly, a quantum state that comprises more qubits than those actually accounted for by $\pi'$.
Due to the well-known entanglement phenomenon, we cannot hope to retrieve the behavior of such, \emph{outer} qubits, just by tensoring their state with the state of the \emph{inner} qubits.

\begin{definition}[inner configuration of the $\evaltkm^*$]
Let $\pi$ be a derivation and $\pi'\leq\pi$ be a sub-derivation of $\pi$.
An \emph{inner configuration of the $\evaltkm^*$ for $\pi'\leq\pi$} is a tuple
$(\pi'\leq \pi,\MM^{\mathrm{inner}}\cup\MM^{\mathrm{outer}},m,c)$, such that:
\begin{itemize}
\item $\MM^{\mathrm{inner}}$ is a set of tokens lying in positions of $\pi'$, while $\MM^{\mathrm{outer}}$ is a set of tokens lying in positions of $\pi$ that are not in $\pi'$; 

\item  the quantum+classical register $m$ takes into account the tokens in $\MM^{\mathrm{inner}}\cup\MM^{\mathrm{outer}}$;

\item for any quantum+classical register $m'$ for $\MM^{\mathrm{inner}}$, 
$(\pi',\MM^{\mathrm{inner}}, m')$ is a configuration of the \evaltkm for $\pi'$.
\end{itemize}

Notice that the definition above implies that the addresses $\ADD,\ADDH$ of the tokens in 
$\MM^{\mathrm{inner}}\cup\MM^{\mathrm{outer}} $ only refer to labels $l$ occurring in $\pi'$. An inner configuration $\config=(\pi,\MM^{\mathrm{inner}}\cup\MM^{\mathrm{outer}},m)$ for $\pi'\leq\pi$ is \emph{well-sliced} when $\MM^{\mathrm{inner}}\cup\MM^{\mathrm{outer}}$ is well-sliced.

For any circuit $E$, a $E$-indexed pseudo-configuration of the $\evaltkm^*$ for $\pi'\leq\pi$ is a pseudo-distribution 
$\{\config_s^{\mu_s}\}_{s\in @_\emptyset^*E}$ of well-sliced inner configurations for the $\evaltkm^*$ for $\pi'\leq\pi$ such that:
\begin{enumerate}
\item for all $s\in @_\emptyset^*E$, $\ADD_{\mathrm{right}}(\config_s)=\ADD_s$;
\item if $\ADD_s=\ADD_t$, then $\MM^{\mathrm{inner}}_s=\MM^{\mathrm{inner}}_t$;
\item for all $s\in  @_\emptyset^*E$, $\mathcal L_{\mathrm{right}}(\MM^{\mathrm{inner}}_s)\subseteq \LITE(E)$.

\end{enumerate}

\end{definition}

Whenever $\ADD$ is an address in $\pi'$, the notion of parallel reduction $\mu\mtocl^a_G\mu'$ for $E$-indexed pseudo-distributions of inner configurations of the $\evaltkm^*$ can be defined in the same way as in the non-inner case.

\begin{definition}[inner configuration of the $\circuittkm$]
An \emph{inner configuration of the \circuittkm for $\pi'\leq\pi$} is a tuple
$(\pi'\leq \pi,\MM^{\mathrm{inner}}\cup\MM^{\mathrm{outer}},E)$, such that 
\begin{itemize}
\item $\MM^{\mathrm{inner}}$ is a set of tokens lying in positions of $\pi'$, while $\MM^{\mathrm{outer}}$ is a set of tokens lying in positions of $\pi$ that are not in $\pi'$;
\item $(\pi',\MM^{\mathrm{inner}},E)$ is a \circuittkm-configuration for $\pi'$.
\end{itemize}

\end{definition}

The definition above implies that the circuit $E$ acts over the space spanned by the tokens in $\MM^{\mathrm{inner}}$ only.

We also introduce notations for initial and final configurations:

\begin{definition}[initial and final configurations for the $\evaltkm^*$]
Suppose $\pi':\Delta\vdash N:A$ is a sub-derivation of
$\pi:\Delta'\vdash M:A'$.
An inner configuration $(\pi'\leq \pi, \MM^{\mathrm{inner}}\cup \MM^{\mathrm{outer}},m)$ is said \emph{initial for $\pi'$}, and noted $\initc(\pi',\leq \pi,\MM^{\mathrm{inner}}\cup \MM^{\mathrm{outer}},m)$, if for any quantum+classical register $m'$ for $\MM^{\mathrm{inner}}$, $(\pi', \MM^{\mathrm{inner}},m')$ is an initial configuration for $\pi'$.

An inner configuration $(\pi'\leq \pi, \MM^{\mathrm{inner}}\cup \MM^{\mathrm{outer}},m)$ is said \emph{final for $\pi'$}, and noted $\finc(\pi',\leq \pi,\MM^{\mathrm{inner}}\cup \MM^{\mathrm{outer}},m)$, if for any quantum+classical register $m'$ for $\MM^{\mathrm{inner}}$, $(\pi', \MM^{\mathrm{inner}},m')$ is a final configuration for $\pi'$.
\end{definition}

We introduce analogous notations for the circuit machine:

\begin{definition}[initial and final configurations for the \circuittkm]
Suppose $\pi':\Delta\vdash N:A$ is a sub-derivation of
$\pi:\Delta'\vdash M:A'$.
An inner configuration $(\pi'\leq \pi, \MM^{\mathrm{inner}}\cup \MM^{\mathrm{outer}},E)$ is said \emph{initial for $\pi'$}, and noted $\initcc(\pi',\leq \pi,\MM^{\mathrm{inner}}\cup \MM^{\mathrm{outer}},E)$, if $(\pi', \MM^{\mathrm{inner}},E)$ is an initial configuration for $\pi'$.

An inner configuration $(\pi'\leq \pi, \MM^{\mathrm{inner}}\cup \MM^{\mathrm{outer}},E)$ is said \emph{final for $\pi'$}, and noted $\fincc(\pi',\leq \pi,\MM^{\mathrm{inner}}\cup \MM^{\mathrm{outer}},E)$, if $(\pi', \MM^{\mathrm{inner}},E)$ is a final configuration for $\pi'$.
\end{definition}

\subsection{The Synchronous Reduction Type}

Whenever $\ADD$ is an address in $\pi'$, the notion of reduction $\config\mtocl^a_G \config'$ for inner configurations of the $\circuittkm$ can be defined in the same ways as in the non-inner case, except for a new case, corresponding to an application of the synchronous rule, that we define below:

\begin{definition}[synchronous reduction type]\label{def:optimizedtype}
We introduce a fourth reduction type, called \emph{synchronous}, given by tuples $a=(\MM_0,\MM_1)$, where
for some rule in the underlying derivation as follows:
\begin{equation}\label{eq:iterule}
 \begin{array}{c}
 {\infer{\Sigma_2, \Theta_3 \vdash \ite{N}{P}{Q} : A_3}{\deduce{\Sigma_1
        \vdash N : \bit^l}{\pi_0} \qquad \deduce{\Theta_1 \vdash P : A_1}{\pi_1} \qquad \deduce{\Theta_2 \vdash
        Q : A_2}{\pi_2}}}\end{array}
\end{equation}
the tokens in $\MM_0$ negatively saturate $\Theta_3,A_3$ and the tokens in $\MM_1$ positively saturate $\Theta_3,A_3$.

Given configurations $\config,\config'$ for $\pi$, we say  that there is a reduction of the $\circuittkm$ of asynchronous type $\config\to^a_{\ADD} \donfig$ at address $\ADD$ if
the derivation $\pi$ contains a
sub-derivation $\pi'$ as in \eqref{eq:iterule} 
and the following hold:
\begin{itemize}
\item $\config=( \pi,\MM, E)$, where
$
\MM|_{\ADD}=\Big( \MM_{\Sigma,\bit}\cup \MM_{\Theta,A}\cup\MM_{\lnot\Sigma,\Theta,A}\Big)\cup\MM^\dag
$ and the following hold:
\begin{itemize}
\item $\MM_{\Sigma,\bit}$,
contains tokens within $\pi_0$, including one over the conclusion $\bit^l$;
\item $\MM_{\Theta,A}$ negatively saturates $\Theta_3,A_3$;
\item $\MM^\dag$ contains tokens outside $\Sigma_2,\Theta_3,A_3$.

\end{itemize}
\item $\config'=(\pi, \NN,F)$, where
$\NN|_{\ADD}=\big(\MM_{\Sigma,\bit}\cup \NN_{\Theta,A}\cup \MM_{\lnot\Sigma,\Theta,A}\big)\cup\MM^\dag$ and the following hold:
{\small	\begin{itemize}
	\item $\initcc(\pi_1,\MM_{\Theta,A}, -) \to ^{*}\fincc(\pi_1, \NN_{\Theta,A}^*, E_1)$,
	where $\NN_{\Theta,A}^*$ contains the same tokens as $\NN_{\Theta,A}$, but addresses are updated with labels from $E_1$,
	\item $\initcc(\pi_2,\MM_{\Theta, A}, -) \to ^{*}\fincc(\pi_2,\NN_{\Theta,A}^{**}, E_2)$,
		where $\NN_{\Theta,A}^{**}$ contains the same tokens as $\NN_{\Theta,A}$, but addresses are updated with labels from $E_2$,
	\item $F$ is obtained from $E$ via the replacement
	$$E@\ADD \quad \mapsto \quad E@\ADD;\big((\ite{l}{\tau(E_1)}{\tau(E_2)})\big).$$
	\end{itemize}}
\end{itemize}

\end{definition}

\subsection{The Simulation Theorem}

We now introduce the simulation relation between inner
configurations:

\begin{definition}
Let $\pi$ be a derivation, $\pi'\leq\pi$ be a sub-derivation of
$\pi$ and let $E,F$ be extended circuits. Let ${\rho}$ indicate a uniform distribution of quantum+classical registers, with underlying environment $L$. We introduce a binary relation
$\mathscr S^{{\rho}}_{\pi'\leq \pi}$ such that, for all 
$E$-indexed coherent pseudo-distribution $\mu=\{\config_s^{\mu_s}\}_{s\in @_\emptyset^*E}$ of inner $\evaltkm^*$ 
configurations for $\pi'\leq \pi$, where $\config_s=(\pi'\leq \pi, \MM_s^{\mathrm{inner}}\cup \MM_s^{\mathrm{outer}},
m_s)$, with $m_s=[L_s,Q_s,v_s]$, and for all inner $\circuittkm$ configuration for $\pi'\leq \pi$,
$\config'=(\pi'\leq \pi,\MM^{\mathrm{inner}}\cup \MM^{\mathrm{outer}}, F)$,
$
\mu \ \mathscr S_{\pi\leq \pi'}^{{\rho}} \ \config'
$
holds when the following hold:
\begin{description}
\item[(index)] $E=F$,
\item[(token)] for all $s\in @_\emptyset^*E$, $\mathrm{erase}(\MM_s^{\mathrm{inner}})=\MM^{\mathrm{inner}}\vert_{\ADD_s}$;
\item[(mixed state)] for all $s\in @_\emptyset^*E$, $\mu_s\cdot\tocpm(m_s)=(F^{\sharp s}\otimes I^{\mathrm{outer}})(\tocpm(\rho))$, where $I^{\mathrm{outer}}$ indicates the identity on all bits and qubits corresponding to tokens in $\MM^{\mathrm{outer}}$.
\end{description}
\end{definition}

Observe that, when $\pi'=\pi$ and
${\rho}=\{m^1\}$, the binary relation $\mathscr S_{\pi'\leq
\pi}^{\tilde{\rho}}$ coincides with $\mathscr R^{m}_{\pi}$.

When $\pi'\leq\pi$ is a subderivation, we let 
$\mu
\mtocl^{\pi'}\mu'$ (resp.~$\config \to^{\pi'} \config'$) indicate a sequence of parallel reductions of the $\evaltkm^*$ (resp.~of reductions of the $\circuittkm$) that only move tokens inside $\pi'$.

We have now set up all the ingredients to establish the simulation
between the $\evaltkm^*$ and the $\circuittkm$.

\begin{theorem}[the $\evaltkm^*$ simulates the \circuittkm]
	\label{thm:simulation-optimized-rule}
Let $\pi'$ be a subderivation of $\pi$ and suppose that all executions of the $\circuittkm$ over $\pi$ terminate.
Then, for all
extended circuit $E$, $E$-indexed pseudo-distributions of inner configurations $\mu$ of the $\evaltkm^*$ for $\pi'\leq \pi$, for all inner configurations $\config$ of the $\circuittkm$ 
for $\pi'\leq \pi$, for all distributions of quantum+classical memories
${\rho}$, for all address $\ADD\in @_\emptyset E$ and reduction type $a$,
$$
\mu \ \mathscr S^{{\rho}}_{\pi'\leq\pi} \ \config
\text{ and }
 \config \to^{a}_{\ADD} \config'
 \ \ \Rightarrow \ \ 
 \exists 
 \mu' \text{ s.t. }
 \mu \mtocl^{\pi'}\mu' \text{ and }
 \mu' \ \mathscr S^{{\rho}}_{\pi'\leq\pi} \ \config'.
$$
 \end{theorem}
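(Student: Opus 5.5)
I would prove the statement by case analysis on the reduction type $a$, combined with an outer induction on the size of the sub-derivation $\pi'$ (equivalently, on the nesting depth of if-then-else rules inside $\pi'$). This induction is what makes the synchronous case tractable, since the subroutines $\exec(\pi_1)$ and $\exec(\pi_2)$ run on strictly smaller sub-derivations.

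For the deterministic, choice and pseudo-deterministic types I would reuse the argument of Theorem~\ref{thm:noif-then-else-simulation} essentially verbatim. The first change is that the (token) condition is read as (token$^*$) through $\mathrm{erase}$, as in Remark~\ref{rem:erase}, and right addresses are never updated. The second change is that the (mixed state) equation now carries the extra factor $\otimes I^{\mathrm{outer}}$. Because every gate acts only on wires of tokens in $\MM^{\mathrm{inner}}$, the identities $\interp{c^I}\circ(E^{\sharp s}\otimes I^{\mathrm{outer}})=((E')^{\sharp s})\otimes I^{\mathrm{outer}}$ and their $\pi^l_b$, $\iota_b$ variants follow directly from Lemma~\ref{lemma:help1bis} and Lemma~\ref{lemma:help2bis}. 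The computations for unitary gates, $\new$, $\meas$ and the guard are then the same as before, since outer qubits are merely spectators. Entanglement causes no trouble here: we never factor the state, we only apply maps of the form $f\otimes I$.

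The main case, and the main obstacle, is the synchronous type. Suppose $\config\to^a_\ADD\config'$ applies to the rule \eqref{eq:iterule}. I would build $\mu'$ in three phases.

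\emph{Phase 1.} Starting from $\mu$, for each $s$ with $\ADD_s=\ADD$, the $\evaltkm^*$ configuration $\config_s$ has a token on $\bit^l$ and a bit $(l,b)$ in its register (Remark~\ref{rem:confflag}). I fire a pruned pseudo-deterministic step that adds $(l,b)$ to the \emph{right} address. This does not branch the indexing, which matches the fact that $\ite{l}{-}{-}$ labels are excluded from $\emptyset$-super-addresses.

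\emph{Phase 2.} I view $\pi_1$ and $\pi_2$ as inner sub-derivations and split $\mu$ according to the bit $b$ stored in each $m_s$. Every piece lies in $\mathscr S^{\rho_b}_{\pi_{b+1}\leq\pi}$ with the initial $\circuittkm$ configuration for $\pi_{b+1}$, where $\rho_b$ is obtained by projecting with $\pi^l_b$ and absorbing $E^{\sharp s}$. Since executions terminate by hypothesis, $\exec(\pi_{b+1})$ is a finite run. Applying the induction hypothesis step by step along it yields a parallel run of the $\evaltkm^*$ inside $\pi_{b+1}$ ending in a distribution related to $\fincc(\pi_{b+1},\ldots,E_{b+1})$. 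Any inner guard steps in this run can be pushed to the right address by Lemma~\ref{lemma:rightexec}, so the left address stays $\ADD_s$. Lemma~\ref{lemma:pruning} keeps the index set equal to $@_\emptyset\tau(E_{b+1})$.

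\emph{Phase 3.} I recombine the two halves and reindex them by $@^{\mathsf s}_\emptyset F$ using Lemma~\ref{lemma:optiaddress}: an index is either $s$ with $\ADD_s\neq\ADD$, or $s\cup t$ with $t\in@_\emptyset\tau(E_b)$. For (token), the final inner tokens positively saturate $\Theta_3,A_3$ after the tokens exit the branch through the structural conditional rules, which agrees with $\NN|_\ADD$. For (mixed state), I need
\[
\mu'_{s\cup t}\cdot\tocpm(m'_{s\cup t})=\big((\tau(E_b)^{\sharp t}\circ\pi^l_b\circ E^{\sharp s})\otimes I^{\mathrm{outer}}\big)(\tocpm(\rho)),
\]
and this composite is exactly $F^{\sharp(s\cup t)}$ by equation \eqref{eq:itesl} together with Lemma~\ref{lemma:extequivbis}. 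The most delicate bookkeeping is the choice of the correct $\rho_b$. I would try taking $\rho_b$ to be the pseudo-distribution whose mixed state is $\pi^l_b\big((E^{\sharp s}\otimes I^{\mathrm{outer}})(\tocpm(\rho))\big)$, with the outer tokens of the sub-run being all tokens outside $\pi_{b+1}$. For indices where $b$ disagrees with the stored bit, both sides vanish, since $\pi^l_b$ of a state concentrated on $1-b$ is $0$, exactly as in the $b\neq b'$ case of Theorem~\ref{thm:noif-then-else-simulation}.
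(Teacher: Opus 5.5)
Your proposal is correct and follows essentially the paper's proof. Both use induction on if-then-else depth and treat the non-synchronous steps as in Theorem~\ref{thm:noif-then-else-simulation}. The synchronous step is then simulated by a pruned pseudo-deterministic guard step, followed by per-$s$ runs on the branch selected by the stored bit (obtained from the induction hypothesis), with Lemma~\ref{lemma:rightexec}, Lemma~\ref{lemma:pruning}, reindexing via Lemma~\ref{lemma:optiaddress}, and the $\pi^l_b$-projection computation for the mixed state.

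The one piece of bookkeeping you skip is that the sub-run's tokens cannot keep left address $\ADD_s$ and right value $(l,b_s)$ if you want to relate them to $\initcc$, whose tokens carry empty addresses. The paper therefore runs the sub-machine with $\ADD$ and $(l,b_s)$ stripped, using register $m_s^*$ with weight $1$, and re-adds them afterwards.
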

\begin{proof}

We argue by induction on the ``if-then-else-depth'' of $\pi'$.

For all transitions $\config\to \config'$ of deterministic, choice or quasi-deterministic type, the claim can be proved  as in the proof of Theorem
\ref{thm:noif-then-else-simulation}. Hence, we need only consider the case of a reduction $\config\to \config'$ of asynchronous type.

Following Definition \ref{def:optimizedtype}, let us suppose that $\config,\config'$ are as follows:
\begin{itemize}

\item $\config=(\pi'\leq\pi,\MM\cup\MM^{\mathrm{outer}},E)$ where $\MM$ contains the inner tokens and $\MM|_{\ADD}$ is as in Definition \ref{def:optimizedtype};

\item $\config'=(\pi'\leq\pi,\NN\cup\NN^{\mathrm{outer}},F)$ where $\NN$ contains the inner tokens and $\NN|_{\ADD}$ and $F$ are as in Definition \ref{def:optimizedtype}.

\end{itemize}

Further suppose that $\mu=\{\config_s^{\mu_s}\}_{s\in @^{\mathsf s}_\emptyset E}$, with $\config_s=(\pi'\leq \pi,\MM_s\cup\MM^{\mathrm{outer}},m_s)$, where $\MM_s$ contains the inner tokens.

Our first goal is to define the pseudo-distribution $\mu'=\{\donfig_t^{\mu'_t}\}_{t\in @^{\mathsf s}_\emptyset F}$.

From the reduction $\config \to_G^a \config'$ it follows that the tokens $\MM|_{\ADD}$ at address $\ADD$ in the $\circuittkm$ contain one token in the guard position $\bit^l$. 
As a consequence, the circuit $E@\ADD$ has a typing of the form $\Delta_1\rhd E\rhd \bit^l,\Delta_2$.
Since an asynchronous reduction takes place in the $\circuittkm$, this means that the flag of the bit is off, meaning that the label $(l,b)$ has not been added to the address $\ADD$ (i.e.~that $\ADD$ does not contain $(l,b)$, for any $b\in \{0,1\}$), cf.~Remark \ref{rem:confflag}.

By the (token) condition we deduce that, for all $s\supseteq \ADD$, also the tokens $\MM_s$ in the $\evaltkm^*$ contain one token in the guard position $\bit^l$. Moreover, the label $l$ does not occur in either of the addresses $\ADD,\ADDH$ of these tokens: for $\ADD$ this is a consequence of $ \mathrm{erase}(\MM_s)=\MM|_{\ADD}$; for $\ADDH$, this follows from 
$\mathcal L_{\mathrm{right}}(\MM_s)\subseteq \LITE(E)$ (cf.~Definition \ref{def:cohe2}) and the fact that $l$ cannot be in $\LITE(E)$, because, if $E$ contained some subcircuit $\ite{l}{C_1}{C_2}$, then $l$ would be the output label of two distinct $\bit$s, which is impossible by construction.
Therefore, by the assumption made in Remark \ref{rem:confflag}, the classical register in $m_s$ must store a value $(l,b_s)$ for the corresponding label.

For a context $\Gamma$ and a polarity $\epsilon\in\{+,-\}$, let
$|\Gamma^{\epsilon}|_{\qbit}$ (resp.~$|\Gamma^{\epsilon}|_{\bit}$)
indicate the number of positive atomic occurrences of $\qbit$s
(resp.~$\bit$s) in $\Gamma$, and let $ |\Gamma^{\epsilon}|=
|\Gamma^{\epsilon}|_{\qbit}\cdot |\Gamma^{\epsilon}|_{\bit}$.

Observe that, for all types $A,B$, a CPM morphism $\Phi:\interp{\bit}\otimes\interp{A}\to\interp{B}$ is of the form
$$
\Phi(x_1,x_2): \interp{A}\times\interp{A}\to\interp{B}
$$
since $\interp{\bit}\otimes\interp{A}\simeq\interp{A}\times\interp{A}$.
In particular, given that $\Theta\rhd E_i\rhd \Theta'$,
for all $s\supseteq\ADD$, the morphism
\begin{align*}
\Phi:=\Big(\ite{l}{\tau(E_1)}{\tau(E_2)}\Big)^{\sharp s} : 
\interp{\Theta}\times \interp{\Theta}\simeq \interp{\bit}\otimes\interp{\Theta}\longrightarrow \interp{\Theta'}
\end{align*}
recalling Equation \eqref{eq:itesl}, is given by
\begin{align}\label{eqphidef}
\Phi(x_1, x_2)= (1-s(l))\Phi_1(x_1)+s(l)\Phi_2(x_2),
\end{align}
where $\Phi_1=E_1^{\sharp s}={\tau(E_1)}^{\sharp s}$ and $\Phi_2={E_2^{\sharp s}}={\tau(E_2)}^{\sharp s}$.

Let us fix then some $s\in @^{\mathsf s}_\emptyset E$ with $\ADD_s=\ADD$.
Let us consider, for some $\ell\in\{1,2\}$, the sub-derivation
$\pi_{\ell}$: by comparing the initial configurations we can check that
we have
$$
\{\initc(\pi_{\ell}\leq \pi, \MM^{-\ADD}_s, {m^*_{s})^1\} 
\ \mathscr S_{\pi_{\ell}\leq\pi}^{m^*_{s}}}  \
 \initcc(\pi_{\ell}\leq \pi, \MM|^{-\ADD}_{\ADD}, -),
$$
where:
\begin{itemize}
\item the set of tokens $\MM^{-\ADD}_s$ (resp.~$\MM|^{-\ADD}_{\ADD}$) is obtained from $\MM_s$ (resp.~$\MM|_{\ADD}$) by replacing the address $\ADD$ with the empty set,
\item the register $m^*_{s}$ is obtained from $m_{s}$ by removing the value $(l,b_s)$.
\end{itemize}

If the execution of $\initcc(\pi_{\ell}\leq \pi, \MM|^{-\ADD}_{\ADD}, -)$ reached a deadlock, one could similarly obtain a deadlock in the execution of $\initcc(\pi_{\ell}\leq \pi, \MM, -)$, and, a fortiori, in the execution of the $\circuittkm$ from the start.
By applying the no-deadlock hypothesis, we deduce then that there exists a reduction sequence of the $\circuittkm$ leading to a $\pi_{\ell}$-final configuration 
$$\fincc\Big(\pi_{\ell}\leq \pi, \big(\MM_{\Sigma,\bit}^{-\ADD,+\mathcal L(E_\ell)}\cup\NN_{\Theta,A,\ell}\cup\MM_{\lnot \Sigma,\Theta,A}^{-\ADD,+\mathcal L(E_\ell)}\big)\cup\MM^\dag, E_{\ell}),$$
 where:
\begin{itemize}
\item $\NN_{\Theta,A,\ell}$ positively saturates $\Theta_3,A_3$ and its tokens have an address containing only labels from $E_{\ell}$ (so it does not contain $l$);
\item $\MM_{\Sigma,\bit}^{-\ADD,+\mathcal L(E_\ell)}$ is obtained from $\MM_{\Sigma, \bit}^{-\ADD}$ by duplicating the tokens following the addresses in $\NN_{\Theta,A,\ell}$ (corresponding to labels in $\LITES(E_\ell)$). More precisely, $\MM_{\Sigma,\bit}^{-\ADD,+\mathcal L}=\{(l',\sigma,\ADDH)\mid (l',\sigma,\ADD)\in \MM_{\Sigma, \bit}\text{ and }\exists l''\exists \tau\ (l'',\tau,\ADDH)\in
	\NN_{\Theta,A,\ell}\}$;

\item $\MM_{\lnot \Sigma,\Theta,A}^{-\ADD,+\mathcal L(E_\ell)}$ is defined similarly.
\end{itemize}

By applying the induction hypothesis to $\pi_{\ell}$ a finite number
of times, we deduce that there exists a reduction sequence of the $\evaltkm^*$ leading from
$\initc(\pi_\ell\leq \pi, \MM^{-\ADD}_s, m^*_{s})$ to some $E_{\ell}$-indexed pseudo-distribution of 
$\pi_{\ell}$-final configurations $\nu_{s,\ell}=\{\mathcal B_{s,\ell,t}^{\nu_{s,\ell,t}}\}_{t\in @_\emptyset^*E_\ell}=\{\finc(\pi_{\ell}\leq \pi, \mathcal P^{-\ADD, +\mathcal L(E_\ell)}_{s,\ell,t}, r_{s,\ell,t})^{\nu_{\ell,t}}\}_{t\in @_\emptyset^*E_{\ell}}$,
where $\mathcal P^{-\ADD, +\mathcal L(E_\ell)}_{s,\ell,t}, r_{s,\ell,t}$ is a set of tokens whose addresses do not contain $\ADD$ but contain labels from $E_\ell$. In other words, we have:
\begin{equation}\label{eq:exec1}
\left\{\initc\left(\pi_{\ell}\leq \pi, \MM^{-\ADD}_s, m^*_{s}\right)^1\right\}  \mtocl^{ \pi_{\ell}}
\left\{\finc\left(\pi_{\ell}\leq \pi, \mathcal P^{-\ADD,+\mathcal L(E_\ell),}_{s,\ell,t}, r_{s,\ell,t}\right)^{\nu_{s,\ell,t}}
\right\}_{t\in @_\emptyset^*E_{\ell}},
\end{equation}
as well as
\begin{equation}\label{eq:split}
\begin{split}
\Big\{
\finc\Big(\pi_{\ell}& \leq \pi,  \mathcal P^{-\ADD,+\mathcal L(E_\ell)}_{s,\ell,t},  r_{s,\ell,t}\Big)^{\nu_{s,\ell,t}}
\Big\}_{t\in @_\emptyset^*E_{\ell}} \\
&
\ 
\mathscr S_{\pi}^{m^*_s} 
\quad
\fincc
\left(
\pi_{\ell}\leq\pi, 
	\left(
	\MM_{\Sigma,\bit}^{-\ADD,+\mathcal L(E_\ell)}\cup\NN_{\Theta,A,\ell}			\cup\MM_{\lnot \Sigma,\Theta,A}^{-\ADD,+\mathcal L(E_\ell)}
	\right)
	\cup\MM^\dag, E_{\ell}
\right).
\end{split}
\end{equation}
This implies that for any $t\in @_\emptyset^*E_\ell$, the ($\mathrm{erase}(-)$ of the) tokens in $\mathcal P^{-\ADD,+\mathcal L(E_\ell)}_{s,\ell,t}$ coincide with the tokens in 
$\MM_{\Sigma,\bit}^{-\ADD,+\mathcal L(E_\ell)}\cup\NN_{\Theta,A,\ell}\cup\MM_{\lnot \Sigma,\Theta,A}^{-\ADD,+\mathcal L(E_\ell)}\cup \MM^\dag$ of address $\ADD_t$. 
Moreover, it also implies the following equation, for all $t\in @_\emptyset^*E_\ell$:
\begin{equation}\label{eqphiell}
( \mathrm{Id}_{\Sigma, \bit}\otimes E_\ell^{\sharp t}\otimes  \mathrm{Id}_{\lnot \Sigma,\Theta,A}\otimes \mathrm{Id}_{\MM^\dag})( \tocpm(m^*_{s}))=\nu_{s,\ell,t}\cdot\tocpm(r_{s,\ell,t}).
\end{equation}

We are now going to apply four transformations to the execution \eqref{eq:exec1}:
\begin{enumerate}

\item by adding the address $\ADD$ to all tokens we obtain a reduction 
$$
\left\{\initc(\pi_{\ell}\leq \pi, \MM_s, m^*_{s})^1\right\} 
\mtocl^{ \pi_{\ell}} \
\left\{\finc\left(\pi_{\ell}\leq \pi, \mathcal P^{+\ADD,+\mathcal L(E_\ell)}_{s,\ell,t}, r_{s,\ell,t}\right)^{\nu_{s,\ell,t}}
\right\}_{t\in @_\emptyset^*E_{\ell}};
$$
where $\mathcal P^{+\ADD,+\mathcal L(E_\ell)}_{s,\ell,t}$ is obtained from $\mathcal P^{-\ADD,+\mathcal L(E_\ell)}_{s,\ell,t}$ by adding the (left) address $\ADD$ to all its tokens.

\item By applying Lemma \ref{lemma:rightexec} we obtain a reduction 
$$
\left\{\initc(\pi_{\ell}\leq \pi, \MM_s, m^*_{s})^1\right\} 
\mtocl^{ \pi_{\ell}} \ 
\left\{\finc\left(\pi_{\ell}\leq \pi, \mathcal P^{+\ADD,-\mathcal L(E_\ell)}_{s,\ell,t}, r_{s,\ell,t}\right)^{\nu_{s,\ell,t}}
\right\}_{t\in @_\emptyset^*E_{\ell}},
$$
where $\mathcal P^{+\ADD,-\mathcal L(E_\ell)}_{s,\ell,t}$ is obtained from $\mathcal P^{+\ADD,+\mathcal L(E_\ell)}_{s,\ell,t}$ by pushing all values $(l',b')$ from labels of $E_{\ell}$ to the right address. Observe that the left address of all tokens in $\mathcal P^{+\ADD,-\mathcal L(E_\ell)}_{s,\ell,t}$ is now $\ADD$.

\item Finally, by applying the pruning lemma (Lemma \ref{lemma:pruning}), we obtain an injective function $h_s:@^{\mathsf s}_\emptyset\tau(E_\ell)\to @^{\mathsf s}_\emptyset E_\ell$ and a pruned reduction

$$
\left\{\initc(\pi_{\ell}\leq \pi, \MM_s, m^*_{s})^1
\right\} 
\mtocl^{ \pi_{\ell}}_{\mathsf p} \ 
\left\{\finc\left(\pi_{\ell}\leq \pi, \mathcal P^{+\ADD,-\mathcal L(E_\ell)}_{s,\ell,h_{s}(u)}, r_{s,\ell,h_s(u)}\right)^{\nu_{s,\ell,h_s(u)}}
\right\}_{u\in @_\emptyset^*\tau(E_\ell)}.
$$
\end{enumerate}

Now, recall that the register $m_s$ stores a value $(l,b_s)$ for the label $l$.
By applying a reduction of pruned pseudo-deterministic type followed by the reduction above we obtain then a parallel reduction $\mu \mtocl^{\pi'} \mu'$, with 
\begin{equation}\label{eq:psedu}
\begin{split}
\mu'&=\{\donfig_t^{\mu'_t}\}_{t\in @^{\mathsf s}_\emptyset F}\\
&:=
\left\{\finc\left( \pi'\leq\pi, \mathcal P^{+\ADD,-(l,b_s),-\mathcal L(E_{b_s})}_{s,b_s,u}, r_{s,b_s,h_s(u)}\right)^{\mu_s\cdot\nu_{s,b_s,t}}
\right\}_{s\supseteq\ADD, u\in @_\emptyset^*\tau(E_{b_s})} +
\Big \{\config_s^{\mu_s}\Big\}_{s\not\supseteq\ADD}
\end{split}
\end{equation}
where $\mathcal P^{+\ADD,-(l,\ell),-\mathcal L(E_{\ell})}_{s,\ell,u}$ is obtained from $\MM^{+\ADD,-\mathcal L(E_{\ell})}_{s,\ell,h_s(u)}$ by adding $(l,\ell)$ to the right address of each token.
Observe that, from Lemma \ref{lemma:optiaddress} it follows that the index set for the right-hand pseudo-distributions above
is precisely $@^{\mathsf s}_\emptyset F$. This concludes our argument that the parallel reduction $\mu\mtocl^{\pi'}\mu'$ is well-defined.

Our final goal is to show that $\mu' \ \mathscr S^{{\rho}}_{\pi'\leq \pi} \ \config'$. 
The previous argument ensures that the (index) condition holds.
For the (token) condition, first recall that, from the $\circuittkm$-configurations, for $\ell=0,1$, 
$$
\fincc\left(\pi_{\ell}\leq \pi, \left(\MM_{\Sigma,\bit}^{-\ADD,+\mathcal L(E_\ell)}\cup\NN_{\Theta,A,\ell}\cup\MM_{\lnot \Sigma,\Theta,A}^{-\ADD,+\mathcal L(E_\ell)}\right)\cup\MM^\dag, E_{\ell}\right),
$$
the application of the synchronous rule yields the configuration 
$
\donfig:= \fincc\left(\pi'\leq \pi,\NN, F\right),
$
where
$$
\NN|_\ADD=\Big( \MM_{\Sigma,\bit}\cup\NN^{+\ADD,-\mathcal L(E_\ell)}_{\Theta,A}\cup\MM_{\lnot \Sigma,\Theta,A}\Big)\cup \MM^\dag,
$$
and 
$\NN^{+\ADD,-\mathcal L(E_\ell)}_{\Theta,A}
=\{(l',\sigma,\ADD)\mid \exists \ADDH\ (l',\sigma,\ADDH)\in \NN_{\Theta,A,\ell},\ell\in\{0,1\}\}
$.

By the (token) condition for \eqref{eq:split} 
we deduce that, for all $u\in @^{\mathsf s}_\emptyset\tau( E_\ell)$,
$$
\mathrm{erase}\left(\mathcal P^{-\ADD,+\mathcal L(E_\ell)}_{s,\ell, h_s(u)}\right)=\left(
\MM^{-\ADD,+\mathcal L(E_\ell)}_{\Sigma,\bit}\cup \NN_{\Theta,A,\ell}\cup\MM^{-\ADD,+\mathcal L(E_\ell)}_{\lnot \Sigma,\Theta,A}\cup \MM^\dag\right) \Big\vert_{\ADD_{ h_s(u)}}.
$$ 
Observing that $\NN_{\Theta,A,\ell}$ only contains copies of the tokens in the positive positions of $\Theta\vdash A$ (one copy for each address $\ADDH\in @^{\mathsf s}_\emptyset(E_\ell)$), and that $\MM^{+\ADD,-\mathcal L(E_\ell)}_{\Sigma,\bit}$
contain copies of the tokens in $\MM_{\Sigma,\bit}$ (one copy for each address $\ADDH\in @^{\mathsf s}_\emptyset(E_\ell)$), and similarly for $\MM^{+\ADD,-\mathcal L(E_\ell)}_{\lnot \Sigma,\Theta,A}$, we deduce that
$$
\mathrm{erase}\left(\mathcal P^{+\ADD,-\mathcal L(E_\ell)}_{s,\ell, h_s(u)}\right)=
\MM_{\Sigma,\bit}\cup \NN^{+\ADD,-\mathcal L(E_\ell)}_{\Theta,A,\ell}\cup\MM_{\lnot \Sigma,\Theta,A}\cup \MM^\dag,
$$ 
and thus, finally, for all $t\in @^{\mathsf s}_\emptyset F$ with $t=s\cup u\supseteq\ADD$,
\begin{equation*}
\begin{split}
\mathrm{erase}(\NN_{t})=
\mathrm{erase}\left(\mathcal P^{+\ADD,-\mathcal L(E_\ell)}_{s,\ell, h_s(u)}\right) &= 
\MM_{\Sigma,\bit}\cup \NN^{+\ADD,-\mathcal L(E_\ell)}_{\Theta,A,\ell}\cup\MM_{\lnot \Sigma,\Theta,A}\cup\MM^\dag= \NN\vert_{\ADD}=\NN\vert_{\ADD_t}.
\end{split}
\end{equation*}

To conclude it remains to check the (mixed state) condition. For this, we need to check that, for all $s\supseteq \ADD$ and $u\in @^{\mathsf s}_\emptyset \tau(E_{b_s})$, 
\begin{equation}\label{eqphiall}
(F^{\sharp(s\cup u)}\otimes  \mathrm{Id}^{\mathrm{outer}})( \tocpm({\rho}))= 
\mu_s\cdot \nu_{s,\ell,h_s(u)}\cdot
\tocpm(r_{s,b_s,h_s(u)}),
\end{equation}
observing that, by hypothesis, and by \eqref{eq:split}, we have, for all $s\in @^{\mathsf s}_\emptyset E$, $\ell=0,1$ and $t\in @^{\mathsf s}_\emptyset E_\ell$,
\begin{align}
\mu_s\cdot\tocpm(m_s)&=({E}^{\sharp s}\otimes \mathrm{Id}^{\mathrm{outer}})(
\tocpm({\rho})) \label{eq:hypo1}\\
\nu_{s,\ell,t}\cdot \tocpm(r_{s,\ell,t})&=
(E_\ell^{\sharp t}\otimes \mathrm{Id}^{\mathrm{outer}})(\tocpm(m_s^*))\label{eq:hypo2}.
\end{align}

Recall that $F$ is obtained from $E$ by replacing the leaf $E@\ADD$ by $E@\ADD;\ite{l}{\tau(E_0)}{\tau(E_1)}$.

We have then that 
$$
F^{\sharp (s\cup u)}=F^{\sharp (s\cup \{(l,0)\}\cup u)}+ F^{\sharp (s\cup \{(l,1)\}\cup u)}=
 E_{0}^{\sharp u}\circ \pi^l_{0}\circ E^{\sharp s}+
E_{1}^{\sharp u}\circ \pi^l_{1}\circ E^{\sharp s}.
$$

Observing that 
$
\pi^l_{b_s}\circ
  \tocpm(m_s)=\tocpm(m^*_{s})$ and that $\pi^l_{1-b_s}\circ
  \tocpm(m_s)=0$, we can compute: 
\begin{align*}
(F^{\sharp (s\cup u)}\otimes  \mathrm{Id})( \tocpm({\rho})) &= 
\Big(\big(E_{0}^{\sharp u}\circ \pi^l_{0}\circ E^{\sharp s}+
E_{1}^{\sharp u}\circ \pi^l_{1}\circ E^{\sharp s}\big)\otimes \mathrm{Id}\Big)( \tocpm({\rho}))\\
&=
\Big(
(E_{0}^{\sharp u}\otimes \mathrm{Id})(\pi^l_{0}\otimes \mathrm{Id})
+
(E_{1}^{\sharp u}\otimes \mathrm{Id})(\pi^l_{1}\otimes \mathrm{Id})
\Big)
( E^{\sharp s}\otimes \mathrm{Id})( \tocpm({\rho}))\\
&\stackrel{\tiny\eqref{eq:hypo1}}{=}
\mu_s\cdot
\Big(
(E_{0}^{\sharp u}\otimes \mathrm{Id})(\pi^l_{0}\otimes \mathrm{Id})
+
(E_{1}^{\sharp u}\otimes \mathrm{Id})(\pi^l_{1}\otimes \mathrm{Id})
\Big)
(  \tocpm(m_s))\\
&=
\mu_s\cdot
(E_{b_s}^{\sharp u}\otimes \mathrm{Id})(  \tocpm(m^*_s))\\
&\stackrel{\tiny\eqref{eq:hypo2}}{=}
\mu_s\cdot \nu_{s,b_s,h_s(u)}\cdot \tocpm(r_{s,b_s,h_s(u)}).
\end{align*}
\end{proof}

Similarly to how we did for Theorem \ref{thm:noif-then-else-simulation} and the \circuittkmz, relying on the 
bisimulation between the \evaltkm and the \evaltkm$^*$ (Proposition \ref{prop:bisi2}) and on the simulation between the \circuittkm and the \evaltkm$^*$ (Theorem \ref{thm:simulation-optimized-rule}), we can now deduce the following soundness result:

\begin{theorem}[Soundness of the \circuittkm wrt to the \evaltkm]
Let $\pi :\ \vdash M :A$ be a Boolean typing derivation and suppose
that the \circuittkm{} produces, over $\pi$, the final configuration
$(\pi,\mathbf{PSN}(\MM), E )$. 
Then, 
for any quantum+classical register $m$, 
the \evaltkm over $\pi$ and $m$ produces a 
$E$-indexed pseudo-distribution $\mu=\{\config_s^{\mu_s}\}_{s\in @^{\mathsf s}E}$ of final configurations such that 
$
\interp{E}(\tocpm(m)) =\tocpm(\mu)$.
\end{theorem}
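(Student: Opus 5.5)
The argument follows the one given for the \circuittkmz, with the \evaltkm$^*$ standing in as an intermediate machine. The run of the \circuittkm{} that produces $(\pi,\MM,E)$ is a finite sequence $\config_0\to\config_1\to\dots\to\config_n$ starting at the initial configuration. Finiteness follows from the Termination corollary of Section~\ref{sect:full-machine}. That corollary also supplies the hypothesis of Theorem~\ref{thm:simulation-optimized-rule} that every execution of the \circuittkm{} over $\pi$ terminates. I take $\pi'=\pi$ and $\rho=\{m^1\}$. In this case $\mathscr S^{\rho}_{\pi\leq\pi}$ is just $\mathscr R^m_\pi$, read with the (token$^*$) condition of Remark~\ref{rem:erase}, and the outer identity is trivial.

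\textbf{Step 1 (initialisation).} As in Lemma~\ref{lemma:init-config-related}, the one-point distribution on the initial \evaltkm$^*$ configuration over $m$, with both addresses empty, is $\mathscr S^{\rho}_{\pi\leq\pi}$-related to the initial \circuittkm{} configuration. The circuit is the identity, there is a single address, and the tokens coincide.

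\textbf{Step 2 (iteration).} I apply Theorem~\ref{thm:simulation-optimized-rule} once for each step $\config_i\to\config_{i+1}$. This gives parallel reductions $\mu_0\mtocl\mu_1\mtocl\dots\mtocl\mu_n$ of the \evaltkm$^*$ with $\mu_i\ \mathscr S\ \config_i$. Since $\config_n$ is final, the (token) condition forces every configuration in $\mu_n$ with nonzero weight to be final: its erased tokens lie exactly on $\positive_\pi\cup\posones_\pi\cup\guard_\pi$. Also, $E$ is uniform with typing $\mathbf{ENV}(\negative_\pi)\rhd E\rhd\mathbf{ENV}(\positive_\pi)$. Proposition~\ref{prop:sumextbis} with $\Sigma=\emptyset$ then gives
\[
\interp{E}(\tocpm(m))=\sum_{s\in @^{\mathsf s}_\emptyset E}E^{\sharp s}(\tocpm(m))=\sum_{s}\mu_{n,s}\cdot\tocpm(m_s)=\tocpm(\mu_n),
\]
where the middle equality is the (mixed state) condition.

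\textbf{Step 3 (back to the \evaltkm).} The final task is to convert $\mu_n$ into a run of the original \evaltkm{} and to reindex it by $@^{\mathsf s}E$. By Proposition~\ref{prop:bisi2}, applying $\mathrm{join}$ to each configuration turns every \evaltkm$^*$ step into an \evaltkm{} step with the same probabilities. The parallel-bisimulation relation $\mathscr T$ tells us how to read the result. A pruned pseudo-deterministic step, which records the guard value in the right address, corresponds to a genuine pseudo-deterministic \evaltkm{} step. That genuine step has one branch of weight zero, and this branch is discarded by the injection $h$. The joined distribution is therefore indexed by super-addresses of the $\xmapsto{l}$-expanded circuit, and it vanishes outside the image of $h$. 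Proposition~\ref{prop:soundpara} then identifies the result with an actual \evaltkm{} execution from $m$. Because $\tocpm$ is linear and the zero-weight entries contribute nothing, $\tocpm(\mu)=\tocpm(\mu_n)$.

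I expect this reindexing step to be the main obstacle. $E$ now contains genuine $\ite{l}{-}{-}$ nodes, so $@^{\mathsf s}E$ records choices for guards that $@^{\mathsf s}_\emptyset E$ does not. My first attempt would extend $\mu_n$ by zero on those super-addresses $s$ whose guard choices disagree with the bit actually stored. I would then check the mixed-state equation slice by slice, using Lemma~\ref{lemma:circuitbool} to show that the mismatched slices have $E^{\sharp s}(\tocpm(m))=0$. After that, the sum over $@^{\mathsf s}E$ agrees with the sum over $@^{\mathsf s}_\emptyset E$ by Proposition~\ref{prop:circuitsumbis}.
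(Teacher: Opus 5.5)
Your proposal is correct and follows the paper's route. The paper's proof is only the appeal to Theorem~\ref{thm:simulation-optimized-rule} and Proposition~\ref{prop:bisi2}, following the pattern of the \circuittkmz{} soundness proof, and your Steps~1--3 (related initial configurations, iterated simulation, summing the (mixed state) condition over $@^{\mathsf s}_\emptyset E$ via Proposition~\ref{prop:sumextbis}, and $\mathrm{join}$ back to the \evaltkm) spell out exactly that argument, including the reindexing from $@^{\mathsf s}_\emptyset E$ to $@^{\mathsf s}E$ that the paper leaves implicit; one detail to fix there is that the vanishing of $E^{\sharp t}(\tocpm(m))$ on slices disagreeing with the recorded guard bit does not follow from Lemma~\ref{lemma:circuitbool} alone, which says one branch is killed but not that it is the one disagreeing with the stored bit, so derive it instead from the (mixed state) invariant at the step where that bit was read, together with positivity of the summands in $E^{\sharp s}(\tocpm(m))=\sum_{t\in\mathrm{ext}(s)}E^{\sharp t}(\tocpm(m))$.
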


\subsection{Termination and Unicity of the Normal Form for the Synchronous rule}
Similarly to how we did for the \circuittkmz, we will deduce termination and confluence properties for the \circuittkm from its simulation via the \evaltkm.

\begin{theorem}
  Given a type derivation $\pi : \Delta \vdash M : A$, the \circuittkm
  terminates on $\initcz(\pi)$. Moreover, it terminates in a final configuration.
\end{theorem}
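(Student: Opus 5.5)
We argue by contradiction, as in the proof of Theorem~\ref{thm:termination-circuit-tkm}, but using the simulation of the \circuittkm{} by the $\evaltkm^*$ (Theorem~\ref{thm:simulation-optimized-rule}) in place of the bisimulation of Theorem~\ref{thm:noif-then-else-simulation}. The first step is to relate the initial configurations. By Lemma~\ref{lemma:init-config-related} and the remark that $\mathscr S^{\{m^1\}}_{\pi\leq\pi}$ coincides with $\mathscr R^m_\pi$, we get $\{\initm(\pi)^1\}\ \mathscr S^{\{m^1\}}_{\pi\leq\pi}\ \initcz(\pi)$. Here $\initm(\pi)$ is read as an $\evaltkm^*$ configuration whose right addresses are empty.

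Theorem~\ref{thm:simulation-optimized-rule} assumes that all executions of the \circuittkm{} terminate, so it cannot be used directly. We therefore prove termination first, by induction on the if-then-else depth of $\pi$, strengthened to cover every subderivation $\pi'\leq\pi$ and inner configurations. In the induction step, any application of the synchronous rule on a conditional in $\pi'$ runs $\exec$ only on the branch subderivations $\pi_1,\pi_2$. These have strictly smaller depth, so by the induction hypothesis they terminate and $\exec(\pi_b)$ is defined. This is exactly the hypothesis the proof of Theorem~\ref{thm:simulation-optimized-rule} uses when it invokes "no deadlock". Given this, every single step $\config\to\config'$ of the \circuittkm{} inside $\pi'$ is matched by a nonempty parallel reduction $\mu\mtocl^{\pi'}\mu'$ that preserves $\mathscr S$. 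A structural, gate or guard step gives at least one step; a synchronous step gives the pruned pseudo-deterministic step followed by the inner runs.

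Suppose now there is an infinite run $\initcz(\pi)\to\config_1\to\config_2\to\cdots$. Gluing the matching parallel reductions yields an infinite chain $\{\initm(\pi)^1\}\mtocl\mu_1\mtocl\mu_2\cdots$ of $\evaltkm^*$ parallel reductions. Proposition~\ref{prop:bisi2} transfers this to the \evaltkm, where it contradicts Proposition~\ref{thm:paraterm}. This contradiction gives termination.

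It remains to show the normal form is final. Let $\config$ be a \circuittkm{} normal form, related to some $\mu$. If $\config$ were not final, then either some non-asynchronous rule applies, or the asynchronous rule applies. The asynchronous rule fires whenever a token stands on a guard $\bit$ and no other rule applies. So we must show that a non-final configuration always has some applicable rule. By the (token) condition, the tokens at each address coincide (after $\mathrm{erase}$) with those of the $\evaltkm^*$ configurations in $\mu$. Any non-final \evaltkm{} configuration with positive probability can move, by deadlock freeness of the \evaltkm{}~\cite{DalLago2017} transported through Proposition~\ref{prop:bisi2}. Its move is a structural, gate or guard move, and each of these corresponds to an enabled rule of the \circuittkmz, which is also available in the \circuittkm{} with the asynchronous rule as a fallback.

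The main obstacle is this last step when every configuration at some address has probability $0$. Then the \evaltkm{} gives no information, and we must argue directly from the (token) condition. The positive-probability case lifts directly via the (token) condition, since the token sets at an address are the same for every super-address. The delicate part is showing that zero-probability slices cannot hide a stuck address; I would try to handle it by replacing $m$ with a register for which the relevant slice has positive weight.
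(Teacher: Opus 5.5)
Your termination half follows the paper's route: Lemma~\ref{lemma:init-config-related}, then Theorem~\ref{thm:simulation-optimized-rule} to lift an infinite run to an infinite parallel chain, then a contradiction with Proposition~\ref{thm:paraterm}. Your induction on conditional depth is a sound repair of a circularity in the paper: Theorem~\ref{thm:simulation-optimized-rule} assumes that all runs terminate, yet the paper uses it to prove termination, and that hypothesis is only used for the branch subderivations. However, $\exec(\pi_b)$ is defined only if the run on $\pi_b$ terminates \emph{in a final configuration}. So your induction hypothesis must be the full statement, and the finality step is needed at every depth. That step is where the proposal has a gap.

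The gap sits exactly where you locate it, and the repair you suggest cannot work, because some slices have weight $0$ for \emph{every} initial register. Take $R\,U\,W$ from Section~\ref{sect:informal} with guard $N:=\termZero\,\starc$. The circular dependency blocks the synchronous rule, so the conditional is opened asynchronously. In the matching pseudo-deterministic step (Definition~\ref{def:parallelred}) the value read for $l$ is the same whatever $m$ is, so every slice choosing the other branch gets weight $0$. The same happens with guard $\meas(\new(\termZero\,\starc))$. Yet a final configuration requires the tokens at that address to reach final positions as well. Deadlock-freeness of positive-probability runs of the \evaltkm{} therefore gives you nothing there, and no choice of $m$ changes this.

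What is missing is a progress property that does not mention weights. No rule of the $\evaltkm^*$ inspects amplitudes or weights, and after the guard transition the branch is read off the address alone. So the weight-$0$ members of a related pseudo-distribution are genuine configurations. You need that the token dynamics at a fixed address, with branch choices dictated by that address, cannot get stuck before reaching final positions. This is a statement about token positions and addresses only. You could obtain it by checking that the deadlock-freeness argument of~\cite{DalLago2017} depends only on these data, or directly by a positional argument in the style of the proof of Proposition~\ref{prop:graph1}.

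This is also how the paper's one-line appeal to ``the \evaltkm{} terminates in a final configuration'' has to be read, since the parallel reduction of Definition~\ref{def:parallelred} keeps and moves the weight-$0$ members. The same weight-independence is tacitly used in the proof of Proposition~\ref{thm:paraterm}, whose K\"onig argument follows only positive-probability branches. Your termination half inherits it too.
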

\begin{proof}
  The proof is the same as for Theorem~\ref{thm:termination-circuit-tkm}, except
  that we use Theorem~\ref{thm:simulation-optimized-rule}.
\end{proof}

\begin{lemma}
	\label{lemma:config-related-same-distri-same-sem-optimized}

	Let $E, F, F'$ be extended circuits. 
	Let $\pi$ be a derivation and $\pi' \leq
	\pi$ a subderivation of $\pi$.
	For all inner \circuittkm configurations $\config_1, \config_2$ for $\pi'
	\leq \pi$ with extended circuit $F $ and $F' $, for all $E$-indexed coherent pseudo-distribution $\mu$ of inner $\evaltkm^*$ configurations for $\pi' \leq \pi$, if, for all distribution of quantum + classical registers $ \rho$,
$\mu \mathscr S^{\rho}_{ \pi} \config_1$ and $\mu \mathscr
	S^{\rho}_{ \pi} \config_2$ both hold, then $\config_1\equiv\config_2$.

\end{lemma}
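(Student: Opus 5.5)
This lemma is the inner-configuration analogue of Lemma~\ref{lemma:config-related-same-distri-same-sem}, and I would prove it the same way. The (mixed state) condition of $\mathscr S^{\rho}_{\pi'\leq\pi}$ fixes every sliced semantics of the produced circuit. Proposition~\ref{prop:sumextbis} then reassembles these slices into the full denotation.

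First, unfold the two hypotheses. From $\mu\,\mathscr S^{\rho}_{\pi'\leq\pi}\,\config_1$ the (index) condition gives $F=E$. From $\mu\,\mathscr S^{\rho}_{\pi'\leq\pi}\,\config_2$ it gives $F'=E$. Reading this literally, the claim $\config_1\equiv\config_2$ is already almost immediate: both configurations live over $\pi'\leq\pi$ and carry the same circuit.

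To be robust to a reading of (index) as ``same index set'' rather than syntactic equality, I would not rely on $F=F'$ alone. Instead, take (index) to say $@^{\mathsf s}_\emptyset F=@^{\mathsf s}_\emptyset F'=@^{\mathsf s}_\emptyset E$. Then for every $s$ in this set and every $\rho$, the (mixed state) condition applied to both configurations gives
\[
(F^{\sharp s}\otimes I^{\mathrm{outer}})(\tocpm(\rho))=\mu_s\cdot\tocpm(m_s)=(F'^{\sharp s}\otimes I^{\mathrm{outer}})(\tocpm(\rho)).
\]

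The next step removes the outer identity and extends from registers to all inputs. I would choose $\rho$ to range over distributions whose mixed states $\tocpm(\rho)$ span $\interp{\mathbf{ENV}(\MM^{\mathrm{inner}})}\otimes\interp{\mathrm{outer}}$. Since pure states and classical basis vectors span the matrix spaces $V_\sigma$, linearity yields $F^{\sharp s}\otimes I=F'^{\sharp s}\otimes I$. Restricting to a product input with a fixed nonzero outer state then gives $F^{\sharp s}=F'^{\sharp s}$.

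The main obstacle is this spanning step. One has to check that the admissible distributions $\rho$ are rich enough, in particular that arbitrary classical values in $v$ and arbitrary quantum states $Q$ are allowed. It also needs the observation that the $\mu_s$ depend on $\rho$ only through these equations, so no further constraint arises. Once this is settled, uniformity together with Proposition~\ref{prop:sumextbis} (taking $\Sigma=\emptyset$) gives
\[
\interp{F}=\sum_{s\in @^{\mathsf s}_\emptyset F}F^{\sharp s}=\sum_{s\in @^{\mathsf s}_\emptyset F'}F'^{\sharp s}=\interp{F'},
\]
and hence $\config_1\equiv\config_2$.

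If $F$ or $F'$ is not uniform, I would instead compare slice by slice, interpreting $\equiv$ as agreement on all slice semantics.
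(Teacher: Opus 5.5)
Your proposal is correct and takes essentially the paper's route. The paper simply defers to the proof of Lemma~\ref{lemma:config-related-same-distri-same-sem}: the (mixed state) condition makes $F^{\sharp s}$ and $F'^{\sharp s}$ agree on every register, for every slice $s$, and the slice-sum proposition (Proposition~\ref{prop:sumext}, here its $\Sigma$-variant Proposition~\ref{prop:sumextbis}) then gives $\interp{F}=\interp{F'}$. Your additional points — that (index) already forces $F=E=F'$, and the spanning argument the paper leaves implicit — just make that same argument more explicit.
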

\begin{proof}
	The proof is similar to
	Lemma~\ref{lemma:config-related-same-distri-same-sem}.
\end{proof}

While we do not give a proof of confluence, we can however show that, starting from the initial configuration, if we reach two different final configurations then they must be equivalent:

\begin{theorem}[Unicity of Final Configuration Modulo Equivalence]
	Given a type derivation $\pi : \Delta \vdash M : A$, such that we have two
	runs from $\initcz(\pi)$ to the two \emph{final} configurations $\config_1$ and
	$\config_2$, then $\config_1 \equiv \config_2$.
\end{theorem}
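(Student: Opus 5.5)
The proof relates both runs, through the simulation theorem, to one and the same terminal pseudo-distribution of the $\evaltkm^*$, and then concludes with Lemma~\ref{lemma:config-related-same-distri-same-sem-optimized}. Fix an arbitrary distribution of quantum+classical registers $\rho$ for the inputs of $\pi$. By Lemma~\ref{lemma:init-config-related}, adapted to the $\evaltkm^*$ using Remark~\ref{rem:erase} (the right address is empty initially), the initial pseudo-distribution $\{\initm(\pi)^1\}$ is related by $\mathscr S^{\rho}_{\pi\leq\pi}$ to $\initcz(\pi)$.

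The two given runs are $\initcz(\pi)\to^*\config_1$ and $\initcz(\pi)\to^*\config_2$. Applying Theorem~\ref{thm:simulation-optimized-rule} step by step along each run (its hypothesis that all \circuittkm executions terminate holds by the termination theorem just proved) yields two parallel $\evaltkm^*$ reductions. These are $\{\initm(\pi)^1\}\mtocl^*\mu_1$ with $\mu_1\ \mathscr S^{\rho}_{\pi}\ \config_1$, and $\{\initm(\pi)^1\}\mtocl^*\mu_2$ with $\mu_2\ \mathscr S^{\rho}_{\pi}\ \config_2$. Since $\config_1,\config_2$ are final, the (token) condition forces every configuration in $\mu_1$ and $\mu_2$ to be final. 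Their circuits are uniform with environment $\mathbf{ENV}(\negative_\pi)\rhd - \rhd\mathbf{ENV}(\positive_\pi)$, so the relation $\equiv$ makes sense.

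Next, we identify $\mu_1$ and $\mu_2$ up to what is observable. Using Proposition~\ref{prop:bisi2} (parallel bisimulation with the $\evaltkm$) and Proposition~\ref{prop:soundpara}, each $\mu_i$ corresponds, after pruning and joining addresses, to a terminal $\evaltkm$ reduction from $\initm(\pi)$. By confluence of the $\evaltkm$ \cite[Theorem 10]{DalLago2017}, both correspond to the same terminal distribution $\DISTM{\rho}$. Directly equating $\mu_1$ and $\mu_2$ is not possible, because they are indexed by different circuits $F_1,F_2$. Instead, we sum the (mixed state) condition over indices. By Proposition~\ref{prop:sumextbis}, this gives
\[
\interp{F_i}(\tocpm(\rho))=\sum_{s}\mu_{i,s}\cdot\tocpm(m_{i,s})=\tocpm(\widetilde{\mu_i}),
\]
and both right-hand sides equal $\tocpm(\DISTM{\rho})$, since pruning and joining do not change the induced mixed state.

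Hence $\interp{F_1}(\tocpm(\rho))=\interp{F_2}(\tocpm(\rho))$ for every $\rho$. Both maps are linear, and the states $\tocpm(\rho)$ span $\interp{\mathbf{ENV}(\negative_\pi)}$, so $\interp{F_1}=\interp{F_2}$, i.e.\ $\config_1\equiv\config_2$. This final step is the argument of Lemma~\ref{lemma:config-related-same-distri-same-sem-optimized}, applied to the summed equation rather than to a single shared $\mu$.

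The main obstacle is the middle step. The two runs may choose differently between the synchronous and asynchronous rules, so their simulating pseudo-distributions are indexed over incomparable super-address sets, and the right addresses carry different information. The argument must therefore go through the pruned/joined $\evaltkm$ distributions and use only the summed mixed state. It must also check carefully that pruning (Lemma~\ref{lemma:pruning}) and $\mathrm{join}$ preserve $\tocpm$ of the total distribution, since zero-probability slices are dropped.
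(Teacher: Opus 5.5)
Your proof is correct, but it takes a different route from the paper at the key middle step.

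\textbf{The paper's route.} The paper claims that the two simulating $\evaltkm^*$ pseudo-distributions obtained from Theorem~\ref{thm:simulation-optimized-rule} are \emph{equal} ``by confluence of $\evaltkm^*$''. It then applies Lemma~\ref{lemma:config-related-same-distri-same-sem-optimized} to a single $\mu$ related to both $\config_1$ and $\config_2$.

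\textbf{Your route.} You never identify the two pseudo-distributions. Instead you:
\begin{enumerate}
\item sum the (mixed state) condition via Proposition~\ref{prop:sumextbis}, getting $\interp{F_i}(\tocpm(\rho))=\tocpm(\widetilde{\mu_i})$;
\item pass through $\mathrm{join}$, pruning and Proposition~\ref{prop:soundpara} to terminal runs of the $\evaltkm$;
\item invoke confluence of the $\evaltkm$ itself;
\item conclude by linearity, using that the states $\tocpm(\rho)$ span the input space.
\end{enumerate}

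\textbf{What each buys.} The paper's argument is shorter and reuses its lemma, but it has two soft spots that yours avoids.
\begin{itemize}
\item The step equating the two pseudo-distributions only makes sense when $@^{\mathsf s}_\emptyset F_1=@^{\mathsf s}_\emptyset F_2$. The (index) clause of $\mathscr S$ lets a single $\mu$ be related only to configurations carrying the same circuit. Read literally, the lemma therefore applies only when both runs built the same $F$.
\item The $\evaltkm^*$ is confluent only up to $\mathrm{join}$, since its guard rule chooses nondeterministically between the left and the right address.
\end{itemize}
Your argument is exactly what is needed when the two runs mix synchronous and asynchronous steps differently. The price is extra bookkeeping: the bisimulation with the $\evaltkm$, pruning, checking that $\mathrm{join}$ and dropping zero-weight slices leave $\tocpm$ unchanged, and the spanning argument, which the paper leaves implicit inside the proof of its lemma.
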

\begin{proof}
	By definition of $\initcz(\pi)$, we know that, for all quantum+classical
	register $m$, we have $\initm(\pi) \mathscr S_{\pi}^{m}
	\initcz(\pi).$
	Now, by our hypothesis, $\initcz(\pi)$ reduces to both final
	configurations $\config_1$ and $\config_2$. By
	Theorem~\ref{thm:simulation-optimized-rule} we know that $\initm(\pi)$
	reduces to two final configurations $\donfig_1, \donfig_2$ such that
	$\donfig_1 \mathscr S^{m}_{\pi} \config_1$ and $\donfig_2
	\mathscr S^{m}_{\pi\leq\pi} \config_2$. By confluence of
	$\evaltkm^*$ we have that $\donfig_1 = \donfig_2$. We can then conclude by
	Lemma~\ref{lemma:config-related-same-distri-same-sem-optimized}
\end{proof}
\section{Type System for Synchronous Compilation}

\begin{proof}[Proof of Proposition \ref{prop:graph1}]

To prove the first direction, let us suppose that $\mathscr G_\pi$ contains a cycle and take nodes $i,j$ such that $i\to^* j\to^* i$. 
Observe first that either $i$ or $j$ has to be an if-then-else rule. Indeed, if both $i$ and $j$ are either a gate axiom or an axiom $\mathbb A_i\vdash \mathbb A_i$ coming from an identity axiom, then from the cycle above we deduce the existence of a cyclic token path in the derivation, which is impossible. 
Let us suppose then, w.l.o.g., that $i$ is an if-then-else rule. Then there exists a negative position in the conclusion of $i$ that waits from a token coming from a positive position in the conclusion of $j$, yet this token may only arrive after all negative positions in the conclusion of $j$ are saturated, and this in turn requires a token to arrive from a positive position in the conclusion of $i$: we have thus reached a deadlock. 

For the converse direction, let us suppose that $\mathscr G_\pi$ is acyclic; let $n$ be the depth of $\pi$, that is, the maximum number of nested if-then-else rules (when two if-then-else rules are nested if one occurs in a sub-derivation whose conclusion is either the second or third premiss of the other); 
for any node $i$ of $\pi$, let  
$K(i)$ be the maximum $k\in \mathbb N$ such that $\mathscr G_\pi$ contains a path of the form $j_1\to\dots\to j_{k}\to i$ (this maximum is well-defined thanks to the acyclicity of $\mathscr G_\pi$). 

We show, by induction on the depth $n$, that tokens starting from the negative positions in the conclusion of $\pi$ and unit axioms reach, after a finite number of steps, all positive positions in the conclusion of $\pi$: 
\begin{itemize}

\item $n=0$: then $\pi$ contains no if-then-else rue: here we can simply conclude recalling the fact that, in the absence of if-then-else, GoI always terminates with the tokens reaching all positive positions in the conclusion of $\pi$;

\item $n>0$: we prove the following claim:
\begin{center}
(C) For any synchronization node $i$ at depth $0$, all negative positions in the conclusion of $i$ are reached after a finite number of steps.
\end{center}
From claim (C) the conclusion follows by applying the induction hypothesis: if $i$ is an if-then-else rule, and its negative positions are saturated, by IH we deduce that the tokens reach the positive positions in the conclusion of the sub-derivations, of depth $<n$, with conclusion the second and third premiss of $i$, and thus that the positions of $i$ are saturated; if $i$ is a gate or axiom rule, then tokens freely transit from negative to positive positions. We can thus conclude that tokens reach all positive positions in the conclusion of any synchronization point at depth 0, and must then reach all positive positions in the conclusion of $\pi$.

We prove claim (C) by induction on $K(i)$:
\begin{itemize}
\item $K(i)=0$: take a negative position in the conclusion of $i$, and consider the path starting from it backwards; 
we claim that either the path reaches a negative position in the conclusion of $\pi$, or it reaches a unit axiom. Indeed, the only alternative is that the path starts ``going down'' in the derivation and then ``turns up'', finally reaching some synchronization point $j$, but this would imply then that the edge $j\to i$ belongs to $\mathscr G_\pi$, against the induction hypothesis. 
We conclude then that the negative position is reached by a simple path starting from the conclusion of $\pi$ or a unit axiom.
\item $ K(i)>0$: let $j_1,\dots, j_q$ be all synchronization points such that $\mathscr G_\pi$ contains an edge $j_\ell\to i$;
by arguing as in the previous case, we deduce that all negative positions of $i$ are reached by a simple path coming from either a negative position in the conclusion of $\pi$, either a unit axiom, or some positive position in the conclusions of $j_1,\dots, j_q$.
By the sub-induction hypothesis all negative positions in $j_1,\dots, j_q$ are reached after a finite number of steps; 
for each $j_\ell$ that is not an if-then-else rule, this immediately implies that the tokens move, in one further step, onto the positive positions of $j_\ell$; if $j_\ell$ is an if-then-else rule, we can conclude the same 
by applying the induction hypothesis, since the second and third premiss of $j_\ell$ are conclusions of sub-derivation of depth $<n$.
We can thus conclude that all negative positions of $i$ are finally reached. 
\end{itemize}
\end{itemize}
\end{proof}

\begin{proof}[Proof of Theorem \ref{thm:graph2}]
We suppose that the indexing of rules is the same for the definition of $\mathscr G_{|\pi|}$ and the colored type derivation. Under this hypothesis we will actually prove an equality $\mathscr G_{|\pi|}=\mathscr G$, rather than just an isomorphism $\mathscr G_{|\pi|}\simeq\mathscr G$.

By induction on the typing rules we prove the following claims:
\begin{itemize}
\item[i.] for any negative position $\sigma$ in the conclusion of $\pi$, $\sigma$ has color $i$ iff there exists a simple path from $\sigma$ to a negative position in $i$;

\item[ii.] for any positive position $\sigma$ in the conclusion of $\pi$, $\sigma$ has color $i$ iff there exists a simple path from a positive position in $i$ to $\sigma$;

\item[iii.] $\mathscr G_{|\pi|}=\mathscr G$.
\end{itemize}

We only consider the two main cases:

\begin{itemize}

\item the last rule of $\pi$ is an application: 
\[
  \infer{\Delta, \Gamma \vdash^{\mathscr G_1\cup\mathscr G_2\cup\mathscr M_{A,A'}} M~N : B}
  {\infer*{\Delta \vdash^{\mathscr G_1} M : A\multimap B}{\pi_1} &\infer*{\Gamma\vdash^{\mathscr G_2} N : A'}{\pi_2} & |A|=|A'|}
\]
\begin{itemize}
\item[i.] a negative position $\sigma$ in $B$ can only move upwards to the corresponding negative position $\sigma'$ in $\pi_1$; by IH $\sigma$ has color $i$ iff there exists a simple path in $\pi_1$ to a negative position in some synchronization point $i$ of $\pi_1$; since no simple path from $\sigma$ can enter $\pi_2$ we can conclude 
that $\sigma$ has color $i$ iff there exists a simple path in $\pi$ to a negative position in some synchronization point $i$ of $\pi$.  

\item[ii.] One can argue similarly to the case above.

\item[iii.]
By induction hypothesis we have $\mathscr G_1=\mathscr G_{|\pi_1|}$ and  $\mathscr G_2=\mathscr G_{|\pi_2|}$. Observe that a simple path in $\pi$ is either a simple path in $\pi_1$, a simple path in $\pi_2$, or a simple path $\sigma_0,\dots,\sigma,\tau,\dots,\tau_1$, where $\sigma_0,\dots,\sigma$ is a simple path from some synchronization point $i$ in $\pi_1$ to some position $\sigma$ in $A$ (which must have color $i$ by IH), 
 $\tau,\dots, \tau_1$ is a simple path from some synchronization point $j$ in $\pi_2$ to the corresponding position $\tau$ in $A'$ (which has color $j$ by IH); notice that one of the following holds: either $\sigma$ is positive in $A$ (hence negative in $A\multimap B$), and $\tau$ is positive in $A'$, or $\sigma$ is negative in $A$ (hence positive in $A\multimap B$) and $\tau$ is negative in $A'$.
This implies then that $i\to j\in \mathscr M_{A,A'}$ and, more generally that any edge $i\to j\in \mathscr G_{|\pi|}$ belongs to either $\mathscr G_1, \mathscr G_2$ or $\mathscr M_{A,A'}$, i.e.~that 
$\mathscr G_{|\pi|}
\subseteq \mathscr G_1\cup \mathscr G_2\cup\mathscr M_{A,A'}$.

Conversely, given $i\to j\in \mathscr M_{A,A'}$, there are two possibilities: 
either $i$ is the color of some positive position $\sigma$ in $A$ (hence negative in $A\multimap B$), and $j$ is the colore of some positive position $\tau$ in $A'$, or $i$ is the color of some negative position $\sigma$ in $A$ (hence positive in $A\multimap B$) and $j$ is the color of some negative position $\tau$ in $A'$.
By IH we deduce that, in either cases there exists a simple path $\sigma_0,\dots,\sigma,\tau,\dots,\tau_1$,
 connecting a synchronization point $i$ in $\pi_1$ to a synchronization point $j$ in $\pi_2$, and thus $i\to j\in \mathscr G_{|\pi|}$.
We conclude then $
\mathscr G_1\cup\mathscr G_2\cup\mathscr M_{A,A'}\subseteq
\mathscr G_{|\pi|}$.

 \end{itemize}

\item The last rule of $\pi$ is an if-then-else:
\[
  \infer[(i)]{\Delta, |\Gamma|^i \vdash^{\mathscr G_1\cup\mathscr G_2\cup\mathscr G_3\cup\{ i\}} \ite{M}{N}{P} : |A|^i}
  {
  \infer*{\Delta \vdash^{\mathscr G_1} M : \bit^j}{\pi_1} & \infer*{ \Gamma^{\mathscr G_2} \vdash N : A }{\pi_2} &\infer*{\Gamma' \vdash^{\mathscr G_3} P : A'}{\pi_3} & |\Gamma|=|\Gamma'|,|A|=|A'|}
  \]
Conditions i.~and ii.~are trivially verified for a position $\sigma$ in either $\Gamma$ or $B$, and follow from the induction hypothesis applied to $\pi_1$ for a position $\sigma$ in $\Delta$.

For condition iii.~we have, by IH, $\mathscr G_1=\mathscr G_{|\pi_1|},\mathscr G_2=\mathscr G_{|\pi_2|},\mathscr G_3=\mathscr G_{|\pi_3|}$. Observe that a simple path in $\pi$ can only be a simple path in $\pi_1,\pi_2$ or $\pi_3$ or a trivial path with no edge in a position inside $i$. We conclude then that 
$\mathscr G_{|\pi|}=\mathscr G_{|\pi_1|}\cup \mathscr G_{|\pi_2|}\cup \mathscr G_{|\pi_3|}\cup\{i\}=
\mathscr G_{1}\cup \mathscr G_{2}\cup \mathscr G_{3}\cup\{i\}$.
\end{itemize}
\end{proof}

\end{document}